\newtheorem{thm}{Theorem}[section]
\newtheorem{pro}{Proposition}
\newtheorem{lmm}{Lemma}[section]
\newcommand{\makeSymbol}[1]{\mathord{\vcenter{\hbox{#1}}}}
\newcommand{\Pl}{\ell_{\mathrm{p}}} 
\newcommand{\sgn}{\mathrm{sgn}} 
\newcommand{\tr}{\mathrm{tr}} 
\newcommand{\dd}{{\rm d}}
\newcommand{\muh}{{\mu_{ \rm h}}} 
\newcommand{\g}{\mathbf{g} } 
\newcommand{\wdtau}{\mathrm{WDt}} 
\newcommand{\wdtaus}{\mathrm{WD}} 
\DeclarePairedDelimiter\floor{\lfloor}{\rfloor}
\begin{document}
\title{ First-Order Quantum Correction in Coherent State Expectation Value of Loop-Quantum-Gravity Hamiltonian}
\author{Cong Zhang\footnote{czhang(AT)fuw.edu.pl}}
\affiliation{Faculty of Physics, University of Warsaw, Pasteura 5, 02-093 Warsaw, Poland}
\author{Shicong Song\footnote{ssong2019(AT)fau.edu}}
\affiliation{Department of Physics, Florida Atlantic University, 777 Glades Road, Boca Raton, FL 33431-0991, USA}
\author{\ Muxin Han\footnote{hanm(At)fau.edu}}  
\affiliation{Department of Physics, Florida Atlantic University, 777 Glades Road, Boca Raton, FL 33431-0991, USA}
\affiliation{Institut f\"ur Quantengravitation, Universit\"at Erlangen-N\"urnberg, Staudtstr. 7/B2, 91058 Erlangen, Germany}
\begin{abstract}
Given the non-graph-changing Hamiltonian $\widehat{H[N]}$ in Loop Quantum Gravity (LQG), $\langle\widehat{H[N]}\rangle$, the coherent state expectation value  of $\widehat{H[N]}$, admits an semiclassical expansion in $\ell^2_{\rm p}$. 
In this paper, we
explicitly compute the expansion of $\langle\widehat{H[N]}\rangle$ to the linear order in $\ell^2_{\rm p}$ on the cubic graph with respect to the coherent state peaked at the homogeneous and isotropic data of cosmology. 
In our computation, 
a powerful algorithm, supported by rigorous proofs and several theorems,  is developed to overcome the complexity in the computation of $\langle \widehat{H[N]} \rangle$.
Particularly, some key innovations in our algorithm
substantially reduce the complexity in computing the Lorentzian part of $\langle\widehat{H[N]}\rangle$.
Moreover, with the algorithm developed in the present work, we can compute the expectation value of arbitrary monomial of holonomies and fluxes on one edge up to arbitrary order of $\Pl^2$. 
Finally, some quantum correction effects resulted from $\langle\widehat{H[N]}\rangle$ in cosmology are discussed at the end of this paper.
\end{abstract}

\maketitle
\tableofcontents

\section{Introduction}\label{sec:intro}

 Loop Quantum Gravity (LQG) is an approach toward the background independent and nonperturbative quantum gravity theory in four and higher dimensions \cite{thiemann2007modern,rovelli2014covariant,han2007fundamental,ashtekar2004back,makinen2020dynamics}. Several recent progresses have been made by the active research of the quantum dynamics of LQG \cite{Giesel:2006uj,Giesel:2006uk,giesel2007algebraic,Domagala:2010bm,lewandowski2011dynamics,alesci2015hamiltonian,assanioussi2015new,yang2015new,alesci2018quantum,Dapor:2017rwv,han2020effective}. Particularly, tremendous progresses have been made in both canonical and covariant LQG on the semiclassical limit and the consistency with respect to classical gravity e.g. \cite{thiemann2001gauge,thiemann2001gaugeII,thiemann2001gaugeIII,Giesel:2006uk,giesel2007algebraic,Dapor:2017gdk,Dapor:2017rwv,Liegener:2019jhj,Han:2020chr,Conrady:2008mk,semiclassical,HZ,Bianchi:2006uf,Han:2018fmu}. However, regarding the full theory of LQG dynamics, less progress has been made on its quantum corrections (see e.g. \cite{Liegener:2020dcg,Assanioussi:2017tql,Alesci:2013kpa,Borissov:1997ji} for some results, and \cite{Han:2020fil,Dona:2019dkf} for some results in the covariant approach). As a candidate of quantum gravity theory, it is important that LQG should shed light on quantum corrections to the classical theory of gravity.  

The present paper focuses on the canonical aspects of LQG. Due to the non-polynomial Hamiltonian constraint operator $\widehat{H[N]}=\widehat{H_E[N]}+(1+\beta^2)\widehat{H_L[N]}$,
there has been persistent confusion that the quantum dynamics of LQG might not be computable analytically  \cite{Nicolai:2005mc}. A previous work \cite{giesel2007algebraic} partially resolves this confusion, where it has been schematically shown that the coherent state expectation value of the Hamiltonian/master constraint are computable order-by-order by the semiclassical expansion in $\hbar$. It is remarkable that the proposed scheme in \cite{giesel2007algebraic} can also be applied to a wide class of non-polynomial operators used in the study of LQG dynamics. Although this scheme was proposed as early as when \cite{giesel2007algebraic} firstly published in 2006, the expectation value of $\widehat{H[N]}$ has only been computed at its classical limit, i.e. the 0-th order (in $\hbar$). However, due to the complexity of the operator, especially the the Lorentzian part of $\widehat{H[N]}$ (denoted as $\widehat{H_L[N]}$), the $O(\hbar)$ quantum correction has not been studied in the literature. 

The goal of the present work is to fill this gap by providing an explicit computation of the $O(\hbar)$ quantum correction in $\langle\widehat{H[N]}\rangle$ with respect to a certain coherent state. In this paper, in order to compute the quantum correction in $\langle\widehat{H[N]}\rangle$, a powerful algorithm is developed to overcome the complexity of $\widehat{H[N]}$ that is the non-graph-changing Hamiltonian on a cubic lattice $\gamma$. We explicitly expand $\langle\widehat{H[N]}\rangle$ to linear order in $\hbar$ by applying the algorithm, with respect to the coherent state that is peaked at the homogeneous and isotropic data of cosmology. Namely we explicitly compute $H_0$ and $H_1$ in
\begin{equation}\label{hamexp0}
\langle\widehat{H[1]}\rangle=H_0+\ell_{\rm p}^2 H_1+O(\ell_{\rm p}^4),\quad \ell_{\rm p}^2=\hbar \kappa
\end{equation} 
where $\kappa=8\pi G_{\rm Newton}$ and the lapse function $N=1$. $H_0$, representing the 0-th order, reproduces the cosmological effective Hamiltonian in the $\mu_0$-scheme \cite{yang2009alternative,Dapor:2017rwv,han2020effective,Kaminski:2020wbg}. Whereas, $H_1$ gives the first order quantum correction, which is presented by our result in this work. The explicit expression of $H_1$ is given in Section \ref{sec:results}. It is worth noting that the coherent state used for computing $\langle\widehat{H[N]}\rangle$ is not SU(2) gauge invariant (the motivation is stated below). 

This work is closely related to the reduced phase space formulation of LQG (see e.g. \cite{giesel2010algebraicIV,Giesel:2012rb,Domagala:2010bm}. 
In this formulation, some matter fields that are known as the clock fields is coupled to gravity.
These matter fields are regarded as material reference frames used to transform gravity variables to gauge invariant Dirac observables.
This procedure resolves the Diffeomorphism constraint and Hamiltonian constraint at the classical level resulting in the reduced phase space $\mathscr{P}_{red}$ of Dirac observables. 
The dynamics of the gravity-clock system is described by the material-time evolution generated by the physical Hamiltonian ${\bf H}$ on the reduced phase space $\mathscr{P}_{red}$.
As an interesting model, Gaussian dust is chosen to be our clock fields  \cite{Kuchar:1990vy,Giesel:2012rb}. 
Then the resulting reduced phase space, $\mathscr{P}_{red}$, is identical to the pure-gravity unconstrained phase space. This identification defines the pure-gravity Hamiltonian constraint with unit lapse (i.e. $N=1$) $H[1]$ on the resulting reduced phase space $\mathscr{P}_{red}$, which indictaes that the physical Hamiltonian ${\bf H}$ equals to $H[1]$ for the case when gravity is coupled to Gaussian dust. In this model, the quantization of $\mathscr{P}_{red}$ is the same as quantizing the pure-gravity unconstrained phase space, which leads to the physical Hilbert space $\mathcal{H}$ that is identical to the kinematical Hilbert space in the usual LQG. $\mathcal{H}$ is unconstrained because it is from the quantization of $\mathscr{P}_{red}$. The physical Hamiltonian operator is obtained by $\widehat{\bf H}=\frac{1}{2}(\widehat{H[1]}+\widehat{H[1]}^\dagger)$ with the usual LQG quantization of $\widehat{H[N]}$ \cite{Thiemann:2020cuq,thiemann1998quantum,Giesel:2006uj,Lewandowski:2014hza}. Therefore from the perspective of reduced-phase-space LQG, our work computes the $\langle \widehat{\bf H} \rangle$ with respect to the coherent state peaked at cosmological data on the graph $\gamma$, which is given by the real part of $\langle\widehat{H[1]}\rangle$ \eqref{hamexp0}.

Recent works have been focused on building models of LQG on a single graph $\gamma$ \cite{Dapor:2017gdk,Dapor:2017rwv,zhang2019bouncing,Liegener:2019jhj,Han:2019feb,Dapor:2020jvc,Han:2020iwk,Liegener:2020dcg}. Particularly, the quantum dynamics in the reduced-phase-space LQG framework is formulated on a cubic lattice $\gamma$ with a path integral \cite{han2020effective,Han:2020chr}
\begin{equation}
A_{[g],\left[g^{\prime}\right]}=\int \mathrm{d} h  [\mathrm{d} g]\, \nu[g]\, e^{S[g, h] / \ell_{\rm p}^2},\label{Agg0}
\end{equation}
which is the canonical-LQG analog of the spinfoam formulation. $A_{[g],\left[g^{\prime}\right]}$ is regarded as the transition amplitude of $\widehat{\bf H}$ between the initial and final SU(2) gauge invariant coherent states, denoted as  $|[g]\rangle,|\left[g^{\prime}\right]\rangle$ respectively. The integration variables contains both trajectories of $g\in\mathscr{P}_{red,\gamma}$ and SU(2) gauge transformations $h$ on $\gamma$, with $\nu[g]$ being a measure factor. Due to a feature of the path integral \eqref{Agg0}, the SU(2) gauge invariant amplitude $A_{[g],\left[g^{\prime}\right]}$ is expressed as an integral of SU(2) gauge non-invariant variables $g$ and $h$ \footnote{Here is a brief explanation of the reason why the path integral is in terms of gauge non-invariant coherent states (see \cite{han2020effective} for details): The transition amplitude between gauge invariant coherent states is $A_{[g],[g']}=\left\langle\Psi_{[g]}^{t}|U(T)| \Psi_{\left[g^{\prime}\right]}^{t}\right\rangle$ where $U(T)=\exp \left(-\frac{i}{\hbar} T \widehat{\mathbf{H}}\right)$. The gauge invariant coherent state is the group average of the gauge non-invariant state: $|\Psi_{[g]}^{t}\rangle=\int \mathrm{d} h|\psi_{g^{h}}^{t}\rangle$ where $h$ is the SU(2) gauge transformation. Since $U(T)$ is gauge invariant, we have $A_{[g],\left[g^{\prime}\right]}=\int \mathrm{d} h\langle\psi_{g}^{t}|U(\tau)| \psi_{g^{\prime h}}^{t}\rangle$ where the integrand can be written as a coherent state path integral with the standard method. }. The action $S[g,h]$ is linear to $\langle \widehat{\bf H} \rangle$ at SU(2) gauge non-invariant coherent states when the trajectories of $g$ are continuous in time. In contrast to the usual path integrals in quantum field theories, $S[g,h]$ contains the $O(\hbar)$ correction from $\langle \widehat{\bf H} \rangle$. Our work precisely computes this $O(\hbar)$ correction in $S[g,h]$ of cosmological dynamics.  

A general study of the equation of motion provided by the semiclassical limit of $A_{[g],\left[g^{\prime}\right]}$ is presented in \cite{Han:2020chr}. The application of it in cosmology is presented in \cite{han2020effective,Han:2020iwk}. The cosmological dynamics in the limit of $\hbar\to0$ gives the $\mu_0$-scheme effective cosmological dynamics which reduces to the classical FRLW cosmology at low energy density. Next, it is equally important to discover the $O(\hbar)$ correction of the effective cosmological dynamics. The effective dynamics with $O(\hbar)$ correction can be obtained by the \emph{quantum effective action} \cite{peskin1995introduction}, denoted as $\Gamma$, from the path integral defined in \eqref{Agg0}. Perturbatively, the $O(\hbar)$ correction in $\Gamma$ for cosmology contains 3 contributions: (1) $O(\hbar)$ correction in $S[g,h]$ which is computed in this work, (2) $O(\hbar)$ correction in $\log \nu[g]$ where $\nu[g]$ has been given explicitly in \cite{han2020effective}, and (3) $O(\hbar)$ correction in $\frac{1}{2}\log\det(\mathfrak{H})$ where the ``1-loop determinant'' $\det(\mathfrak{H})$ is the determinant of the Hessian matrix $\mathfrak{H}$ of $S[g,h]$. The $g$-$g$ matrix elements in $\mathfrak{H}$ has been computed in \cite{Han:2020iwk}. A continuous study of $\frac{1}{2}\log\det(\mathfrak{H})$ is postponed for future work. Therefore,in terms of the quantum correction in the effective cosmological dynamics, the present work computes an significant part in the $O(\hbar)$ correction of the quantum effective action $\Gamma$.    

After introducing motivations above, let us summarize several key steps in the computation of the present work: First of all, an important complication in $\widehat{H[N]}$ is the volume operator $\hat{V}_v=\sqrt{|\hat{Q}_v|}$ which contains the square-root and absolute-value, indicating that the $\widehat{H[N]}$ is non-polynomial. When $\langle\widehat{H[N]}\rangle$ is studied with respect to the coherent state, this issue is overcome by substituting $\hat{V}_v$ with the semiclassical expansion \cite{giesel2007algebraic}
\begin{equation}
\begin{aligned}
\hat V_{GT}^{(v)}=&\langle \hat Q_v\rangle^{2q}\left[1+\sum_{n=1}^{2k+1}(-1)^{n+1}\frac{q(1-q)\cdots(n-1-q)}{n!}\left(\frac{\hat Q_v^2}{\langle \hat Q_v\rangle^2}-1\right)^n\right]+O(\hbar^{k+1})
\end{aligned}
\end{equation}
where $\hat{Q}_v$ is formulated as a polynomial of flux operators and $q=1/4$. Truncating $\hat V_{GT}^{(v)}$ with a finite $k$ and substituting it back into $\widehat{H[N]}$ allows us to express $\langle\widehat{H[N]}\rangle$ by a expectation value of a polynomial operator. 

The resulting polynomial sums over a huge number of terms ($\sim 10^{19}$), each of which is a monomial of holonomy and fluxe operators. Computing expectation values of all terms would lead to a large computational complexity. The major complexity is encoded in the Lorentzian part of $\widehat{H[N]}$, denoted as $\widehat{H_L[N]}$. Several key methods are used to reduce the number of computations:

\begin{itemize}

\item The expectation value of every monomial term can be factorized into expectation values of holonomy-flux monomials with respect to different edges. Only certain types of expectation values of monomials on a single edge shall be computed. We further reduce the number of types by using the commutation relations, and several general formulae are derived for the expectation values of the resulting types (see Section \ref{Expectation values of operators on one edge}).

\item We develop a power-counting argument in order to specifically locate each power of $\hbar$, expression in $O(\hbar)$ represents the leading order behavior of each expectation value of the monomial operator (see Section \ref{se:leading}). Since we are only interested in expanding $\langle\widehat{H[N]}\rangle$ to the its linear order in $\hbar$, a substantial amount of expectation values of monomials can be neglected due to the fact that they are only contributing to higher order in $\hbar$.

\item When the coherent states are peaked at homogeneous and isotropic data. A large amount of symmetries that identify different terms are realized, which can be used to reduce the computational complexity.(see Secton \ref{sec:cosmodel}).   

\end{itemize}

Our method exponentially reduces the computational complexity. In particular, it is useful in computing the expectation value of Lorentzian part in $\widehat{H[N]}$. 

{In Section \ref{Introducing the algorithm}, 
In order to present the reduction methodology more concretely,
an example that contains $3^{3m-1}$ ($m$ can be large) monomials is demonstrated. By applying our method, only 5 monomials' expectation values need to be computed.}

{The purpose of the present paper is to give detailed derivations for the results presented in \cite{appeared}.} Computations in this paper are carried out by using Mathematica on the High Performance Computation server with two 48-Core Processors (AMD EPYC 7642). One can find the Mathematica codes at \cite{github}.

The explicit resulting expression of $O(\hbar)$ quantum correction in $\langle\widehat{H[N]}\rangle$ is summarized in Section \ref{sec:results}.
In order to demonstrate the physical significance of our results and effects from the $O(\hbar)$ correction to the classical limit of $\Re\langle\widehat{H[1]}\rangle$, the proposal in \cite{Dapor:2017rwv} is adopted: We view $\Re\langle\widehat{H[1]}\rangle$ in \eqref{hamexp0} as the effective Hamiltonian on the 2-dimensional phase space, denoted as $\mathscr{P}_{cos}$, of homogeneous and isotropic cosmology. $\Re\langle\widehat{H[1]}\rangle$ generates the Hamiltonian time evolution on the 2-dimensional phase space $\mathscr{P}_{cos}$. Time evolution of the homogeneous spatial volume is plotted, and is compared with the evolution generated by $\langle\widehat{H[1]}\rangle$ at the limit of $\hbar\to0$. The comparison demonstrates the effects on $\langle\widehat{H[1]}\rangle$, which is generated from the $O(\hbar)$ correction contribution  (see Section \ref{sec:results} for details). We emphasize that the proposal that we adopt for the cosmological evolution is not as rigorous as the path integral formula \eqref{Agg0}. Nevertheless, we have argued that the $O(\hbar)$ correction in $\langle\widehat{H[1]}\rangle$ only contributes partially to the quantum correction in $\Gamma$ which ultimately determines the quantum effect in the dynamics. The cosmological dynamics studied in Section \ref{sec:results} only aims for displaying the effect of the $O(\hbar)$ correction in $\langle\widehat{H[1]}\rangle$, and is not a rigorous prediction from the principle of LQG. \\

The structure of the present paper the followings. Section \ref{sec:one} reviews the theory of LQG on a cubic lattice, including the Hamiltonian and the coherent state. Section \ref{Expectation values of operators on one edge}, we demonstrate the computations of the expectation value of operators defined at a single edge.
Section \ref{se:leading}, we develop a power-counting argument in order to reduce the computational complexity. Section \ref{sec:cosmodel} discusses $\langle\widehat{H[1]}\rangle$ with respect to the coherent states peaked at homogeneous and isotropic data, and the symmetries which reduce the computational complexity. Section \ref{sec:results} presents the explicit results of the quantum correction in $\langle\widehat{H[1]}\rangle$. Section \ref{sec:conclusion}, we conclude and discuss a few outlooks of the present work.

\section{Preliminaries}\label{sec:one}

\subsection{Quantization and Hamiltonian}

Classically general relativity can be formulated with the Ashtekar-Barbero variables $(A_a^i,E^a_i)$ consisting of SU$(2)$ connection $A^i_a$ and canonically conjugate densitized triad field $E_i^a$ defined on the spatial manifold $\Sigma$ \cite{barbero1995real}. 
 We denote the coordinate on $\Sigma$ by $(x,y,z)$ .  Let $\gamma\subset \Sigma$ be a finite cubic lattice whose edges are parallel to the axes of the coordinates. The sets of edges and vertices in $\gamma$ are denoted by $E(\gamma)$ and $V(\gamma)$ respectively. Taking advantage of $\gamma$, we define holonomies along the edges of $\gamma$, 
\begin{equation}\label{eq:holonomy}
h_e(A)=\mathcal P\exp\int_e A=1+\sum_{n=1}^\infty \int_0^1\dd t_n\int_0^{t_n}\dd t_{n-1}\cdots\int_0^{t_2}\dd t_1 A(t_1)\cdots A(t_n),\ \forall e\in E(\gamma),
\end{equation}
and gauge covariant fluxes \cite{Thiemann:2000bv} on the 2-faces $S_e$ in the dual lattices $\gamma^*$,
\begin{equation}\label{eq:flux}
\begin{aligned}
p^i_s(e):=&-\frac{2}{\beta a^2}\tr\left[\tau^i\int_{S_e} \varepsilon_{abc} h(\rho^s_e(\sigma) ) E^c(\sigma) h(\rho^s_e(\sigma)^{-1})\right],
\end{aligned}
\end{equation}
where $S_e\in\gamma^*$ is the 2-face, $\rho^s(\sigma): [0,1]\to \Sigma$ is a path connecting  the source point $s_e\in e$ to $\sigma\in S_e$ such that $\rho_e^s(\sigma):[0,1/2]\to e$ and $\rho_e^s(\sigma):[1/2,1]\to S_e$. $a$ is a length unit (e.g. $a=1$mm) to make $p_s(e)$ dimensionless. Alternatively, one can choose the target point $t_e\in e$ rather than $s_e$ to define
\begin{equation}
\begin{aligned}
p^i_t(e):=&\frac{2}{\beta a^2}\tr\left[\tau^i\int_{S_e} \varepsilon_{abc} h(\rho^t_e(\sigma) ) E^c(\sigma) h(\rho^t_e(\sigma)^{-1})\right].
\end{aligned}
\end{equation}
where $\rho^t(\sigma): [0,1]\to \Sigma$ is a path connecting  the target $t_e\in e$ to $\sigma\in S_e$ such that $\rho_e^t(\sigma):[0,1/2]\to e$ and $\rho_e^t(\sigma):[1/2,1]\to S_e$. Given $(A_a^i,E_i^a)$, Eqs. \eqref{eq:holonomy} and \eqref{eq:flux} lead to a map
from the $E(\gamma)$ to SL($2,\mathbb C$),
\begin{equation}\label{eq:defineg}
\g: e\mapsto g_e=e^{i p^k_s(e)\tau_k}h_e.
\end{equation}
Because of the relation between $p_s$ and $p_t$ 
\begin{equation}
p_s^k(e^{-1})\tau_k=p_t^k(e)\tau_k=-h_e^{-1} p_s^k(e)\tau_k h_e
\end{equation}
we obtain that 
\begin{equation}
\begin{aligned}
g_{e^{-1}}=g_e^{-1}. 
\end{aligned}
\end{equation}
Thus the map $\g:E(\gamma)\to$SL($2,\mathbb C$) generate a homomorphism from the groupoid of the graph $\gamma$ to SL$(2,\mathbb C)$. The LQG phase space based on $\gamma$ is SL$(2,\mathbb C)^{|E(\gamma)|}$ and consists of all such homomorphisms \cite{Thiemann:2000bv}. Given a SU(2)-valued scalar field $\mathcal G:\Sigma\to$SU(2) on $\Sigma$, $\mathcal G$ defines a gauge transformation on $\g$, taking $\g$ to $\mathcal G\triangleright\g$ with
\begin{equation}
    (\mathcal G \triangleright\g)(e)=\mathcal G(s_e)\g(e)\mathcal G(t_e)^{-1},\ \forall e\in E(\gamma).
\end{equation}

The quantization of this classical lattice theory gives us LQG based on the graph $\gamma$. The Hilbert space $\mathcal H_\gamma$ consists of the square integrable functions of the holonomies. Given two functions $\psi_i: \{h_e\}_{e\in E(\gamma)}\to \mathbb C$, the inner produce is
\begin{equation}
\langle \psi_1|\psi_2\rangle=\int_{\mathrm{SU}(2)^{|E(\gamma)|}} \dd\muh\,\overline{\psi_1(\{h_e\}_{e\in E(\gamma)})}\psi_2(\{h_e\}_{e\in E(\gamma)})
\end{equation}
where $|E(\gamma)|$ denote the number of elements (i.e. cardinality) of $E(\gamma)$ and $\muh$ is the Haar measure. $\mathcal H_\gamma$ is the kinematical Hilbert space of the canonical LQG with the operator-constraint formalism. Moreover, $\mathcal H_\gamma$ modulo gauge transformations  represents the physical Hilbert space of the   reduced-phase-space LQG , where any gauge invariant function of $h_e(A)$ and $p_{s,t}^i(e)$ are Dirac observables, realized from the deparametrization by coupling to clock fields \cite{giesel2010algebraicIV}.

On $\mathcal H_\gamma$, $p_s^i(e)$ and $p_t^i(e)$ are quantized as the right- and left-invariant vector field, namely
\begin{equation}\label{eq:ptps}
\begin{aligned}
(\hat p_s^i(e)\psi)(h_{e'},\cdots,h_e,\cdots,h_{e''})&=i t\left.\frac{\dd}{\dd\epsilon}\right|_{\epsilon=0}\psi(h_{e'},\cdots,e^{\epsilon\tau^i}h_e,\cdots,h_{e''})\\
(\hat p_t^i(e)\psi)(h_{e'},\cdots,h_e,\cdots,h_{e''})&=-i t\left.\frac{\dd}{\dd\epsilon}\right|_{\epsilon=0}\psi(h_{e'},\cdots,h_ee^{\epsilon\tau^i},\cdots,h_{e''})
\end{aligned}
\end{equation}
where $t=\kappa\hbar/a^2=:\Pl^2/a^2$ (if $a=1$mm, $t\simeq6.56\times 10^{-63}$) and $\tau^j=(-i/2)\sigma^j$ with $\sigma^j$ being the Pauli matrix. Another kind of basic operators are the multiplication operators $D^\iota_{ab}(h_e)$ which is defined as 
\begin{equation*}
(D^\iota_{ab}(h_e)\psi)(A)=D^\iota_{ab}(h_e(A))\psi(A)
\end{equation*}
where $D^\iota(h_e(A))$ is the value of the Wigner-D matrix at $h_e(A)\in $SU(2). In this paper, $D^\iota(x)$ denotes the Wigner-D matrix only if $x$ is some specific SU(2) element. Moreover, $h_e$, when it appears alone as an operator, denote the matrix-valued multiplication operator $D^{1/2}(h_e)$ for simplicity. With this convention, the commutators between the basic operators read
\begin{equation}\label{eq:commutators}
\begin{aligned}
[D^\iota(h_{e}),D^\iota(h_{e'})]&=0=[\hat p_s^i(e),p_t^j(e')]\\
[\hat p_s^i(e),\hat p_s^j(e')]&=-it\delta_{ee'}\epsilon_{ijk}\hat p_s^k(e),\\
[\hat p_t^i(e),\hat p_t^j(e')]&=-it\delta_{ee'}\epsilon_{ijk}\hat p_t^k(e),\\
[\hat p_s^i(e),D^\iota(h_{e'})]&=it\delta_{ee'}D'{}^\iota(\tau^i)D^\iota(h_e),\\
[\hat p_t^i(e),D^\iota(h_{e'})]&=-it\delta_{ee'}D^\iota(h_e)D'{}^\iota(\tau^i).
\end{aligned}
\end{equation}
where $D'{}^\iota(\tau^i)$ is the corresponding representation matrix of $\tau^i$.

It is useful to introduce the flux operators with respect to the spherical basis. We define
\begin{equation}\label{eq:sphericalbasis}
\hat p_v^{\pm 1}(e):=\mp\frac{1}{\sqrt{2}}\left(\hat p_v^x(e)\pm i \hat p_v^y(e)\right),\ \hat p_v^0(e)=\hat p_v^z(e)
\end{equation}
with $v=s,t$. In the following context, $\alpha,\beta,\cdots=0,\pm 1$ is used to denote the indices in the spherical basis, and $i,j,k\cdots=1,2,3$, the indices in the Cartesian basis.

Taking advantage of the basic operators, one can define operators representing geometric observables such as areas and volumes \cite{rovelli1995discreteness,ashtekar1997quantum,ashtekar1997quantumII}. The volume operator plays an important role in the present work. Let $\mathcal R\subset \Sigma$ be a region in $\Sigma$. The volume operator of $\mathcal R$ is defined by 
\begin{equation}
\hat V_{\mathcal R}:=\sum_{v\in V(\gamma)\cap \mathcal R}\hat V_v=\sum_{v\in V(\gamma) \mathcal \cap R}\sqrt{|\hat Q_v|}
\end{equation}
where
\begin{equation}\label{eq:Qinxyz}
\hat Q_v=(\beta a^2)^3\varepsilon_{ijk} \frac{\hat p_s^i(e_x^+)-\hat p_t^i(e_x^-)}{2}\frac{\hat p_s^j(e_y^+)-\hat p_s^j(e_y^-)}{2}\frac{\hat p_s^k(e_z^+)-\hat p_s^k(e_z^-)}{2}
\end{equation}
where $e_i^\pm$ with $i=x,\ y,\ z$ are the edges along the $i$th axis such that $v$ is the source point of $e_i^+$ and the target point of $e^-_i$. The total volume is denoted by $\hat V=\sum_{v\in V(\gamma)}\hat V_v$. In terms of the flux operators with respect to the spherical basis \eqref{eq:sphericalbasis}, the operator $\hat Q_v$ defined in Eq. \eqref{eq:Qinxyz} becomes
\begin{equation}\label{eq:qdefinition}
\begin{aligned}
&\hat Q_v=-i(\beta a^2)^3\varepsilon_{\alpha\beta\gamma} \frac{\hat p_s^\alpha(e_x^+)-\hat p_t^\alpha(e_x^-)}{2}\frac{\hat p_s^\beta(e_y^+)-\hat p_s^\beta(e_y^-)}{2}\frac{\hat p_s^\gamma(e_z^+)-\hat p_s^\gamma(e_z^-)}{2}
\end{aligned}
\end{equation}
where $\varepsilon_{\alpha\beta\gamma}$ is defined by $\varepsilon_{-1,0,1}=1$.

In the operator-constraint formalism, the dynamics of LQG is encoded in the Hamiltonian constraint, which can be written as 
\begin{equation}
\widehat{H[N]}=\widehat{H_E[N]}+(1+\beta^2)\widehat{H_L[N]}
\end{equation}
where $\widehat{H_E[N]}$ is called the Euclidean part and $\widehat{H_L[N]}$ is the Lorentzian part. $N$ is the smeared function. $\widehat{H[N]}$ is constructed by using the Thiemann's trick  \cite{thiemann1998quantum,Giesel:2006uj}. The operator corresponding to the Euclidean part is
\begin{equation}
\begin{aligned}
&\widehat{H_E[N]}=\frac{1}{i\beta a^2 t}\sum_{v\in V(\gamma)} N(v)\sum_{e_I,e_J,e_K \text{ at } v}\epsilon^{IJK}\tr(h_{\alpha_{IJ}}h_{e_K}[\hat V_v,h_{e_K}^{-1}])
\end{aligned}
\end{equation}
where  $e_I$, $e_J$ and $e_K$ are oriented to be outgoing from $v$, $\epsilon^{IJK}=\sgn[\det (e_I\wedge e_J\wedge e_K)]$, $\alpha_{IJ}$ is the minimal loop around a plaquette consisting of $e_I$ and $e_J$, where it goes out via $e_I$ and comes back through $e_J$, taking $v$ as its end point.
With the same notion, the Lorentzian part reads
\begin{equation}
\begin{aligned}
\widehat{H_L[N]}=\frac{-1}{2i\beta^7 a^{10} t^5}\sum_v N(v)\sum_{e_I,e_J,e_K \text{ at } v}\varepsilon^{IJK}\tr( [h_{e_I},[\hat V,\hat H_E]]h_{e_I}^{-1} [h_{e_J},[\hat V,\hat H_E]]h_{e_J}^{-1}[h_{e_K},\hat V_v]h_{e_K}^{-1}).
\end{aligned}
\end{equation}

In the reduced-phase-space LQG where the diffeomorphism and Hamiltonian constraints are solved classically, the quantum dynamics is governed by the physical Hamiltonian $\widehat{\bf H}$. When we deparametrize gravity by coupling to the Gaussian dust \cite{Giesel:2012rb,Kuchar:1990vy}, the classical physical Hamiltonian ${\bf H}$ is formally the same as the Hamiltonian constraint with unit lapse, except all quantities in ${\bf H}$ are understood as Dirac observables. The quantization gives the Hamiltonian operator
\begin{eqnarray}
\widehat{\bf H}=\frac{1}{2}\left(\widehat{H[1]}+\widehat{H[1]}^\dagger\right)
\end{eqnarray}
${\bf \widehat{H}}$ is defined on $\mathcal{H}_\gamma$, which can be understood from a similar perspective of quantizing Dirac observables. Note that here we consider the non-graph-changing version of the Hamiltonian (constraint). If ${H[N]}$ is understood as constraint, the discretization and quantization on $\gamma$ cause the constraint anomaly. However in the reduced-phase-space LQG, the constraint anomaly is absent, because ${H[N]}$ is not a  constraint anymore.\footnote{Sometimes, ${H[N]}$ relates to conserved charges, then $\widehat{H[N]}$ on $\gamma$ may break the classical symmetry.}. The self-adjoint extension of $\hat{\mathbf{H}}$ exists \cite{Thiemann:2020cuq,private}, so we choose the extension and define the self-adjoint Hamiltonian which is still denoted by $\hat{\mathbf{H}}$.

\subsection{Coherent states}\label{sec:two}

Choosing a canonical orientation for each edge $e\in E(\gamma)$, the classical phase space based on the graph $\gamma$ is 
\begin{equation}
\Gamma_\gamma\cong [{\rm SL}(2,\mathbb C)]^{ |E(\gamma)|}.
\end{equation}
The complexifier coherent state $\Psi_\g$ is \cite{thiemann2001gauge} 
\begin{equation}\label{eq:coherentstate}
\Psi_\g=\bigotimes_{e\in E(\gamma)}\psi^{ t}_{g_e},\qquad \psi^t_{g_e}(h_e)=\sum_j d_je^{-\frac{t}{2}j(j+1)}\chi_j(g_e h_e^{-1})
\end{equation}
where $\psi^{ t}_{g_e}$ is the SU(2) coherent state at the edge $e$. The character $\chi_j(g_eh_e^{-1})$ is the trace of the $j$-representation of $g_eh_e^{-1}$. The property $\chi_j(g_e h_e^{-1})=\chi_j(g_e^{-1}h_e)$ leads to the useful relation $$\psi_{g_e}(h_e)=\psi_{g_{e^{-1}}}(h_{e^{-1}}).$$

Given $g\in$SL($2,\mathbb C$), it can be decomposed as
\begin{equation}\label{eq:decomposition}
g=e^{ip_k\tau^k}u=n^s e^{i(\eta+i\xi)\tau_3}(n^t)^{-1},
\end{equation}
where $\eta=-\sqrt{\vec p\cdot\vec p}$ and $n^s,n^t\in$SU(2), as well as $\xi\in\mathbb R$, are given by
\begin{equation}
\begin{aligned}
n^s \tau^3 (n^s)^{-1}&=-\frac{p_k}{\sqrt{\vec p\cdot \vec p}}\, \tau_k,\\
n^se^{-\xi\tau_3} (n^t)^{-1}&=u. 
\end{aligned}
\end{equation}
Although $n^s$ and $n^t$ are not uniquely defined by this equation, each of them relates to a unique vector through the equation, with $v=s,t$,
\begin{equation}
n^v\tau_3 (n^v)^{-1}=\vec n^v\cdot \vec\tau.
\end{equation}
It is shown in \cite{thiemann2001gaugeIII} and is revisited shortly that
\begin{equation}
\frac{\langle \psi_{g_e}^t|\vec{\hat p}_s(e)|\psi^t_{g_e}\rangle}{\|\psi^t_{g_e}\|^2}=-\eta_e\vec n_e^s+O(t),\ \frac{\langle \psi_{g_e}^t|\vec{\hat p}_t(e)|\psi_{g_e}^t\rangle}{\|\psi^t_{g_e}\|^2}= \eta_e\vec n_e^t+O(t)
\end{equation}
{where $\eta_e$,  $\vec n_e^s$ and $\vec n_e^t$ are defined as the decomposition parameters of $g_e$ as in Eq. \eqref{eq:decomposition}}. 
This equation indicates that $\eta_e\vec n_e^t$ is the classical limit of the flux operator at $e$.
 
The following properties of the $\psi_g^t$  \cite{thiemann2001gauge,thiemann2001gaugeII,thiemann2001gaugeIII} are useful in our analysis. Firstly, the inner product of these states read
\begin{equation}\label{eq:innerproduct}
\langle\psi_{g_1}^t|\psi_{g_2}^t\rangle=\psi^{2t}_{g_1^\dagger g_2}(1)= \frac{2\sqrt{\pi } e^{t/4}}{t^{3/2}}\, \frac{\zeta\,e^{\frac{\zeta^2}{t}}}{\sinh(\zeta)}+O(t^\infty)
\end{equation}
where $\tr(g_1^\dagger g_2)=2\cosh(\zeta)$ and $\Im(\zeta)\in[0,\pi]$ with $\Im(\zeta)$ the imaginary part of $\zeta$ \footnote{Here we used the following result shown in \cite{thiemann2001gaugeIII}. For any complex number $z=R+iI$, there exist real numbers $s\in\mathbb R$ and $\phi\in[0,\pi]$  such that  $\cosh(s+i\phi)=z$. $s$ and $\phi$ are uniquely determined except in the case $I=0$ and $|R|>1$ in which case the $s$ is determined up to its sign.}. Consequently, the norm of the coherent state is
\begin{equation}\label{eq:norm}
\begin{aligned}
\langle 1\rangle_{g}:=\langle\psi^t_g|\psi^t_g\rangle= \frac{2\sqrt{\pi } e^{ t/4}}{t^{3/2}}\frac{ p e^{\frac{ p ^2}{t}}}{\sinh(p)}+O(t^\infty),
\end{aligned}
\end{equation}
where $p=\sqrt{\vec{p}\cdot\vec{p}}$. Secondly, $\psi_g^t$ satisfy the completeness condition
\begin{equation}\label{eq:completerelation}
\int \dd\nu_t(g)|\psi^t_g\rangle\langle\psi^t_g|=\mathbb I,
\end{equation}
where the measure $\dd\nu_t(g)$ is  
\begin{equation}\label{eq:measure}
\dd\nu_t(g)=\frac{2\sqrt{2} e^{-t/4}}{(2\pi t)^{3/2}}\frac{\sinh(p)}{p}e^{-\frac{p^2}{t}}\dd\mu_H(u)\dd ^3p=\frac{2}{\langle 1\rangle_g\pi t^3}\dd\mu_H(u)\dd ^3p.
\end{equation}

Let us complete this section with some discussions on the volume operator contained in the Hamiltonian operator $\widehat{H[N]}$.  
Because of the square root in the definition of the volume operator, matrix elements of these operators are difficult to compute analytically. However, as far as the coherent state expectation value is concerned, the volume operators $\hat{V}_v$ in $\widehat{H[N]}$ can be replaced by Giesel-Thiemann's volume \cite{giesel2007algebraic} $\hat V_{GT}^{(v)}$ which is a semiclassical expansion
\begin{equation}\label{eq:VGT}
\begin{aligned}
\hat V_{GT}^{(v)}=&\langle \hat Q_v\rangle^{2q}\left[1+\sum_{n=1}^{2k+1}(-1)^{n+1}\frac{q(1-q)\cdots(n-1-q)}{n!}\left(\frac{\hat Q_v^2}{\langle \hat Q_v\rangle^2}-1\right)^n\right]+O(t^{k+1})
\end{aligned}
\end{equation}
where $q=1/4$. By making use of $\hat V_{GT}^{(v)}$, firstly truncating $\hat V_{GT}^{(v)}$ at finite $k$ and replacing $\hat{V}_v$ by $\hat V_{GT}^{(v)}$, $\widehat{H[N]}$ can be expressed by a polynomial of holonomies and fluxes.
Up to higher order in $t$, it is now manageable to compute the expectation value of $\widehat{H[N]}$, through computing the expectation value of a polynomial of holonomies and fluxes.

\section{Expectation values of operators on one edge}\label{Expectation values of operators on one edge}
As becoming clear in a moment, computing the coherent state expectation value of $\widehat{H[N]}$ can be reduced to computing expectation values of operator monomials on individual edges. In this section, let us firstly focus on the expectation value of operators on one edge.

Given a monomial of holonomies and fluxes on an edge $e$, its expectation value with respect to the coherent state $\psi^t_{g_e}$ labelled by $g_e=n_e^s e^{i z_e\tau_3}(n_e^t)^{-1}$ relates to its expectation value with respect to $\psi^t_{e^{iz_e\tau_3}}\equiv \psi^t_{z_e}$, by a gauge transformation generated by $n_e^s$ and $n_e^t$ \cite{Dapor:2017gdk}:
\begin{equation}\label{eq:basicformula}
\begin{aligned}
&\langle \psi^t_{g_e}|P(\{\hat p_s^{\alpha_i}(e)\},\{\hat p_t^{\alpha_j}(e)\},\{D^{\iota_k}_{a_k b_k}(h_e)\})|\psi^t_{g_e}\rangle\\
=&\langle \psi^t_{z_e}|P(\{\hat p_s^{\beta_i}(e)D^1_{\beta_i\alpha_i}((n_e^s)^{-1})\},\{\hat p_t^{\beta_j}(e)D^1_{\beta_j\alpha_j}((n_e^t)^{-1})\},\{D^{\iota_k}_{a_k c_k}(n_e^s)D^{\iota_k}_{c_k d_k}(h_e)D^{\iota_k}_{d_k b_k}((n_e^t)^{-1})\})|\psi^t_{z_e}\rangle,
\end{aligned}
\end{equation}
where $P(x,y,z)$ represents any monomial of $x=\{x_1,x_2,\cdots,x_m\}$, $y=\{y_1,\cdots,y_n\}$ and $z=\{z_1,z_2,\cdots,z_k\}$. This feature implies that one can always do the calculation with respect to $\psi^t_{z_e}$, then restore the information of $n_e^s$ and $n_e^t$ afterwards. In the following context, we denote 
\begin{equation}\label{eq:tocompute}
\langle \psi^t_{z_e}|\hat F_e|\psi^t_{z_e}\rangle=:\langle \hat F_e\rangle_{z_e}. 
\end{equation}
Now let us consider the algorithm to compute \eqref{eq:tocompute} for a general monomial $\hat F_e$ of holonomies and fluxes  step by step. Based on the algorithm described in the following subsections, our  codes \cite{github} are designed. By the codes, one can compute  the expectation value of  arbitrary monomial $\hat F_e$ up to  arbitrary order.
\subsection{The algorithm }\label{sec:basicformula}

\subsubsection{The first step} \label{sec:commutator}
Given a monomial of holonomy and flux operators. To compute the expectation value of this monomial, we need at first to remove all the holonomies to the right with the basic commutation relations  \eqref{eq:commutators}. We use the following proposition to implement the procedure.  
\begin{pro}\label{pro:changeO}
Let $\hat O$ be defined as
\begin{equation}
\hat O=\hat O_1\hat O_2\cdots\hat O_m.
\end{equation}
Denote $\mathcal I:=\{2,\cdots,m\}$. Let $\mathcal I_k:=\{i_1,i_2,\cdots,i_k\}$ with $i_1<i_2<\cdots<i_k$ be a sublist of $\mathcal I$ which contains $k$ elements. 
Then, by the definition of commutator, it has
\begin{equation}\label{eq:changeO}
\begin{aligned}
\hat O=\hat O_2\cdots\hat O_m\hat O_1+\sum_{k=1}^{m-1}\sum_{\mathcal I_k} \left(\prod_{l \in\mathcal I-\mathcal I_k}\hat O_l\right)[[\cdots[[\hat O_1,\hat O_{i_1}],\hat O_{i_2}]\cdots],\hat O_{i_k}].
\end{aligned}
\end{equation}
\end{pro}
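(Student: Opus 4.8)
The plan is to prove \eqref{eq:changeO} by induction on the number of factors $m$, using nothing beyond the definition $[\hat A,\hat B]=\hat A\hat B-\hat B\hat A$. Two bookkeeping conventions make the statement unambiguous and the induction self-contained: first, in each term the spectator product $\prod_{l\in\mathcal I-\mathcal I_k}\hat O_l$ is ordered by increasing index $l$; second, since the argument never uses any property of $\hat O_1$, I read the claim with $\hat O_1$ replaced by an arbitrary operator sitting in the leftmost slot, which is exactly what is needed in the inductive step (there the operator to be carried to the right will be the \emph{commutator} $[\hat O_1,\hat O_2]$, not one of the $\hat O_j$). For $m=2$ the right-hand side of \eqref{eq:changeO} is $\hat O_2\hat O_1+[\hat O_1,\hat O_2]$, which equals $\hat O_1\hat O_2$ by definition; this is the base case.

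For the inductive step, assume \eqref{eq:changeO} holds for every product of $m-1$ factors. Given $\hat O=\hat O_1\hat O_2\cdots\hat O_m$, I first move $\hat O_1$ past $\hat O_2$:
\[
\hat O_1\hat O_2\cdots\hat O_m=\hat O_2\,\bigl(\hat O_1\hat O_3\cdots\hat O_m\bigr)+\bigl([\hat O_1,\hat O_2]\bigr)\hat O_3\cdots\hat O_m .
\]
Both bracketed strings are products of $m-1$ factors with spectator index set $\mathcal K:=\{3,\dots,m\}$, with leftmost slot $\hat O_1$ and $[\hat O_1,\hat O_2]$ respectively, so the induction hypothesis applies to each. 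Writing the hypothesis in the form that absorbs the leading term into the empty sublist (with the ``commutator of length $0$'' read as the leftmost operator itself), the first string becomes $\sum_{\mathcal J\subseteq\mathcal K}\bigl(\prod_{l\in\mathcal K-\mathcal J}\hat O_l\bigr)[[\cdots[\hat O_1,\hat O_{j_1}],\dots],\hat O_{j_r}]$ and the second becomes the same sum with $\hat O_1$ replaced throughout by $[\hat O_1,\hat O_2]$, i.e.\ with innermost bracket $[[\hat O_1,\hat O_2],\hat O_{j_1}]$.

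It then remains to reorganize the two sums and match them against the right-hand side of \eqref{eq:changeO}, whose terms are indexed by sublists $\mathcal I_k\subseteq\{2,\dots,m\}$. I would split those sublists according to whether the index $2$ occurs: if $2\notin\mathcal I_k$ then $2$ is the smallest spectator index, so $\prod_{l\in\mathcal I-\mathcal I_k}\hat O_l=\hat O_2\prod_{l\in(\mathcal K)-\mathcal I_k}\hat O_l$ and the nested commutator involves only indices in $\mathcal K$ — this is precisely the first sum above (its $\mathcal I_k=\varnothing$ case giving the leading $\hat O_2\cdots\hat O_m\hat O_1$). If $2\in\mathcal I_k$ then necessarily $i_1=2$, the innermost bracket is $[\hat O_1,\hat O_2]$, the index $2$ drops out of the spectator product, and setting $\mathcal J=\mathcal I_k\setminus\{2\}\subseteq\mathcal K$ identifies this with the second sum. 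Since every subset of $\{2,\dots,m\}$ falls into exactly one of the two cases, each term appears once and with the correct operator ordering, closing the induction. The only real work here is this combinatorial matching of subsets; there is no analytic content, which is consistent with the statement being ``by the definition of commutator.'' An equivalent and perhaps more intuitive route, which I would mention alongside the formal induction, is to carry $\hat O_1$ rightward through $\hat O_2,\dots,\hat O_m$ one factor at a time: at $\hat O_j$ one either lets the current (possibly already nested) operator commute past $\hat O_j$, depositing $\hat O_j$ on the left, or replaces it by its commutator with $\hat O_j$; running over all $2^{m-1}$ choices and recording as $\mathcal I_k$ the steps where the commutator was taken reproduces \eqref{eq:changeO} term by term.
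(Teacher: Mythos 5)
Your proposal is correct and follows essentially the same route as the paper, which simply notes that the identity follows by iterating $\hat A\hat B=\hat B\hat A+[\hat A,\hat B]$; your induction on $m$ (equivalently, carrying $\hat O_1$ rightward one factor at a time and either commuting past $\hat O_j$ or taking the commutator) is just a careful formalization of that iteration, with the subset bookkeeping done explicitly.
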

The proof is quite straightforward with using the relation $\hat A\hat B=\hat B\hat A+[\hat A,\hat B]$ iteratively.  In Eq. \eqref{eq:changeO}, the terms at $k$ carry $k$-fold commutator. Due to the factor $t$ in the right hand side of the commutation relation \eqref{eq:commutators},
the $k$-fold commutator produces a factor $t^k$ in the final results, which implies that the contributions of these terms to the expectation value of $\hat O$ are at least at $t^k$-order.
 
Now let us see how to use Proposition \ref{pro:changeO} to move the holonomies to the right precisely. Assume that $\hat O_1=D^\iota_{ab}(h_e)$ in Eq. \eqref{eq:changeO} and that all of the other operators are fluxes. Then according to Eq. \eqref{eq:changeO}, we need to calculate $$[\cdots,[[\cdots [[D^\iota_{ab}(h_e),\hat p_s^{\alpha_1}(e)],\hat p_s^{\alpha_2}(e)]\cdots \hat p_s^{\alpha_m}(e)],\hat p_t^{\alpha_1}(e)]\cdots,\hat p_t^{\alpha_n}(e)].$$
The result can be derived by
\begin{equation}\label{eq:mfoldcom1}
\begin{aligned}
&[\cdots [[D^\iota_{ab}(h_e),\hat p_s^{\alpha_1}(e)],\hat p_s^{\alpha_2}(e)]\cdots\hat p_s^{\alpha_m}(e)]=(-it)^m D^{'\iota}_{aa_1}(\tau^{\alpha_1})D^{\iota'}_{a_1a_2}(\tau^{\alpha_2})\cdots D^{'\iota}_{a_{m-1}a_m}(\tau^{\alpha_m}) D^\iota_{a_mb}(h_e)
\end{aligned}
\end{equation}
with
$a_k=a-\sum_{i=1}^k\alpha_i$,
and
\begin{equation}\label{eq:mfoldcom2}
\begin{aligned}
&[\cdots [[D^\iota_{ab}(h_e),\hat p_t^{\alpha_1}(e)],\hat p_t^{\alpha_2}(e)]\cdots \hat p_t^{\alpha_m}(e)]=(it)^m D^\iota_{ab_m}(h_e) D^{'\iota}_{b_mb_{m-1}}(\tau^{\alpha_m})D^{'\iota}_{b_{m-1}b_{m-2}}(\tau^{\alpha_{m-1}})\cdots D^{'\iota}_{b_1b}(\tau^{\alpha_1})
\end{aligned}
\end{equation}
with $b_k=\sum_{i=1}^k b_i+b$, where we used that
$D^\iota_{ab}(\tau^\alpha)\propto \delta_{a,b+\alpha}$. Taking advantage of Eqs. \eqref{eq:mfoldcom1} and \eqref{eq:mfoldcom2}, we have, for instance,
\begin{equation}\label{eq:resulto1}
\begin{aligned}
&D^\iota_{ab}(h_e)\Big(\prod_{i=1}^m\hat p_s^{\alpha_i}(e)\Big)\Big(\prod_{j=1}^n\hat p_t^{\beta_j}(e)\Big)\\
=&\Big(\prod_{i=1}^m\hat p_s^{\alpha_i}(e)\Big)\Big(\prod_{j=1}^n\hat p_t^{\beta_j}(e)\Big)D^\iota_{ab}(h_e)-it\sum_{k=1}^m \Big( \prod_{i\neq k}\hat p_s^{\alpha_i}(e)\Big)\Big(\prod_{j=1}^n\hat p_t^{\beta_j}(e)\Big)D^{'\iota}_{ac}(\tau^{\alpha_k})D^\iota_{cb}(h_e)\\
&+(-it)^2\sum_{k<l} \Big( \prod_{i\notin\{k,l\}}\hat p_s^{\alpha_i}(e)\Big)\Big(\prod_{j=1}^n\hat p_t^{\beta_j}(e)\Big)D^{'\iota}_{ac}(\tau^{\alpha_k})D^{'\iota}_{cd}(\tau^{\alpha_l})D^\iota_{db}(h_e)\\
&+it\sum_{k=1}^n\Big(\prod_{i=1}^m\hat p_s^{\alpha_i}(e)\Big) \Big( \prod_{j\neq k}\hat p_t^{\beta_j}(e)\Big)D^\iota_{ac}(h_e)D^{'\iota}_{cb}(\tau^{\beta_k})\\
&+(it)^2\sum_{k<l}\Big(\prod_{i=1}^m\hat p_s^{\alpha_i}(e)\Big) \Big( \prod_{j\notin\{k,l\}}\hat p_t^{\alpha_i}(e)\Big)D^\iota_{ac}(h_e)D^{'\iota}_{cd}(\tau^{\beta_l})D^{'\iota}_{db}(\tau^{\beta_k})\\
&-(it)^2\sum_{k,l}\Big( \prod_{i\neq k}\hat p_s^{\alpha_i}(e)\Big)\Big( \prod_{j\neq l}\hat p_t^{\beta_j}(e)\Big)D^{'\iota}_{ac}(\tau^{\alpha_k})D^\iota_{cd}(h_e)D^{'\iota}_{db}(\tau^{\beta_l})+O(t^3).
\end{aligned}
\end{equation}
If there are more than one holonomies contained in $\hat O$, one can use this procedure to permute them one by one. Finally, $\hat O$ is expressed as summation of terms taking the form
\begin{equation}\label{eq:generalform1}
\prod_{k=1}^m \hat p_s^{\alpha_i}(e)\prod_{k=1}^n \hat p_t^{\alpha_i}(e)\prod_{i=1}^lD^{\iota_i}_{a_ib_i}(h_e).
\end{equation} 
Then, one can merge the holonomies by applying the formula \eqref{eq:coupleholonomy}, we eventually simplify $\hat O$ to be a sum of operators of the form 
\begin{equation}\label{eq:finalformstep1}
\Big(\prod_{i=1}^m\hat p_s^{\alpha_i}(e)\Big)\Big(\prod_{j=1}^n\hat p_t^{\beta_j}(e)\Big)D^\iota_{ab}(h_e)
\end{equation}

\subsubsection{The second step}
The second step is to transform $\hat p_t^\beta(e)$ in Eq. \eqref{eq:finalformstep1} to $\hat p_s^\beta(e)$. To do this,  we employ the formula 
\begin{equation}\label{eq:pspth}
\begin{aligned}
&\langle \hat p_s^{\alpha_1}(e)\cdots \hat p_s^{\alpha_m}(e)\hat p_t^{\beta_1}(e)\cdots \hat p_t^{\beta_n}(e)D^{\iota}_{ab}(h_e)\rangle_{z_e}\\
=&(-1)^n e^{-(\beta_1+\cdots+\beta_n)\overline{z_e}} \langle \hat p_s^{\beta_n}(e)\cdots \hat p_s^{\beta_1}(e) \hat p_s^{\alpha_1}(e)\cdots \hat p_s^{\alpha_m}(e)D^{\iota}_{ab}(h_e)\rangle_{z_e}.
\end{aligned}
\end{equation}
The proof of  this formula is quit technical and is put in Appendix \ref{app:graph}. Because of this equation, we now only need to consider the expectation value of operators
\begin{equation}
\hat F^{\alpha_1\cdots \alpha_m}_{\iota a b}=\hat p_s^{\alpha_1}(e)\cdots \hat p_s^{\alpha_m}(e)D^{\iota}_{ab}(h_e).
\end{equation}

\subsubsection{The third step}
To compute  the expectation value of $\hat F^{\alpha_1\cdots \alpha_m}_{\iota a b}$, we need to consider the cases with $\iota=0$ and $\iota\neq  0$  separately. 
As shown in Appendix \ref{app:shabiref}, the expectation value for $ \hat F^{\alpha_1\cdots \alpha_m}_{000}\equiv  \hat F^{\alpha_1\cdots \alpha_m}$ reads 
\begin{equation}\label{eq:qmzw}
\begin{aligned}
&\langle\hat F^{\alpha_1\cdots \alpha_m}\rangle_{z_e}\\
=&\delta\left(\sum_{i}\alpha_i,0\right)t^m\prod_{i=1}^m\frac{1}{(1+|\alpha_i|)^{1/2}}e^{t/4}\int_{-\infty}^\infty\dd x x\prod_{k=1}^m\left(\frac{\alpha_k-1}{2}x-\frac{\partial_y}{2}+\sum_{i=1}^k\alpha_i-\frac{\alpha_k}{2}\right)\frac{e^{-\frac{t}{4}x^2+x\eta}}{2\sinh(y)}\Big|_{y\to\eta}+O(t^\infty).
\end{aligned}
\end{equation}
For the operator $\hat  F^{\alpha_1\cdots \alpha_m}_{\iota a b}$ with $\iota\neq 0$, the explicit results is presented by \eqref{eq:mergedqmh} and \eqref{eq:algebraicF1}. Then, as discussed in Appendix \ref{app:shabiref}, at least for $\iota\leq 20$, the results can be simplified as 
 \begin{equation}\label{eq:expectedgeneralzw}
  \begin{aligned}
  &\langle \hat F^{\alpha_1\cdots\alpha_m}_{\iota a b}\rangle_{z_e}=t^me^{bz_e}\sum_{\substack{0\leq d\leq \iota\\ d+\iota\in \mathbb Z}}\frac{2-\delta(d,0)}{2} e^{-\frac{t}{4} \left(2 d ^2-1\right)}\int\dd x\, e^{-\frac{1}{4} t \left(x^2-2 d x\right)}F_\iota(\frac{x-1}{2}-d,\frac{x-1}{2},\frac{\partial_\eta}{2})\frac{\sinh(x\eta)}{\sinh(\eta)}+O(t^{-\infty}).
  \end{aligned}
\end{equation}
where $F_\iota(\frac{x-1}{2}-d,\frac{x-1}{2},\frac{\partial_\eta}{2})$, also depending on the list $\{\alpha_i\}_{i=1}^m$, is given by 
\begin{equation}\label{eq:algebraicF10}
\begin{aligned}
F_\iota(\frac{x-1}{2}-d,\frac{x-1}{2},\frac{\partial_\eta}{2})=&\delta(\sum_{i=1}^m\alpha_i-a+b,0)\prod_{i=1}^m\frac{1}{(1+|\alpha_i|)^{1/2}}\left(\frac{1}{(\iota+a)!(\iota-a)!(\iota+b)!(\iota-b)!}\right)^{1/2}\\
&\frac{(-1)^{d-2a+b}x(x-2d)}{(x-d+\iota)_{2\iota+1}}\prod_{k=1}^m(\alpha_i \frac{x-1}{2}-\frac{\partial_\eta}{2}+\sum_{i=1}^k\alpha_i)\\
&\sum_{z=0}^{\iota+d}\frac{(-1)^{z}(\iota+d)_{z}(\frac{x-1}{2}+\frac{\partial_\eta}{2}+b-z+\iota)_{\iota+a}(\frac{x-1}{2}-\frac{\partial_\eta}{2}-d-b+z)_{\iota-a}}{z!}\\
&\sum_{z=0}^{\iota-d}\frac{(-1)^{z}(\iota-d)_z(\frac{x-1}{2}+\frac{\partial_\eta}{2}-d-z+\iota)_{\iota-b}(\frac{x-1}{2}-\frac{\partial_\eta}{2}+z)_{\iota+b}}{z!}.
\end{aligned}
\end{equation}
We would like to compare our results \eqref{eq:expectedgeneralzw} with the known results in \cite{liegener2020expectation}.  At first, our formula  \eqref{eq:qmzw} and \eqref{eq:expectedgeneralzw} generalize the known results in literature \cite{liegener2020expectation}, in the sense that our formula gives the results for arbitrary lists $\{\alpha_i\}_{i=1}^m$ of flux indices and triples $(\iota,a,b)$ with at least $\iota\leq 20$. Moreover, with our formula, one can get the expectation values to arbitrary order of $t$. However, in  \cite{liegener2020expectation}, the authors give only the results for the special cases where the list $\{\alpha_i\}_{i=1}^m$ contains at most either a single $-1$, or a single $1$, or a pair of $(-1,1)$. They are all the cases such that  the expectation values have non-vanishing $O(t^0)$ or $O(t)$-term. Other cases are also interesting when we study the higher-order correction, even though the higher-order  correction  is beyond the present work. Our formula reduce to these known results at the special cases. The current work only use these special cases, but our codes \cite{github} are designed based on the generalization formulas \eqref{eq:qmzw}, \eqref{eq:expectedgeneralzw} and Theorem \ref{thm:extendsummation}, since the generalized formulas have the potential in the generalization for computing higher-order correction.

Finally, let us complete this subsection by sketching the algorithm based on Eqs. \eqref{eq:qmzw}, \eqref{eq:expectedgeneralzw} and Theorem \ref{thm:extendsummation} to compute the expectation value of $\hat F^{\alpha_1\cdots\alpha_m}_{\iota a b}$. One can refer to \cite{github} for more details. According to Theorem \ref{thm:extendsummation}, we can simplify (at least for $\iota\leq 20$) the integrals in Eqs. \eqref{eq:qmzw} and \eqref{eq:expectedgeneralzw} to a linear combination of integrals taking the forms
$$I_1=\int_{-\infty}^\infty \dd x\, e^{-a x^2+bx } \frac{\sinh(x \eta)}{x}=\frac{\pi}{2}\left(\text{erfi}\left(\frac{b+\eta}{2\sqrt{a}}\right)-\text{erfi}\left(\frac{b-\eta}{2\sqrt{a}}\right)\right) $$
and
$$I_2=\int_{-\infty}^\infty\dd x\, e^{- a x^2+b x} \mathrm{pol}(x,\partial_z)\frac{e^{\pm \eta x}}{f(z)}\Bigg|_{z=\eta}$$
where $a>0$, $b\in\mathbb R$, $f$ is some function and $\mathrm{pol}(x,\partial_z)$ denotes a polynomial of $x$ and $\partial_z$. This is the first step of our algorithm, without considering the realization of their concrete form for now. Because $I_1$ as a function of $a,b$ is known, the next step of the algorithm is to compute $I_2$. To do this, we first expand $\mathrm{pol}(x,\partial_z)$ to write the integrand of $I_2$ as a linear combination  of $\left(\partial_z^n\frac{1}{f(z)}\right)x^m e^{-ax^2+b x\pm \eta x}.$
Then by substituting the results $\int\dd x\, x^n e^{-a x^2+bx\pm\eta x }$, $I_2$ can be computed easily. By this discussion, the only remaining problem is how to realize the concrete linear combination form of $I_1$ and $I_2$, which can be illustrated by the derivation of $\langle\hat F^{\alpha_1\cdots\alpha_m}_{\iota a b}\rangle_{z_e}$ for $\iota =1$ in Appendix \ref{app:gengjiashabiref}. For this case, the crucial step to simplify $F_1$ is to apply the formula \eqref{eq:key11} and  \eqref{eq:key12} inspired by the proof of Theorem \ref{thm:extendsummation}. Taking advantage of Eq. \eqref{eq:key11}, Eq.\eqref{eq:key12} and the trick \eqref{eq:trick11}, one can finally get \eqref{eq:f1} which is a linear combination of integrals taking the forms of $I_1$ and $I_2$.

\subsection{The cases when all flux indices vanish}
In our computation, we often use the operator
\begin{equation}\label{eq:vanishingfluxind}
\begin{aligned}
D^\iota_{a_1b_1}(h_e)[\hat p_s^0(e)]^{m_1}[\hat p_t^0(e)]^{n_1}\cdots D^\iota_{a_kb_k}(h_e) [\hat p_s^0(e)]^{m_k}[\hat p_t^0(e)]^{n_k}.
\end{aligned}
\end{equation}
To deal with this kind of operators, let us consider the operator $D^\iota_{ab}(h_e)(\hat p_s^0(e))^{m}(\hat p_t^0(e))^{n}$. By applying Proposition \ref{pro:changeO}, it can be simplified to 
\begin{equation}
\begin{aligned}
&D^\iota_{ab}(h_e)[\hat p_s^0(e)]^{m}[\hat p_t^0(e)]^{n}\\
=&[\hat p_s^0(e)]^{m}[\hat p_t^0(e)]^{n}D^\iota_{ab}(h_e)-a t m [\hat p_s^0(e)]^{m-1}[\hat p_t^0(e)]^{n}D^\iota_{ab}(h_e)+bt n[\hat p_s^0(e)]^{m}[\hat p_t^0(e)]^{n-1}D^\iota_{ab}(h_e)+O(t^2).
\end{aligned}
\end{equation}
Then, for the operator \eqref{eq:vanishingfluxind}, it has
\begin{equation}
\begin{aligned}
&D^\iota_{a_1b_1}(h_e)[\hat p_s^0(e)]^{m_1}[\hat p_t^0(e)]^{n_1}\cdots D^\iota_{a_kb_k}(h_e)[\hat p_s^0(e)]^{m_k}[\hat p_t^0(e)]^{n_k}\\
=&[\hat p_s^0(e)]^{\sum_{i=1}^{k}m_i}[\hat p_t^0(e)]^{\sum_{i=1}^{k}n_i}\prod_{i=1}^kD^\iota_{a_ib_i}(h_e)-t\left(\sum_{i=1}^k a_i\left[\sum_{l=i}^km_i\right][\hat p_s^0(e)]^{\left(\sum_{i=1}^{k}m_i\right)-1}[\hat p_t^0(e)]^{\sum_{i=1}^{k}n_i}\prod_{i=1}^kD^\iota_{a_ib_i}(h_e)\right)\\
&+t\left(\sum_{i=1}^k b_i\left[\sum_{l=i}^kn_i\right][\hat p_s^0(e)]^{\sum_{i=1}^{k}m_i}[\hat p_t^0(e)]^{\left(\sum_{i=1}^{k}n_i\right)-1}\prod_{i=1}^kD^\iota_{a_ib_i}(h_e)\right)+O(t^2)
\end{aligned}
\end{equation}
By \eqref{eq:pspt}, we finally have
\begin{equation}\label{eq:allvanish}
\begin{aligned}
&D^\iota_{a_1b_1}(h_e)[\hat p_s^0(e)]^{m_1}[\hat p_t^0(e)]^{n_1}\cdots D^\iota_{a_kb_k}(h_e)[\hat p_s^0(e)]^{m_k}[\hat p_t^0(e)]^{n_k}\\
=&(-1)^{\sum_{i=1}^k n_i}[\hat p_s^0(e)]^{\sum_{i=1}^{k}(m_i+n_i)}\prod_{i=1}^kD^\iota_{a_ib_i}(h_e)-t(-1)^{\sum_{i=1}^k n_i}\left(\sum_{i=1}^k a_i\left[\sum_{l=i}^km_l\right]+\sum_{i=1}^k b_i\left[\sum_{l=i}^kn_l\right]\right)\times\\
&[\hat p_s^0(e)]^{\sum_{i=1}^{k}(m_i+n_i)-1}\prod_{i=1}^kD^\iota_{a_ib_i}(h_e).
\end{aligned}
\end{equation}
Recalling the derivation of $\langle \hat F^{\alpha_1\cdots\alpha_m}_{\iota a b}\rangle_{z_e}$, we get
\begin{equation}
\begin{aligned}
\langle (\hat p_s^0(e))^mD^\iota_{ab}(h_e)\rangle_{z_e}=e^{b\eta}\left(- t\frac{\partial_\eta}{2}\right)^m e^{-b\eta} \langle D^\iota_{ab}(h_e)\rangle_{z_e}
\end{aligned}
\end{equation}
where the result of $\langle (\hat p_s^0(e))^m\rangle_{z_e}$ is given by setting $\iota=0=a=b$.
It can be verified that, $\langle D^\iota_{ab}(h_e)\rangle_{z_e}$ takes the form that 
\begin{equation}
\langle D^\iota_{ab}(h_e)\rangle_{z_e}=\langle 1\rangle_{z_e}(g_0+tg_1(\eta)+O(t^2))=f(t)\frac{e^{\frac{\eta^2}{t}}\eta}{\sinh(\eta)} (g_0+tg_1(\eta)+O(t^2))
\end{equation}
with some functions $g_0$, $g_1$ and $f$. Therefore, with Fa\`a di Bruno's formula, we can have that
\begin{equation}
\begin{aligned}
&\langle (\hat p_s^0(e))^mD^\iota_{ab}(h_e)\rangle_{z_e}=e^{b\eta}\left(-t\frac{\partial_\eta}{2}\right)^m e^{-b\eta} \langle D^\iota_{ab}(h_e)\rangle_{z_e}\\
=&\langle 1\rangle_{z_e} \left(-\eta\right)^m[g_0+tg_1(\eta)]+\langle 1\rangle_{z_e} \frac{m(m+1)}{4}\left(-\eta\right)^{m-2} g_0 t\\
&+\langle 1\rangle_{z_e}\frac{m}{2} (-\eta)^{m-1} (\coth (\eta )+b)g_0 t+O(t^2)
\end{aligned}
\end{equation}
Based on these formula, we can propose a faster algorithm to deal with these cases.

\section{Power counting}\label{se:leading}
After introducing the derivations of expectation values of several characterized operators,
 we finally need to deal with a set of specific operators that takes the following form, $\sum_{\vec \alpha}\mathcal T^{\alpha_1\alpha_2\cdots\alpha_m}\hat O_{\alpha_1\alpha_2\cdots\alpha_m}$, where $\mathcal T$ is some numerical factors and $\hat O$ is some polynomial operators of holonomies and fluxes. In principle, we would need to compute the expectation values of $\hat O_{\alpha_1\cdots\alpha_m}$ for all indices $\vec\alpha=(\alpha_1,\cdots,\alpha_m)$. This computation can be preformed thanks to previous sections. However, the computational complexity comes from the huge amount of terms in the sum over $\vec \alpha$. Since  we are only interested in the expectation value up to $O(t)$, the complexity can be reduced by certain power-counting argument: we count the least power of $t$ contains in each $\langle \hat O_{\alpha_1\cdots\alpha_m}\rangle$ before explicit computation, then we omit those terms only contribute to higher order than $O(t)$ in $\langle \widehat{H[N]}\rangle$. It turns out that a large degree of complexity can be reduced in this manner. The following arguments in this section will be proven rigorously in Appendix \ref{sec:matrixelementhp}.

In this section, we will denote $\Psi_\g$ defined in \eqref{eq:coherentstate} by $|\Psi_{\vec g}\rangle$ with $\vec g=\{g_e\}_{e\in E(\gamma)}$, namely
\begin{equation}\label{eq:coherentstatepsig}
|\Psi_{\vec g}\rangle=\bigotimes_{e\in E(\gamma)}|\psi_{g_e} \rangle.
\end{equation}
Similarly, $|\Psi_{\vec g^{(i)}}\rangle$ denotes the coherent state that at the edge $e$ is  $|\psi_{g_e^{(i)}}\rangle$. Let $\hat O$ take the form of
\begin{equation}\label{eq:operatorO}
\hat O=\hat O_1\hat O_2\cdots\hat O_k
\end{equation}
with $\hat O_i$ being arbitrary polynomial of fluxes and holonomies.
Inserting the resolution of identity \eqref{eq:completerelation}, we have
\begin{equation}
\begin{aligned}\label{eq:insertidentity}
\langle\Psi_{\vec g}|\hat O|\Psi_{\vec g}\rangle=\int\prod_{m=1}^{k-1}\dd \nu(\vec g^{(m)}) \prod_{i=1}^k\langle\Psi_{\vec g^{(i-1)}}|\hat O_i|\Psi_{\vec g^{(i)}}\rangle
\end{aligned}
\end{equation}
where $|\Psi_{\vec g^{(0)}}\rangle=|\Psi_{\vec g^{(k)}}\rangle:=|\Psi_{\vec g}\rangle$ and the measure $\dd\nu(\vec g^{(m)})$ is
\begin{equation}\label{eq:entiremeasure}
\dd \nu(\vec g^{(m)})=\prod_{e\in E(\gamma)}\dd\nu(g_e^{(m)})
\end{equation}
with $\dd\nu(g_e^{(m)})$ defined in \eqref{eq:measure}. Eq. \eqref{eq:insertidentity} relates the expectation value of $\hat O$ to matrix elements of each individual $\hat O_i$. Thus we are motivated to study matrix elements of polynomial of holonomies and flux. One  can refer  to  Appendix \ref{sec:matrixelementhp}  for more details on this issue.   According to the analysis therein, the matrix elements of the fluxes and holomomies are of a form described below 
\begin{equation}
\langle\psi_{g_e}|\hat O_i|\psi_{g_e'}\rangle=\langle\psi_{g_e}|\psi_{g_e'}\rangle\left(E_0(g_e,g_e')+tE_1(g_e,g_e')+O(t^\infty)\right).
\end{equation}

Assigning to each edge $e$ a complex number $w_e=p_e-i\theta_e$, we have the coherent state 
\begin{equation}
|\Psi_{\vec w}\rangle:=\bigotimes_{e\in E(\gamma)}|\psi_{w_e}\rangle.
\end{equation}
For the operator $\hat O$ in Eq. \eqref{eq:operatorO}, we state the following result obtained firstly in \cite{giesel2007algebraic}

\begin{thm}\label{thm:leadingordergeneral}
Consider an operator $\hat O=\prod_{i=1}^k\hat O_i$. Assume that, for each operator $\hat O_i$, its matrix elements $\langle \Psi_{\vec g^{(1)}}|\hat O_i|\Psi_{\vec g^{(2)}}\rangle$ take the following form  
\begin{equation}\label{eq:matrixelementform}
\langle \Psi_{\vec g^{(1)}}|\hat O_i|\Psi_{\vec g^{(2)}}\rangle=\langle \Psi_{\vec g^{(1)}}|\Psi_{\vec g^{(2)}}\rangle\left(E_0^{(i)}(\vec g^{(1)},\vec g^{(2)})+t E_1^{(i)}(\vec g^{(1)},\vec g^{(2)})+O(t^\infty)\right).
\end{equation}
Let $N_0$ be the number of operators  $\hat O_m\in \{\hat O_i\}_{i=1}^k$ such that
\begin{equation}\label{eq:operatorom}
\frac{\langle \Psi_{\vec w}|\hat O_m|\Psi_{\vec w}\rangle}{\langle \Psi_{\vec w}|\Psi_{\vec w}\rangle}=O(t),
\end{equation}
where the $O(t^0)$ term vanishes on the RHS. 
Then the expectation value of $\hat O$ with respect to the coherent state $|\Psi_{\vec w}\rangle$ satisfies 
\begin{equation}
\frac{\langle \Psi_{\vec w}|\hat O|\Psi_{\vec w}\rangle}{\langle \Psi_{\vec w}|\Psi_{\vec w}\rangle}=O(t^n),\ \text{with }n\geq {\floor{\frac{N_0+1}{2}}}
\end{equation}
where $\floor{x}$ is the largest integer no larger than $x$. 
\end{thm}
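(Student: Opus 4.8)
The plan is to run the power-counting argument of \cite{giesel2007algebraic}: unfold $\langle\hat O\rangle$ into a single Gaussian-type integral over intermediate coherent-state labels, and then count powers of $t$ by a parity (Wick) argument. First I would insert the resolution of identity \eqref{eq:completerelation} between each pair of consecutive factors, exactly as in \eqref{eq:insertidentity}, so that
\[
\frac{\langle\Psi_{\vec w}|\hat O|\Psi_{\vec w}\rangle}{\langle\Psi_{\vec w}|\Psi_{\vec w}\rangle}
=\int\prod_{m=1}^{k-1}\dd\nu(\vec g^{(m)})\;
\frac{\prod_{i=1}^{k}\langle\Psi_{\vec g^{(i-1)}}|\hat O_i|\Psi_{\vec g^{(i)}}\rangle}{\langle\Psi_{\vec w}|\Psi_{\vec w}\rangle},
\qquad \vec g^{(0)}=\vec g^{(k)}=\vec w .
\]
By hypothesis \eqref{eq:matrixelementform} each matrix element splits into an overlap $\langle\Psi_{\vec g^{(i-1)}}|\Psi_{\vec g^{(i)}}\rangle$ times a symbol $S^{(i)}:=E_0^{(i)}+tE_1^{(i)}+O(t^\infty)$. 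I would then group all the overlaps together with the $k-1$ measures $\dd\nu$ (using their explicit forms \eqref{eq:innerproduct}, \eqref{eq:norm}, \eqref{eq:measure}) into a single kernel $K_t$ that does not depend on the operators, leaving $\langle\hat O\rangle/\langle1\rangle=\int K_t\,\prod_i S^{(i)}$.

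Next I would show, following Appendix \ref{sec:matrixelementhp}, that $K_t$ concentrates on the diagonal $\vec g^{(1)}=\dots=\vec g^{(k-1)}=\vec w$. Writing each $g_e^{(m)}$ in the decomposition \eqref{eq:decomposition} and changing variables to fluctuations $\delta^{(m)}$ about $w_e$, the $e^{\zeta^2/t}$-type factors of the overlaps combine with the $e^{-p^2/t}$-factors of $\dd\nu$ to give, modulo $O(t^\infty)$, $e^{-Q(\delta)/t}$ with $Q$ a positive-definite quadratic form, times an analytic prefactor; the $t$-powers carried by the $k-1$ copies of $\dd\nu$ cancel against $1/\langle\Psi_{\vec w}|\Psi_{\vec w}\rangle$ and the Gaussian normalisation. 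Hence the integral is computed by Taylor-expanding the prefactor and each $E_0^{(i)},E_1^{(i)}$ about $(\vec w,\vec w)$ and evaluating Gaussian moments, each $\delta$ counting as order $t^{1/2}$.

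For the power counting, an operator $\hat O_i$ that violates \eqref{eq:operatorom} has $E_0^{(i)}(\vec w,\vec w)\neq0$, so its symbol contributes at $O(t^0)$. Each of the $N_0$ operators $\hat O_m$ satisfying \eqref{eq:operatorom} has $E_0^{(m)}(\vec w,\vec w)=0$, so in its Taylor expansion the symbol starts either at the term linear in the $\delta$'s or at $tE_1^{(m)}$; in either case it carries at least one $\delta$ or one explicit factor $t$. The integrand is therefore $O(t^0)$ times a product of $N_0$ factors each at least linear in the fluctuations. Against the mean-zero Gaussian, the degree-$N_0$ part of this product integrates to $O(t^{N_0/2})$ when $N_0$ is even; when $N_0$ is odd its total degree in $\delta$ is odd, the Gaussian moment vanishes, and a nonzero contribution first appears at degree $N_0+1$ (one extra $\delta$ taken from a subleading Taylor coefficient of a symbol or of the prefactor, equivalently one $E_1$ insertion), giving $O(t^{(N_0+1)/2})$. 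In both cases $\langle\Psi_{\vec w}|\hat O|\Psi_{\vec w}\rangle/\langle\Psi_{\vec w}|\Psi_{\vec w}\rangle=O(t^n)$ with $n\ge\lfloor(N_0+1)/2\rfloor$.

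The hard part is the concentration step: one must prove that, after centring, the overlap-times-measure kernel is genuinely a Gaussian modulo $O(t^\infty)$, with a nondegenerate and uniformly $O(t)$ covariance, so that the moment expansion is legitimate and both the Gaussian tails and the $O(t^\infty)$ errors are controlled uniformly in the $\hat O_i$; this is exactly where the heat-kernel estimates \eqref{eq:innerproduct}--\eqref{eq:measure} and the analysis of Appendix \ref{sec:matrixelementhp} do the real work. Once that analytic input is in place, the only remaining subtlety is the parity bookkeeping for odd $N_0$, which is the combinatorial heart of the improved exponent $\lfloor(N_0+1)/2\rfloor$ over the naive $N_0/2$.
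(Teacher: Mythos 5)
Your proposal is correct and follows essentially the same route as the paper: insert the resolution of identity, show the overlap-times-measure kernel concentrates at the coincident point $\vec g^{(i)}=\vec w$, and then count powers of $t$ by noting that fluctuation derivatives (Gaussian contractions) come in pairs, so at order $t^s$ at most $2s$ of the $N_0$ vanishing symbols can be saturated, which is exactly the paper's minimization of $s+n_s$ and reproduces your Wick-parity bound $\lceil N_0/2\rceil=\floor{\frac{N_0+1}{2}}$. The analytic input you flag as the hard part is supplied in Appendix \ref{sec:matrixelementhp} via H\"ormander's complex stationary phase theorem together with Lemma \ref{lmm:saddlecondition1} and Theorem \ref{thm:saddlepoint2} (positivity of the real part of the phase, criticality at the diagonal, and the explicit nondegenerate Hessian), so your plan matches the paper's proof in both structure and substance.
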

A detailed proof of the above result is provided in Appendix \ref{sec:matrixelementhp}, including a careful stationary phase analysis, the computation of nondegenerate Hessian matrix, and power-counting.

Because of the vanishing leading-order term of  $\langle\frac{\hat Q}{\langle\hat Q\rangle}-1\rangle$, it can be regarded as operator $\hat O_m$ satisfying \eqref{eq:operatorom}.
Thus, Theorem \ref{thm:leadingordergeneral} is applied to count the power of $t$ for the term including $\left(\frac{\hat Q^2}{\langle\hat Q\rangle^2}-1 \right)^k$ in $\langle \widehat{H[N]}\rangle$. Moreover, in order to apply Theorem \ref{thm:leadingordergeneral}, matrix elements of $\hat O_i$ have to be computable. We have to factorize $\frac{\hat Q^2}{\langle\hat Q\rangle^2}-1 =\left(\frac{\hat Q}{\langle\hat Q\rangle}+1 \right)\left(\frac{\hat Q}{\langle\hat Q\rangle}-1 \right)$ because every matrix element of $\hat Q$ is a polynomial of matrix elements of the flux operators, while that of $\hat Q^2$ is not.

Because the expectation values of $\hat p_s^{\pm 1}(e)$, $\hat p_t^{\pm 1}(e)$ and $D^\iota_{ab}(h_e)~(a\neq b)$ with respect to $\Psi_{\vec \omega}$ vanish, each of them can also be considered as operator $\hat O_m$ in \eqref{eq:operatorom}. Therefore, this theorem can be applied to study the leading order of  monomial of holonomies and fluxes. Let us use {$\hat p^{\beta}(e)$} to denote either $\hat p_s^\beta(e)$ or $\hat p_t^\beta(e)$, and use $\mathcal M$ to denote the monomial of holonomies and fluxes. Let $N_\pm$ be the number of $\hat p^{\pm 1}(e)$ respectively and, $M_+$ (respectively $M_-$) be the number of $D^{\frac{1}{2}}_{-\frac{1}{2}\frac12}(h_e)$ (respectively $D^{\frac{1}{2}}_{\frac{1}{2},-\frac12}(h_e)$) in $\mathcal M$. According to our analysis above, the expectation value of $\mathcal M$ with respect to the coherent state $|\psi_{z_e}\rangle$ with $z_e\in \mathbb C$ is non-vanishing if
\begin{equation}
\sum_{i=1}^{m}\beta_i+\sum_{j=1}^k(b_j-a_j)=0. 
\end{equation}
 Hence, we have
 \begin{equation}
N_++M_+=N_-+M_-.
\end{equation}
Therefore, this theorem gives us that the leading order the expectation value $\langle \mathcal M\rangle_{z_e}$ is $O(t^{N_++M_+})$ or higher. We have more discussions on this case. Since the matrix elements of $\hat p^\alpha(e)$ and $D^{\frac 12}_{ab}(h_e)$ are computable, the results on the leading order of $\mathcal M$ can be calculated more concretely. The result is summarized as the following theorem.

\begin{thm}\label{thm:leadingordermultiplyPH}
Given $\mathcal M$ an arbitrary monomial of holonomies and fluxes. 
Let $\mathcal M'$ be the operator resulting from $\mathcal M$ by deleting all factors $\hat p^0(e)$ and $D^{\frac{1}{2}}_{aa}(h_e)$.  Denote the number of $\hat p_s^0(e)$ and $\hat p_t^0(e)$ in $\mathcal M$ as $N_{0,s}$ and $N_{0,t}$ respectively, and the number of $D^{\frac 1 2}_{\frac12 \frac12}(h_e)$ and $D^{\frac 1 2}_{-\frac12 -\frac12}(h_e)$ as $M_{0+}$ and $M_{0-}$ respectively. Then the leading order of $\langle \mathcal M\rangle_{z_e} $ is exactly $O(t^{M_++N_+})$ if and only if the leading order of $\langle \mathcal M'\rangle_{z_e} $ is exactly $O(t^{M_++N_+})$, {where $N_\pm$ be the number of $\hat p^{\pm 1}(e)$ respectively and, $M_+$ (respectively $M_-$) be the number of $D^{\frac{1}{2}}_{-\frac{1}{2}\frac12}(h_e)$ (respectively $D^{\frac{1}{2}}_{\frac{1}{2},-\frac12}(h_e)$) in $\mathcal M$.}
Moreover, 
it has
 \begin{equation}\label{eq:leadingequalmultileading}
 \langle \mathcal M \rangle_{z_e} \cong (\langle \hat p_s^0(e)\rangle_{z_e})^{N_{0,s}}(\langle \hat p_t^0(e)\rangle_{z_e})^{N_{0,t}}(\langle D^{\frac 12}_{\frac12\frac12}(h_e)\rangle_{z_e})^{M_{0,+}}(\langle D^{\frac 12}_{-\frac12-\frac12}(h_e)\rangle_{z_e})^{M_{0,-}}\langle \mathcal M'\rangle_{z_e}
 \end{equation}  
 where $\cong $ means the 
 $O(t^{M_++N_+})$ terms of the left and right hand sides are equal to each other. 
 \end{thm}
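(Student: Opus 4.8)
<br>

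The plan is to prove Theorem~\ref{thm:leadingordermultiplyPH} by reducing the general monomial $\mathcal{M}$ to the ``core'' monomial $\mathcal{M}'$ step by step, peeling off one diagonal holonomy factor $D^{1/2}_{aa}(h_e)$ or one $\hat p^0(e)$ factor at a time and showing that at each step the leading-order coefficient only gets multiplied by the single-factor expectation value $\langle D^{1/2}_{aa}(h_e)\rangle_{z_e}$ or $\langle\hat p^0(e)\rangle_{z_e}$. The key structural input is the algorithm of Section~\ref{Expectation values of operators on one edge}: any monomial can be rewritten, via Proposition~\ref{pro:changeO} and Eqs.~\eqref{eq:pspt}--\eqref{eq:pspth}, as a linear combination of the normal-ordered operators $\hat F^{\alpha_1\cdots\alpha_m}_{\iota ab}$, and the commutators produced in this rewriting carry explicit powers of $t$. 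The power-counting part is already handled by Theorem~\ref{thm:leadingordergeneral} applied to the factorization $\mathcal{M}=\hat O_1\cdots\hat O_k$ with each $\hat O_i$ one of the basic operators: since the nontrivial factors $\hat p^{\pm1}(e)$, $D^{1/2}_{\mp1/2,\pm1/2}(h_e)$ each contribute an $O(t)$ matrix element and their count satisfies $N_++M_+=N_-+M_-$, the leading order is at worst $O(t^{M_++N_+})$, which fixes the exponent appearing in the statement.

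First I would treat the $\hat p^0(e)$ factors. Because $\hat p^0_s(e)$ acts as $-it\,\partial_\epsilon|_0$ on $h_e\mapsto e^{\epsilon\tau_3}h_e$ and commutes with itself, moving all $\hat p^0(e)$'s to one side using Proposition~\ref{pro:changeO} produces only commutators with the intervening holonomies, and by Eq.~\eqref{eq:allvanish} (which I may assume, since it is derived in the ``all flux indices vanish'' subsection) each such commutator either reproduces the same holonomy structure with one fewer $\hat p^0$ and an extra factor of $t$, or contributes at strictly higher order. Thus at leading order in $t$ the $\hat p^0(e)$ factors simply pull out of the expectation value as powers of $\langle\hat p^0_s(e)\rangle_{z_e}$ and $\langle\hat p^0_t(e)\rangle_{z_e}$ — the correction terms in Eq.~\eqref{eq:allvanish} being of relative order $t$ and hence beyond the $O(t^{M_++N_+})$ precision that ``$\cong$'' tracks. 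The same mechanism, together with the relation $\hat p^0_t(e)=-e^{-\overline{z_e}\cdot 0}\hat p^0_s(e)+\cdots$ implicit in Eq.~\eqref{eq:pspth} at $\beta=0$, explains why only $\langle\hat p^0_s(e)\rangle_{z_e}$ and $\langle\hat p^0_t(e)\rangle_{z_e}$ (and not cross terms) appear.

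Next I would treat the diagonal holonomy factors $D^{1/2}_{\pm1/2,\pm1/2}(h_e)$. Here the point is that commuting $D^{1/2}_{aa}(h_e)$ past a flux $\hat p^\alpha(e)$ via Eq.~\eqref{eq:commutators} produces a term with the off-diagonal matrix $D'^{1/2}(\tau^\alpha)$, which turns a diagonal holonomy into an off-diagonal one while costing a factor $t$; merging it with the remaining holonomies via Eq.~\eqref{eq:coupleholonomy} and counting with Theorem~\ref{thm:leadingordermultiplyPH}'s own power-counting (or rather Theorem~\ref{thm:leadingordergeneral}) shows such terms land at order $t^{M_++N_++1}$ or higher, hence do not affect the leading coefficient. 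Therefore at leading order each $D^{1/2}_{aa}(h_e)$ commutes freely to the far right, where after the third-step formulas~\eqref{eq:qmzw}--\eqref{eq:expectedgeneralzw} it contributes multiplicatively an $e^{a z_e}$-type factor that is exactly $\langle D^{1/2}_{aa}(h_e)\rangle_{z_e}/\langle 1\rangle_{z_e}$ to leading order. Assembling the two reductions yields Eq.~\eqref{eq:leadingequalmultileading}, and the ``if and only if'' clause follows because the scalar prefactors $\langle\hat p^0_{s,t}(e)\rangle_{z_e}$ and $\langle D^{1/2}_{\pm1/2,\pm1/2}(h_e)\rangle_{z_e}$ are all nonzero at $O(t^0)$ (they equal $\mp\eta$ and $e^{\pm z_e/2}$-type quantities up to normalization), so deleting them neither creates nor destroys a leading-order contribution.

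The main obstacle I anticipate is bookkeeping the commutator terms uniformly: one must check that \emph{every} commutator generated when normal-ordering $\mathcal{M}$ — whether it comes from moving a $\hat p^0(e)$ or from moving a diagonal holonomy — strictly raises the $t$-power relative to the ``no-commutator'' term, and that this holds regardless of where in the word the deleted factors originally sat. This requires combining the explicit $t$-counting of Eqs.~\eqref{eq:mfoldcom1}--\eqref{eq:resulto1} with the selection rule $N_++M_+=N_-+M_-$ and the observation that replacing a diagonal $D^{1/2}_{aa}$ by an off-diagonal $D^{1/2}_{\mp1/2,\pm1/2}$ increases $M_+$ (or $M_-$) by one; the rigorous version of this argument is precisely what is deferred to Appendix~\ref{sec:matrixelementhp}, so in the main text I would state the reduction and point to the appendix for the stationary-phase and Hessian estimates that make the power-counting airtight.
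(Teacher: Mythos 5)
Your reduction strategy (peel off the diagonal factors by normal ordering and argue that all commutator remainders are of higher order) has a genuine gap at its central step: even granting that every commutator term is $O(t^{M_++N_++1})$ — which your counting does support, since each commutator consumes an explicit factor of $t$ while leaving $N_++M_+$ unchanged — what you are left with is the expectation value of the \emph{product} $(\hat p_s^0)^{N_{0,s}}(\hat p_t^0)^{N_{0,t}}(D^{1/2}_{\frac12\frac12})^{M_{0+}}(D^{1/2}_{-\frac12-\frac12})^{M_{0-}}\mathcal M'$, and you then assert that this factorizes at order $t^{M_++N_+}$ into the product of single-factor expectation values times $\langle\mathcal M'\rangle_{z_e}$. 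That factorization is precisely the nontrivial content of Eq.~\eqref{eq:leadingequalmultileading}; it does not follow from Eq.~\eqref{eq:allvanish}, which only treats the special case where \emph{all} flux indices vanish, nor from the third-step formulas \eqref{eq:qmzw}--\eqref{eq:expectedgeneralzw}, since after merging holonomies via \eqref{eq:coupleholonomy} the diagonal factors do not appear as separate multiplicative blocks. Deferring this to Appendix~\ref{sec:matrixelementhp} does not close the gap: the appendix material is not a supporting estimate for your commutator reduction, it is the paper's actual proof, which proceeds by a different mechanism — inserting resolutions of identity, performing the stationary-phase expansion \eqref{eq:expectationaftersaddle}, and showing (Lemmas \ref{lmm:derivativeE} and \ref{lmm:inversehessian} on the first derivatives of the leading matrix elements and on the block structure of the inverse Hessian) that at order $t^{s_o}$, $s_o=M_++N_+$, all $2s_o$ derivatives in $L_{s_o}$ must pair up the off-diagonal factors $E^{\rm nd}_{a;0}$, so the diagonal factors $E^{\rm d}_{a;0}$ are never differentiated and enter only as their critical-point values, i.e.\ as the leading-order expectation values of $\hat p^0$ and $D^{1/2}_{aa}$; comparing with the identical pairing sum for $\mathcal M'$ gives both \eqref{eq:leadingequalmultileading} and the ``if and only if''.

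If you want to salvage your route without redoing the stationary-phase analysis, you would need to supply the missing factorization explicitly, e.g.\ by writing each diagonal factor as $\langle\hat O^{\rm d}\rangle+(\hat O^{\rm d}-\langle\hat O^{\rm d}\rangle)$ and invoking Theorem~\ref{thm:leadingordergeneral} to show that each insertion of the fluctuation operator raises the count $N_0$ by one and hence pushes the connected contribution to $O(t^{s_o+1})$. As written, your proposal asserts the factorization rather than proving it, so the argument is incomplete.
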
 
 The proof of this theorem is quite technical and, thus, presented in Appendix \ref{app:proofofthem44}

 \section{Cosmological expectation value}\label{sec:cosmodel}
We apply our computation of expectation values to coherent states labelled by homogeneous and isotropic data. The symmetry group of the homogeneous and isotropic cosmology is $\mathbb T \rtimes F$ where  $F$ is the isotropic subgroup and $\mathbb T$ is the translation subgroup. Denote the subgroup of $\mathbb T \rtimes F$ preserving $\gamma$ by  $S_\gamma$. 
A classical state $\g$ is said to be symmetric with respect to $S_\gamma$ if $s^*\g:=\g \circ s$ is identical with $\g$ up to a gauge transformation $s$ ($\forall s\in S_\gamma$). According to this definition, classically symmetric states $\g$ are of the form \cite{alesci2014quantum}
\begin{equation}\label{eq:coscoh}
\g:e\mapsto g_e=n_e e^{iz\tau_3} n_e^{-1}
\end{equation}
with $n_e\in$SU(2) satisfying
\begin{equation}
n_e\tau_3 n_e^{-1}=\vec n_e\cdot\vec\tau.
\end{equation}
 In the last equation, $\vec n_e$ is the unit vector pointing to direction of edge $e$. Then, for each $s=(t,f)\in \mathbb T \rtimes F$, it can be verified that
 \begin{equation}\label{eq:gaugeAndDiff}
 \g\circ s={\rm Ad}_f\circ \g
 \end{equation}
 where ${\rm Ad}_f \circ \g(e)=f \g(e) f^{-1}$ for all $e\in E(\gamma)$.

\subsection{Symmetries of the expectation value} \label{sec:symmetry1}
Given $\hat F_e$ as a polynomial of fluxes and holonomies on $e$.  For $s=(t,f)\in\mathbb T\rtimes F$, Eq. \eqref{eq:gaugeAndDiff} results in 
\begin{equation}\label{eq:DifftoOp}
\langle \psi_{g_{s(e)}}|\hat F_{s(e)}|\psi_{g_{s(e)}}\rangle=\langle \psi_{f g_{e}f^{-1}}|\hat F_{e}|\psi_{fg_{e}f^{-1}}\rangle=\langle \psi_{g_{e}}|(f\triangleright \hat F_{e})|\psi_{g_{e}}\rangle.
\end{equation}
where $f\triangleright \hat F_e$ denote the gauge transformed operator of $\hat F_e$ by $f$ and the last equality can be derived by using the similar procedure as to derive Eq. \eqref{eq:basicformula}. 

To expand the expectation value of $\hat H_E$ and $\hat H_L$ to order $O(t)$, one needs to replace the operator $\hat V_v$ by $\hat V_{GT}^{(v)}$ defined in \eqref{eq:VGT}. 
Then the Euclidean part $\widehat{H_E[N]}$ is rewritten in terms of (there is no summation over $I,J,K$ here)
\begin{equation}\label{eq:hen}
\hat{H}_E^{(n)}(v;e_I,e_J,e_K)=\frac{1}{i\beta a^2 t}\epsilon_{IJK}\tr(h_{\alpha_{IJ}}[h_{e_K},\hat Q_v^{2n}]h_{e_K}^{-1}),
\end{equation}
and the Lorentzian part, in terms of 
\begin{equation}\label{eq:hln}
\begin{aligned}
&\hat{H}_L^{(\vec k)}(v;v_1,v_2,v_3,v_4;e_I,e_J,e_K)\\
=&\frac{-1}{2i\beta^7 a^{10} t^5}\epsilon^{IJK}\tr( [h_{e_I},[\hat Q_{v_1}^{2k_1},\hat{H}_E^{(k_2)}(v_2)]]h_{e_I}^{-1} [h_{e_J},[\hat Q_{v_3}^{2k_3},\hat{H}_E^{(k_4)}(v_4)]]h_{e_J}^{-1}[h_{e_K},\hat Q_v^{2k_5}]h_{e_K}^{-1})
\end{aligned}
\end{equation}
with $\vec k=(k_1,k_2,k_3,k_4,k_5)$. Define 
\begin{equation}\label{eq:HEnv}
\hat H_E^{(n)}(v)=\sum_{e_I,e_J,e_K}\hat H_E^{(n)}(v;e_I,e_J,e_K)
\end{equation}
 and
 \begin{equation}\label{eq:HLnv}
\hat H_L^{(\vec k)}(v)=\sum_{v_1,v_2,v_3,v_4,e_I,e_J,e_K}\hat{H}_L^{(\vec k)}(v;v_1,v_2,v_3,v_4;e_I,e_J,e_K).
 \end{equation}
The Euclidean and Lorentzian parts, with the replacement $\hat V_v\to \hat V_{GT}^{(n)}$ truncated at a finite $n$, are linear combinations of $\hat H_E^{(n)}(v)$ and $\hat H_L^{(\vec k)}(v)$ with various $n$ and $\vec k$ respectively.


\subsubsection{Symmetries of the Euclidean part}

According to Eq. \eqref{eq:DifftoOp} and the gauge invariance of $\hat{H}_E^{(n)}(v;e_I,e_J,e_K)$,
one realizes the following symmetry
\begin{equation}\label{eq:HEDiffInv}
\begin{aligned}
\langle \hat{H}_E^{(n)}(v;e_I,e_J,e_K)\rangle&=\langle \hat{H}_E^{(n)}(s(v);s(e_I),s(e_J),s(e_K))\rangle,\\
\end{aligned}
\end{equation}
where $\langle\cdot\rangle$ denotes the expectation value with respect to the cosmological coherent state given by \eqref{eq:coscoh}, and $s=(t,f)$ is a symmetry of the graph.

By this relation, Eq. \eqref{eq:HEnv} is simplified as
\begin{equation}\label{eq:HEnvp}
\hat H_E^{(n)}(v)=24(\hat H_E^{(n)}(v;e_x^+,e_y^+,e_z^+)+\hat H_E^{(n)}(v;e_x^+,e_y^+,e_z^-))
\end{equation}
where the prefactor 24 is deduced by the fact that there are totally 48 terms in the RHS of \eqref{eq:HEnv}.

Moreover, 
$[h_{e_z^\pm},\hat Q_v^{2n}]h_{e_z^\pm}^{-1}$ appearing in $\hat{H}_E^{(n)}(v;e_x^+,e_y^+,e_z^\pm)$ potentially relates $\hat{H}_E^{(n)}(v;e_x^+,e_y^+,e_z^+)$ with $\hat{H}_E^{(n)}(v;e_x^+,e_y^+,e_z^-)$.Concisely,
\begin{equation}\label{eq:ezpm}
\begin{aligned}
[h_{e^\pm_z},\hat Q_v^{2n}]h_{e^\pm_z}^{-1}=&\sum_{l=1}^{2n}\sum_{\mathcal P_l}\Big(\mp it\frac{(\beta a^2)^3}{8}\Big)^l\hat Q_v^{p_1}\epsilon_{\alpha_1\beta_1\gamma_1}\hat{X}^{\alpha_1}\hat{Y}^{\beta_1}\tau^{\gamma_1}\hat Q_v^{p_2}\epsilon_{\alpha_2\beta_2\gamma_2}\hat{X}^{\alpha_2}\hat{Y}^{\beta_2}\tau^{\gamma_2}
\cdots \hat Q_v^{p_l} \epsilon_{\alpha_l\beta_l\gamma_l}\hat{X}^{\alpha_l}\hat{Y}^{\beta_l}\tau^{\gamma_l}\hat Q_v^{p_{l+1}}
\end{aligned}
\end{equation}
where the edges $e_z^\pm$ are oriented so that $s(e_z^+)=v=s(e_z^-)$, $\hat X^\alpha=\hat p_s^\alpha(e')-\hat p_t^\alpha(e')$ and $\hat Y^\alpha=\hat p_s^\alpha(e'')-\hat p_t^\alpha(e'')$, and $\mathcal P=\{p_1,p_2,\cdots,p_{l+1}\}$ with $p_i\in\mathbb Z$, $p_i\geq 0$ and $\sum_{i=1}^{l+1}p_i=2n-l$.\footnote{A general equation can be obtained analogously if the holonomy $h_{e_z^\pm}$ is replaced by a holonomy along other edges. In the following context, Eq. \eqref{eq:ezpm} will be usually referred as this general equation. }
Substituting the last equation into the expression of $\hat H_E^{(n)}(v)$, one has that
\begin{equation}\label{eq:HEezpezm}
\hat{H}_E^{(n)}(v;e_x^+,e_y^+,e_z^+)+\hat{H}_E^{(n)}(v;e_x^+,e_y^+,e_z^-)=2\hat{\tilde{H}}_E^{(n)}(v;e_x^+,e_y^+,e_z^+)
\end{equation}
where $\hat{\tilde{H}}_E^{(n)}(v;e_x^+,e_y^+,e_z^+)$ is the operator $\hat{H}_E^{(n)}(v;e_x^+,e_y^+,e_z^+)$ with applying the following replacement 
\begin{equation}\label{eq:ezpmreplacement}
[h_{e_z^+},\hat Q_v^{2n}]h_{e_z^+}^{-1}\to \sum_{l \text{ is odd} }\sum_{\mathcal P_l}\Big(- it\frac{(\beta a^2)^3}{8}\Big)^l\hat Q_v^{p_1}\epsilon_{\alpha_1\beta_1\gamma_1}\hat{X}^{\alpha_1}\hat{Y}^{\beta_1}\tau^{\gamma_1}\hat Q_v^{p_2}\epsilon_{\alpha_2\beta_2\gamma_2}\hat{X}^{\alpha_2}\hat{Y}^{\beta_2}\tau^{\gamma_2}
\cdots \hat Q_v^{p_l} \epsilon_{\alpha_l\beta_l\gamma_l}\hat{X}^{\alpha_l}\hat{Y}^{\beta_l}\tau^{\gamma_l}\hat Q_v^{p_{l+1}}.
\end{equation} 

By Eq. \eqref{eq:HEnvp}, $\hat H_E^{(n)}(v)$ becomes $\hat H_E^{(n)}(v)=48 \hat{\tilde{H}}_E^{(n)}(v;e_x^+,e_y^+,e_z^+)$. Thus,
when we calculate the expectation value of the Euclidean part, it is only necessary to consider $\hat{\tilde{H}}_E^{(n)}(v;e_x^+,e_y^+,e_z^+)$ rather than $\hat{H}_E^{(n)}(v;e_x^+,e_y^+,e_z^\pm)$. 

Further, according to Eq. \eqref{eq:ezpm}, the Euclidean Hamiltonian is of the form $$\hat H_E^{(n)}(v;e_I,e_J,e_K)=\epsilon_{IJK}\tr(h_{\alpha_{IJ}}\tau^\alpha)\hat O_\alpha,$$
where $\hat O_\alpha$ is a polynomial of fluxes. 
Then the fact $\tr(h\tau^\alpha)=-\tr(h^{-1}\tau^\alpha)$ gives
\begin{equation}\label{eq:extrasymmtryofHE}
\hat H_E^{(n)}(v;e_I,e_J,e_K)=\hat H_E^{(n)}(v;e_J,e_I,e_K).
\end{equation}

In summary, originally there are totally 48 terms for every $\hat H_E^{(n)}(v)$ in Eq. \eqref{eq:HEnv}. However, thanks to the symmetries discussed in this section, we have
$\hat H_E^{(n)}(v)=48 \hat{\tilde{H}}_E^{(n)}(v;e_x^+,e_y^+,e_z^+)$, which means that only the expectation value of  $\hat{\tilde{H}}_E^{(n)}(v;e_x^+,e_y^+,e_z^+)$ is necessary to be computed.

\subsubsection{Symmetries of the Lorentzian part} 
Considering a list of vertices and edges $(v;v_1,v_2,v_3,v_4;e_I,e_J,e_K)$ with $e_I,\ e_J$ and $e_K$ being outgoing from $v$, we have that $(e_I, e_J, e_K)$ is either left-handed or right-handed. Thus,
there exists a rotation $f$ which leaves $v$ invariant such that $(v;f(v_1),f(v_2),f(v_3),f(v_4);f(e_I),f(e_J),f(e_K))$ is either 
$$(v;f(v_1),f(v_2),f(v_3),f(v_4);e_x^+,e_y^+,e_z^+)$$ or 
$$(v;f(v_1),f(v_2),f(v_3),f(v_4);e_x^+,e_y^+,e_z^-).$$ Therefore, \eqref{eq:HLnv} is simplified to
\begin{equation}\label{eq:HLnvp}
\hat H_L^{(\vec k)}(v)=24\sum_{v_1,v_2,v_3,v_4}\left(\hat{H}_L^{(\vec k)}(v;v_1,v_2,v_3,v_4;e_x^+,e_y^+,e_z^+)+\hat{H}_L^{(\vec k)}(v;v_1,v_2,v_3,v_4;e_x^+,e_y^+,e_z^-)\right)
 \end{equation}
Moreover, since the term $[h_{e_z^+},\hat Q_v^{2n}]h_{e_z^+}^{-1}$ appears in $\hat H_L^{(\vec k)}$ too, Eq. \eqref{eq:ezpm} can be applied again to simplify Eq. \eqref{eq:HLnvp} to obtain the following expression
\begin{equation}\label{eq:HLnvpp}
\hat H_L^{(\vec k)}(v)=48\sum_{v_1,v_2,v_3,v_4}\hat{\tilde H}_L^{(\vec k)}(v;v_1,v_2,v_3,v_4;e_x^+,e_y^+,e_z^+),
 \end{equation}
 where this $\hat{\tilde H}_L^{(\vec k)}$ operator is given by $\hat{H}_L^{(\vec k)}(v;v_1,v_2,v_3,v_4;e_x^+,e_y^+,e_z^+)$ with the replacement \eqref{eq:ezpmreplacement}.  As a consequence, it is only necessary to compute the expectation value of 
 $$\hat{\tilde H}_L^{(\vec k)}(v;v_1,v_2,v_3,v_4;e_x^+,e_y^+,e_z^+)$$
 for different vertices $v_1,\, v_2,\, v_3$ and $v_4$.

The above discussion simplifies the computation of the Lorentzian part. 
However, more symmetries are required in order to reduce the computation time to an acceptable level.
For this purpose, let us firstly look at the term $[h_e[\hat Q_{v_1}^m,h_{e}^{-1}],\hat Q_{v_2}]$ which is from the commutator between the volume and the Euclidean part. We obtain the following proposition which can be proven by Eq. \eqref{eq:ezpm} directly. 
\begin{pro}\label{pro:simplifyQ}
Given an edge $e$ with the source $s(e)$ and the target $t(e)$, $
[h_e[\hat Q_{s(e)}^m,h_{e}^{-1}],\hat Q_{v}]= 0$ for all $v\neq s(e)$.
\end{pro}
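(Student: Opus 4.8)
The plan is to unpack both commutators in $[h_e[\hat Q_{s(e)}^m,h_e^{-1}],\hat Q_v]$ and exploit the locality structure of the operators $\hat Q_v$. First I would recall from Eq.~\eqref{eq:qdefinition} that $\hat Q_v$ is built entirely out of the flux operators $\hat p_s^\alpha(e'_i)$ and $\hat p_t^\alpha(e'_i)$ attached to the six edges incident to the vertex $v$; in particular the flux operators entering $\hat Q_v$ and $\hat Q_{v'}$ commute whenever $v\neq v'$, and a flux operator on an edge $e'$ commutes with the holonomy $D^\iota(h_e)$ unless $e'=e$ (up to orientation), by the commutators \eqref{eq:commutators}.

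The key step is to analyze the inner object $A:=h_e[\hat Q_{s(e)}^m,h_e^{-1}]$. Expanding the commutator via Eq.~\eqref{eq:ezpm} (in its general form, with the holonomy along $e$ rather than $e_z^\pm$), $A$ is a finite sum of terms in which every flux operator that appears is one of the flux operators building $\hat Q_{s(e)}$ — i.e.\ a flux on an edge incident to $s(e)$ — together with the leftover holonomy factors $D^{1/2}(h_e)$. Now I would take the outer commutator with $\hat Q_v$ for $v\neq s(e)$. Since $v\neq s(e)$, all flux operators in $\hat Q_v$ live on edges incident to $v$; in order for $[A,\hat Q_v]\neq 0$ one would need either a flux in $A$ to fail to commute with a flux in $\hat Q_v$ (impossible, as they sit on edges incident to the distinct vertices $s(e)$ and $v$, so $\delta_{ee'}=0$ in \eqref{eq:commutators}), or a holonomy $D^{1/2}(h_e)$ in $A$ to fail to commute with a flux in $\hat Q_v$ (which by \eqref{eq:commutators} requires $e$ to be one of the edges incident to $v$ used in $\hat Q_v$, i.e.\ $\delta_{ee'}\neq 0$).

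Hence the only subtlety — and the step I expect to be the main obstacle — is the case where $e$ is itself incident to $v$ (so $t(e)=v$, since $s(e)\neq v$), so that a priori $D^{1/2}(h_e)$ in $A$ could fail to commute with the $\hat p_s^\alpha(e)$ or $\hat p_t^\alpha(e)$ appearing in $\hat Q_v$. Here I would observe that $A=h_e[\hat Q_{s(e)}^m,h_e^{-1}]$ can be rewritten as $A=\hat Q_{s(e)}^m - h_e\,\hat Q_{s(e)}^m\,h_e^{-1}$, and that conjugation by $h_e$ acts on the flux operators on $e$ only through the adjoint action, producing again flux operators on $e$ contracted with $D^1(h_e)$ matrix elements; more to the point, the combination entering $\hat Q_v$ at the vertex $v=t(e)$ is $\hat p_t^\alpha(e)$ (the left-invariant field), while the combination naturally produced by conjugating the $\hat p_s$'s on edges at $s(e)$ by $h_e$ is controlled by the groupoid relation $p_t^k(e)\tau_k=-h_e^{-1}p_s^k(e)\tau_k h_e$; tracking these indices carefully one finds the contribution cancels, leaving $[A,\hat Q_v]=0$. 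I would therefore structure the proof as: (i) expand $[\hat Q_{s(e)}^m,h_e^{-1}]$ using \eqref{eq:ezpm}; (ii) note disjointness of the flux content from $\hat Q_v$ for $v\neq s(e),t(e)$, giving the result trivially; (iii) handle $v=t(e)$ by the conjugation/adjoint-action argument above, which is really just a restatement that $\hat Q_{t(e)}$ built from $\hat p_t(e)$ is insensitive to the specific combination of $\hat p_s$'s on the far vertex. The whole argument is essentially bookkeeping of which operators share an edge, so no genuine computation is needed beyond invoking \eqref{eq:commutators} and \eqref{eq:ezpm}.
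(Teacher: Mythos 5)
Your overall strategy---expand $h_e[\hat Q_{s(e)}^m,h_e^{-1}]$ via Eq.~\eqref{eq:ezpm} and then invoke edge-locality of the basic commutators \eqref{eq:commutators}---is exactly the paper's route (the paper states the proposition follows from Eq.~\eqref{eq:ezpm} directly), but your handling of the only nontrivial case rests on a misreading that leaves a genuine gap. You claim the expansion leaves ``leftover holonomy factors $D^{1/2}(h_e)$''; it does not. The whole point of Eq.~\eqref{eq:ezpm} is that every $\tau$-matrix insertion produced by commuting $h_e$ past $\hat p_s^\alpha(e)$ ends with the surviving holonomy cancelling against $h_e^{-1}$, so $h_e[\hat Q_{s(e)}^m,h_e^{-1}]$ is a polynomial \emph{purely} in flux operators attached to the vertex $s(e)$ (powers of $\hat Q_{s(e)}$ and the $\hat X,\hat Y$ combinations) contracted with constant $\tau$-matrices, with no holonomy content whatsoever. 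Once this is noted, the case $v=t(e)$ that you single out as ``the main obstacle'' is not special: every flux in the expansion sits at the $s(e)$-end of its edge, every flux in $\hat Q_v$ sits at the $v$-end, and even on the shared edge $e$ one has $[\hat p_s^i(e),\hat p_t^j(e)]=0$ by the first line of \eqref{eq:commutators}. That observation finishes all $v\neq s(e)$ uniformly.

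Your attempted patch for $v=t(e)$ would not go through as written. Quantum mechanically, conjugation gives $h_e\,\hat p_s^i(e)\,h_e^{-1}=\hat p_s^i(e)\,\mathbb{I}-it\,\tau^i$, a shift by a constant matrix---not ``flux operators on $e$ contracted with $D^1(h_e)$ matrix elements''. The classical groupoid relation $p_t^k(e)\tau_k=-h_e^{-1}p_s^k(e)\tau_k h_e$ you invoke receives ordering ($O(t)$) corrections, and taken at face value it would convert the $\hat p_s(e)$'s into $\hat p_t(e)$-type operators, which do \emph{not} commute with the $\hat p_t(e)$'s inside $\hat Q_{t(e)}$---so the heuristic points in the wrong direction, and the decisive cancellation is merely asserted (``tracking these indices carefully'') rather than demonstrated. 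The correct resolution is already contained in your step (i): after Eq.~\eqref{eq:ezpm} nothing remains in $h_e[\hat Q_{s(e)}^m,h_e^{-1}]$ that can fail to commute with $\hat Q_v$ for any $v\neq s(e)$, including $v=t(e)$.
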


With this proposition, we consider the commutator $ [\hat Q_{v_1}^{2k},\hat H_E^{(n)}(v_2;e_I,e_J,e_K)]$ which defines the operator $\hat K$ as
\begin{equation}\label{eq:Koperator}
\hat K=\frac{1}{it}[\hat V,\hat H_E].
\end{equation}
By definition, we have
\begin{equation}\label{eq:quantumComHV}
\begin{aligned}
 [\hat Q_{v_1}^{2k},\hat H_E^{(n)}(v_2;e_I,e_J,e_K)]
 =\frac{2}{i\beta a^2 t}(\hat K_1+\hat K_2)
\end{aligned}
\end{equation}
with
\begin{equation}
\begin{aligned}
K_1:=&\epsilon_{IJK}\tr([\hat \hat Q_{v_1}^{2k},h_{\alpha_{IJ}}][h_{e_K},\hat Q_{v_2}^{2n}]h_{e_K}^{-1})\\
\hat K_2:=&\epsilon_{IJK}\tr(h_{\alpha_{IJ}}\big[\hat Q_{v_1}^{2k},[h_{e_K},\hat Q_{v_2}^{2n}]h_{e_K}^{-1}\big]).
\end{aligned}
\end{equation}
The classical analogy of Eq. \eqref{eq:Koperator} is
\begin{equation}
K=\{V,H_E\}. 
\end{equation}
Substituting the expression of $H_E$, one has
\begin{equation}\label{eq:classicalK}
K=\{V,H_E\}=\int \dd^3 x\{V, F_{ab}^i(x)\}\frac{\epsilon_{ijk}E^a_i E^b_j}{\sqrt{\det(E)}}.
\end{equation}
According to Eq. \eqref{eq:classicalK}, only the Poisson bracket between volume $V$ and the curvature $F_{ab}^i$ is involved in the classical expression of $K$. In the quantum theory, $F_{ab}^i$ is quantized to a holonomy along some loop $\alpha_{IJ}$. Thus, comparing to Eq. \eqref{eq:quantumComHV}, the operator $\hat K_1$ corresponds to the RHS of Eq. \eqref{eq:classicalK}, while $\hat K_2$ gives an extra term in $\hat K$. According to Proposition \ref{pro:simplifyQ}, this extra term $\hat K_2$ vanishes unless $v_1=v_2=s(e_K)$ at which Eq. \eqref{eq:ezpm} can be applied to cancel the holonomies inside the commutators of $\hat K_2$. Then $\hat K_2$ is simplified to the following form
\begin{equation*}
\tr(h_{\alpha_{IJ}}[\hat Q_{v_1}^{2k},\text{polynomial of only fluxes}]).
\end{equation*}
Therefore, it is because of the non-commutativity between the flux operators that
the operator $\hat K_2$ appears in $\hat K$. Note that the existence of $\hat{K}_2$ does not affect the continuum limit of $\lim_{t\to0}\langle\widehat{H[N]}\rangle$ (the classical limit of $\langle\widehat{H[N]}\rangle$ reduces to the classical continuum expression of $H[N]$ when the sizes of lattice edges are neglected \cite{Han:2020chr}). 

By Eq. \eqref{eq:ezpmreplacement}, $\hat K_2$ can be simplified as
\begin{equation}
\begin{aligned}
\hat K_2=&\epsilon_{IJK} \sum_{p_1+p_2=2n-1}(2k)\frac{-it (\beta a^2)^3}{8}\,\tr(h_{\alpha_{IJ}}\tau^\gamma)\hat Q_{v}^{p_1}\big[\hat Q_{v},\epsilon_{\alpha_1\beta_1\gamma_1}\hat X_I^{\alpha_1}\hat X_J^{\beta_1}\big]\hat Q_{v}^{2k-1+p_2}\\
&+\epsilon_{IJK}\sum_{p_1+p_2=2n-1}\frac{2k(2k-1)}{2}\frac{-it (\beta a^2)^3}{8}\,\tr(h_{\alpha_{IJ}}\tau^\gamma)\hat Q_{v}^{p_1}\big[\hat Q_{v},\big[\hat Q_{v},\epsilon_{\alpha_1\beta_1\gamma_1}\hat X_I^{\alpha_1}\hat X_J^{\beta_1}\big]\big]\hat Q_{v}^{2k-2+p_2}\\
&+O(t^4)
\end{aligned}
\end{equation}
where $\hat X_I^\alpha=\hat p_s^\alpha(e_I)-\hat p_t^\alpha(e_I)$ and 
the conclusion that $\hat K_2\neq0$ if $v_1= v_2\equiv v$ is used.
For the first term, we have up to $O(t^4)$
\begin{equation}
\begin{aligned}
\text{first term}
=&\epsilon_{IJK}(2n)(2k)\frac{-it (\beta a^2)^3}{8}\,\tr(h_{\alpha_{IJ}}\tau^\gamma)\hat Q_{v}^{2n}\big[\hat Q_{v},\epsilon_{\alpha_1\beta_1\gamma_1}\hat X_I^{\alpha_1}\hat X_J^{\beta_1}\big]\hat Q_{v}^{2k-2}\\
&-(2k)\frac{2n(2n+1)}{2}\frac{-it (\beta a^2)^3}{8}\,\tr(h_{\alpha_{IJ}}\tau^\gamma)\hat Q_{v}^{2n-1}\big[\hat Q_v,\big[\hat Q_{v},\epsilon_{\alpha_1\beta_1\gamma_1}\hat X_I^{\alpha_1}\hat X_J^{\beta_1}\big]\big]\hat Q_{v}^{2k-2}
\end{aligned}
\end{equation}
For the second term, up to $O(t^4)$ we have
\begin{equation}
\begin{aligned}
\text{second term}=\epsilon_{IJK}(2n)\frac{2k(2k-1)}{2}\frac{-it (\beta a^2)^3}{8}\,\tr(h_{\alpha_{IJ}}\tau^\gamma)\hat Q_{v}^{2n-1}\big[\hat Q_{v},\big[\hat Q_{v},\epsilon_{\alpha_1\beta_1\gamma_1}\hat X_I^{\alpha_1}\hat X_J^{\beta_1}\big]\big]\hat Q_{v}^{2k-2}
\end{aligned}
\end{equation}
Finally, $\hat K_2$ is
\begin{equation}\label{eq:K2}
\begin{aligned}
\hat K_2=&\epsilon_{IJK}(2n)(2k)\frac{-it (\beta a^2)^3}{8}\,\tr(h_{\alpha_{IJ}}\tau^\gamma)\hat Q_{v}^{2n}\big[\hat Q_{v},\epsilon_{\alpha_1\beta_1\gamma_1}\hat X_I^{\alpha_1}\hat X_J^{\beta_1}\big]\hat Q_{v}^{2k-2}\\
&+\epsilon_{IJK}\frac{(2k) (2n) (2k-2n-2)}{2}\frac{-it (\beta a^2)^3}{8}\,\tr(h_{\alpha_{IJ}}\tau^\gamma)\hat Q_{v}^{2n-1}\big[\hat Q_v,\big[\hat Q_{v},\epsilon_{\alpha_1\beta_1\gamma_1}\hat X_I^{\alpha_1}\hat X_J^{\beta_1}\big]\big]\hat Q_{v}^{2k-2}\\
&+O(t^4)
\end{aligned}
\end{equation}
Because of the commutators between fluxes operators,
\begin{equation}
\begin{aligned}
[\hat p^\alpha(e),\hat p^\beta(e)]=&t(-1)^{\gamma}\varepsilon_{-\gamma\alpha\beta}\hat p^\gamma(e)=:t C_{\alpha\beta\gamma}\hat p^\gamma(e)
\end{aligned}
\end{equation}
with $\varepsilon_{-1,0,1}=1$ and $p^\alpha(e)$ denoting $p_t^\alpha(e)$ or $p_s^\alpha(e)$, one obtains the following
\begin{equation}\label{eq:commutatorPssPt}
[\hat p_s^\alpha(e^+)+s_1\hat p_t^\alpha(e^-),\hat p_s^\beta(e^+)+s_2\hat p_t^\beta(e^-)]=tC_{\alpha\beta\gamma}(\hat p_s^\gamma(e^+)+s_1s_2\hat p_t^\gamma(e^-)),
\end{equation}
with $s_1,s_2=\pm 1$.
 Substituting Eq. \eqref{eq:commutatorPssPt} into Eq. \eqref{eq:K2}, we express $\hat K_2$, as well as $\hat K$, as a polynomial of $h_e$ and $p_s^\alpha(e^+)\pm p_t^\alpha(e^-)$.

Moreover, thanks to the above results, $\hat H_L^{(\vec k)}(v)$ in Eq. \eqref{eq:HLnvpp} can finally be simplified to be in terms of
\begin{equation}\label{eq:cruxofsimplification}
\frac{C}{t^2}\tr(h_{e_x^+}\, F_1 h_{e_x^+}^{-1}h_{e_y^+} F_2 h_{e_y^+}^{-1} G_1)
\end{equation}
where $C$ is some constant of order $t^0$ or higher,  $F_i$ with $i=1,2$ are some monomials of holonomies and ($p_s^\alpha(e^+)\pm p_t^\alpha(e^-)$) and $G_1$ is a monomial of  ($p_s^\alpha(e^+)- p_t^\alpha(e^-)$).

The results in Sec. \ref{se:leading} can be used to reduce the computational complexity too. To use these results, one needs to apply the basic commutation relations \eqref{eq:commutators} to simplify the Hamiltonian operator such that the operators after the simplification are written in terms of $C \hat P$ with $C$ being some constant of order $t^0$ or higher, and $\hat P$ being some monomial of holonomies and fluxes.

In order to achieve so, one needs to permute $h_{e_x^+}$ and $\hat F_1$, as well as $h_{e_y^+}$ and $\hat F_2$, in Eq. \eqref{eq:cruxofsimplification} with applying Eq. \eqref{eq:changeO}.
 Take the permutation of $h_{e_x^+}$ and $\hat F_1$ as an example:
Implementing the results of Eq. \eqref{eq:changeO}, one substitutes $\hat O_1$ by $h_{e_x^+}$, and $\hat O_{i_k}$ by $\hat p_s^{\alpha}(e_x^+)$ and/or $\hat p_t^{\alpha}(e_x^+)$. One of many these substitutions inevitably generates some special terms in which the commutators only contain $\hat p_s^{\alpha}(e_x^+)$. The computation of these commutators with \eqref{eq:commutators} will lead to results that are proportional to $h_{e_x^+}$. After substituting these permuted results into Eq. \eqref{eq:cruxofsimplification}, this $h_{e_x^+}$ eventually cancels with $h_{e_x^+}^{-1}$.
(similar to Eq. \eqref{eq:ezpm}). One can apply the same mechanism to permute $h_{e_y^+}$ and $\hat F_2$.
 
 Let us collect these special terms coming from permuting $h_{e_x^+}$ and $\hat F_1$ as well as permuting $h_{e_y^+}$ and $\hat F_2$. Denote the partial sum of these special terms in $\hat H_L^{(k)}$ by ${}^{\rm alt}\hat{H}_L^{(\vec k)}$. Because of the cancellation between holonomies and their inverses, ${}^{\rm alt}\hat{H}_L^{(\vec k)}$ no longer depends on $h_{e_x^+}$ and $h_{e_y^+}$ . 
 
 It turns out that ${}^{\rm alt}\hat{H}_L^{(\vec k)}$ possesses more symmetries which will be discussed shortly below. These special terms can be equivalently selected by considering only the non-commutativity between $h_{e_x^+}$ and $p_s^\alpha(e_x^+)$ but ignoring the non-commutativity between $h_{e_x^+}$ and $p_t^\alpha(e_x^+)$. That is
 \begin{equation}\label{eq:halt0}
\text{the special terms of }h_{e_x^+}\hat F_1 h_{e_x^+}^{-1} =h_{s_x^+}\hat F_1 h_{s_x^+}^{-1}
 \end{equation}
 where $s_x^+$ is the segment within $e_x^+$ and does not contain the target $t(e_x^+)$.
 Because of the aforementioned cancellation between the holonomies and their inverses, it is remarkable that the length of the segment does not cause any ambiguity {and the operator in Eq. \eqref{eq:halt0} does not change graph ever segment of edges in the holonomy is chosen}.
 Concretely, Eq. \eqref{eq:halt0} results in
\begin{equation}
\begin{aligned}
&{}^{\rm alt}\hat{H}_L^{(\vec k)}(v;v_1,v_2,v_3,v_4;e_I,e_J,e_K)\\
=&\frac{-1}{2i\beta^7 a^{10} t^5}\epsilon^{IJK}\tr\left( [h_{s_I},[\hat Q_{v_1}^{2k_1},\hat{H}_E^{(k_2)}(v_2)]]h_{s_I}^{-1} [h_{s_J},[\hat Q_{v_3}^{2k_3},\hat{H}_E^{(k_4)}(v_4)]]h_{s_J}^{-1}[h_{s_K},\hat Q_v^{2k_5}]h_{s_K}^{-1}\right).
\end{aligned}
\end{equation}
It is interesting that the RHS could be understood as that from an alternative definition of the Lorentzian part, 
\begin{equation}
\begin{aligned}
{}^{\rm alt}\widehat{H_L[N]}=\frac{-1}{2i\beta^7 a^{10} t^5}\sum_v N(v)\sum_{s_I,s_J,s_K \text{ at } v}\varepsilon^{IJK}\tr\left( [h_{s_I},[\hat V,\hat H_E]]h_{s_I}^{-1} [h_{s_J},[\hat V,\hat H_E]]h_{s_J}^{-1}[h_{s_K},\hat V_v]h_{s_K}^{-1}\right).\label{Halt}
\end{aligned}
\end{equation}
in which all edges $e_I,e_J,e_K$ are replaced by their corresponding segments $s_I,s_J,s_K$ with $s_I\subset e_I$. Indeed, ${}^{\rm alt}\widehat{H_L[N]}$ is obtained by an alternative regularization/quantization of the Hamiltonian, i.e. via the following replacement
$$\{K,\dot{e}^aA_a(x)\}\to \frac{-1}{2\kappa(i\hbar\beta)^2}[h_{s_e},[\hat{V},\hat{H}_E]] h_{s_e}^{-1},$$
where the holonomy along the segment $s_e\subset e$ instead of the entire edge $e$ is used. Here, $\dot e^a$ denote the vector tangent to $e$.

Collect the terms in $\hat H_L^{(k)}$ other than the special terms discussed above, and denote their sum by ${}^{\rm extra}\hat{H}_L^{(\vec k)}$, namely
\begin{eqnarray}
&&{}^{\rm extr}\hat{H}_L^{(\vec k)}(v;v_1,v_2,v_3,v_4;e_I,e_J,e_K)\nonumber\\
&=&\hat{H}_L^{(\vec k)}(v;v_1,v_2,v_3,v_4;e_I,e_J,e_K)-{}^{\rm alt}\hat{H}_L^{(\vec k)}(v;v_1,v_2,v_3,v_4;e_I,e_J,e_K). 
\label{altextra}
\end{eqnarray}
The operators ${}^{\rm alt}\hat{H}_L^{(\vec k)}$ and ${}^{\rm extr}\hat{H}_L^{(\vec k)}$ are dealt with separately in our algorithm.   

For ${}^{\rm alt}\hat{H}_L^{(\vec k)}$, the simplification procedures discussed above result in 
\begin{equation}\label{eq:cruxsimplificationsolved}
\frac{C}{t^2}h_{s_x^+}\, F_1 h_{s_x^+}^{-1}h_{s_y^+} F_2 h_{s_y^+}^{-1} G_1,
\end{equation}
instead of Eq. \eqref{eq:cruxofsimplification}. 
Since $[h_{s_I},\hat p_t^\alpha(e_I)]=0$, we can simplify these terms with
\begin{equation}\label{eq:hphm1}
\begin{aligned}
h_{s^\pm}\prod_{i=1}^m (\sigma^+_i\hat p^{\alpha_i}_s(e^+)+\sigma^-_i \hat p^{\alpha_i}_t(e^-))h_{s^\pm}^{-1}=\sum_{\mathcal I}(i t)^{|\mathcal I|}\prod_{i\notin \mathcal I}(\sigma_i^+\hat p^{\alpha_i}_s(e^+)+\sigma_i^- \hat p^{\alpha_i}_t(e^-))\prod_{j\in\mathcal I}\sigma_j^\pm \tau^{\alpha_j}
\end{aligned}
\end{equation}
where $\mathcal I$ is a subsets of $\{1,2,\cdots,m\}$ with its length denoted by $|\mathcal I|$, $s^+$ and $s^-$ are segments of $e^+$ and $(e^-)^{-1}$ respectively with $e^\pm$ oriented such that $e^+$ and $(e^-)^{-1}$ are both outgoing,
$\sigma_i^+=1$ and  $\sigma_i^-=\pm1$. The summation over $e_I$, $e_J$ and $e_K$ in Eq. \eqref{eq:HLnv} motivates us to compute the following 
\begin{equation}
h_{s^+}\prod_{i=1}^m (\sigma^+_i\hat p^{\alpha_i}_s(e^+)+\sigma^-_i \hat p^{\alpha_i}_t(e^-))h_{s^+}^{-1}-h_{s^-}\prod_{i=1}^m (\sigma^+_i\hat p^{\alpha_i}_s(e^+)+\sigma^-_i \hat p^{\alpha_i}_t(e^-))h_{s^-}^{-1}.
\end{equation} 
Then, one can apply the aforementioned replacement
\begin{equation}\label{eq:hphm1replacement}
h_{s^+}\prod_{i=1}^m (\sigma^+_i\hat p^{\alpha_i}_s(e^+)+\sigma^-_i \hat p^{\alpha_i}_t(e^-))h_{s^+}^{-1}\to \sum_{\mathcal J}(i t)^{|\mathcal J|}\prod_{i\notin \mathcal J}(\sigma_i^+\hat p^{\alpha_i}_s(e^+)+\sigma_i^- \hat p^{\alpha_i}_t(e^-))\prod_{j\in\mathcal J} \tau^{\alpha_j}
\end{equation}
with $\mathcal J\subset\{1,2,\cdots,m\}$ such that 
\begin{equation}
\prod_{j\in \mathcal J}\sigma_j^-=-1.
\end{equation}

Substitute Eq. \eqref{eq:hphm1replacement} into \eqref{eq:cruxsimplificationsolved}, one cancels the prefactor $1/t^2$, simplifying \eqref{eq:cruxsimplificationsolved} to be in the form of $C\hat P$ with some constant $C$ of order $t^0$ or higher. Hence the results in Sec. \ref{se:leading} can be applied. 

Further, ${}^{\rm alt}\hat{H}_L^{(\vec k)}$ brings the following symmetries. Consider a $\pi$-rotation which transforms either $e_x^+$ to $e_x^-$ or $e_y^+$ to $e_y^-$. Denote it by $\mathfrak s$. Moreover, to indicate the dependence of $F_i$ on vertices and edges, we will rewrite $F_i$ in Eq. \eqref{eq:cruxsimplificationsolved} as $F_i(v,e)$. 
Then, {since $\mathfrak{s}(s_k^+)$ with $k=x,y$ is either $s_k^+$ or $s_k^-$ by definition of $\mathfrak{s}$, }
 Eq. \eqref{eq:hphm1} tells us 
\begin{equation}
\begin{aligned}
&h_{\mathfrak{s}(s_x^+)}F_1(\mathfrak s(v),\mathfrak s(e))h_{\mathfrak{s}(s_x^+)}^{-1}h_{\mathfrak{s}(s_y^+)}F_2(\mathfrak s(v),\mathfrak s(e))h_{\mathfrak{s}(s_y^+)}^{-1}G_1(\mathfrak s(v),\mathfrak s(e))\\
=&h_{s_x^+}F_1(\mathfrak s(v),\mathfrak s(e))h_{s_x^+}^{-1}h_{s_y^+}F_2(\mathfrak s(v),\mathfrak s(e))h_{s_y^+}^{-1}G_1(\mathfrak s(v),\mathfrak s(e)).
\end{aligned}
\end{equation}
Furthermore, with recalling Eq. \eqref{eq:DifftoOp}, we obtain the following equation
\begin{equation}\label{eq:symmetryHalt}
\begin{aligned}
&\langle h_{s_x^+}F_1(v,e)h_{s_x^+}^{-1}h_{s_y^+}F_2(v,e)h_{s_y^+}^{-1}G_1(v,e)\rangle\\
=&\langle h_{s_x^+}F_1(\mathfrak s(v),\mathfrak s(e))h_{s_x^+}^{-1}h_{s_y^+}F_2(\mathfrak s(v),\mathfrak s(e))h_{s_y^+}^{-1}G_1(\mathfrak s(v),\mathfrak s(e))\rangle,
\end{aligned}
\end{equation}
 which reduces the number of contributing vertices and edges in the computation of ${}^{\rm alt}\hat{H}_L^{(\vec k)}$.

For the operator ${}^{\rm extr}\hat{H}_L^{(\vec k)}$, Eq. \eqref{eq:hphm1} can no longer be applied. 
Therefore, the symmetry implied by \eqref{eq:symmetryHalt} is not manifested. To reduce the complexity of the computation, the following strategy is proposed. Consider a rotation, denoted by $\mathfrak r$, about the axis $(1/\sqrt{2},1/\sqrt{2},0)$ for $\pi$ radians which exchanges the $x$- and $y$-axes, and flips the $z$-axis. We obtain
\begin{equation}\label{eq:trotatedHL}
\begin{aligned}
&\langle \hat{\tilde H}_L^{(\vec k)}(v;v_1,v_2,v_3,v_4;e_x^+,e_y^+,e_z^+)\rangle=\langle\hat{\tilde H}_L^{(\vec k)}(\mathfrak r(v);\mathfrak r(v_1),\mathfrak r(v_2),\mathfrak r(v_3),\mathfrak r(v_4);\mathfrak{r}(e_x^+),\mathfrak{r}(e_y^+),\mathfrak{r}(e_z^+))\rangle\\
=&\langle\hat{\tilde H}_L^{(\vec k)}(v;\mathfrak r(v_1),\mathfrak r(v_2),\mathfrak r(v_3),\mathfrak r(v_4);e_y^+,e_x^+,e_z^-)\rangle=\langle\hat{\tilde H}_L^{(\vec k)}(v;\mathfrak r(v_1),\mathfrak r(v_2),\mathfrak r(v_3),\mathfrak r(v_4);e_y^+,e_x^+,e_z^+)\rangle
\end{aligned}
\end{equation}
in which we use the definition of $\hat{\tilde H}_L^{(\vec k)}$ in  \eqref{eq:HLnvpp}, assume $\mathfrak r(v)=v$ without loss of generality. {Moreover, the first equality of the last equation is obtained by using Eq. \eqref{eq:DifftoOp}, the second one, by the definition of $\mathfrak r$ and, the last one, by Eq. \eqref{eq:hphm1}.
}
Then  consider the operator 
\begin{equation}
\begin{aligned}
&\hat F(v;v_1,v_2,v_3,v_4;e_x^+,e_y^+,e_z^+)\\
:=&\hat{H}_L^{(\vec k)}(v;v_1,v_2,v_3,v_4;e_x^+,e_y^+,e_z^+)+\hat{H}_L^{(\vec k)}(v;\mathfrak r(v_3),\mathfrak r(v_4),\mathfrak r(v_1),\mathfrak r(v_2);e_x^+,e_y^+,e_z^+)
\end{aligned}
\end{equation}
 According to \eqref{eq:trotatedHL}, $\hat F$ is 
 \begin{equation}
\begin{aligned}
&\hat F(v;v_1,v_2,v_3,v_4;e_x^+,e_y^+,e_z^+)\\
=&-\tr( [h_{e_y^+},[\hat Q_{\mathfrak r(v_1)}^{2k_1},\hat{H}_E^{(k_2)}(\mathfrak r(v_2))]]h_{e_y^+}^{-1} [h_{e_x^+},[\hat Q_{\mathfrak r(v_3)}^{2k_3},\hat{H}_E^{(k_4)}(\mathfrak r(v_4))]]h_{e_x^+}^{-1}[h_{e_z^+},\hat Q_v^{2k_5}]h_{e_z^+}^{-1})\\
&+\tr( [h_{e_x^+},[\hat Q_{\mathfrak r(v_3)}^{2k_1},\hat{H}_E^{(k_2)}(\mathfrak r(v_4))]]h_{e_x^+}^{-1} [h_{e_y^+},[\hat Q_{\mathfrak r(v_1)}^{2k_3},\hat{H}_E^{(k_4)}(\mathfrak r(v_2))]]h_{e_y^+}^{-1}[h_{e_z^+},\hat Q_v^{2k_5}]h_{e_z^+}^{-1})
\end{aligned}
\end{equation}
 up to an overall factor.
For clarity, we will denote 
\begin{equation}
\begin{aligned}
\hat X\equiv &[h_{e_x^+},[\hat Q_{\mathfrak{r}(v_3)}^{2k_3},\hat{H}_E^{(k_4)}(\mathfrak r(v_4))]]h_{e_x^+}^{-1}\\
\hat Y\equiv  &[h_{e_y^+},[\hat Q_{\mathfrak r(v_1)}^{2k_1},\hat{H}_E^{(k_2)}(\mathfrak r(v_2))]]h_{e_y^+}^{-1}\\
\hat Z\equiv &[h_{e_z^+},\hat Q_v^{2k_5}]h_{e_z^+}^{-1}.
\end{aligned}
\end{equation}
Because of the holonomies therein, they are all operator-valued matrices whose entries, thus, will be denoted as $\hat X_{\tilde a\tilde b}$, $\hat Y_{\tilde a\tilde b}$ and $\hat Z_{\tilde a\tilde b}$ respectively. Then, one has
\begin{equation}\label{FXYZ}
\hat F=\hat X_{\tilde a\tilde b}\hat Y_{\tilde b\tilde c}\hat Z_{\tilde c\tilde a}-\hat Y_{\tilde a\tilde b}\hat X_{\tilde b\tilde c}\hat Z_{\tilde c\tilde a}=\hat Y_{\tilde b\tilde c}\hat X_{\tilde a\tilde b}\hat Z_{\tilde c\tilde a}-\hat Y_{\tilde a\tilde b}\hat X_{\tilde b\tilde c}\hat Z_{\tilde c\tilde a}+[\hat X_{\tilde a\tilde b},\hat Y_{\tilde b\tilde c}]\hat Z_{\tilde c\tilde a}.
\end{equation}
According to Eq. \eqref{eq:basicformula}, we first compute the expectation values of the operators with respect to coherent states labeled by $e^{iz_e\tau_3}$, and then gauge transform the results correspondingly. Hence, applying this procedure to the the first two terms of Eq. \eqref{FXYZ}, one finally simplifies the subtraction of their expectation values as 
\begin{equation}\label{eq:FXYZp}
\begin{aligned}
\langle \hat Y_{\tilde b\tilde c}\hat X_{\tilde a\tilde b}\hat Z_{\tilde c\tilde a}-\hat Y_{\tilde a\tilde b}\hat X_{\tilde b\tilde c}\hat Z_{\tilde c\tilde a}\rangle_{\{g_e\}}=(\mathcal C_{abcdef}-\mathcal C'_{abcdef})\langle \hat{Y}_{ab}\hat{X}_{cd}\hat{Z}_{ef}\rangle_{\{z_e\}}
\end{aligned}
\end{equation}
where $\mathcal C_{abcdef}$ and $\mathcal C'_{abcdef}$ are two sets of constant coefficients produced by the gauge-transformation. 
Thanks to the the last equation, one has
\begin{equation}\label{eq:FXYZpp}
\langle\hat F\rangle_{\{g_e\}}=(\mathcal C_{abcdef}-\mathcal C'_{abcdef})\langle \hat{Y}_{ab}\hat{X}_{cd}\hat{Z}_{ef}\rangle_{\{z_e\}}+\langle[\hat X_{\tilde a\tilde b},\hat Y_{\tilde b\tilde c}]\hat Z_{\tilde c\tilde a}\rangle_{\{g_e\}}. 
\end{equation}
Regardless of the factors $\mathcal C_{abcdef}-\mathcal C'_{abcdef}$ which is easy to compute, according to Eq. \eqref{eq:FXYZp}, the expectation values of $\langle \hat{Y}_{ab}\hat{X}_{cd}\hat{Z}_{ef}\rangle_{\{z_e\}}$ and $\langle[\hat X_{\tilde a\tilde b},\hat Y_{\tilde b\tilde c}]\hat Z_{\tilde c\tilde a}\rangle_{\{g_e\}}$ are needed to compute to get $\langle \hat F\rangle_{\{g_e\}}$. While for the last term, $[\hat X_{\tilde a\tilde b},\hat Y_{\tilde b\tilde c}]\hat Z_{\tilde c\tilde a}$ itself is of $O(t)$. Thus we only need to compute the leading-order term of $\langle [\hat X_{\tilde a\tilde b},\hat Y_{\tilde b\tilde c}]\hat Z_{\tilde c\tilde a}\rangle$ by applying Theorem \ref{thm:leadingordermultiplyPH}. By definition of $\hat F$, without applying Eq. \eqref{eq:FXYZpp}, we need to compute $\langle \hat Y_{\tilde a\tilde b}\hat X_{\tilde b\tilde c}\hat Z_{\tilde c\tilde a}\rangle$ and $\langle \hat Y_{\tilde a\tilde b}\hat X_{\tilde b\tilde c}\hat Z_{\tilde c\tilde a}\rangle$  to get $\langle\hat F\rangle$. Comparing the computed terms before and after  applying Eq. \eqref{eq:FXYZpp}, it is concluded that the computational complexity is reduced due to Eq. \eqref{eq:FXYZpp}, because it is much easier to compute $\langle [\hat X_{\tilde a\tilde b},\hat Y_{\tilde b\tilde c}]\hat Z_{\tilde c\tilde a}\rangle$ by applying Theorem \ref{thm:leadingordermultiplyPH}.

To explicitly compute $\Delta\mathcal C\coloneqq \mathcal C_{abcdef}-\mathcal C'_{abcdef}$, let us use $O_i$ to denote expectation values of polynomials of fluxes and holonomies with respect to coherent states labeled by $e^{iz_e\tau_3}$. 
Then $\hat{H}_L^{(\vec k)}$ and the corresponding results of $\Delta\mathcal C$ take the following forms, which are discussed case by case.
\begin{itemize}
 \item[(1)] If $\hat{H}_L^{(\vec k)}$ is expressed in terms of the form
 \begin{itemize}
 \item[(1.1)] $(n_x O_1\tau^{\alpha}n_x^{-1})_{\tilde a\tilde b}(n_y O_2\tau^\beta n_y^{-1})_{\tilde b \tilde c}(n_z O_3 \tau^{\gamma}n_z^{-1})_{\tilde c\tilde a}$;
 \item[(1.2)] $(n_xO_1\tau^{\alpha}n_x^{-1})_{\tilde a\tilde b}(n_yO_2\tau^{\beta_1}\tau^{\beta_2} n_y^{-1})_{\tilde b \tilde c}(n_zO_3 \tau^{\gamma} n_z^{-1})_{\tilde c\tilde a}$;
 \item[(1.3)] $(n_xO_1\tau^{\alpha_1}\tau^{\alpha_2}n_x^{-1})_{\tilde a\tilde b}(n_yO_2\tau^\beta n_y^{-1})_{\tilde b \tilde c}(n_zO_3\tau^{\gamma} n_z^{-1})_{\tilde c\tilde a}$,
 \end{itemize}
then $\Delta\mathcal C$ is defined as the following,
 \begin{equation*}
 \begin{aligned}
\Delta\mathcal C=&\tr\left[(n_x^{-1} n_y)\left(\prod_{\beta=1}^{m_y}\tau^{\beta}\right)(n_y^{-1}n_z)\tau^\gamma (n_z^{-1}n_x)\left(\prod_{\alpha=1}^{m_x}\tau^\alpha\right)\right]\\
&-\tr\left[(n_y^{-1} n_x)\left(\prod_{\alpha=1}^{m_x}\tau^{\alpha}\right)(n_x^{-1}n_z)\tau^\gamma (n_z^{-1}n_y)\left(\prod_{\beta=1}^{m_y}\tau^\beta\right)\right]     
 \end{aligned}
 \end{equation*}
 where $(m_y,m_x)=(1,1)$, $(2,1)$ and $(1,2)$ for the cases (1.1), (1.2) and (1.3) respectively. 
 \item[(2)] If $\hat{H}_L^{(\vec k)}$ is expressed in terms of the form
 \begin{itemize}
    \item[(2.1)] $(n_xO_1h_{e_x^+}\tau^{\alpha}h_{e_x^+}^{-1}n_x^{-1})_{\tilde a\tilde b}(n_yO_2h_{e_y^+}\tau^{\beta}h_{e_y^+}^{-1}n_y^{-1})_{\tilde b \tilde c}(n_zO_3\tau^\gamma n_z^{-1})_{\tilde c\tilde a}$;
    \item[(2.2)] $(n_xO_1h_{e_x^+}\tau^{\alpha}h_{e_x^+}^{-1}n_x^{-1})_{\tilde a\tilde b}(n_yO_2h_{e_y^+}\tau^{\beta_1}\tau^{\beta_2} h_{e_y^+}^{-1}n_y^{-1})_{\tilde b \tilde c}(n_zO_3 \tau^\gamma n_z^{-1})_{\tilde c\tilde a}$;
  \item[(2.3)] $(n_xO_1h_{e_x^+}\tau^{\alpha_1}\tau^{\alpha_2}h_{e_x^+}^{-1}n_x^{-1})_{\tilde a\tilde b}(n_yO_2h_{e_y^+}\tau^{\beta}h_{e_y^+}^{-1}n_y^{-1})_{\tilde b \tilde c}(n_zO_3 \tau^\gamma n_z^{-1})_{\tilde c\tilde a}$,
 \end{itemize}
 then $\Delta\mathcal C$ is defined as the following,
 \begin{equation*}
 \begin{aligned}
 \Delta\mathcal C=\left(\prod_{\beta=1}^{m_y}\tau^{\beta}\right)_{b c}\,\left(\prod_{\alpha=1}^{m_x}\tau^{\alpha}\right)_{d a}\Big\{(n_x^{-1}n_y)_{ab}[(n_y^{-1}n_z)\tau^\gamma (n_z^{-1}n_x)]_{cd}-
(n_y^{-1}n_x)_{cd}[(n_x^{-1}n_z)\tau^\gamma (n_z^{-1}n_y)_{ab}\Big\}
 \end{aligned}
 \end{equation*}
 \item[(3)] If $\hat{H}_L^{(\vec k)}$ is expressed in terms of the form
  \begin{itemize}
  \item[(3.1)]  $(n_xO_1h_{e_x^+}\tau^{\alpha}h_{e_x^+}^{-1}n_x^{-1})_{\tilde a\tilde b}(n_yO_2\tau^{\beta_1}h_{e_y^+}\tau^{\beta_2} h_{e_y^+}^{-1}n_y^{-1})_{\tilde b \tilde c}(n_zO_3 \tau^\gamma n_z^{-1})_{\tilde c\tilde a}$,
  \end{itemize}
 then $\Delta\mathcal C$ is defined as the following,
  \begin{equation*}
 \begin{aligned}
 \Delta\mathcal C=(\tau^{\beta_2})_{bc}(\tau^{\alpha})_{da}\Big\{[n_x^{-1} n_y\tau^{\beta_1}]_{ab} [(n_y^{-1}n_z) \tau^\gamma (n_z^{-1}n_x)]_{cd}- [(n_x^{-1}n_z)\tau^\gamma n_z^{-1}n_y\tau^{\beta_1}]_{ab}[(n_y^{-1}n_x) ]_{cd}\Big\}
 \end{aligned}
 \end{equation*}
  \item[(4)] If $\hat{H}_L^{(\vec k)}$ is expressed in terms of the form
 \begin{itemize}
 \item[(4.1)] $(n_xO_1\tau^{\alpha_1}h_{e_x^+}\tau^{\alpha_2}h_{e_x^+}^{-1}n_x^{-1})_{\tilde a\tilde b}(n_yO_2h_{e_y^+}\tau^{\beta}h_{e_y^+}^{-1}n_y^{-1})_{\tilde b \tilde c}(n_zO_3 \tau^\gamma n_z^{-1})_{\tilde c\tilde a}$,
 \end{itemize}
 then $\Delta\mathcal C$ is defined as the following,
 \begin{equation*}
 \begin{aligned}
 \Delta\mathcal C=(\tau^{\alpha_2})_{da} (\tau^\beta)_{bc}\Big\{[(n_x^{-1}n_y)_{ab} [(n_y^{-1}n_z\tau^\gamma n_z^{-1}n_x\tau^{\alpha_1}]_{cd}-[(n_x^{-1})n_z)\tau^\gamma (n_z^{-1}n_y)]_{a b}[(n_y^{-1}n_x)\tau^{\alpha_1}]_{cd}\Big\}
 \end{aligned}
 \end{equation*}
 \item[(5)] If $\hat{H}_L^{(\vec k)}$ is expressed in terms of the form
 \begin{itemize}
 \item[(5.1)] $(n_xO_1h_{e_x^+}\tau^{\alpha}h_{e_x^+}^{-1}n_x^{-1})_{\tilde a\tilde b}(n_yO_2\tau^{\beta}n_y^{-1})_{\tilde b \tilde c}(n_zO_3 \tau^\gamma n_z^{-1})_{\tilde c\tilde a}$;
   \item[(5.2)] $(n_xO_1h_{e_x^+}\tau^{\alpha}h_{e_x^+}^{-1}n_x^{-1})_{\tilde a\tilde b}(n_yO_2\tau^{\beta_1}\tau^{\beta_2}n_y^{-1})_{\tilde b \tilde c}(n_zO_3\tau^\gamma n_z^{-1})_{\tilde c\tilde a}$;
   \item[(5.3)] $(n_xO_1h_{e_x^+}\tau^{\alpha_1}\tau^{\alpha_2}h_{e_x^+}^{-1}n_x^{-1})_{\tilde a\tilde b}(n_yO_2\tau^{\beta}n_y^{-1})_{\tilde b \tilde c}(n_zO_3\tau^\gamma n_z^{-1})_{\tilde c\tilde a}$,
    \end{itemize}
     then $\Delta\mathcal C$ is defined as the following,
     \begin{equation*}
     \begin{aligned}
     \Delta\mathcal C=\left(\prod_{\alpha=1}^{m_x}\tau^{\alpha}\right)_{ba}\Big\{[(n_x^{-1}n_y)\big(\prod_{\beta=1}^{m_y}\tau^{\beta}\big)(n_y^{-1}n_z)\tau^\gamma (n_z^{-1}n_x)]_{ab}-[(n_x^{-1}n_z)\tau^\gamma (n_z^{-1}n_y)(\prod_{\beta=1}^{m_y}\tau^{\beta})(n_y^{-1}n_x )]_{ab}\Big\}
     \end{aligned}
     \end{equation*}
     \item[(6)] If $\hat{H}_L^{(\vec k)}$ is expressed in terms of the form
\begin{itemize}
\item[(6.1)] $(n_xO_1\tau^{\alpha}n_x^{-1})_{\tilde a\tilde b}(n_yO_2h_{e_y^+}\tau^\beta h_{e_y^+}^{-1}n_y^{-1})_{\tilde b \tilde c}(n_zO_3\tau^\gamma n_z^{-1})_{\tilde c\tilde a}$;
  \item[(6.2)] $(n_xO_1\tau^{\alpha}n_x^{-1})_{\tilde a\tilde b}(n_yO_2h_{e_y^+}\tau^{\beta_1}\tau^{\beta_2} h_{e_y^+}^{-1}n_y^{-1})_{\tilde b \tilde c}(n_zO_3\tau^\gamma n_z^{-1})_{\tilde c\tilde a}$;
 \item[(6.3)] $(n_xO_1\tau^{\alpha_1}\tau^{\alpha_2}n_x^{-1})_{\tilde a\tilde b}(n_yO_2h_{e_y^+}\tau^\beta h_{e_y^+}^{-1} n_y^{-1})_{\tilde b \tilde c}(n_zO_3\tau^\gamma n_z^{-1})_{\tilde c\tilde a}$,
\end{itemize}
then $\Delta\mathcal C$ is defined as the following,
\begin{equation*}
\begin{aligned}
\Delta\mathcal C= \left(\prod_{\beta=1}^{m_y}\tau^\beta \right)_{ba}\Big\{[(n_y^{-1} n_z)\tau^{\gamma}(n_z^{-1}n_x)(\prod_{\alpha=1}^{m_x}\tau^{\alpha})(n_x^{-1}n_y)]_{ab}-[(n_y^{-1}n_x)(\prod_{\alpha=1}^{m_x}\tau^{\alpha})(n_x^{-1}n_z)\tau^\gamma (n_z^{-1}n_y)]_{ab}\Big\}
\end{aligned}
\end{equation*}
\item[(7)] If $\hat{H}_L^{(\vec k)}$ is expressed in terms of the form
\begin{itemize}
\item[(7.1)] $(n_xO_1\tau^{\alpha}n_x^{-1})_{\tilde a\tilde b}(n_yO_2\tau^{\beta_1}h_{e_y^+}\tau^{\beta_2} h_{e_y^+}^{-1}n_y^{-1})_{\tilde b \tilde c}(n_zO_3\tau^\gamma n_z^{-1})_{\tilde c\tilde a}$,
\end{itemize}
then $\Delta\mathcal C$ is defined as the following,
\begin{equation*}
\begin{aligned}
\Delta\mathcal C=(\tau^{\beta_2})_{ba}\left\{[(n_y^{-1}n_z)\tau^\gamma (n_z^{-1}n_x) \tau^{\alpha}(n_x^{-1}n_y) \tau^{\beta_1}]_{ab}-[(n_y^{-1}n_x)\tau^{\alpha}(n_x^{-1}n_z)\tau^\gamma (n_z^{-1}n_y)\tau^{\beta_1}]_{ab}\right\}
\end{aligned}
\end{equation*}
 \item[(8)] If $\hat{H}_L^{(\vec k)}$ is expressed in terms of the form
  \begin{itemize}
  \item[(8.1)] $(n_xO_1\tau^{\alpha_1}h_{e_x^+}\tau^{\alpha_2}h_{e_x^+}^{-1}n_x^{-1})_{\tilde a\tilde b}(n_yO_2\tau^{\beta}n_y^{-1})_{\tilde b \tilde c}(n_zO_3\tau^\gamma n_z^{-1})_{\tilde c\tilde a}$
  \end{itemize}
  the subtraction of the coefficients are of the form
  \begin{equation*}
  \begin{aligned}
  \Delta\mathcal C=(\tau^{\alpha_2})_{ba}\Big\{[(n_x^{-1}n_y)\tau^{\beta}(n_y^{-1}n_z)\tau^\gamma (n_z^{-1}n_x)\tau^{\alpha_1}]_{ab}-[(n_x^{-1}n_z)\tau^\gamma (n_z^{-1}n_y)\tau^{\beta}n_y^{-1}n_x\tau^{\alpha_1}]_{ab}\Big\}.
  \end{aligned}
  \end{equation*}
 \end{itemize}

In summary, let $N_1$ be the original number of terms in ${}^{\rm alt}\hat H_L^{(k)}$. The symmetries implied by Eqs. \eqref{eq:HEezpezm} and \eqref{eq:extrasymmtryofHE} reduces this number to $N_1/4^2$, where the power 2 is from the fact that ${}^{\rm alt}\hat H_L^{(k)}$ consists of two $\hat H_E$. Then Eq. \eqref{eq:HLnvpp} reduces this number further to $N_1/(4^2\times 48)$. Finally the symmetry \eqref{eq:symmetryHalt} reduces this number to $$\frac{N_1}{4^2\times 48\times 4}= \frac{N_1}{3072}.$$
Let $N_2$ be the original number of terms in ${}^{\rm extr}\hat H_L^{(k)}$. The symmetries implied by Eqs. \eqref{eq:HEezpezm} and \eqref{eq:extrasymmtryofHE} are used at first to reduce this number to $N_2/4^2$. Then Eq. \eqref{eq:HLnvpp} reduces it to $N_2/(4^2\times 48)$. Finally, Eq. \eqref{eq:FXYZpp} reduce this number further to be about\footnote{The word ``about" is because there exists cases with $\hat X=\hat Y$ in Eq. \eqref{FXYZ}.}
$$\frac{N_2}{4^2\times 48\times 2}= \frac{N_2}{1536}.$$

\subsection{Exhibit an explicit computation}  \label{Introducing the algorithm}

In order to demonstrate the idea of our algorithm, some simple examples are used in this section. All of the cases illustrated by these examples finally occur in our computation. {Particularly, some functions like $P(v,i,\mathcal I^{(i,m)})$, $\wdtau(\mathcal I^{(i,k)})$ and $\wdtaus(\mathcal I^{(i,k)})$ indeed exist in our codes \cite{github} as the same manner. 
}

Roughly speaking, the computation is divided into two steps. The first is to simplify the operator with applying the commutation relations \eqref{eq:commutators} and the second is to compute the expectation values of the simplified operator.

In the first step, the non-commutative multiplications between operator make a non-trivial simplification. Because of the operator $\hat Q$, we actually need to deal with 
$$\hat P^\alpha(v,i)=\hat p^{\alpha}_s(e^+_v)-\hat p^{\alpha}_t(e^-_v)$$
where $e^\pm_v$ are the edges along the $i$th direction satisfying $s(e^+_v)=v=t(e^-_v)$. In the computation, we mainly need to deal with the commutators between $\hat P^\alpha(v,i)$ and holonomy. Thus let us consider the example
\begin{equation}\label{eq:example1}
\begin{aligned}
&[\prod_{j=1}^m\hat P^{\alpha_j}(v,i),h_{e_v^+}]=it\sum_{k=1}^m\tau^{\alpha_k}h_{e_v^+}\prod_{j\neq k}\hat P^{\alpha_j}(v,i)+(it)^2\sum_{k<l}\tau^{\alpha_l}\tau^{\alpha_k}h_{e_v^+}\prod_{j\neq k,l}\hat P^{\alpha_j}(v,i)+O(t^3).
\end{aligned}
\end{equation}
Because these $\alpha_j$ appear in the computed operator as dummy indices, their specific values do not matter in the operator-simplification procedure. In other words, $\prod_{j=1}^m\hat P^{\alpha_j}(v,i)$ is treated like a tensor for which only the type does matter. Therefore, based on our algorithm, we define a function
\begin{equation}
P(v,i,\mathcal I^{(i,m)}):=\prod_{\alpha\in\mathcal I^{(i,m)}}\hat P^{\alpha}(v,i),
\end{equation}
where $\mathcal I^{(i,m)}$ denote an index set of length $m$. In the supperscript of $\mathcal I^{(i,m)}$, $i$ is used again for convenience so that the multiplication of $(\hat P^\alpha(v,1)\hat P^\beta(v,2)\hat P^\gamma(v,3))^m$ occuring in $\hat Q_v^m$ can be denoted as
$$
(\hat P^\alpha(v,1)\hat P^\beta(v,2)\hat P^\gamma(v,3))^m=P(v,1,\mathcal I^{(1,m)})P(v,2,\mathcal I^{(2,m)})P(v,3,\mathcal I^{(3,m)}).
$$
Similarly, let us define
\begin{equation}
\wdtau(\mathcal I^{(i,k)}):=\tau^{\alpha_k}\tau^{\alpha_{k-1}}\cdots\tau^{\alpha_1}
\end{equation}
where $\mathcal I^{(i,k)}=\{\alpha_1,\cdots,\alpha_k\}$. 
With these notions, Eq. \eqref{eq:example1} can be written as
\begin{equation}\label{eq:example10}
\begin{aligned}
[P(v,i,\mathcal {I}^{(i,m)}),h_{e_v^+}]=&(it)\sum_{k=1}^m \wdtau(\tilde{\mathcal I}_k^{(i,1)})h_{e_v^+}P(v,i,\mathcal I_k^{(i,m-1)})\\
&+(it)^2\sum_{k<l}\wdtau(\tilde{\mathcal I}_{k,l}^{(i,2)})h_{e_v^+}P(v,i,\mathcal I_{k,l}^{(i,m-2)})+O(t^3)
\end{aligned}
\end{equation}
where $\mathcal I_{k_1,k_2,\cdots,k_q}^{(i,m)}$ is the index set of $\mathcal I^{(i,m)}$ without the $k_1$th, $k_2$th, $\cdots$,$k_q$th indices 
and $\tilde{\mathcal I}_{k_1,k_2,\cdots,k_q}^{(i,m)}$ is the complement of $\mathcal I_{k_1,k_2,\cdots,k_q}^{(i,m)}$.

Further, let us implement an non-operator factor $\mathcal F(\mathcal I^{(i,m)})$ contracted with $P(v,i,\mathcal {I}^{(i,m)})$ to consider $\sum_{\mathcal I^{(i,m)}}\mathcal F(\mathcal I^{(i,m)})[P(v,i,\mathcal {I}^{(i,m)}),h_{e_v^+}]$ which is finally simplified as
\begin{equation}\label{eq:example1p}
\begin{aligned}
&\sum_{\mathcal I^{(i,m)}}\mathcal F(\mathcal I^{(i,m)})[P(v,i,\mathcal {I}^{(i,m)}),h_{e_v^+}]\\
=&(it)\sum_{\mathcal I^{(i,m-1)},\mathcal I^{(i,1)}}\wdtau(\mathcal I^{(i,1)})h_{e_v^+}P(v,i,\mathcal I^{(i,m-1)})\sum_{k=1}^m\mathcal F(\{\mathcal I^{(i,m-1)},\mathcal I^{(i,1)}\}_{(k)})\\
&+(it)^2\sum_{\mathcal I^{(i,m-2)},\mathcal I^{(i,2)}}\wdtau(\mathcal I^{(i,2)})h_{e_v^+}P(v,i,\mathcal I^{(i,m-2)})\sum_{k<l}\mathcal F(\{\mathcal I^{(i,m-2)},\mathcal I^{(i,2)}\}_{(k,l)})+O(t^3).
\end{aligned}
\end{equation}
Here $\{\mathcal I^{(i,m-n)},\mathcal I^{(i,n)}\}_{(k_1,k_2,\cdots, k_n)}$ is a joint set obtained by merging
$\mathcal I^{(i,m-n)},\mathcal I^{(i,n)}$ in such a way that the elements in $\mathcal I^{(i,n)}$ (respecting their original order) are distributed in $k_1$th,$\cdots$,$k_n$th position in $\{\mathcal I^{(i,m-n)},\mathcal I^{(i,n)}\}_{(k_1,k_2,\cdots, k_n)}$. 
Instead of deleting $n$ indices in all possible $\mathcal I^{(i,m)}$, one replace this deleting procedure by an inserting procedure, namely merging all possible sets $\mathcal I^{(i,m-n)}$ with length $m-n$ and $\mathcal I^{(i,n)}$ with length $n$ in the above way for all values of $k_1<k_2<\cdots<k_n$. Therefore, Eq. \eqref{eq:example10} can be simplified further as
\begin{equation}\label{eq:example11}
\begin{aligned}
&[h_{e_v^+},P(v,i,\mathcal {I}^{(i,m)})]\to \\
&it\wdtau(\mathcal I^{(i,1)})h_{e_v^+}P(v,i,\mathcal I^{(i,m-1)})+(it)^2\wdtau(\mathcal I^{(i,2)})h_{e_v^+}P(v,i,\mathcal I^{(i,m-2)})+O(t^3).
\end{aligned}
\end{equation}
Then, for each evaluation of the index sets $\mathcal{I}^{(i,m-n)}$ and $\mathcal{I}^{(i,n)}$, we joint them in the aforementioned way to consider $\{\mathcal I^{(i,m-n)},\mathcal I^{(i,n)}\}_{(k_1,k_2,\cdots, k_n)}$ for all values of $k_1<k_2<\cdots<k_n$.

Similarly for the commutator $[h_{(e_v^-)^{-1}},P(v,i,\mathcal {I}^{(i,m)})]$, we have
\begin{equation}\label{eq:example12}
\begin{aligned}
&[h_{(e_v^-)^{-1}},P(v,i,\mathcal {I}^{(i,m)})]\to \\
&-it\wdtau(\mathcal I^{(i,1)})h_{e_v^+}P(v,i,\mathcal I^{(i,m-1)})+(-it)^2\wdtau(\mathcal I^{(i,2)})h_{e_v^+}P(v,i,\mathcal I^{(i,m-2)})+O(t^3).
\end{aligned}
\end{equation}
For $[h_{e_v^-},P(v,i,\mathcal {I}^{(i,m)})]$ and $[h_{(e_v^+)^{-1}},P(v,i,\mathcal {I}^{(i,m)})]$, we need to define 
\begin{equation}
\wdtaus(\mathcal I^{(i,k)}):=\tau^{\alpha_1}\tau^{\alpha_{1}}\cdots\tau^{\alpha_k}
\end{equation}
where $\mathcal I^{(i,k)}=\{\alpha_1,\cdots,\alpha_k\}$. 
\begin{equation}\label{eq:example134}
\begin{aligned}
&[h_{e_v^-},P(v,i,\mathcal {I}^{(i,m)})]\to \\
&it\wdtaus(\mathcal I^{(i,1)})h_{e_v^+}P(v,i,\mathcal I^{(i,m-1)})+(it)^2\wdtaus(\mathcal I^{(i,2)})h_{e_v^+}P(v,i,\mathcal I^{(i,m-2)})+O(t^3),\\
&[h_{(e_v^+)^{-1}},P(v,i,\mathcal {I}^{(i,m)})]\to \\
&-it\wdtaus(\mathcal I^{(i,1)})h_{e_v^+}P(v,i,\mathcal I^{(i,m-1)})+(-it)^2\wdtaus(\mathcal I^{(i,2)})h_{e_v^+}P(v,i,\mathcal I^{(i,m-2)})+O(t^3).
\end{aligned}
\end{equation}

The second step is to compute the expectation value of the simplified operator. Let us still take $\sum_{\mathcal I^{(i,m)}}\mathcal F(\mathcal I^{(i,m)})[P(v,i,\mathcal {I}^{(i,m)}),h_{e_v^+}]$ as an example.
A subtlety here is that the operator $\sum_{\mathcal I^{(i,m)}}\mathcal F(\mathcal I^{(i,m)})[P(v,i,\mathcal {I}^{(i,m)}),h_{e_v^+}]$ involves two edges $e^\pm_v$ because of the operator $\hat P^{\alpha}(v,i)$. We consider its expectation value with respect to coherent state $|\psi_{g_{e_v^+}}\rangle\otimes |\psi_{g_{e_v^-}}\rangle$, where, without loss of generality, we set 
$$g_{e_v^+}=g_{e_v^-}=g=n e^{iz\tau_3}n^{-1}.$$ 

According to the above discussion, 
one can apply the result of Eq. \eqref{eq:example1p} to compute the expectation value of its LHS.
Let us take the first term for instance. We need to compute the expectation value of  
\begin{equation}\label{eq:expec1}
\begin{aligned}
\sum_{\mathcal I^{(i,m-1)},\mathcal I^{(i,1)},b}\wdtau(\mathcal I^{(i,1)},a,b)\left\langle D^{1/2}_{bc}(h_{e_v^+}) P(v,i,\mathcal I^{(i,m-1)}) \right\rangle_{g}\sum_{k=1}^m\mathcal F(\{\mathcal I^{(i,m-2)},\mathcal I^{(i,1)}\}_{(k)})
\end{aligned}
\end{equation}
Applying Eq. \eqref{eq:basicformula}, we finally obtain the following
\begin{equation}\label{eq:expec2}
\begin{aligned}
&\sum_{\mathcal I^{(i,m-1)},\mathcal I^{(i,1)},b}\wdtau(\mathcal I^{(i,1)},a,b)\left\langle D^{1/2}_{bc}(h_{e_v^+}) P(v,i,\mathcal I^{(i,m-1)}) \right\rangle_{g}\sum_{k=1}^m\mathcal F(\{\mathcal I^{(i,m-2)},\mathcal I^{(i,1)}\}_{(k)})\\
=&\sum_{\mathcal I^{(i,m-1)},\mathcal I^{(i,1)},b}\wdtau(\mathcal I^{(i,1)},a,b)D^{1/2}_{bb'}(n) \left\langle D^{1/2}_{b'c'}(h_{e_v^+}) P(v,i,\mathcal I^{(i,m-1)})\right\rangle_{z}\\
&D^{1/2}_{c'c}(n^{-1})\sum_{k=1}^m \widetilde{\mathcal F}(\{\mathcal I^{(i,m-2)},\mathcal I^{(i,1)}\}_{(k)})
\end{aligned}
\end{equation}
where $\widetilde{\mathcal F}$ is given by
\begin{equation*}
\widetilde{\mathcal F}(\{\alpha_1,\cdots,\alpha_k\}):=\sum_{\beta_1,\cdots,\beta_k}\mathcal F(\{\beta_1,\cdots,\beta_k\})D^1_{\alpha_1\beta_1}(n^{-1})\cdots D^1_{\alpha_k\beta_k}(n^{-1}).
\end{equation*}
To compute Eq. \eqref{eq:expec2}, the results of Theorem \ref{thm:leadingordermultiplyPH} can be applied. Thus, one obtains the possible $\mathcal I^{(i,m-1)}$ are:
\begin{itemize}
\item[(i)] If one only computes the leading-order term of the expectation value, then $\mathcal I^{(i,m-1)}$ is evaluated at $\underbrace{\{0,0,\cdots,0\}}_{m-1}$ 
\item[(ii)]
If one only computes the expectation value up to $O(t)$, then $\mathcal I^{(i,m-1)}$ is evaluated at $\underbrace{\{0,0,\cdots,0\}}_{m-1}$, $\underbrace{\{0,0,\cdots,0,\pm 1\}}_{m-1}$, $\underbrace{\{0,0,\cdots,0,1,-1\}}_{m-1}$ and $\underbrace{\{0,0,\cdots,0, -1,1\}}_{m-1}$.
\end{itemize}

With this discussion, Eq. \eqref{eq:expec2} can be computed by considering both the expectation-value part $$\left\langle D^{1/2}_{b'c'}(h_{e_v^+}) P(v,i,\mathcal I^{(i,m-1)})\right\rangle_{z}$$ and the non-operator-factor part $$\sum_{b}\wdtau(\mathcal I^{(i,1)},a,b)D^{1/2}_{bb'}(n)D^{1/2}_{c'c}(n^{-1})\sum_{k=1}^m \widetilde{\mathcal F}(\{\mathcal I^{(i,m-2)},\mathcal I^{(i,1)}\}_{(k)})$$
for each possibility of $\mathcal I^{(i,m-1)}$. 

For the expectation-value part, the values of $\left\langle D^{1/2}_{b'c'}(h_{e_v^+}) P(v,i,\mathcal I^{(i,m-1)})\right\rangle_{z}$ for Case (i) and the cases where $\mathcal I^{(i,m-1)}$ contains $\pm 1$ in Case (ii) are computed by Eq. \eqref{eq:leadingequalmultileading}. Furthermore, for the cases where $\mathcal I^{(i,m-1)}$ contains $\pm 1$ in Case (ii), the results are independent of the position of $\pm 1$ in $\mathcal I^{(i,m-1)}$. Therefore, only $1+2+2=5$ cases are considered in Case (ii) finally. Comparing with the original number of cases $3^{3m-1}$, one can obtain an advantage of our algorithm.  

For the non-operator-factor part, given a possible $\mathcal I^{(i,m-2)}$ and consider all possible $\mathcal I^{(i,1)}$s. Then all possible $\{\mathcal I^{(i,m-2)},\mathcal I^{(i,1)}\}_{(k)}$s can be reconstructed. It is noted that, for Case (ii), the positions of $\pm 1$ in $\mathcal I^{(i,m-2)}$ does matter to the value of $\widetilde{\mathcal F}$.  Thus we need to consider the permutations of indices in (with keeping the relative order between $\pm 1$) $\mathcal I^{(i,m-2)}$ in Case (ii) when reconstructing $\{\mathcal I^{(i,m-2)},\mathcal I^{(i,1)}\}_{(k)}$. With the results of these two parts, the results of Eq.\eqref{eq:expec1} is obtained correspondingly.

Finally, we complete this section with the discussion on the values of $n$ and $\vec k$ in $\hat H_E^{(n)}$ and $\hat H_L^{(\vec k)}$, {which is summarized as the following lemma
\begin{lmm}
To get the expectation values of $\hat H_E^{(n)}$ and $\hat H_L^{(\vec k)}$ up to order $O(t)$, it is sufficient to set $n$ and $\vec k=(k_1,k_2,k_3,k_4,k_5)$ respectively such that 
\begin{equation}
n\leq 3
\end{equation}
and 
\begin{equation}\label{eq:valuesofk}
\frac{|k_1+k_2-3|+(k_1+k_2-3)}{2}+\frac{|k_3+k_4-3|+(k_3+k_4-3)}{2}+k_5\leq 3.
\end{equation}
\end{lmm}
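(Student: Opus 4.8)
The plan is to read off both ranges from the Giesel--Thiemann expansion \eqref{eq:VGT} together with the power‑counting established in Section~\ref{se:leading}. Write the truncated Giesel--Thiemann volume as $\hat V_{GT}^{(v)}=\langle\hat Q_v\rangle^{1/2}\sum_{m=0}^{2k+1}\binom{1/4}{m}\hat R_v^{m}$ with $\hat R_v=\hat Q_v^2/\langle\hat Q_v\rangle^2-1$, so that $\hat V_v=\hat V_{GT}^{(v)}+O(t^{k+1})$; substituting this into $\widehat{H_E[N]}$ and $\widehat{H_L[N]}$ turns them into finite linear combinations of the $\hat H_E^{(n)}(v)$ and $\hat H_L^{(\vec k)}(v)$ of \eqref{eq:hen} and \eqref{eq:hln}. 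Two facts will do the work. First, in both Hamiltonians every volume operator occurs inside at least one commutator with a holonomy, and since the volume operators are functions of the flux operators only, each such commutator carries an explicit factor $t$ by \eqref{eq:commutators}. Second, each $\hat R_v$ must be factorized as $\hat R_v=\bigl(\tfrac{\hat Q_v}{\langle\hat Q_v\rangle}+1\bigr)\bigl(\tfrac{\hat Q_v}{\langle\hat Q_v\rangle}-1\bigr)$ --- mandatory, because only single powers of $\hat Q_v$ have matrix elements polynomial in those of the fluxes --- after which $\hat R_v^{m}$ is a product in which $m$ factors have vanishing $O(t^0)$ expectation, so that Theorems~\ref{thm:leadingordergeneral} and~\ref{thm:leadingordermultiplyPH} apply.

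For the Euclidean part, $\widehat{H_E[N]}$ is linear in $\hat V_v$, each term carrying the prefactor $1/t$ and the single commutator $[h_{e_K},\hat V_v]$. Truncating $\hat V_{GT}^{(v)}$ at order $k$ then contributes to $\langle\widehat{H_E[N]}\rangle$ an error $\tfrac1t\langle\tr(h_{\alpha_{IJ}}[h_{e_K},O(t^{k+1})]\,h_{e_K}^{-1})\rangle$; by the first fact the inner commutator is $O(t^{k+2})$, so the error is $O(t^{k+1})$, already negligible beyond $O(t)$ for $k=1$. Since the $k=1$ truncation of $\hat V_{GT}^{(v)}$ expands into $\hat Q_v^{2n}$ with $n\le 2k+1=3$, only the $\hat H_E^{(n)}(v)$ with $n\le 3$ are needed. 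One should still verify, via the factorization and the sharp leading orders of Theorem~\ref{thm:leadingordermultiplyPH} (which distinguish $\hat p^{0}$ from $\hat p^{\pm1}$ and diagonal from off‑diagonal $D^{1/2}$), that the terms with $n\le 3$ genuinely reach orders $t^{0}$ and $t$, so that nothing is lost by this truncation.

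For the Lorentzian part, truncating every volume at $k=1$ again bounds each exponent by $3$; to obtain the sharper \eqref{eq:valuesofk} I would invoke Proposition~\ref{pro:simplifyQ}. It forces $[\hat Q_{v_1}^{2k_1},\hat H_E^{(k_2)}(v_2)]$ to vanish unless $v_1=v_2$ (the $\hat K_2$ part) or $v_1$ lies on the loop $\alpha_{IJ}$ based at $v_2$ (the $\hat K_1$ part), and in either case the two volume insertions in the first factor are supported at a common vertex; hence $\hat Q^{2k_1}$ and $\hat Q^{2k_2}$ merge into a single Taylor variable $\hat R_v^{m}$ whose vanishing order grows with $k_1+k_2$, and likewise with $k_3+k_4$ in the second factor, whereas the third factor $[h_{e_K},\hat Q_v^{2k_5}]$ carries only one commutator and so costs a power of $t$ that grows linearly with $k_5$. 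Power‑counting the product $\tfrac1{t^5}\tr(\hat B_1\hat B_2\hat C)$ of the three commutator‑factors of $\hat H_L^{(\vec k)}$ by Theorem~\ref{thm:leadingordergeneral}, keeping track of the $1/t^5$ prefactor and of the fact that merging within a factor only costs extra $t$‑powers once $k_1+k_2$ (resp.\ $k_3+k_4$) exceeds the truncation threshold $3$, then yields exactly $\bigl(k_1+k_2-3\bigr)_{+}+\bigl(k_3+k_4-3\bigr)_{+}+k_5\le 3$ (with $(x)_+:=\tfrac{|x|+x}{2}$) as the condition for the term to still reach order $t$.

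The hard part will be this last step: pinning down the exact exponents in \eqref{eq:valuesofk} rather than a cruder sufficient bound. It requires carrying the tighter monomial leading orders of Theorem~\ref{thm:leadingordermultiplyPH} through the nested commutators and the $1/t^{5}$ prefactor --- the $\lfloor(N_0+1)/2\rfloor$ estimate of Theorem~\ref{thm:leadingordergeneral} alone does not separate the linear $k_5$‑channel from the ``excess over $3$'' behaviour of the $k_1+k_2$ and $k_3+k_4$ channels --- together with careful bookkeeping of which vertices and loops make each nested commutator nonzero.
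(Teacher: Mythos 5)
Your Euclidean bound $n\le 3$ is essentially fine and close to the paper's reasoning, though you route it through the $O(t^{k+1})$ remainder of \eqref{eq:VGT}; the paper instead power-counts each retained term directly: in \eqref{eq:operatorqmqk} the commutator $[\hat Q^2/\langle\hat Q\rangle^2,h_{s_K}^{-1}]$ absorbs the explicit $1/t$, the remaining $k-1$ factors $\hat Q^2/\langle\hat Q\rangle^2-1$, factorized exactly as you describe, give $N_0\ge k-1$, and Theorem \ref{thm:leadingordergeneral} yields suppression $t^{\floor{k/2}}$, hence $k\le 3$ and so $n\le 3$. Your variant is acceptable, but it forces you to commute an only implicitly defined remainder operator with holonomies, which the paper's route avoids.

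The genuine gap is the Lorentzian condition \eqref{eq:valuesofk}, which is the real content of the lemma and which you do not derive: you yourself flag the ``exact exponents'' as open, and the mechanism you propose for them is wrong. Proposition \ref{pro:simplifyQ} does not localize $[\hat Q_{v_1}^{2k_1},\hat H_E^{(k_2)}(v_2)]$ to a common vertex, and even when $v_1\ne v_2$ the operators $\hat Q_{v_1}$ and $\hat Q_{v_2}$ never ``merge into a single Taylor variable'' $\hat R_v^m$; the joint appearance of $k_1+k_2$ is purely combinatorial. The intended ``very similar discussion'' is this: the full explicit prefactor of $\hat H_L^{(\vec k)}$ is $1/t^7$ (the $1/t^5$ of \eqref{eq:hln} times the two $1/t$'s of the inserted $\hat H_E^{(k_2)},\hat H_E^{(k_4)}$), and there are exactly seven commutators distributed as $3+3+1$ over the three trace factors; each commutator supplies the $t$ that cancels one inverse power, and (by Proposition \ref{pro:changeO}, Eq.\ \eqref{eq:ezpm}) can without extra $t$-cost ``break'' at most one factor $\hat Q^2/\langle\hat Q\rangle^2-1$, drawn from either pile inside its own trace factor. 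Hence at least $(k_1+k_2-3)_+$, $(k_3+k_4-3)_+$ and $k_5-1$ factors survive intact (writing $(x)_+:=\tfrac{|x|+x}{2}$), each contributing one unit to $N_0$ after the factorization, and Theorem \ref{thm:leadingordergeneral} alone gives order at least $t^{\floor{(N_0+1)/2}}$; demanding $\floor{(N_0+1)/2}\le 1$, i.e.\ $N_0\le 2$, is precisely \eqref{eq:valuesofk} for the nontrivial terms with $k_5\ge1$. So your worry that Theorem \ref{thm:leadingordergeneral} cannot ``separate the channels'' and that Theorem \ref{thm:leadingordermultiplyPH} must be threaded through the nested commutators is misplaced — the channels differ only in the number of available commutators $(3,3,1)$, not in a per-factor $t$-cost — and your claim that the third factor costs a power of $t$ growing linearly with $k_5$ is likewise off (every surviving factor costs half a power). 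What your argument actually delivers is only the cruder cut ``each exponent $\le 3$'' from the $k=1$ truncation; that is not the lemma, since \eqref{eq:valuesofk} additionally discards terms with all $k_i\le 3$ (e.g.\ $k_1=k_2=3$, $k_5=1$), and showing those are $O(t^2)$ is exactly the step your proposal leaves undone.
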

The proof of this lemma is sketched as follows. }
When one replaces $\hat{V}_v$ in $\widehat{H_E}$ by $\hat{V}_{G T}^{(v)}$, the term $\left(\hat Q^2/\langle\hat Q\rangle^2-1\right)^k$ in $\hat{V}_{G T}^{(v)}$ contributes as the following
\begin{equation}\label{eq:operatorqmqk}
\begin{aligned}
&\frac{1}{t}\tr(h_{\alpha_{IJ}}h_{s_K}[\left(\hat Q^2/\langle\hat Q\rangle^2-1\right)^k,h_{s_K}^{-1}])\\
=&\sum_{m=0}^{k-1}\tr(h_{\alpha_{IJ}}h_{s_K}\left(\hat Q^2/\langle\hat Q\rangle^2-1\right)^m\frac{[\hat Q^2/\langle\hat Q\rangle^2,h_{s_K}^{-1}]}{t}\left(\hat Q^2/\langle\hat Q\rangle^2-1\right)^{k-1-m})
\end{aligned}
\end{equation}
where the overall coefficient is neglected. Since $\hat Q^2/\langle\hat Q\rangle^2-1=\left(\hat Q/\langle\hat Q\rangle+1\right)\left(\hat Q/\langle\hat Q\rangle-1\right)$,
and $\langle \hat Q/\langle\hat Q\rangle-1 \rangle = O(t)$,
Theorem \ref{thm:leadingordergeneral} can be applied, which indicates that the leading-order expectation value of the operator \eqref{eq:operatorqmqk} is at least $O(t^{\floor{k/2}})$.
Thus, as far as  expanding the expectation value to $O(t)$ is concerning,  one can sufficiently choose $k\leq 3$, 
which leads to $n\leq 3$ in $\hat H_E^{(n)}$. A very similar discussion for $\hat H_L^{(\vec k)}$ can be done, which gives us Eq. \eqref{eq:valuesofk}.

\section{Quantum correction in the expectation value} \label{sec:results}
The resulting expectation value of the Hamiltonian with unit lapse $\widehat{H[1]}=\widehat{H_{E}}+\left(1+\beta^{2}\right) \widehat{H_{L}}$ at coherent states with cosmological data ($\eta<0$ in our convention) is shown as follows\footnote{Here, without loss of generality, the graph $\gamma$ is assumed to be a cubic lattice in $\mathbb T^3$ and has only a single vertex. }
\begin{eqnarray}
\langle \widehat{H_E}\rangle&=&6 a \sqrt{-\beta  \eta } \sin ^2(\xi )-\frac{3}{4} a t \sqrt{-\frac{\beta }{\eta ^3}} \sin ^2\left(\frac{\xi }{2}\right) \cos\left(\frac{\xi }{2}\right) \Bigg\{\cos\left(\frac{\xi }{2}\right) \Big[8 \eta ^2+8 \eta  (4 \cosh (\eta )-3) \text{csch}(\eta )-9\Big]\nonumber\\
&&-\,12 i \eta  \sin \left(\frac{\xi }{2}\right)\Bigg\}+O(t^2),\label{euclidhamres}\\
\langle \widehat{H_L}\rangle&=&-\frac{6 a \sqrt{-\beta  \eta } \sin ^2(\xi ) \cos ^2(\xi )}{\beta ^2}-\frac{3 a t}{262144 (-\beta  \eta )^{3/2}} \Bigg\{2 \left(3-220 \eta ^2\right) \cos (6 \xi )\nonumber\\
&&+\,4 i \eta  (4838 \sin (\xi )-6284 \sin (2 \xi )+4685 \sin (3 \xi )-5222 \sin (4 \xi )-105 \sin (5 \xi ))\nonumber\\
&& +\,2 (-3611+8 \eta  (492 \eta +11 i)) \cos (\xi )-2 (-789+4 \eta  (305 \eta +18 i)) \cos (2 \xi )\nonumber\\
&& +\,(4413-4 \eta  (928 \eta +49 i)) \cos (3 \xi )+8 (-1978+\eta  (4192 \eta -7 i)) \cos (4 \xi )\nonumber\\
&& +\,(-7+4 (-272 \eta +5 i) \eta ) \cos (5 \xi )-4 \eta  \coth (\eta ) \Big[536 \cos (\xi )+1731 \cos (2 \xi )\nonumber\\
&&+\,1524 \cos (3 \xi )-40548 \cos (4 \xi )+116 \cos (5 \xi )+117 \cos (6 \xi )+37292\Big]\nonumber\\
&&+\,8 \eta  \text{csch}(\eta ) \Big[130 \cos (\xi )+918 \cos (2 \xi )+801 \cos (3 \xi )-18618 \cos (4 \xi )\nonumber\\
&&+\,125 \cos (5 \xi )+58 \cos (6 \xi )+16362\Big]+8 (1436+\eta  (-4056 \eta +25 i))\Bigg\}+O(t^2). 
\end{eqnarray}
According to \eqref{altextra}, $\langle \widehat{H_L}\rangle=\langle{}^{\rm extr}\widehat{H_L}\rangle+\langle{}^{\rm alt}\widehat{H_L}\rangle$, in which $\langle{}^{\rm alt}\widehat{H_L}\rangle$ is, 
\begin{equation}
\begin{aligned}
\langle{}^{\rm alt}\widehat{H_L}\rangle=&-\frac{3 a \sqrt{-\beta  \eta } \sin ^2(2 \xi )}{8 \beta ^2}-\frac{3 a t}{32768 (-\beta  \eta )^{3/2}}\Bigg\{4 \left(104 \eta ^2-79\right) \cos (\xi )+\left(68-160 \eta ^2\right) \cos (2 \xi )\\
&+4 \left(55-104 \eta ^2\right) \cos (3 \xi )+\left(944 \eta ^2-501\right) \cos (4 \xi )-848 \eta ^2\\
&-2 \eta  \Big[\coth (\eta ) (-(262 \cos (\xi )-6 (46 \cos (2 \xi )+65 \cos (3 \xi )-380 \cos (4 \xi )+366)))\\
&-\text{csch}(\eta ) (-292 \cos (\xi )+268 \cos (2 \xi )+420 \cos (3 \xi )-2035 \cos (4 \xi )+1799)\\
&-64 i (6 \sin (\xi )-10 \sin (2 \xi )+6 \sin (3 \xi )-7 \sin (4 \xi ))\Big]+241\Bigg\}+O(t^2),
\end{aligned}
\end{equation}
and $\langle{}^{\rm extr}\widehat{H_L}\rangle$ is
\begin{equation}
\begin{aligned}
\langle{}^{\rm extr}\widehat{H_L}\rangle=&-\frac{9 a \sqrt{-\beta  \eta } \sin ^2(2 \xi )}{8 \beta ^2}+\frac{3 a t }{262144 (-\beta  \eta) ^{3/2}} \Bigg\{-2 \left(580 \eta ^2+72 i \eta -517\right) \cos (2 \xi )\\
&+2 \left(3-220 \eta ^2\right) \cos (6 \xi )+4 i \eta  \Big[3302 \sin (\xi )-3724 \sin (2 \xi )+3149 \sin (3 \xi )\\
&-35 (98 \sin (4 \xi )+3 \sin (5 \xi ))\Big]+2 (-2347+8 \eta  (284 \eta +11 i)) \cos (\xi )\\
&+(2653-4 \eta  (96 \eta +49 i)) \cos (3 \xi )+56 (-211+\eta  (464 \eta -i)) \cos (4 \xi )\\
&+(-7+4 (-272 \eta +5 i) \eta ) \cos (5 \xi )-4 \eta  \coth (\eta ) \Big[1584 \cos (\xi )+627 \cos (2 \xi )\\
&-36 \cos (3 \xi )-31428 \cos (4 \xi )+116 \cos (5 \xi )+117 \cos (6 \xi )+28508\Big]\\
&+8 \eta  \text{csch}(\eta ) \Big[714 \cos (\xi )+382 \cos (2 \xi )-39 \cos (3 \xi )\\
&-14548 \cos (4 \xi )+125 \cos (5 \xi )+58 \cos (6 \xi )+12764\Big]+8 \eta  (-3208 \eta +25 i)+9560\Bigg\}+O(t^2).
\end{aligned}
\end{equation}

When expressing $\langle \widehat{H[1]}\rangle$ to be $\langle \widehat{H[1]}\rangle=H_0+t H_1+O(t^2)$, we notice that the $O(t)$-term $H_1$ contains $\eta$ in its denominator and, thus, is divergent if $\eta\to0$. 
This feature is implied by the fact that if $\langle\hat{Q}_v\rangle\to0$, $\hat{V}_{GT}^{(v)}$ is divergent. This is because when if $\eta\to0$, $\langle \hat{Q}_v\rangle\to0$ since $\langle \hat{Q}_v\rangle\sim |\eta|^{3}$.

Hence, the expansion of $\langle \widehat{H[1]}\rangle$ becomes invalid when $\eta$ is too small. More precisely, expressing $\langle \widehat{H[1]}\rangle$ as $\langle \widehat{H[1]}\rangle=\sqrt{|\eta|}\left[\mathfrak{f}_0+({t}/{\eta^2})\, \mathfrak{f}_1+O(t^2)\right]$, we get that $\mathfrak{f}_0$ is independent of $\eta$, and $\mathfrak{f}_1$ is regular at $\eta\to 0$. Thus, it is concluded that our expansion is valid when $ \eta^{2}\gg t $. This aspect is important for a new improvement of cosmological effective dynamics derived from the full LQG \cite{Han:2021cwb}. The expansion is valid for large $|\eta|$, because $\mathfrak{f}_1$ is regular at $|\eta|\to\infty$.

Consider the reduced-phase-space LQG of gravity coupled to Gaussian dust. Then, the relational evolution with respect to the dust time $T$ will be generated by the physical Hamiltonian $\widehat{\bf H}=\frac{1}{2}\left(\widehat{H[1]}+\widehat{H[1}]^{\dagger}\right)$. Its coherent state expectation value reads  
\begin{equation}
\langle\widehat{\bf H}\rangle=\Re(\langle \widehat{H[1]}\rangle)\equiv H_0+t \widetilde{H}_1+O(t^2),\quad  \widetilde{H}_1=\Re (H_1).\label{effhamcos}
\end{equation}

In order to demonstrate the physical application and effects from the $O(\hbar)$ correction we adopt the proposal in \cite{Dapor:2017rwv} as follows. Firstly, we view $\langle\widehat{\bf H}\rangle$ as the effective Hamiltonian on the 2-dimensional phase space $\mathscr{P}_{cos}$ of homogeneous and isotropic cosmology. Then, one can verify that $\eta =-\frac{\mu^2 P_0}{\beta a^2}$ and $\xi=\mu\beta K_0$ where $\mu$ is the coordinate length of $e\in E(\gamma)$, $P_0$ is the square of the scale factor and $K_0$ is the extrinsic curvature. Thus, the Poisson bracket between $\xi$ and $\eta$ reads 
$\{\eta,\xi\}=\frac{\kappa}{3 a^2 }$. With this Poisson bracket, the Hamiltonian time evolution on $\mathscr{P}_{cos}$ generated by $\langle\widehat{\bf H}\rangle$ is computable. The numerical result is shown in Fig. \ref{fig:dynamics}, which respectively depicts the dynamics of the spatial volume governed by $H_0$, $\langle\widehat{\bf H}\rangle=H_0+t \widetilde{H}_1$ and the classical FLRW Hamiltonian $H_{cl}=-6a\beta^{-3/2}\sqrt{-\eta}\,\xi^2$.

\begin{figure}[h]
\centering
\includegraphics[width=0.7\textwidth]{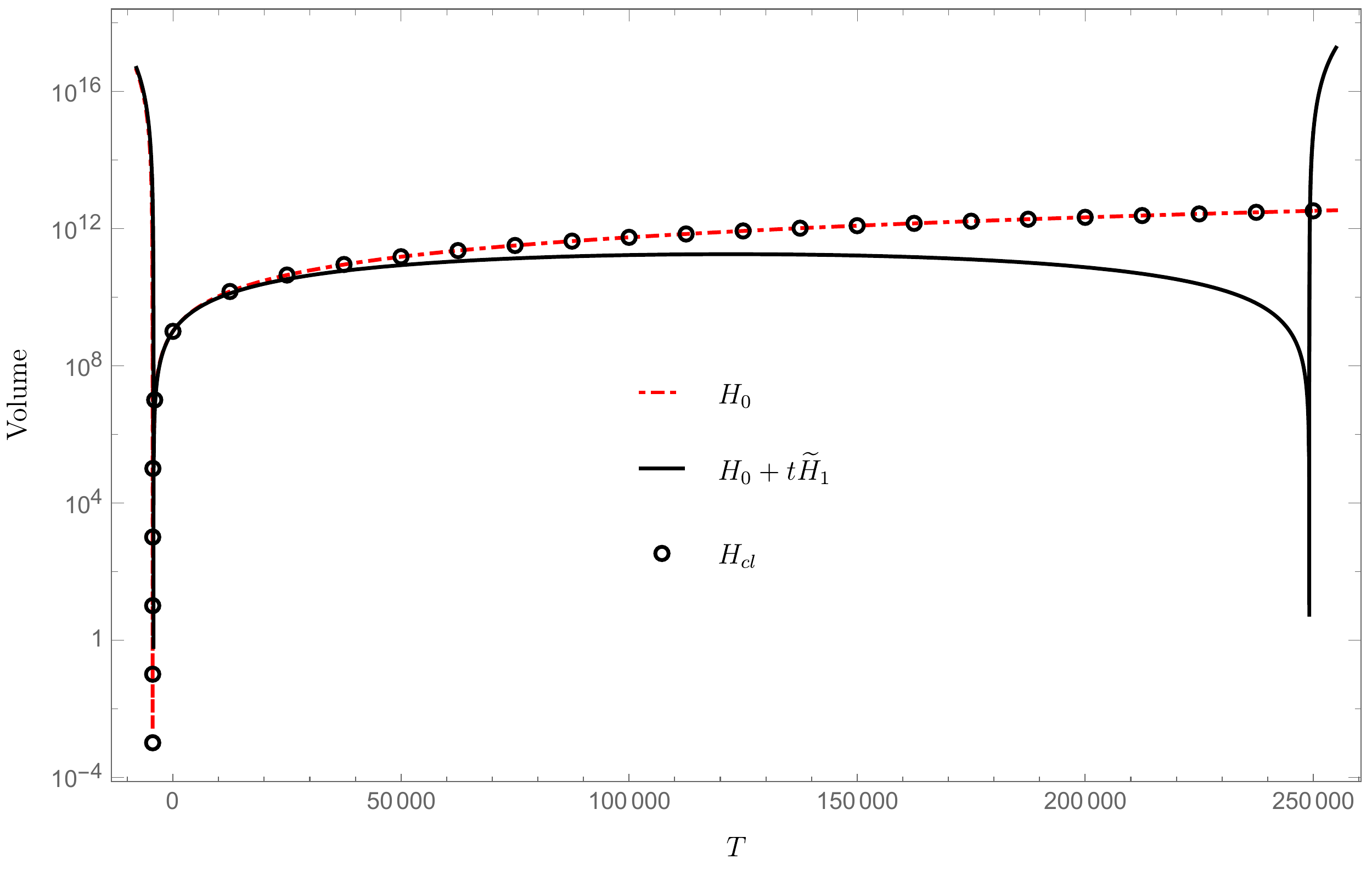}
\caption{Evolution of the spatial volume generated by $\langle\widehat{\bf H}\rangle=H_0+t \widetilde{H}_1$ in \eqref{effhamcos} (black curve),  $H_0$ (red dashed curve), and $H_{cl}$ (black circles).
The coincided initial data of $\eta,d\eta/dT$ are chosen to be at $T=0$ for these 3 cases. The parameters are set to be $a=1$, $t=10^{-5}$, and $\beta=0.2375$. }
\label{fig:dynamics}
\end{figure}

\begin{figure}[h]
    \centering
    \includegraphics[width=0.7\textwidth]{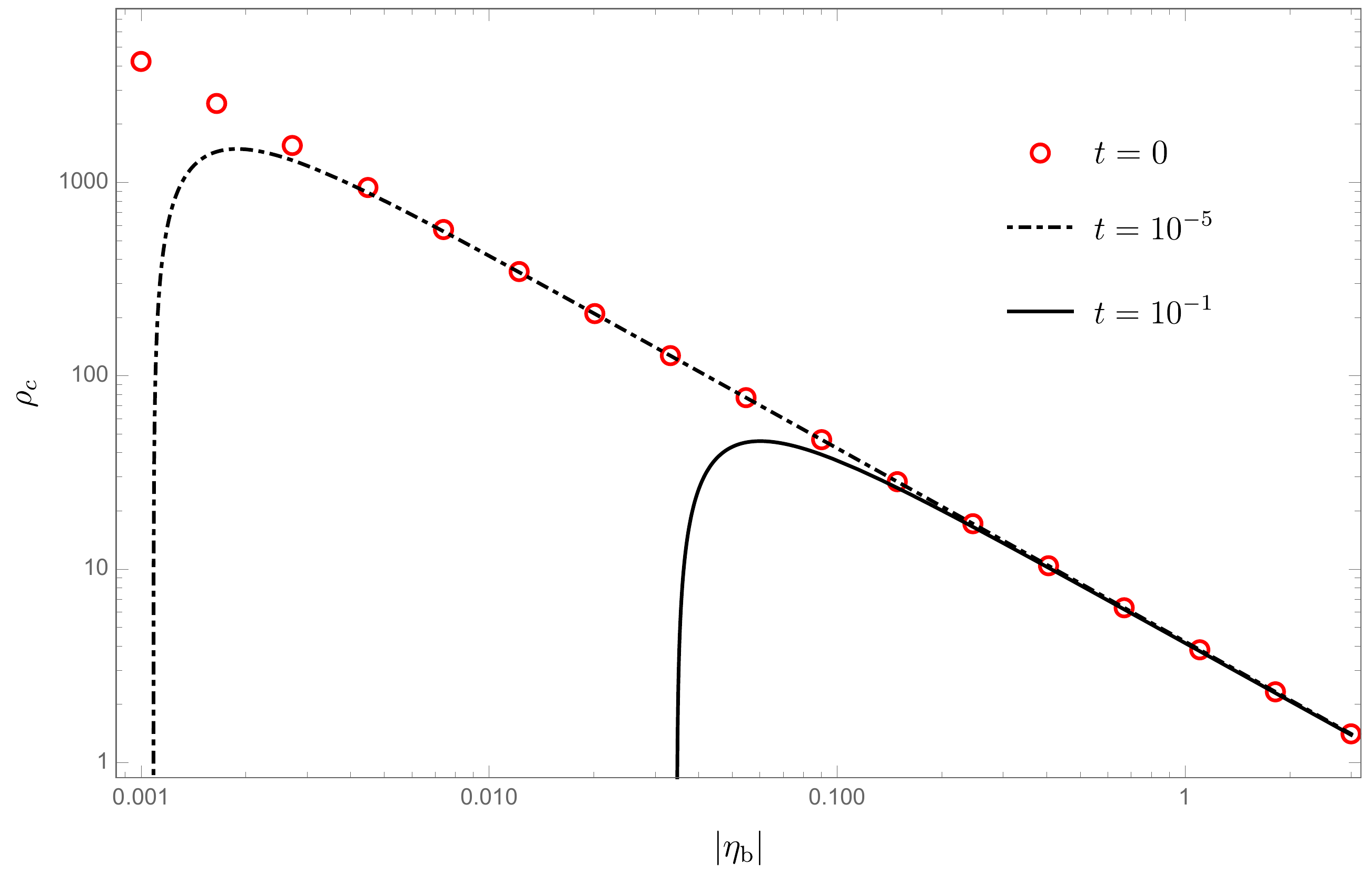}
    \caption{
    Plots of the critical density $\rho_c$ of the dynamics governed by $H_0$ (red circle) and,  $H_0+t\widetilde{H}_1$ with  $t=10^{-5}$ (dash-dotted curve) $t=0.01$ (solid curve). The parameters are set as $a=1$ and $\beta=0.2375$.}
    \label{fig:rhoc}
\end{figure}

In the example shown in Fig. \ref{fig:dynamics}, a relatively small $t=10^{-5}$ is chosen to display the effects of the next-to-leading-order term. The coincided initial data of $\eta,d\eta/dT$ are chosen to be at $T=0$ for all of the three cases. Since $\eta|_{T=0}$ gives a large spatial volume, $T=0$ is in the low-energy-density regime. 
As shown in Fig.\ref{fig:dynamics}, toward $T<0$, the evolution with respect to $H_0+t \widetilde{H}_1$ is similar as that corresponding to $H_0$, where the latter one gives the $\mu_0$-scheme effective dynamics. 
Both of the effective Hamiltonian $H_0+t \widetilde{H}_1$ and $H_0$ resolve the big-bang singularity by ``a bounce". Moreover, the dynamics of $H_0$ and $H_{cl}$ at late time $T>0$ are compatible. Further more, it is shown in Fig. \ref{fig:dynamics} that $t \widetilde{H}_1$ in $H_0+t \widetilde{H}_1$ behaves like an additional matter distribution with negative 
energy density of $O(t)$, which causes the universe to re-collapse and have another bounce at very late time. It should be noted that the time of re-collapse is extremely late because of the tiny value of $t$. It can be verified numerically that the time when the recollapse happens is correlated to $t$, namely, the recollapse happens earlier for a larger $t$ but happens later for a smaller $t$. For vanishing $t$, it needs infinitely long time before the recollapse happens, i.e. there is no recollapse. Finally, the collapse occurs in the FLRW phase with $\xi\ll 1$. Thus expanding $\widetilde{H}_1$ at $\xi=0$, we have
\begin{equation}
\widetilde{H}_1\cong \frac{3 a (-\beta  \eta )^{3/2} \left(1280 \eta ^2-3072 \eta  \coth (\eta )-1792 \eta  \text{csch}(\eta )-5568\right)}{262144 \beta^3 \eta ^3}+O\left(\xi ^2\right).
\end{equation}
That is, when $\xi$ is vanishing, $\widetilde{H}_1$ contains terms which do not vanish. It is these terms that lead to the recollapse. These terms are all contributed by the Lorentzian part, and is the consequence from the Thiemann's regularization of the Lorentzian Hamiltonian. 

For the critical density $\rho_c$ plotted in Fig.\ref{fig:rhoc}, it also receives correction from $t \widetilde{H}_1$ at the $T<0$ bounce. 
As is known, the critical density from the $\mu_0$-scheme dynamics governed by $H_0$ reads $\rho_{c}=\frac{3}{2a^2\kappa \beta^{3}\left(\beta^{2}+1\right)  |\eta_b|}$ with $|\eta_b|$ being the value of $|\eta|$ at the bounce. {When $\widetilde{H}_1$ term is considered, the dependence of $\rho_c$ on $\eta_b$ is only known numerically. Some numerical results are presented in  Fig. \ref{fig:rhoc}.}
As shown there, instead of blowing up for small $|\eta_b|$, the corrected $\rho_c$ from $\langle\widehat{\bf H}\rangle=H_0+t \widetilde{H}_1$ is bounded from above for small values of $|\eta_b|$. 
In an optimistic viewpoint, this correction of $\rho_c$ might hint that the correction from higher-order term in $t$ could potentially flatten the dependence
of $\eta_b$ in $\rho_c$. This flattened behavior of $\rho_c$ is also supported by the current model of $\bar{\mu}$-scheme effective dynamics with complete quantum corrections, which is considered as an important feature of $\bar{\mu}$-scheme Loop Quantum Cosmology. However, by recalling the fact that our expansion in $t$ requires $\eta^2\gg t$, one realizes that the small $|\eta_b|$ regime, where the correction of $\rho_c$ becomes significant, mostly violates this requirement (see fig.\ref{fig:rhoc}). Hence the quantum dynamics near the bounce is still an open problem from this point of view.

It should be emphasized that the proposal \cite{Dapor:2017rwv} adopted here for studying the effective dynamics is not as rigorous as the path integral formula \eqref{Agg0}. As argued in Section \ref{sec:intro}, the $O(t)$ correction in $\langle\widehat{H[1]}\rangle$ is only a partial contribution in the quantum effective action that ultimately determining the quantum effect in the dynamics. Hence, the cosmological dynamics plotted in Fig.\ref{fig:dynamics} only shows $O(t)$ correction in $\langle\widehat{H[1]}\rangle$ from one of the three $O(t)$ contributions in $\Gamma$, and is not yet a rigorous prediction from the principle of LQG. We have only focused on non-gauge-invariant coherent state and neglected the group averaging that impose the gauge invariance in the path integral \eqref{Agg0}. {Moreover, this paper only focus on the effects of the quantum correction of linear order in $t$, and the effects from higher order corrections are still unclear and beyond the scope of this paper.}

\section{Conclusion and outlook}\label{sec:conclusion}

In this paper, we developed an algorithm to overcome the complexity of computing the expectation value of LQG Hamiltonian operator $\widehat{H[N]}$. With this algorithm, the $O(\hbar)$ correction in the expectation value $\langle\widehat{H[N]}\rangle$ at the coherent state peaked at the homogeneous and isotropic data of cosmology is computed. In the current work, there are several perspectives which should be addressed in the future:

The first one is to complete the computation of the quantum effective action $\Gamma$ mentioned in Section \ref{sec:intro}. 
After completing of this current work, the only missing ingredient in the $O(\hbar)$ terms of $\Gamma$ is the ``1-loop determinant'' $\det(\mathfrak{H})$. Therefore, a research to be carried out immediately is to compute this correction of $\det(\mathfrak{H})$ at the homogeneous and isotropic background. 
Once we obtain all of the $O(\hbar)$ contribution to $\Gamma$, the variation of $\Gamma$ should give the quantum corrected effective equations which will demonstrate the quantum correction to the cosmological model implied by LQG.    

The next step of generalizing our computation is to study the expectation values of $\langle\widehat{H[N]}\rangle$ with respect to the coherent states peaked at cosmological perturbations. The semiclassical limits of the expectation value and the cosmological perturbation theory from the path integral \eqref{Agg0} have been studied in \cite{Han:2020iwk}. 
Thus, it is interesting to study the $O(\ell_{\rm p})$ correction to the cosmological perturbation theory.

Finally, the computation of the quantum correction in $\langle\widehat{H[N]}\rangle$ should also be extended to the model of gravity coupled to standard matter fields. The contributions of matter fields to $\widehat{H[N]}$ have been studied in \cite{Thiemann:1997rt,Sahlmann:2002qj}. Since the matter parts in $\widehat{H[N]}$ is much simpler than the Lorentzian part in $\widehat{H[N]}$, the computation of their expectation values should not be hard. Study of the matter contributions and their quantum corrections is a project currently undergoing \cite{inpreparation}. 

Our work expands $\langle\widehat{H[N]}\rangle$ to the order $t^1$ and neglects the higher order terms. It is important to understand higher order contributions, or namely the reminder of the expansion, in order to rigorously estimate the regime in the phase space where the expansion is good (see e.g. the paragraph above (\ref{effhamcos}) for the discussion of the phase space regime such that $t\widetilde{H}_1$ is small).

\section*{Acknowledgements}

M.H. acknowledges Andrea Dapor, Klaus Liegener, and Hongguang Liu for various discussions motivating this work. M.H. receives support from the National Science Foundation through grant PHY-1912278. C. Z. acknowledges the support by the Polish Narodowe Centrum Nauki, Grant No. 2018/30/Q/ST2/00811, and NSFC with Grants No. 11961131013, No. 11875006.

\appendix
\section{SL($2,\mathbb C$) and SU(2) groups}\label{app:sl2csu2}
Let $\sigma_i$ with $i=1,2,3$ be the Pauli matrices and $\tau_k:=-i\sigma_k/2$. Define 
\begin{equation}
\vec\theta=(\theta \sin(\psi)\cos(\phi),\theta \sin(\psi)\sin(\phi),\theta\cos(\psi)). 
\end{equation}
$h\in$SU(2) can be coordinatized as
\begin{equation}
h=e^{\vec\theta\cdot\vec\tau}=\left(
\begin{array}{cc}
 \cos \left(\frac{\theta }{2}\right)-i \sin \left(\frac{\theta }{2}\right) \cos (\psi ) &-i \sin \left(\frac{\theta }{2}\right) \sin (\psi ) e^{-i\phi} \\
 -i\sin \left(\frac{\theta }{2}\right) \sin (\psi )e^{i\phi} & \cos \left(\frac{\theta }{2}\right)+i \sin \left(\frac{\theta }{2}\right) \cos (\psi ) \\
\end{array}
\right)
\end{equation}
with $\theta,\ \phi\in(0,2\pi)$ and $\psi\in(0,\pi)$. The Haar measure then is 
\begin{equation}
\dd\mu_H(\vec\theta)=\frac{1}{4 \pi ^2 }\sin ^2(\frac{\theta}{2})\sin(\psi)\dd\theta\dd\psi\dd\phi.
\end{equation}
Moreover, by defining 
\begin{equation}
\vec p=(p \sin(\alpha)\cos(\beta),p \sin(\alpha)\sin(\beta),p\cos(\alpha))
\end{equation}
with $p>0$, $\alpha\in (0,\pi)$ and $\beta\in(0,2\pi)$,
$g\in \mathrm{SL}(2,\mathbb C)$ can be parameterized as
\begin{equation}
g(\vec p,\vec\theta)=e^{i\vec p\cdot\vec \tau}e^{\vec\theta\cdot\vec\tau}=:e^{i\vec p\cdot\vec \tau} h(\vec\theta).
\end{equation}
For each $p>0$, there exists $u_{\vec p}^\pm \in\mathrm{SU}(2) $ such that
\begin{equation}\label{eq:up}
(u_{\vec p}^\pm)^{-1}\, \vec p\cdot \vec\tau\, (u_{\vec p}^\pm)=\pm p\tau_3.
\end{equation}
Note that $u_{\vec p}^\pm$ is determined by Eq. \eqref{eq:up} up to a right transformation by $e^{\alpha\tau_3}$. Namely $u_{\vec p}^\pm e^{\alpha\tau_3}$ for all $\alpha\in \mathbb R$ are solution to Eq. \eqref{eq:up} provided $u_{\vec p}^\pm$ does. Moreover, $u_{\vec p}^\pm$ has the relation 
\begin{equation}
u_{\vec p}^\pm =u_{-\vec p}^\mp.
\end{equation}
Let us denote $\eta\equiv \pm p$ and $u\equiv u_{\vec p}^\pm$ for convenience. Then
\begin{equation}\label{eq:eipsu2}
g(\vec p,\vec\theta)=u\, e^{i\eta\tau_3}\, u^{-1} h(\vec\theta).
\end{equation}
For $u^{-1} h(\vec\theta)\in$SU(2), decomposing it as 
\begin{equation}\label{eq:uh}
u^{-1} h(\vec\theta)= e^{-\xi\tau_3} n^{-1},
\end{equation}
one get
\begin{equation}
g(\vec p,\vec\theta)=u e^{i(\eta+i\xi)\tau_3}n^{-1}. 
\end{equation}
Note that Eq. \eqref{eq:uh} determines $n$ up to a right transformation by $e^{\alpha\tau_3}$ as that for $u_{\vec p}^\pm$. $n$ satisfies 
\begin{equation}
n(\eta\tau_3) n^{-1}=  h(\vec\theta) (\vec p\cdot\vec\tau) h(\vec\theta)^{-1}.
\end{equation}

The Wigner 3-$j$ symbol $\left(
\begin{array}{ccc}
j_1&j_2&j_3\\
m_1&m_2&m_3
\end{array}
\right)$ is an SU(2)-invariant tensor, namely 
\begin{equation}\label{eq:SU2invariant}
D^{j_1}_{n_1m_1}(h)D^{j_2}_{n_2m_2}(h)D^{j_3}_{n_3m_3}(h)\left(
\begin{array}{ccc}
j_1&j_2&j_3\\
m_1&m_2&m_3
\end{array}
\right)=\left(
\begin{array}{ccc}
j_1&j_2&j_3\\
n_1&n_2&n_3
\end{array}
\right),\ \forall h\in \mathrm{SU}(2).
\end{equation}
Define $\tau_\alpha$ with $\alpha=-1,0,1$ as
\begin{equation}
\tau_{\pm 1}=\mp\frac{\tau_1\pm i \tau_2}{\sqrt{2}},\ \tau_0=\tau_3.
\end{equation}
We obtain the $j$-representation of $\tau_\alpha$ in terms of the 3$j$ symbol, according to the Wigner–Eckart theorem, as
\begin{equation}\label{eq:tauj}
D^{'j}_{mm'}(\tau_\alpha)=iw_j\epsilon^j_{nm}\left(
\begin{array}{ccc}
j&j&1\\
n&m'&\alpha
\end{array}
\right)
\end{equation}
where $w_j=\sqrt{j(j+1)(2j+1)}$ and $\epsilon^j_{nm}=(-1)^{j+m}\delta(n,-m)$ is the 2-$j$ symbol. The 2-$j$ symbol is also SU(2) invariant. By 3-$j$ symbol and $\epsilon^j_{nm}$, any SU(2)-intertwiner can be constructed as 
\begin{equation}
\iota(k_1k_2\cdots k_{n-1})_{m_1m_2m_3\cdots m_n}=\left(
\begin{array}{ccc}
j_1&j_2&k_1\\
m_1&m_2&l_1
\end{array}
\right)\epsilon^{k_1}_{l_1l_1'}
\left(
\begin{array}{ccc}
k_1&j_3&k_2\\
l_1'&m_3&l_2
\end{array}
\right)\epsilon^{k_2}_{l_2l_2'}\cdots\epsilon^{k_{n-2}}_{l_{n-2}l_{n-2}'}\left(
\begin{array}{ccc}
k_{n-2}&j_{n-1}&j_n\\
l_{n-2}'&m_{n-1}&m_n
\end{array}
\right).
\end{equation}
Moreover, the Clebsch-Gordan coefficients relates to $3j$-symbol as
\begin{equation}\label{eq:CG3j}
\begin{aligned}
\langle j_1m_1j_2m_2|JM\rangle=&(-1)^{2j_2}
\sqrt{2J+1}\,\epsilon^J_{MN}\,\left(
\begin{array}{ccc}
J&j_2&j_1\\
N&m_2&m_1
\end{array}
\right)\\
=&(-1)^{j_1-j_2-J}
\sqrt{2J+1}\left(
\begin{array}{ccc}
j_1&j_2&J\\
m_1&m_2&N
\end{array}
\right)\epsilon^J_{NM}
\end{aligned}
\end{equation}

\section{the Clebsch-Gordan coefficients with negative parameters}\label{app:negativeCG}
Given $j_1$, $m_1$, $j_2$, $m_2$ and $m=m_1+m_2$, the Clebsch-Gordan coefficients $\langle j_1m_1j_2m_2|pm\rangle\equiv \left[
\begin{array}{ccc}
j_1&j_2&p\\
m_1&m_2&m
\end{array}
\right]$ for various $p$ satisfy the difference equation \cite{natura1988nature} 
\begin{equation}\label{eq:recurrenceCG}
A(p+1)\left[
\begin{array}{ccc}
j_1&j_2&p+1\\
m_1&m_2&m
\end{array}
\right]+A(p)
\left[
\begin{array}{ccc}
j_1&j_2&p-1\\
m_1&m_2&m
\end{array}
\right]+(A_0(p)-m_1+m_2)\left[
\begin{array}{ccc}
j_1&j_2&p\\
m_1&m_2&m
\end{array}
\right]=0
\end{equation}
where $\max(|m|,|j_1-j_2|)\leq p\leq j_1+j_2$ and
\begin{equation}
\begin{aligned}
A(p)=&\frac{1}{p}\sqrt{-\frac{(p^2-\xi_1^2)(p^2-\xi_2^2)(p^2-\xi_3^2)}{4p^2-1}},\\
 A_0(p)=&\frac{\xi_1\xi_2\xi_3}{p(p+1)}
\end{aligned}
\end{equation}
with 
\begin{equation*}
\xi_1=j_1-j_2,\ \xi_2=j_1+j_2+1,\ \xi_3=m. 
\end{equation*}
With the initial data
\begin{equation}\label{eq:initialCG}
\begin{aligned}
&\left[
\begin{array}{ccc}
j_1&j_2&j_1+j_2\\
m_1&m_2&m
\end{array}
\right] =\sqrt{\frac{\left(2 j_1\right)! \left(2 j_2\right)! \left(j_1+j_2-m\right)! \left(j_1+j_2+m\right)!}{\left(2 j_1+2 j_2\right)! \left(j_1-m_1\right)! \left(j_1+m_1\right)! \left(j_2-m_2\right)! \left(j_2+m_2\right)!}}
\end{aligned}
\end{equation}
the Clebsch-Gordan coefficients for other values of $p$ are computable with Eq. \eqref{eq:recurrenceCG}. 

In order to extend the Clebsch-Gordan coefficients to negative parameters, we define a function 
\begin{equation}
\mathcal C_0(x):=\sqrt{\frac{\left(2 j_2\right)!\, \Gamma(2x+1) \Gamma(x+j_2-m+1)\Gamma(x+j_2+m+1)}{\left(j_2-m_2\right)! \left(j_2+m_2\right)!\Gamma(2x+2j_2+1)\Gamma(x-m_1+1)\Gamma(x+m_1+1)  }},
\end{equation} 
with which 
\begin{equation}
\left[
\begin{array}{ccc}
j_1&j_2&j_1+j_2\\
m_1&m_2&m
\end{array}
\right] =\mathcal C_0(j_1).
\end{equation}
It is remarkable that $\mathcal C_0(x)$ is well-defined not only for positive $x$ such that
\begin{equation}
 x-m_1\in \mathbb Z,\ x\geq |m_1|
\end{equation}
but also for negative $x$ satisfying 
\begin{equation}
 x-m_1\in\mathbb N,\ x\leq\min(-|m_1|,-j_2-|m|-1)
\end{equation}
where those gamma functions with negative integers is understood as
\begin{equation}
\frac{\Gamma(-m_1)\cdots \Gamma(-m_k)}{\Gamma(-n_1)\cdots \Gamma(-n_k)}=\lim_{z\to 0}\frac{\Gamma(z-m_1)\cdots\Gamma(z-m_k)}{\Gamma(z-n_1)\cdots \Gamma(z-n_k)}=(-1)^{n_1+\cdots+n_k-m_1-\cdots-m_k}\frac{n_1!\cdots n_k!}{m_1!\cdots m_k!}. 
\end{equation}
By definition, the Clebsch-Gordan coefficients $\left[
\begin{array}{ccc}
j_1&j_2&j_1+j_2-\iota\\
m_1&m_2&m
\end{array}
\right]$ is obtained by applying the recurrence relation \eqref{eq:recurrenceCG}
successively for $\iota$ steps with the initial data $\mathcal C_0(j_1,j_1+j_2)$.
Then, we defined
$\left[
\begin{array}{ccc}
-j_1&j_2&-j_1+j_2-\iota\\
m_1&m_2&m
\end{array}
\right]$, the Clebsch-Gordan coefficients with negative parameters, as the result by applying  the recurrence relation 
\begin{equation}\label{eq:recurrencemj}
\begin{aligned}
\left[
\begin{array}{ccc}
-j_1&j_2&-q-1\\
m_1&m_2&m
\end{array}
\right]=-\frac{1}{\tilde A(-q)}\left(
\tilde A(-q+1)\left[
\begin{array}{ccc}
-j_1&j_2&-q+1\\
m_1&m_2&m
\end{array}
\right]+(\tilde A_0(-q)-m_1+m_2)\left[
\begin{array}{ccc}
-j_1&j_2&-q\\
m_1&m_2&m
\end{array}
\right]
\right)
\end{aligned}
\end{equation}
with the initial data $\mathcal C_0(-j_1)$, where 
\begin{equation}
\tilde A(q)=A(q)\big|_{j_1\to -j_1},\ \tilde A_0(q)=A_0(q)\big|_{j_1\to -j_1}.
\end{equation} 
This definition extended the Clebsch-Gordan coefficients to negative parameters. It guarantees that,
\begin{equation}
\begin{aligned}
\left[
\begin{array}{ccc}
-j_1&j_2&-j_1+j_2-\iota\\
m_1&m_2&m
\end{array}
\right]=\left.\left[
\begin{array}{ccc}
j_1&j_2&j_1+j_2-\iota\\
m_1&m_2&m
\end{array}
\right]\right|_{j_1\to-j_1}. 
\end{aligned}
\end{equation}  

 By definition, it has
\begin{equation}\label{eq:c0andthreej}
\begin{aligned}
\mathcal C_0(-j_1)&=\sqrt{\frac{\left(2 j_1-2 j_2-1\right)! \left(2 j_2\right)! \left(j_1-m_1-1\right)! \left(j_1+m_1-1\right)!}{\left(2 j_1-1\right)! \left(j_1-j_2-m-1\right)! \left(j_1-j_2+m-1\right)! \left(j_2-m_2\right)! \left(j_2+m_2\right)!}}\\
=&(-1)^{j_2+{m_2}}\left[
\begin{array}{ccc}
j_1-1&j_2&j_1-j_2-1\\
m_1&m_2&m
\end{array}
\right].
\end{aligned}
\end{equation}
Moreover, $\left[
\begin{array}{ccc}
j_1-1&j_2&j_1-j_2+\iota-1\\
m_1&m_2&m
\end{array}
\right]$ can also be obtained by applying successively the recurrence relation 
\begin{equation}\label{eq:recurrencejm1}
\begin{aligned}
\left[
\begin{array}{ccc}
j_1&j_2&q\\
m_1&m_2&m
\end{array}
\right]=-\frac{1}{B(q)}\left(
B(q-1)\left[
\begin{array}{ccc}
j_1&j_2&q-2\\
m_1&m_2&m
\end{array}
\right]+(B_0(q-1)-m_1+m_2)\left[
\begin{array}{ccc}
j_1&j_2&q-1\\
m_1&m_2&m
\end{array}
\right]
\right)
\end{aligned}
\end{equation}
with the initial data $\left[
\begin{array}{ccc}
j_1&j_2&j_1-j_2-1\\
m_1&m_2&m
\end{array}
\right]$, where 
\begin{equation}
B(q)=A(q)\big|_{j_1\to j_1-1},\ B_0(q):=A_0(q)\Big|_{j_1\to j_1-1}.
\end{equation}
 Furthermore, it can be verified that
\begin{equation}
B(q)=-\tilde A(-q),\ B_0(q-1)=\tilde A_0(-q).
\end{equation}
Therefore, according to Eqs. \eqref{eq:c0andthreej}, \eqref{eq:recurrencemj} and \eqref{eq:recurrencejm1}, we finally have
\begin{equation}
\left[
\begin{array}{ccc}
-j_1&j_2&-j_1+j_2-\iota\\
m_1&m_2&m
\end{array}
\right]=(-1)^{j_2+m_2-\iota}\left[
\begin{array}{ccc}
j_1-1&j_2&j_1-j_2+\iota-1\\
m_1&m_2&m
\end{array}
\right],
\end{equation}
namely
\begin{equation}\label{eq:negativeCG}
\left[
\begin{array}{ccc}
-j_1&j_2&-j_1+\Delta\\
m_1&m_2&m
\end{array}
\right]=(-1)^{\Delta+m_2}\left[
\begin{array}{ccc}
j_1-1&j_2&j_1-1-\Delta\\
m_1&m_2&m
\end{array}
\right].
\end{equation}

\section{Proof of \eqref{eq:pspth}}\label{app:graph}
One can refer to \cite{yang2017graphical,zhang2018towards,makinen2019introduction} for more details on this method. 
The 2-$j$ symbol is graphically represented as
\begin{equation}
\epsilon^j_{mn}=(-1)^{j+n}\delta(m,-n)=\makeSymbol{\includegraphics[width=0.1\textwidth]{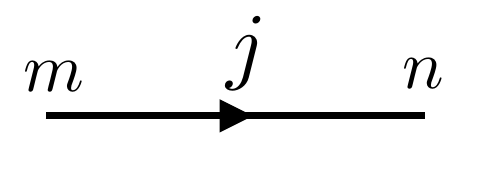}}.
\end{equation}
The $3j$-symbols is graphically represented as
\begin{equation}
\left(
\begin{array}{ccc}
j_1&j_2&j_3\\
m_1&m_2&m_3
\end{array}
\right)=\makeSymbol{\includegraphics[width=0.15\textwidth]{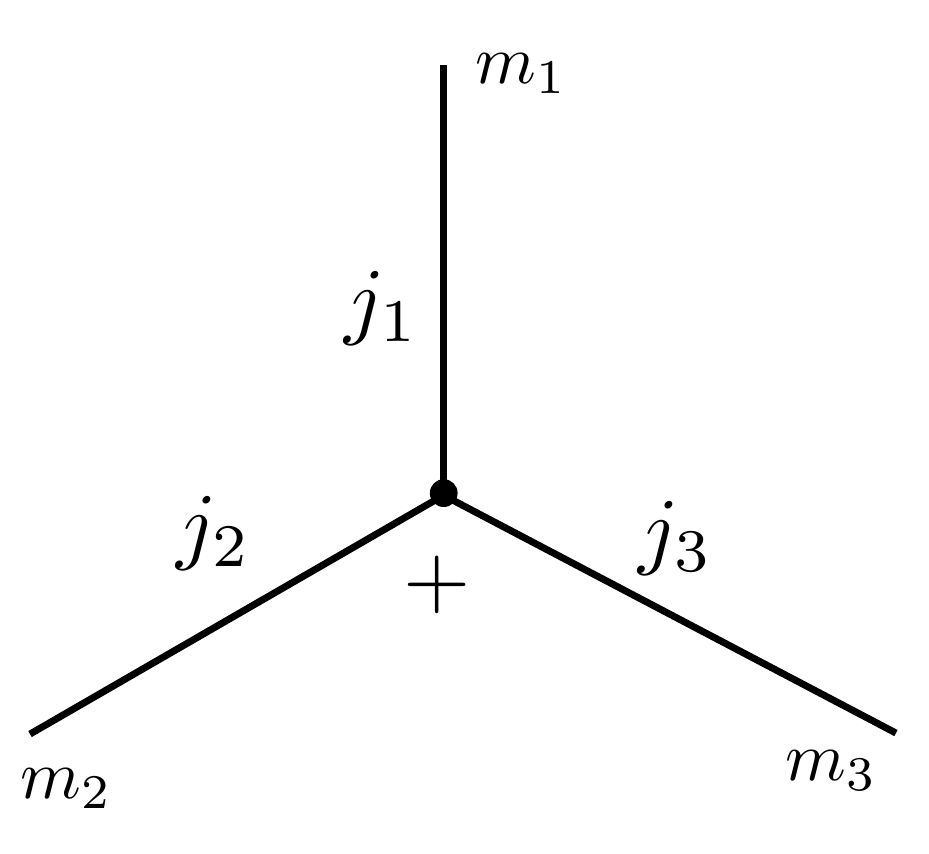}}.
\end{equation}
The Wigner-D matrix $D^j_{mn}(h)$, as a tensor $h\in \mathcal H_j\otimes\mathcal H_j^*$, is 
\begin{equation}
D^j_{mn}(h)=\langle jm|h|jn\rangle=\makeSymbol{\includegraphics[width=0.2\textwidth]{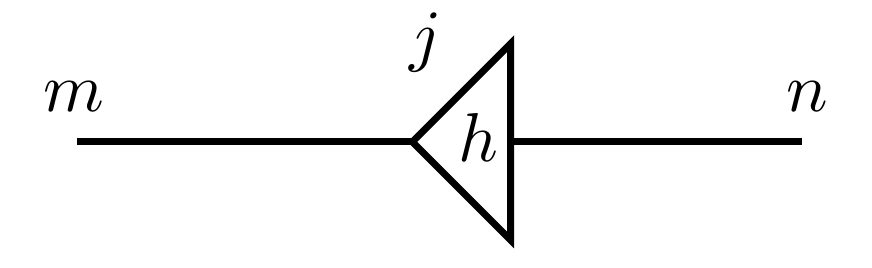}}.
\end{equation}

For the multiplication operator  $D^\iota_{ab}(h_e)$,  its action on $D^j_{mn}(h_e)$ reads
\begin{equation}\label{eq:coupleholonomy}
\begin{aligned}
D^{\iota}_{ab}(h_e)D^{j}_{mn}(h_e)=\sum_{J={j-\iota}}^{j+\iota}d_J(-1)^{M-N}\left(
\begin{array}{ccc}
\iota&j&J\\
a&m&-M
\end{array}
\right)\left(
\begin{array}{ccc}
\iota&j&J\\
b&n&-N
\end{array}
\right)D^J_{MN}(h),
\end{aligned}
\end{equation}
where $\left(\begin{array}{ccc}
\iota&j&J\\
a&m&-M
\end{array}
\right)$ denotes the Wigner $3j$-symbol. This graphically corresponds to
\begin{equation}\label{eq:coupleholonomyp}
\makeSymbol{\includegraphics[width=0.2\textwidth]{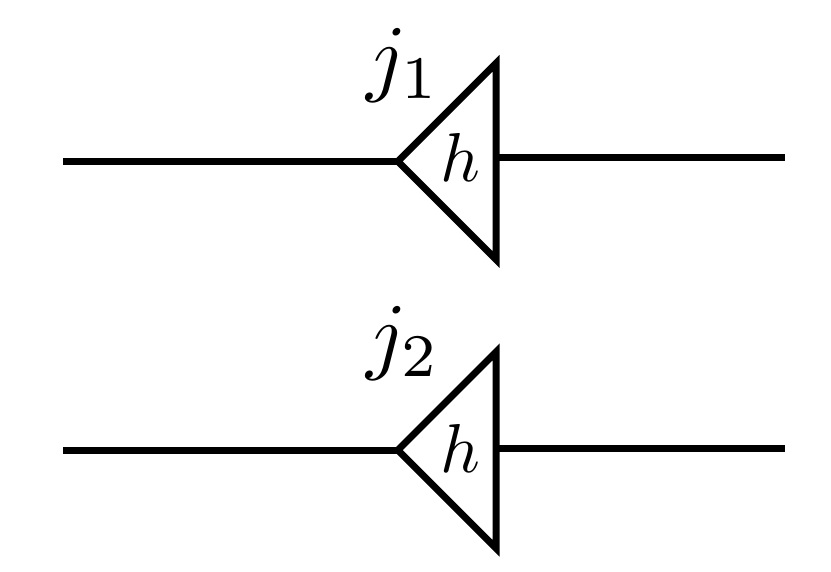}}=\sum_{J=|j_1-j_2|}^{j_1+j_2}d_J\makeSymbol{\includegraphics[width=0.2\textwidth]{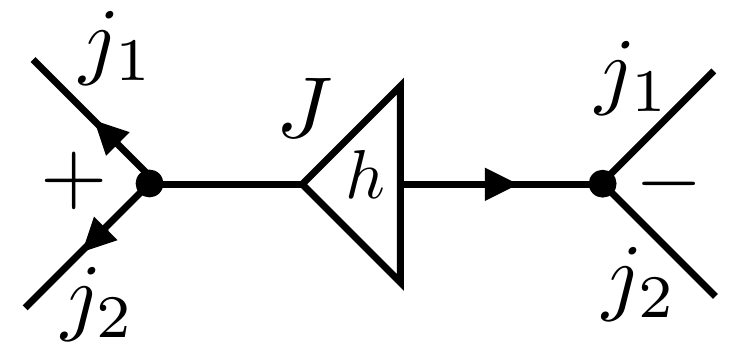}}.
\end{equation}
For the operators $\hat p_s^i(e)$ and  $\hat p_t^j(e)$, by \eqref{eq:ptps} we have
\begin{equation}\label{eq:flux12}
\begin{aligned}
\hat p_s^\alpha(e) D^j_{mn}(h_e)&=-t w_j\epsilon^j_{\tilde m m}\left(\begin{array}{ccc}
j&j&1\\
\tilde m&m'&\alpha
\end{array}
\right)D^j_{m' n}(h_e),\\
\hat p_t^\alpha(e) D^j_{mn}(h_e)&=t w_j\epsilon^j_{\tilde n n'} \left(\begin{array}{ccc}
j&j&1\\
\tilde n &n&\alpha
\end{array}
\right)D^j_{m n'}(h_e),
\end{aligned}
\end{equation}
where we used \eqref{eq:tauj}. Thus , one has
\begin{equation}\label{eq:flux1}
p^\alpha_s(e)\makeSymbol{\includegraphics[width=0.2\textwidth]{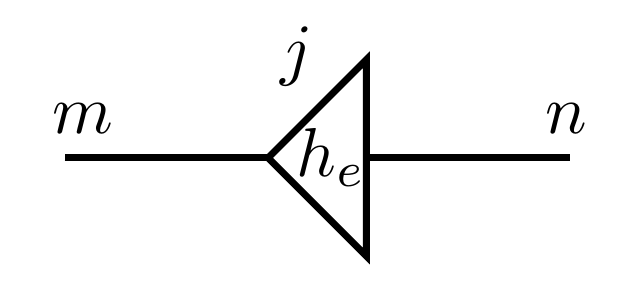}}=-tw_j\makeSymbol{\includegraphics[width=0.2\textwidth]{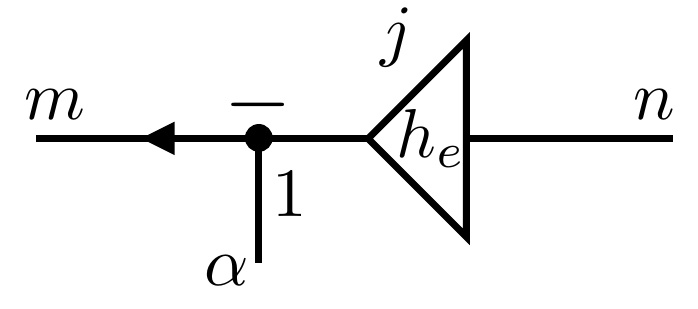}}
\end{equation}
\begin{equation}\label{eq:flux2}
p^\alpha_t(e)\makeSymbol{\includegraphics[width=0.2\textwidth]{dhe}}=tw_j\makeSymbol{\includegraphics[width=0.2\textwidth]{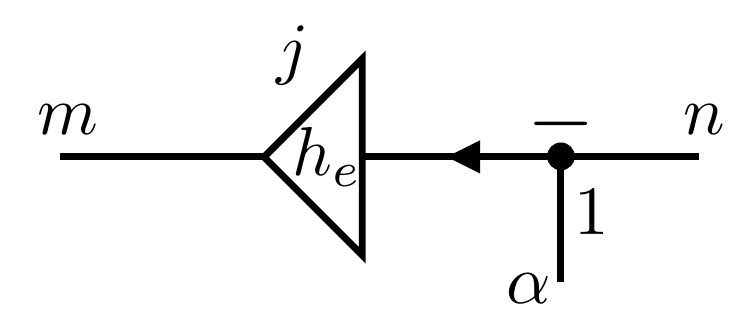}}
\end{equation}
%

Given a monomial of holonomies and fluxes $\hat F_e$. According to  Eqs. \eqref{eq:coupleholonomyp}, \eqref{eq:flux1} and \eqref{eq:flux2}, we draw $\hat F_e$ graphically as
\begin{equation}
\int \dd\mu_h\overline{D^{j'}_{m'n'}(h)}\hat F_eD^j_{mn}(h)=\frac{1}{d_{j'}}\makeSymbol{\includegraphics[width=0.15\textwidth]{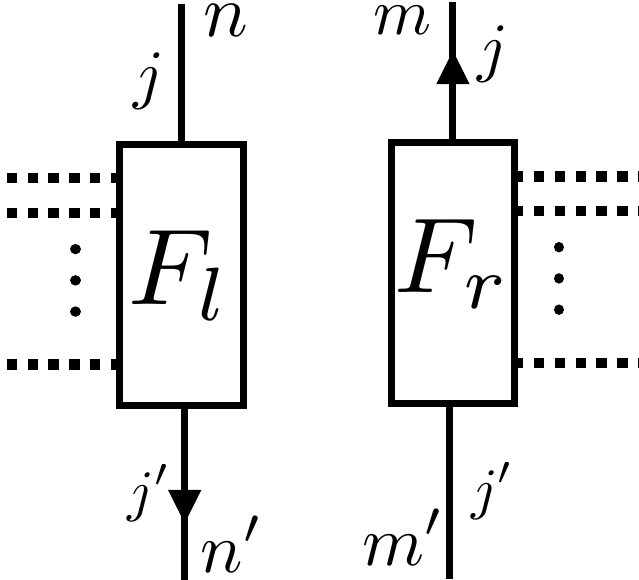}},
\end{equation}
where the dashed lines represents the indices possessed by $\hat F_e$. Some dashed lines may carry arrows depending on the form of $\hat F_e$. With this formula,  
the expectation value of $\langle\hat F_e\rangle_{z_e}$ 
is
\begin{equation}
\begin{aligned}
&\langle \hat F_e\rangle_{z_e}=\sum_{jj'}d_je^{- \frac{t}{2}(j(j+1)+j'(j'+1))}\makeSymbol{\includegraphics[width=0.25\textwidth]{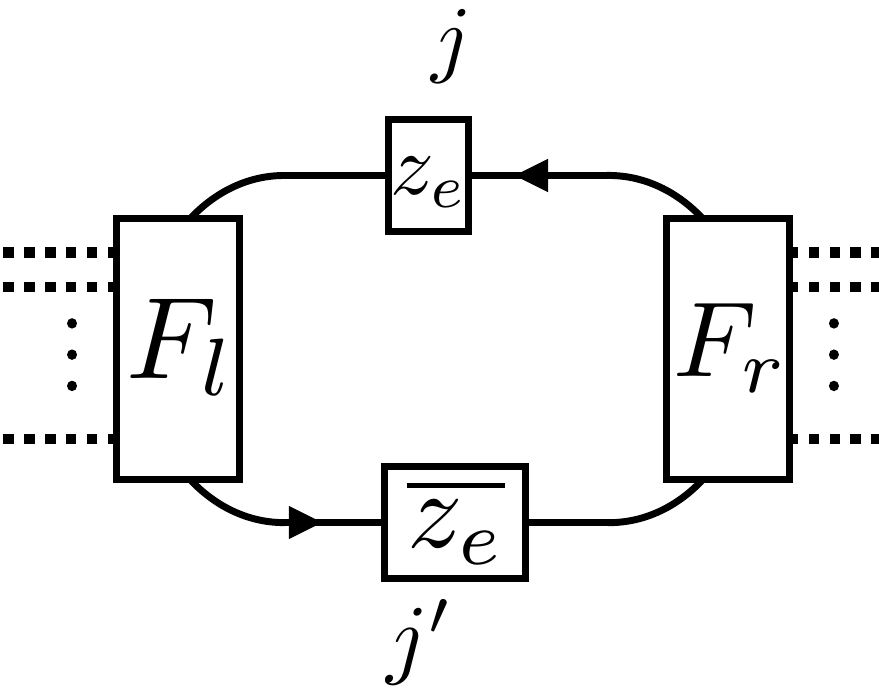}}.
\end{aligned}
\end{equation}
%
To prove \eqref{eq:pspth}, we claim that
\begin{equation}
\begin{aligned}
\makeSymbol{\raisebox{0.1\height}{\includegraphics[width=0.3\textwidth]{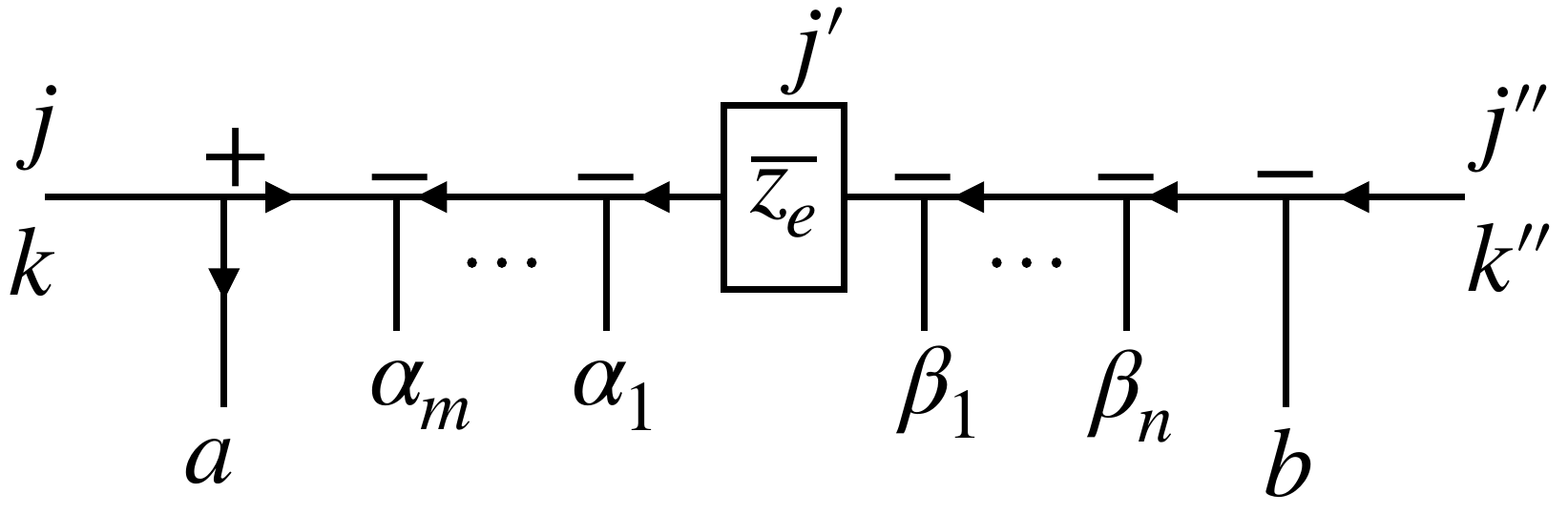}}}=e^{(k-a+\alpha_1+\cdots+\alpha_m)\overline{z_e}}\makeSymbol{\raisebox{0\height}{\includegraphics[width=0.3\textwidth]{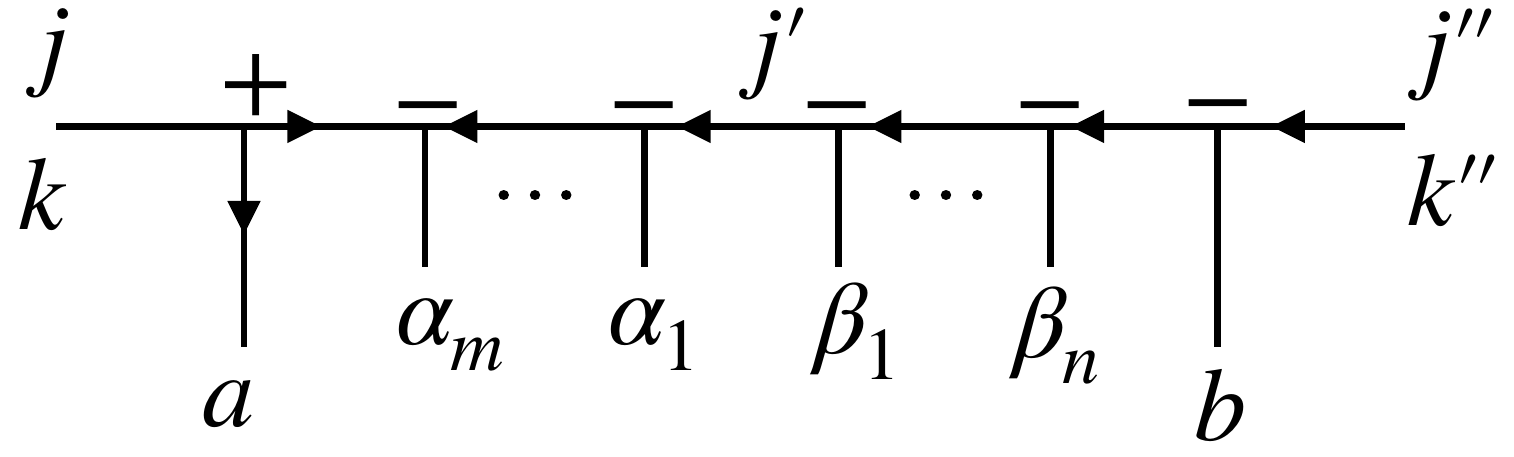}}},
\end{aligned}
\end{equation}
which can be verified easily with the fact that
$
D^{j}_{mn}(e^{iz\tau_3})=e^{z m}\delta_{mn},
$
and the non-vanishing condition of the 3$j$-symbol that
$
\left(
\begin{array}{ccc}
j_1&j_2&j_3\\
m_1&m_2&m_3
\end{array}
\right)\propto\delta_{m_1+m_2+m_3,0}$. 

Consequently, it has
\begin{equation}
\begin{aligned}
\makeSymbol{\raisebox{0.0\height}{\includegraphics[width=0.3\textwidth]{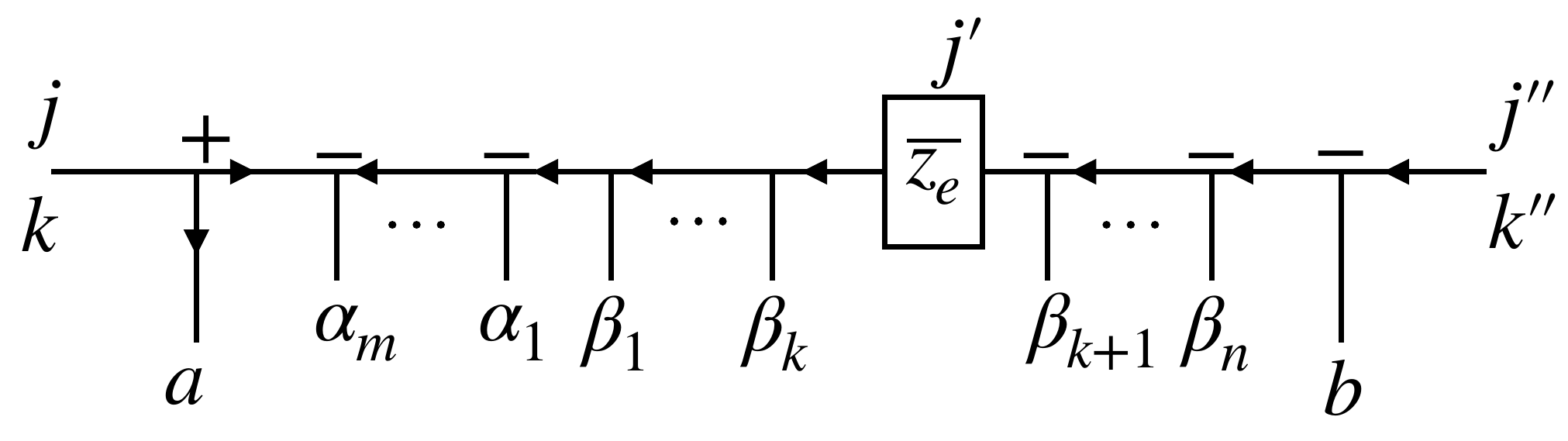}}}=e^{(\beta_1+\cdots\beta_k)\overline{z_e}}\makeSymbol{\raisebox{0.0\height}{\includegraphics[width=0.3\textwidth]{pspth1}}},
\end{aligned}
\end{equation}
which leads to
\begin{equation}\label{eq:pspt}
\begin{aligned}
\makeSymbol{\raisebox{0.0\height}{\includegraphics[width=0.3\textwidth]{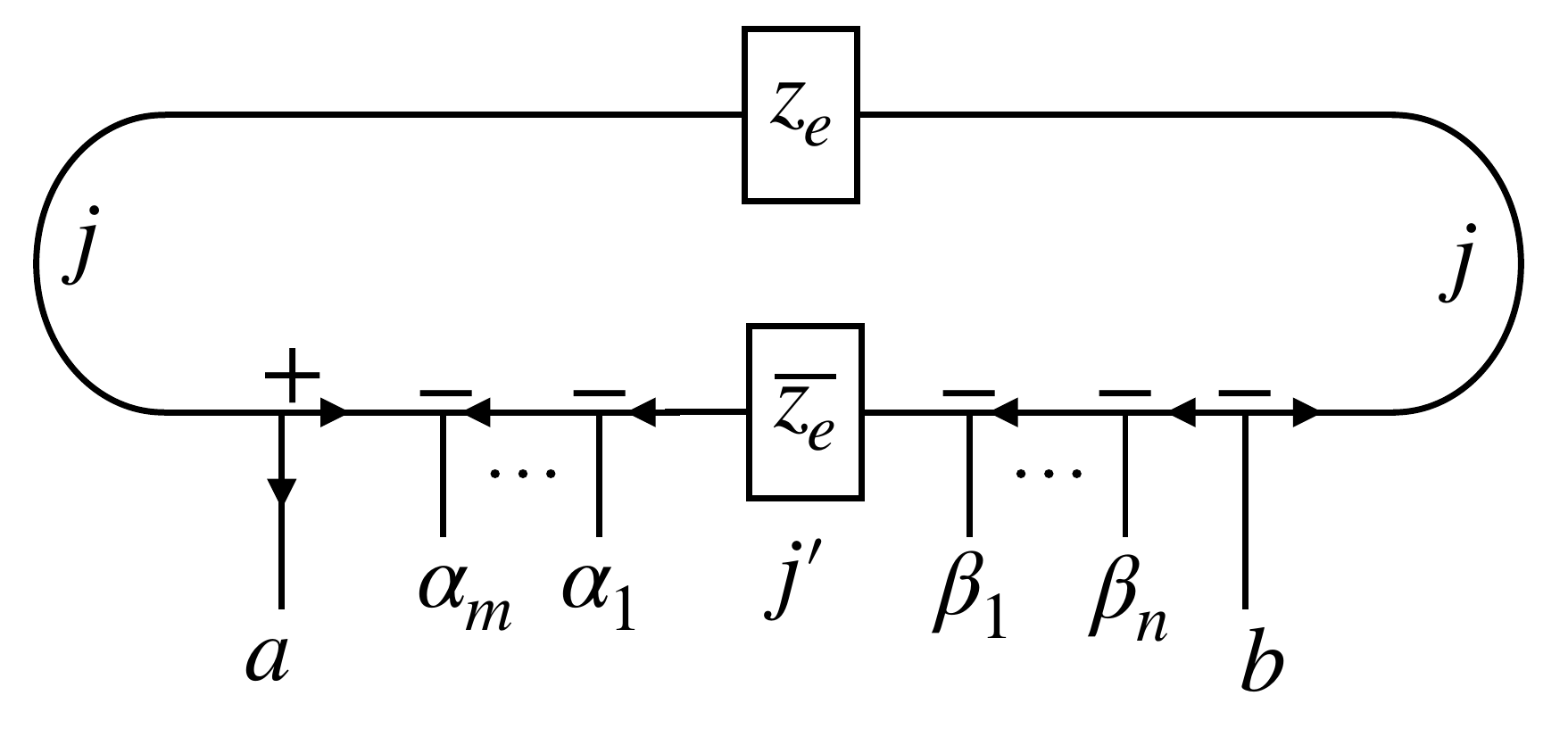}}}=e^{-(\beta_1+\cdots\beta_n)\overline{z_e}}\makeSymbol{\raisebox{0.0\height}{\includegraphics[width=0.3\textwidth]{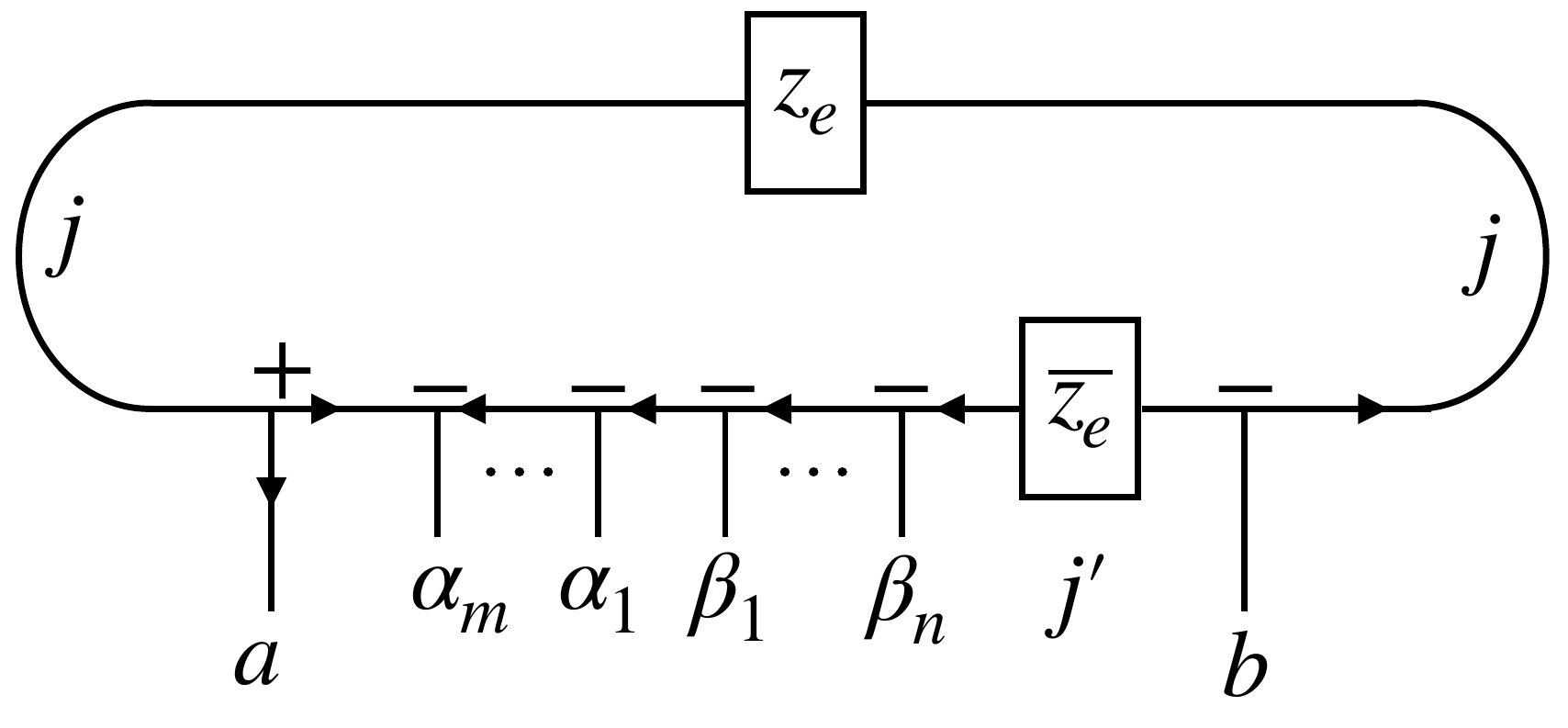}}}.
\end{aligned}
\end{equation}
This graphical equation can be decoded as  \eqref{eq:pspth}, where the factor $(-1)^n$ in the right hand side is because of the minus sign in the definition of $\hat p_t^\alpha(e)$. 

\section{Expectation value of $\hat F^{\alpha_1\cdots \alpha_m}_{\iota a b}$}\label{app:shabiref}
\subsection{ for the  case with $\iota=0$}
Let us consider 
\begin{equation}
\hat F^{\alpha_1\cdots \alpha_m}_{000}\equiv \hat F^{\alpha_1\cdots \alpha_m}=\hat p_s^{\alpha_1}(e)\cdots \hat p_s^{\alpha_m}(e).
\end{equation}
 Eq. \eqref{eq:flux12} gives us that
\begin{equation}
\begin{aligned}
&\langle \hat F^{\alpha_1\cdots \alpha_m}\rangle_{z_e}=t^m\sum_{j\geq 1/2}d_je^{- tj(j+1)} F_0(j,\frac{\partial_\eta}{2})\frac{\sinh((2j+1)\eta)}{\sinh(\eta)} 
\end{aligned}
\end{equation}
where the function $F_0$ is given by
\begin{equation}
\begin{aligned}
F_0(j,k)=&(-w_j)^m\epsilon^j_{kk_1}\left(
\begin{array}{ccc}
j&1&j\\
k_1&\alpha_1&n_1
\end{array}
\right)\epsilon^j_{n_1k_2}\left(
\begin{array}{ccc}
j&1&j\\
k_2&\alpha_2&n_2
\end{array}
\right)\cdots \\
&\epsilon_{n_{m-2}k_{m-1}}^j\left(
\begin{array}{ccc}
j&1&j\\
k_{m-1}&\alpha_{m-1}&n_{m-1}
\end{array}
\right) 
\epsilon_{n_{m-1}k_{m}}^j\left(
\begin{array}{ccc}
j&1&j\\
k_{m}&\alpha_{m}&k
\end{array}
\right)
\end{aligned}
\end{equation}
By  Eqs. \eqref{eq:CG3j} and \eqref{eq:negativeCG}, we achieve
\begin{equation}\label{eq:j1j}
\begin{aligned}
\left.w_j\epsilon^{j}_{\tilde nk}\left(
\begin{array}{ccc}
j&1&j\\
k&\alpha&n
\end{array}
\right)\right|_{j\to-(j+1)}=(-1)^{\alpha}w_j\epsilon^j_{\tilde n k}\left(
\begin{array}{ccc}
j&1&j\\
k&\alpha&n
\end{array}
\right),
\end{aligned}
\end{equation}
which results in
\begin{equation}
F_0(j,k)=F_0(-j-1,k).
\end{equation}
Therefore
\begin{equation}\label{eq:FwithF0}
\langle\hat F^{\alpha_1\cdots\alpha_m}\rangle_{z_e}=t^m\sum_{n\in\mathbb Z} e^{-\frac{t}{4}n^2+\frac{t}{4}} n F_0(\frac{n-1}{2},\frac{\partial_\eta}{2})\frac{e^{n\eta}}{2\sinh(\eta)}
\end{equation}
where $n\equiv 2j+1$. 
Substituting the values of the 3-$j$ and 2-$j$ symbols, we have that
\begin{equation}\label{eq:multiflux}
\begin{aligned}
F_0(j,\frac{\partial_\eta}{2})=\delta(\sum_{i=1}^m\alpha_i,0)\left(\prod_{i=1}^m\frac{1}{(1+|\alpha_i|)^{1/2}}\right)\prod_{n=1}^m(\alpha_nj-\frac{\partial_\eta}{2}+\sum_{k=1}^n\alpha_k)
\end{aligned}
\end{equation} 
Applying the Poisson summation formula to Eq. \eqref{eq:FwithF0}, we get
\begin{equation}\label{eq:qm0}
\begin{aligned}
&\langle\hat F^{\alpha_1\cdots \alpha_m}\rangle_{z_e}\\
=&\delta\left(\sum_{i}\alpha_i,0\right)t^m\prod_{i=1}^m\frac{1}{(1+|\alpha_i|)^{1/2}}e^{t/4}\int_{-\infty}^\infty\dd x\, e^{-\frac{t}{4}x^2}x\prod_{n=1}^m(\alpha_n\frac{x-1}{2}-\frac{\partial_\eta}{2}+\sum_{k=1}^n\alpha_k)\frac{e^{x\eta}}{2\sinh(\eta)}+O(t^\infty). 
\end{aligned}
\end{equation}
By the trick
\begin{equation}
\partial_\eta^n\frac{e^{x\eta}}{\sinh(\eta)}=\left.\left((x+\partial_y)^n\frac{e^{x\eta}}{\sinh(y)}\right)\right|_{y\to\eta},
\end{equation}
Eq. \eqref{eq:qm0} can be finally simplified as
\begin{equation}\label{eq:qm}
\begin{aligned}
&\langle\hat F^{\alpha_1\cdots \alpha_m}\rangle_{z_e}\\
=&\delta\left(\sum_{i}\alpha_i,0\right)t^m\prod_{i=1}^m\frac{1}{(1+|\alpha_i|)^{1/2}}e^{t/4}\int_{-\infty}^\infty\dd x x\prod_{k=1}^m\left(\frac{\alpha_k-1}{2}x-\frac{\partial_y}{2}+\sum_{i=1}^k\alpha_i-\frac{\alpha_k}{2}\right)\frac{e^{-\frac{t}{4}x^2+x\eta}}{2\sinh(y)}\Big|_{y\to\eta}+O(t^\infty)\\
\end{aligned}
\end{equation}
The integrand is a  polynomial of $x$ multiplied by a Gaussian function $e^{-t x^2/4+x\eta}$. Thus let us integrate $x^ne^{-t x^2/2+x\eta}$
\begin{equation}\label{eq:intxn}
\begin{aligned}
&\int_{-\infty}^\infty \dd x x^n e^{-\frac{t}{4}x^2+\eta x}=\frac{t}{ e^{t/4}}\frac{\sinh(\eta)}{\eta }\langle 1\rangle \sum_{k=0}^{\floor{n/2}} \binom{n}{2k}\eta^{n-2k}\left(\frac{t}{2}\right)^{k-n}(2k-1)!!\end{aligned}
\end{equation}
where $\floor{y}$ denotes the greatest integer less than or equal to $y$ {and Eq. \eqref{eq:norm} is substituted because we are concerned on the expectation value with respect to the normalized coherent states, i.e. $\langle\hat F^{\alpha_1\cdots \alpha_m}\rangle_{z_e}/\langle 1\rangle$}. The leading-order term of the RHS of  Eq. \eqref{eq:intxn} is $O(t^{-n+1})$ with $n$ the power of $x$ in the integrand. Thus it is the highest-power term of $x$ in the integrand of \eqref{eq:qm} that leads to the leading-order term of $\langle\hat F^{\alpha_1\cdots \alpha_m}\rangle_{z_e}$, and the highest-power and second-highest-power terms of $x$ lead to the next-to-leading order term of $\langle\hat F^{\alpha_1\cdots \alpha_m}\rangle_{z_e}$, and so on. 
Therefore, we finally get
\begin{equation}
\begin{aligned}
&\langle\hat F^{\alpha_1\cdots \alpha_m}\rangle_{z_e}\\
=&\delta\left(\sum_{i}\alpha_i,0\right)\langle 1\rangle\prod_{i=1}^m\frac{1}{(1+|\alpha_i|)^{1/2}}\Bigg\{\left(\prod_{k=1}^m(\alpha_k-1)\right)\eta^{m}+\left(\prod_{k=1}^m(\alpha_k-1)\right)\frac{(m+1)m}{2}\eta^{m-2}\frac{t}{2}+\\
&+\sum_{k=1}^m\left(\prod_{i\neq k}(\alpha_i-1)\right)\left(\frac{\coth(\eta)+\alpha_k}{2}+\alpha_1+\cdots+\alpha_{k-1}\right)\eta^{m-1} t+\cdots\Bigg\}.
\end{aligned}
\end{equation} 
\subsection{for the case with $\iota\neq 0$}
Now let us consider the operator 
\begin{equation}
\hat F_{\iota a b}^{\alpha_1\cdots \alpha_n}=\hat p_s^{\alpha_1}(e)\cdots \hat p_s^{\alpha_m}(e)D^\iota_{ab}(h_e),\  \iota\neq 0.
\end{equation}
By using \eqref{eq:flux12} and \eqref{eq:coupleholonomy}, it can be obtained that
\begin{equation}\label{eq:oneholonomy}
\begin{aligned}
&\langle \hat F_{\iota a b}^{\alpha_1\cdots\alpha_m}\rangle_{z_e}=\sum_{jj'}d_je^{-\frac{t}{2}(j(j+1)+j'(j'+1))}d_{j'}(-tw_{j'})^m\sum_{kk'}e^{kz_e+k'\overline{z_e}}\epsilon^{j'}_{k'k_1}\left(
\begin{array}{ccc}
j'&1&j'\\
k_1&\alpha_1&n_1
\end{array}
\right)\epsilon^{j'}_{n_1k_2}\left(
\begin{array}{ccc}
j'&1&j'\\
k_2&\alpha_2&n_2
\end{array}
\right)\cdots\\
& \left(
\begin{array}{ccc}
j'&1&j'\\
k_{m-1}&\alpha_{m-1}&n_{m-1}
\end{array}
\right)\epsilon^{j'}_{n_{m-1}k_m}\left(
\begin{array}{ccc}
j'&1&j'\\
k_m&\alpha_m&n_m
\end{array}
\right)\epsilon^{j'}_{n_mk_{m+1}}\left(
\begin{array}{ccc}
j'&\iota&j\\
k_{m+1}&a'&k
\end{array}
\right)\epsilon^\iota_{aa'}\left(
\begin{array}{ccc}
j'&\iota&j\\
k'&b&k''
\end{array}
\right)\epsilon^j_{k''k}
\end{aligned}
\end{equation}
Because the products of these $3j$-symbols vanish unless $k=k'+b$, the last equation can be simplified as
\begin{equation}\label{eq:qmhgeneral}
\begin{aligned}
\langle \hat F^{\alpha_1\cdots\alpha_m}_{\iota a b}\rangle_{z_e}=t^me^{bz_e}\sum_{j,j'}e^{-\frac{t}{2}(j(j+1)+j'(j'+1))}F_\iota(j,j',\frac{\partial_\eta}{2})\frac{\sinh((2j'+1)\eta)}{\sinh(\eta)}
\end{aligned}
\end{equation} 
where $F_\iota$ is given by
\begin{equation}\label{eq:3jF}
\begin{aligned}
F_\iota(j,j',k'):=&d_j d_{j'}(-w_{j'})^m\epsilon^{j'}_{k'k_1}\left(
\begin{array}{ccc}
j'&1&j'\\
k_1&\alpha_1&n_1
\end{array}
\right)\epsilon^{j'}_{n_1k_2}\left(
\begin{array}{ccc}
j'&1&j'\\
k_2&\alpha_2&n_2
\end{array}
\right)\cdots \left(
\begin{array}{ccc}
j'&1&j'\\
k_{m-1}&\alpha_{m-1}&n_{m-1}
\end{array}
\right)\epsilon^{j'}_{n_{m-1}k_m}\\
&\left(
\begin{array}{ccc}
j'&1&j'\\
k_m&\alpha_m&n_m
\end{array}
\right)\epsilon^{j'}_{n_mk_{m+1}}\left(
\begin{array}{ccc}
j'&\iota&j\\
k_{m+1}&a'&k'+b
\end{array}
\right)\epsilon^\iota_{aa'}\left(
\begin{array}{ccc}
j'&\iota&j\\
k'&b&k''
\end{array}
\right)\epsilon^j_{k''\,k'+b}.
\end{aligned}
\end{equation}
The possible values of $j$ are $|j'-\iota|,\ |j'-\iota|+1,\ \cdots,\ j'+\iota$.   By applying Eqs. \eqref{eq:CG3j} and \eqref{eq:negativeCG} again, we have
\begin{equation}
\begin{aligned}
&\epsilon^{j'}_{n_mk_{m+1}}\left(
\begin{array}{ccc}
j'&\iota&j\\
k_{m+1}&a'&k'+b
\end{array}
\right)\left(
\begin{array}{ccc}
j'&\iota&j\\
k'&b&k''
\end{array}
\right)\epsilon^j_{k''\,k'+b}\Bigg|_{j=j'-\Delta}\\
=&(-1)^{b-a+1}\left[\epsilon^{j'}_{n_mk_{m+1}}\left(
\begin{array}{ccc}
j'&\iota&j\\
k_{m+1}&a'&k'+b
\end{array}
\right)\left(
\begin{array}{ccc}
j'&\iota&j\\
k'&b&k''
\end{array}
\right)\epsilon^j_{k''\,k'+b}\Bigg|_{j=j'+\Delta}\right]_{j'\to -j'-1}.
\end{aligned}
\end{equation}
Together with Eq. \eqref{eq:j1j}, it leads to
\begin{equation}\label{eq:Fjpmd}
F_{\iota}(j'+\Delta,j',k')=-F_{\iota}(j'-\Delta,j',k')\Big|_{j'\to -j'-1}
\end{equation}
where we use $F_\iota(j,j',k) \propto \delta(\sum_{i=1}^m\alpha_i-a+b,0)$.
Thus for $j'=j\pm d$ with $d>0$, one has
\begin{equation*}
\begin{aligned}
&\sum_{2j'+d\geq  \iota}f(d_{j'}^2) F_\iota(j'+d,j',\frac{\partial\eta}{2})\frac{\sinh(d_{j'}\eta)}{\sinh(\eta)}+\sum_{2j'-d\geq \iota}f(d_{j'}^2)F_\iota(j'-d,j',\frac{\partial\eta}{2})\frac{\sinh(d_{j'}\eta)}{\sinh(\eta)}\\
=&\sum_{-2j'-2+d\geq  \iota}f(d_{j'}^2)F_\iota(-j'-1+d,-j'-1,\frac{\partial\eta}{2})\frac{\sinh(-d_{j'}\eta)}{\sinh(\eta)}+\sum_{2j'-d\geq \iota}f(d_{j'}^2)F_\iota(j'-d,j',\frac{\partial\eta}{2})\frac{\sinh(d_{j'}\eta)}{\sinh(\eta)}\\
=&\sum_{\substack{n\in \mathbb Z\\ n\notin [d-\iota,d+\iota] }}f(d_{j'}^2)F_\iota(\frac{n-1}{2}-d,\frac{n-1}{2},\frac{\partial_\eta}{2})\frac{\sinh(n\eta)}{\sinh(\eta)}.
\end{aligned}
\end{equation*}
Here the conditions $2j'\pm d\geq \iota$ are from the the triangle condition i.e. $j'+j\geq \iota$, of the $3j$-symbol. The second equality is because of the replacement $j'\to -j'-1$ in the first summation,  $d_{j'}=2j'+1\equiv n$ and $f$ is an arbitrary function. 
Therefore,
\begin{equation}\label{eq:mergedqmh}
\langle \hat F^{\alpha_1\cdots\alpha_m}_{\iota a b}\rangle_{z_e}=t^m e^{bz_e}\sum_{\substack{0\leq d\leq \iota\\ d+\iota\in \mathbb Z}}\frac{2-\delta(d,0)}{2} e^{-\frac{t}{4} \left(2 d ^2-1\right)}\sum_{\substack{ n\in\mathbb Z\\ n\notin[d-\iota,d+\iota]}}e^{-\frac{1}{4} t \left(n^2-2 d n\right)}F_\iota(\frac{n-1}{2}-d,\frac{n-1}{2},\frac{\partial_\eta}{2})\frac{\sinh(n\eta)}{\sinh(\eta)}
\end{equation}

To calculate $F_\iota(j,j',k')$, we notice that
\begin{equation}\label{eq:factor1}
\begin{aligned}
-w_{j'}\epsilon^{j'}_{n_{i-1} k_i}\left(
\begin{array}{ccc}
j'&1&j'\\
k_i&\alpha_i&n_i
\end{array}
\right)=\delta(k_i+\alpha_i+n_i,0)\left(\alpha _i j'-n_i\right) \sqrt{\frac{\left(j'-n_{i-1}\right)! \left(n_{i-1}+j'\right)!}{\left(1-\alpha _i\right)! \left(\alpha _i+1\right)! \left(j'-n_i\right)! \left(n_i+j'\right)!}}.
\end{aligned}
\end{equation}
Moreover, by applying Eq. \eqref{eq:CG3j} and the expression of the Clebsch-Gordan coefficients (Eq. (5), Section 8.2.1 in \cite{varshalovich1988quantum})
, we have
\begin{equation}\label{eq:factor2}
\begin{aligned}
&\epsilon^{j'}_{n_m k_{m+1}}\left(
\begin{array}{ccc}
j'&\iota&j\\
k_{m+1}&a'&k
\end{array}
\right)\left(
\begin{array}{ccc}
j'&\iota&j\\
k'&b&k''
\end{array}
\right)\epsilon^j_{k''\,k}=(-1)^{j'-\iota-j+a'+b}\delta_{a'+k,n_m}\delta_{k'+b,k}\times\\
&\frac{(j+\iota-j')!(j-\iota+j')!(-j+\iota+j')!}{(j+\iota+j'+1)!}\left(\frac{(j'+n_m)!(j'-n_m)!}{(j'+k')!(j'-k')!(\iota+a')!(\iota-a')!(\iota+b)!(\iota-b)!}\right)^{1/2}\\
&\sum_z\frac{(-1)^{z}(j'+\iota+k-z)!(j-k+z)!}{z!(j'-j+\iota-z)!(j'+k+a'-z)!(j-\iota-k-a'+z)!}\sum_z\frac{(-1)^{z}(j+\iota+k'-z)!(j'-k'+z)!}{z!(j-j'+\iota-z)!(j+k-z)!(j'-\iota-k+z)!}.
\end{aligned}
\end{equation}
 Eqs. \eqref{eq:factor1} and \eqref{eq:factor2} lead to
\begin{equation}\label{eq:algebraicF}
\begin{aligned}
F_\iota(j'-d,j',k')=&\delta(\sum_{i=1}^m\alpha_i-a+b,0)\prod_{i=1}^m\frac{1}{(1+|\alpha_i|)^{1/2}}\left(\frac{1}{(\iota+a)!(\iota-a)!(\iota+b)!(\iota-b)!}\right)^{1/2}\\
&\frac{(-1)^{d-2a+b}(2j'+1)(2j'-2d+1)}{(2j'-d+\iota+1)_{2\iota+1}}\prod_{n=1}^m(\alpha_i j'-k'+\sum_{i=1}^n\alpha_i)\\
&\sum_z\frac{(-1)^{z}(d+\iota)_{z}(j'+k'+b-z+\iota)_{\iota+a}(j'-d-k'-b+z)_{\iota-a}}{z!}\\
&\sum_z\frac{(-1)^{z}(\iota-d)_z(j'-d+k'-z+\iota)_{\iota-b}(j'-k'+z)_{\iota+b}}{z!}
\end{aligned}
\end{equation}
where $(x)_n:=x(x-1)\cdots(x-n+1)$ is the falling factorial. 

There is subtle issue on the summation over $k$ and $k'$ in Eq. \eqref{eq:oneholonomy}. In order to obtain the factor $\sinh((2j'+1)\eta)/\sinh(\eta)$ in \eqref{eq:qmhgeneral}, one has to consider a summation of $e^{2k'\eta}$ over all values of $k'$ in $[-j',j']$.
However, except the constraint of $|k'|\leq j'$ for $k'$, there exists another condition of $|k'+b|=|k|\leq j$ implied by $k=k'+b$, which seemingly narrows the range of $k'$.
In order to preserve the range of $k'$, one can always ``artificially engineer"  $F_\iota(j,j',k')$ such that it vanishes for $|k'+b|>j$, which has actually been done by the triangle inequality of $3j$-symbols in the RHS Eq. \eqref{eq:3jF}. Thus the subtlety is now encoded in the condition that $F_\iota(j,j',k')$ vanishes for $|k'+b|>j$. We need to verify this for the algebraic expression of  $F_\iota(j,j',k')$ in the RHS of Eq. \eqref{eq:algebraicF}. To check it, assume $k'+b=j+\delta=j'-d+\delta$ with $0<\delta\leq d+b$. Then $(j'-k'+z)_{\iota+b}$, in the second summand over $z$, vanishes because
\begin{equation}
j'-k'+z\geq d+b-\delta\geq 0,\ j'-k'+z-(\iota+b)+1\leq \delta<0
\end{equation}
where $0\leq z\leq \iota-d$ is applied.
Therefore, we conclude that the RHS of Eq.\eqref{eq:algebraicF} vanishes for $j-b<k'\leq j'$.
Similarly, it can be checked that $(j'-d+k'-z+\iota)_{\iota-b}$ vanishes, indicating the vanishing of the RHS of Eq.\eqref{eq:algebraicF} for $-j'\leq k'+b<-j-b$.
We thus claim that the RHS of Eq.\eqref{eq:algebraicF} vanishes for $|k'+b|>j$, which indicates that the expression of the RHS of Eq.\eqref{eq:algebraicF} can be analytically extended to give $F_\iota(j,j',k')$ for the full range of $k'$, i.e. $|k'|\leq j$.

Replacing $j'$ and $k'$ in Eq. \eqref{eq:algebraicF} by $\frac{n-1}{2}$ and $\frac{\partial_\eta}{2}$ respectively, we have 
\begin{equation}\label{eq:algebraicF1}
\begin{aligned}
F_\iota(\frac{n-1}{2}-d,\frac{n-1}{2},\frac{\partial_\eta}{2})=&\delta(\sum_{i=1}^m\alpha_i-a+b,0)\prod_{i=1}^m\frac{1}{(1+|\alpha_i|)^{1/2}}\left(\frac{1}{(\iota+a)!(\iota-a)!(\iota+b)!(\iota-b)!}\right)^{1/2}\\
&\frac{(-1)^{d-2a+b}n(n-2d)}{(n-d+\iota)_{2\iota+1}}\prod_{k=1}^m(\alpha_i \frac{n-1}{2}-\frac{\partial_\eta}{2}+\sum_{i=1}^k\alpha_i)\\
&\sum_{z=0}^{\iota+d}\frac{(-1)^{z}(\iota+d)_{z}(\frac{n-1}{2}+\frac{\partial_\eta}{2}+b-z+\iota)_{\iota+a}(\frac{n-1}{2}-\frac{\partial_\eta}{2}-d-b+z)_{\iota-a}}{z!}\\
&\sum_{z=0}^{\iota-d}\frac{(-1)^{z}(\iota-d)_z(\frac{n-1}{2}+\frac{\partial_\eta}{2}-d-z+\iota)_{\iota-b}(\frac{n-1}{2}-\frac{\partial_\eta}{2}+z)_{\iota+b}}{z!}.
\end{aligned}
\end{equation}
In order to apply the Poisson summation formula to \eqref{eq:mergedqmh} to calculate the summation over $n$, we need to extend the summation to entire $\mathbb Z$, while Eq. \eqref{eq:mergedqmh} sums $n$ over $\mathbb Z\setminus [d-\iota,d+\iota]$.  However, because of the term $(n-d+\iota)_{2\iota+1}$ in the denominator of $F_\iota$, this extension is not trivial. We need to prove that  $F_\iota(\frac{n-1}{2}-d,\frac{n-1}{2},\frac{\partial_\eta}{2})\frac{\sinh(n\eta)}{\sinh(\eta)}$ is well-defined at the points where $F_\iota$ itself does not. 
Once it is proven, we have 
  \begin{equation}
  \begin{aligned}
  &\langle \hat F^{\alpha_1\cdots\alpha_m}_{\iota a b}\rangle_{z_e}=t^me^{bz_e}\sum_{\substack{0\leq d\leq \iota\\ d+\iota\in \mathbb Z}}\frac{2-\delta(d,0)}{2} e^{-\frac{t}{4} \left(2 d ^2-1\right)}\int\dd x\, e^{-\frac{1}{4} t \left(x^2-2 d x\right)}F_\iota(\frac{x-1}{2}-d,\frac{x-1}{2},\frac{\partial_\eta}{2})\frac{\sinh(x\eta)}{\sinh(\eta)}\\
&-\sum_{n\in [d-\iota,d+\iota]\cap \mathbb Z}e^{-\frac{1}{4} t \left(n^2-2 d n\right)}F_\iota(\frac{n-1}{2}-d,\frac{n-1}{2},\frac{\partial_\eta}{2})\frac{\sinh(n\eta)}{\sinh(\eta)}+O(t^\infty).
  \end{aligned}
\end{equation}
Moreover, the integrals in the last equation usually produce a factor $e^{\eta^2/t}$.  Thus the second term given by the summation over $n\in [d-\iota,d+\iota]\cap \mathbb Z$ decays exponentially as $t\to 0$ after normalization. Therefore, we finally obtain
 \begin{equation}\label{eq:expectedgeneral}
  \begin{aligned}
  &\langle \hat F^{\alpha_1\cdots\alpha_m}_{\iota a b}\rangle_{z_e}=t^me^{bz_e}\sum_{\substack{0\leq d\leq \iota\\ d+\iota\in \mathbb Z}}\frac{2-\delta(d,0)}{2} e^{-\frac{t}{4} \left(2 d ^2-1\right)}\int\dd x\, e^{-\frac{1}{4} t \left(x^2-2 d x\right)}F_\iota(\frac{x-1}{2}-d,\frac{x-1}{2},\frac{\partial_\eta}{2})\frac{\sinh(x\eta)}{\sinh(\eta)}+O(t^\infty).
  \end{aligned}
\end{equation}

In order to show that $F_\iota(\frac{n-1}{2}-d,\frac{n-1}{2},\frac{\partial_\eta}{2})\frac{\sinh(n\eta)}{\sinh(\eta)}$ is  well-defined at the poles of $F_\iota$, we have checked that  $F_\iota(\frac{n-1}{2}-d,\frac{n-1}{2},\frac{\partial_\eta}{2})$ for at least $\iota\leq 20$ can be simplified as summation of terms taking the form\footnote{Here, we conjecture that this statement is true for all $\iota$.} 
$$\frac{1}{x-n}f(x,\partial_\eta)\frac{\sinh( x\eta)}{\sinh(\eta)}$$
where $f(x,\partial_\eta)$, a polynomial of $x$ and $\partial_\eta$, satisfies that $f(n,\cdot)$ takes $\partial_\eta=2k$ for all $k\in\mathbb Z\cap[-\frac{n-1}{2},\frac{n-1}{2}]$ as its zeros.  Then, the following theorem will be helpful. 
\begin{thm}\label{thm:extendsummation}
Let $f(x,k)$  be a polynomials of $x$ and $k$ and $n\in \mathbb Z$ be some integer. Let 
\begin{equation}\label{eq:preconditionD1}
h(x):=\frac{1}{x-n}f(x,\partial_\eta)\frac{\sinh( x\eta)}{\sinh(\eta)}.
\end{equation}
Then $h(n):=\lim_{x\to n} h(x)$ is well-defined, i.e., $n$ is a removable singularity of $h(x)$, provided that for $x=n$ $f(x,\partial_\eta)$ satisfies 
\begin{equation}\label{eq:precondictionD1}
f(n,\partial_\eta)=g(n,\partial_\eta)\left(\prod_{k=-\frac{n-1}{2}}^{\frac{n-1}{2}}\left(\partial_\eta+2k\right)\right),
\end{equation}
with some polynomial $g(n,\partial_\eta)$  of $n$ and $\partial_\eta$. 
Moreover, if \eqref{eq:precondictionD1} holds, $h(x)$ will take the form
\begin{equation}\label{eq:finalresultsD1}
h(x)=g(x,\partial_\eta)\left(\tilde g_0(x,\eta)\frac{\sinh((x-n)\eta)}{x-n}+\sum_{l\geq 1} g_l(x,\eta)(x-n)^l\right).
\end{equation}
\end{thm}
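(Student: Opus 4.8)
\emph{Proof proposal.} The plan is to localize the only possible singularity of $h$ at $x=n$ in the numerator $N(x,\eta):=f(x,\partial_\eta)\frac{\sinh(x\eta)}{\sinh\eta}$, show that hypothesis \eqref{eq:precondictionD1} forces $N(n,\eta)\equiv 0$, and then deduce regularity from an elementary analyticity-in-$x$ argument; the explicit shape \eqref{eq:finalresultsD1} is obtained afterward by factoring the differential operator $f$ and using the addition theorem for $\sinh$. Throughout we take $n\ge 1$ (which is the case in all applications, $n=2j+1$; the case $n=0$ is trivial since $\sinh 0=0$). \emph{Step 1.} For $n\ge 1$ one has the finite expansion
\begin{equation*}
\frac{\sinh(n\eta)}{\sinh\eta}=\sum_{k=-\frac{n-1}{2}}^{\frac{n-1}{2}}e^{2k\eta},
\end{equation*}
proved by multiplying by $\sinh\eta=\tfrac12(e^\eta-e^{-\eta})$ and telescoping. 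Since $(\partial_\eta+2k_0)e^{2k\eta}=2(k+k_0)e^{2k\eta}$, the factor with $k_0=-k$ annihilates $e^{2k\eta}$; as $k_0=-k$ lies in the index range whenever $k$ does, the operator $P_n(\partial_\eta):=\prod_{k_0=-(n-1)/2}^{(n-1)/2}(\partial_\eta+2k_0)$ kills every term, so $P_n(\partial_\eta)\frac{\sinh(n\eta)}{\sinh\eta}=0$.

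\emph{Step 2 and 3.} For fixed $\eta$, $N(x,\eta)$ is entire in $x$: $\sinh(x\eta)$ is entire in $x$, finitely many $\partial_\eta$'s preserve this, and $f$ has polynomial $x$-coefficients. Because $\partial_\eta$ and evaluation at $x=n$ act on disjoint variables they commute, hence $N(n,\eta)=f(n,\partial_\eta)\frac{\sinh(n\eta)}{\sinh\eta}=g(n,\partial_\eta)P_n(\partial_\eta)\frac{\sinh(n\eta)}{\sinh\eta}=0$ by \eqref{eq:precondictionD1} and Step 1. Since $N(\cdot,\eta)$ is entire and vanishes at $x=n$, write $N(x,\eta)=(x-n)M(x,\eta)$ with $M(\cdot,\eta)$ entire; then $h(x)=N(x,\eta)/(x-n)=M(x,\eta)$ extends analytically across $x=n$, with $h(n)=\partial_x N(x,\eta)\big|_{x=n}$. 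This is the removable-singularity claim.

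\emph{Step 4.} By \eqref{eq:precondictionD1}, the operator-valued polynomial $f(x,\partial_\eta)-g(x,\partial_\eta)P_n(\partial_\eta)$ (a polynomial in $x$ with coefficients polynomial in $\partial_\eta$) vanishes at $x=n$, hence equals $(x-n)\,r(x,\partial_\eta)$ for some polynomial $r(x,k)$. As multiplication by $1/(x-n)$ commutes with the $\eta$-differential operator $g(x,\partial_\eta)$,
\begin{equation*}
h(x)=g(x,\partial_\eta)\!\left[\frac{P_n(\partial_\eta)\,S(x,\eta)}{x-n}\right]+r(x,\partial_\eta)\,S(x,\eta),\qquad S(x,\eta):=\frac{\sinh(x\eta)}{\sinh\eta}.
\end{equation*}
Now expand $S$ about $x=n$ using $\sinh(x\eta)=\sinh(n\eta)\cosh((x-n)\eta)+\cosh(n\eta)\sinh((x-n)\eta)$, i.e.
\begin{equation*}
S(x,\eta)=S(n,\eta)\cosh((x-n)\eta)+\frac{\cosh(n\eta)}{\sinh\eta}\,\frac{\sinh((x-n)\eta)}{x-n}\,(x-n),
\end{equation*}
and combine with $\cosh((x-n)\eta)=1+\sum_{l\ge1}\frac{\eta^{2l}}{(2l)!}(x-n)^{2l}$ and the identity $P_n(\partial_\eta)S(n,\eta)=0$: the $(x-n)^0$ piece of $P_n(\partial_\eta)S(x,\eta)$ drops out, so $\frac{P_n(\partial_\eta)S(x,\eta)}{x-n}$ is a polynomial combination of $\frac{\sinh((x-n)\eta)}{x-n}$, $\cosh((x-n)\eta)$ and non-negative powers of $(x-n)$ with $\eta$-smooth coefficients; writing $\cosh((x-n)\eta)=\partial_\eta\frac{\sinh((x-n)\eta)}{x-n}$ and pulling the resulting $\partial_\eta$'s into the outer operator brings this term to the form \eqref{eq:finalresultsD1}. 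Applying the same $\sinh$-addition expansion to the remainder $r(x,\partial_\eta)S(x,\eta)$ and collecting all contributions under a single outer $g(x,\partial_\eta)$ yields the claimed shape.

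\emph{Main obstacle.} Steps 1--3 are routine. The real difficulty is Step 4: $P_n(\partial_\eta)$ has its clean eigenfunction action only at \emph{integer} argument, whereas \eqref{eq:finalresultsD1} requires controlling $P_n(\partial_\eta)S(x,\eta)$ for general $x$, and the bookkeeping that repackages the $\cosh((x-n)\eta)$ and $(x-n)^l$ debris---together with the remainder term $r(x,\partial_\eta)S(x,\eta)$---into precisely the single outer operator $g(x,\partial_\eta)$ acting on $\tilde g_0(x,\eta)\frac{\sinh((x-n)\eta)}{x-n}+\sum_{l\ge1}g_l(x,\eta)(x-n)^l$ is delicate. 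The device that makes it go through is the $\sinh$-addition formula, which splits $S(x,\eta)$ into an integer-$n$ part annihilated by $P_n$ plus a piece manifestly proportional to $\frac{\sinh((x-n)\eta)}{x-n}$.
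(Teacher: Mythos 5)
Your proposal is correct and takes essentially the same route as the paper's proof: the decisive ingredients --- the expansion $\frac{\sinh(n\eta)}{\sinh\eta}=\sum_{k=-(n-1)/2}^{(n-1)/2}e^{2k\eta}$, which gives $\prod_{k}(\partial_\eta+2k)\frac{\sinh(n\eta)}{\sinh\eta}=0$, the addition formula $\sinh(x\eta)=\sinh(n\eta)\cosh((x-n)\eta)+\cosh(n\eta)\sinh((x-n)\eta)$, and the Leibniz bookkeeping showing that every $\partial_\eta$ hitting the $(x-n)$-dependent factor produces a power of $(x-n)$, so the surviving zeroth-order piece is proportional to $\sinh((x-n)\eta)$ --- are exactly those used in the paper. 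Your Steps 2--3 (entirety in $x$ and divisibility by $(x-n)$) and the explicit split $f(x,\partial_\eta)=g(x,\partial_\eta)\prod_{k}(\partial_\eta+2k)+(x-n)\,r(x,\partial_\eta)$ are minor tidy-ups of points the paper leaves implicit (its proof is written only for the operator $\prod_{k}(\partial_\eta+2k)$ itself), and they do not change the substance of the argument.
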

\begin{proof}
We will prove that 
\begin{equation}\label{eq:tobeproven}
\left(\prod_{k=-\frac{n-1}{2}}^{\frac{n-1}{2}}\left(\partial_\eta+2k\right)\right)\frac{\sinh( x\eta)}{\sinh(\eta)}=\sum_l g_l(x,\eta)(x-n)^l
\end{equation}
with $g_l$ some function of $x$ and $\eta$, and $g_0$ taking the form
\begin{equation}\label{eq:tobeproven1}
g_0(x,\eta)=\tilde g_0(x,\eta)\sinh((x-n)\eta).
\end{equation}

Let $y=x-n$. Then we have
\begin{equation}
\frac{\sinh(x\eta)}{\sinh(\eta)}=\frac{1}{\sinh(\eta)}\left(\cosh(y\eta)\sinh(n\eta)+\sinh(y\eta)\cosh(n\eta)\right).
\end{equation}
Substitute the above expression into the LHS of  Eq. \eqref{eq:tobeproven} and expand the result. One then expresses the LHS of  Eq. \eqref{eq:tobeproven} as a linear combination of terms taking the forms of $\partial_\eta^l(h_1(\eta)\sinh(y\eta))$ and $\partial_\eta^l(h_2(\eta)\cosh(y\eta))$, where $h_1$ and $h_2$ are some arbitrary functions. Expanding the action of the differential operator by Leibniz's rule, we obtain linear combinations of $q_1(\eta)\cosh(y\eta) y^m$ and $q_2(\eta)\sinh(y\eta) y^m$, where $q_1$ and $q_2$ are some arbitrary functions. Thus,  \eqref{eq:tobeproven} is obtained.

To get $g_0$, one just act $\partial_\eta$ on neither $\cosh(y\eta)$ nor $\sinh(y\eta)$. Thus  
\begin{equation}
g_0=\left(\prod_{k=-\frac{n-1}{2}}^{\frac{n-1}{2}}\left(\partial_\eta+2k\right)\frac{\sinh(n\eta)}{\sinh(\eta)}\right)\cosh(y \eta)+\left(\prod_{k=-\frac{n-1}{2}}^{\frac{n-1}{2}}\left(\partial_\eta+2k\right)\frac{\cosh(n\eta)}{\sinh(\eta)}\right)\sinh(y \eta)
\end{equation}
Since
$
\frac{\sinh(n\eta)}{\sinh(\eta)}=\sum_{k=-j}^{j}e^{-2k \eta}
$ with $2j+1=n$
and
$
(\partial_\eta+k)e^{-k\eta}=0
$, we have
\begin{equation}
\prod_{k=-\frac{n-1}{2}}^{\frac{n-1}{2}}\left(\partial_\eta+2k\right)\frac{\sinh(n\eta)}{\sinh(\eta)}=0.
\end{equation}
Therefore
\begin{equation}
g_0=\left(\prod_{k=-\frac{n-1}{2}}^{\frac{n-1}{2}}\left(\partial_\eta+2k\right)\frac{\cosh(n\eta)}{\sinh(\eta)}\right)\sinh(y \eta)
\end{equation}
\end{proof}

\section{$\langle\hat F^{\alpha_1\cdots\alpha_m}_{\iota a b}\rangle_{z_e}$ for the case of $\iota=1$}\label{app:gengjiashabiref}
By setting $\iota=1$ in \eqref{eq:mergedqmh}, we can obtain  
\begin{equation}\label{eq:stt}
\begin{aligned}
&\langle \hat F_{1 a b}^{\alpha_1\cdots\alpha_m}\rangle_{z_e}\\
=&-\delta\left(\sum_{i=1}^m\alpha_i-a+b\right)\frac{(-1)^a}{2} t^m e^{b z_e}\left(\prod_{i=1}^m\frac{1}{\sqrt{1+|\alpha_i|}}\right)\frac{1}{\sqrt{(1+|a|)(1+|b|)}} \sum_{\substack{n\in\mathbb Z\\ n\notin [-1,1]}}e^{-\frac{t}{4}(n^2-1)}\\
&\prod_{k=1}^m\left(\alpha_k \frac{n-1}{2}-\frac{\partial_\eta}{2}+\sum_{i=1}^k\alpha_i\right)\frac{n \left(a \frac{n-1}{2}+b+\frac{\partial_\eta}{2}\right) \left(b \frac{n-1}{2}-\frac{\partial_\eta}{2}\right) }{\frac{n^2-1}{4}}\frac{\sinh(n\eta)}{\sinh(\eta)}\\
+&\delta\left(\sum_{i=1}^m\alpha_i-a+b\right)(-1)^{b-a} t^me^{b z_e}\left(\prod_{i=1}^m\frac{1}{\sqrt{1+|\alpha_i|}}\right)\frac{1}{\sqrt{(1+|a|)(1+|b|)}}\sum_{\substack{n\in\mathbb Z\\n\notin [-2,0]}}e^{-\frac{t}{4}(n+1)^2}\\
&\prod_{k=1}^m\left(-\alpha_k\frac{n+1}{2}-\frac{\partial_\eta}{2}+\sum_{i=1}^k\alpha_i\right)\frac{ \left(\frac{n+1}{2}-(1-|b|-b)\frac{\partial_\eta}{2}\right)\left(\frac{n+1}{2}-(|b|-1-b)\frac{\partial_\eta}{2}+|b|\right) }{\frac{n+1}{2}}\frac{\sinh(n\eta)}{\sinh(\eta)}.
\end{aligned}
\end{equation}
where in the second term we replaced $n$ by $-n$ for the further convenience.  It is easy to check that Theorem \ref{thm:extendsummation} can be applied to extend the summation in the last equation to all $n\in\mathbb Z$. Therefore, we obtain that 
\begin{equation}\label{eq:combinediota1}
\begin{aligned}
&\langle \hat F_{1 a b}^{\alpha_1\cdots\alpha_m}\rangle_{z_e}\\
=&-\frac{(-1)^a}{2}\delta\left(\sum_{i=1}^m\alpha_i-a+b\right)t^me^{b z_e}\left(\prod_{i=1}^m\frac{1}{\sqrt{1+|\alpha_i|}}\frac{1}{\sqrt{1+|a|}\sqrt{1+|b|}}\right)\int_{-\infty}^\infty\dd x\, e^{-\frac{ t}{4}(x^2-1)}\\
&\prod_{k=1}^m\left(\alpha_k\frac{x-1}{2}-\frac{\partial_\eta}{2}+\sum_{i=1}^k\alpha_i\right)\left(a \frac{x-1}{2}+\frac{\partial_\eta}{2}+b\right)\frac{2x\left(b (x-1)-\partial_\eta\right) }{x^2-1}\frac{\sinh(x\eta)}{\sinh(\eta)}\\
+&(-1)^{b-a}\delta\left(\sum_{i=1}^m\alpha_i-a+b\right)t^me^{b z_e}\left(\prod_{i=1}^m\frac{1}{\sqrt{1+|\alpha_i|}}\frac{1}{\sqrt{1+|a|}\sqrt{1+|b|}}\right)\int_{-\infty}^\infty\dd x e^{-\frac{ t}{4}(x+1)^2}\\
&\prod_{k=1}^m\left(-\alpha_k\frac{n+1}{2}-\frac{\partial_\eta}{2}+\sum_{i=1}^k \alpha_i \right)\left(\frac{n+1}{2}-(|b|-1-b)\frac{\partial_\eta}{2}+|b|\right)\frac{ \left(x+1-(1-|b|-b)\partial_\eta\right)}{x+1}\frac{\sinh(n\eta)}{\sinh(\eta)}\\
&+O(t^\infty).
\end{aligned}
\end{equation}

To show the algorithms to compute Eq. \eqref{eq:combinediota1}, $m\neq 0$ will be assumed  without the loss of generality. For the case of $m=0$, the algorithm can be applied very similarly.

To begin with, we will rewrite Eq. \eqref{eq:combinediota1} as
\begin{equation}
\begin{aligned}
&\langle \hat F_{1 a b}^{\alpha_1\cdots\alpha_m}\rangle_{z_e}\\
=&-\frac{(-1)^a}{2}\delta\left(\sum_{i=1}^m\alpha_i-a+b\right)t^me^{b z_e}\left(\prod_{i=1}^m\frac{1}{\sqrt{1+|\alpha_i|}}\frac{1}{\sqrt{1+|a|}\sqrt{1+|b|}}\right)\int_{-\infty}^\infty\dd x\, e^{-\frac{ t}{4}(x^2-1)}\\
&\prod_{k=2}^m \left(\alpha_k\frac{x-1}{2}-\frac{\partial_\eta}{2}+\sum_{i=1}^k\alpha_i\right)\left(a \frac{x-1}{2}+\frac{\partial_\eta}{2}+b\right)x\left(b-\frac{\partial_\eta}{x-1}\right)  \left(\alpha_1-\frac{\partial_\eta}{x+1}\right) \frac{\sinh(n\eta)}{\sinh(\eta)}\\
+&(-1)^{b-a}\delta\left(\sum_{i=1}^m\alpha_i-a+b\right)t^me^{b z_e}\left(\prod_{i=1}^m\frac{1}{\sqrt{1+|\alpha_i|}}\frac{1}{\sqrt{1+|a|}\sqrt{1+|b|}}\right)\int_{-\infty}^\infty\dd x\, e^{-\frac{ t}{4}(x+1)^2}\\
&\prod_{k=1}^m\left(-\alpha_k\frac{x+1}{2}-\frac{\partial_\eta}{2}+\sum_{i=1}^k\alpha_k \right)\left(\frac{x+1}{2}-(|b|-1-b)\frac{\partial_\eta}{2}+|b|\right)\left(1-(1-|b|-b)\frac{\partial_\eta}{x+1}\right)\frac{\sinh(x\eta)}{\sinh(\eta)}.
\end{aligned}
\end{equation}
Then for the first integral, it is
\begin{equation}\label{eq:chooseAalpha1}
\begin{aligned}
I_1=&\int_{-\infty}^\infty\dd x\, e^{-\frac{t}{4}(x^2-1)}\prod_{k=2}^m\left(\alpha_k\frac{x-1}{2}-\frac{\partial_\eta}{2}+\sum_{i=1}^k\alpha_i\right)\left(a \frac{x-1}{2}+\frac{\partial_\eta}{2}+b\right)x\left(b-\frac{\partial_\eta}{x-1}\right)  \left(\alpha_1-\frac{\partial_\eta}{x+1}\right) \frac{\sinh(x\eta)}{\sinh(\eta)}.
\end{aligned}
\end{equation}
Because 
\begin{equation}\label{eq:key11}
\begin{aligned}
&\left(b-\frac{\partial_\eta}{x-1}\right)  \left(\alpha_1-\frac{\partial_\eta}{x+1}\right) \frac{\sinh(x\eta)}{\sinh(\eta)}\\
=&\Bigg(b\alpha_1\frac{\sinh(x\eta)}{\sinh(\eta)}-\alpha_1\left(\frac{\cosh (\eta  x)}{\sinh (\eta )}-\frac{\sinh (\eta  (x-1))}{\sinh ^2(\eta )(x-1)}\right)-b\left(\frac{\cosh (\eta  x)}{\sinh (\eta )}-\frac{\sinh (\eta  (x+1))}{\sinh ^2(\eta )(x+1)}\right)+\\
&\left( \frac{\sinh (\eta  x)}{\sinh(\eta)}+\frac{\cosh (\eta )}{\sinh(\eta)^3} \frac{ \sinh (\eta  (x-1))}{x-1}-\frac{\cosh(\eta)}{\sinh(\eta)^3}\frac{ \sinh ( (x+1)\eta)}{x+1}\right)\Bigg),
\end{aligned}
\end{equation}
$I_1$ can be deduced further as the following
\begin{equation}
\begin{aligned}
I_1=&\int_{-\infty}^\infty\dd x\, e^{-\frac{t}{4}(x^2-1)}\prod_{k=2}^m \left(\alpha_k\frac{x-1}{2}-\frac{\partial_\eta}{2}+\sum_{i=1}^k\alpha_i\right)\left(a \frac{x-1}{2}+\frac{\partial_\eta}{2}+b\right)x\frac{(b\alpha_1+1)\sinh(x\eta)-(\alpha_1+b)\cosh(\eta x)}{\sinh(\eta)}\\
+&\int_{-\infty}^\infty\dd y\, e^{-\frac{ t}{4}(y^2+2y)}\prod_{k=2}^m \left(\alpha_k\frac{y}{2}-\frac{\partial_\eta}{2}+\sum_{i=1}^k\alpha_i\right)\left(a \frac{y}{2}+\frac{\partial_\eta}{2}+b\right)(y+1)\frac{\alpha_1\sinh(\eta)+\cosh (\eta )}{\sinh(\eta)^3} \frac{ \sinh (\eta  y)}{y}\\
-&\int_{-\infty}^\infty\dd y\, e^{-\frac{t}{4}(y^2-2y)}\prod_{k=2}^m \left(\alpha_k\frac{y}{2}-\frac{\partial_\eta}{2}+\sum_{i=1}^{k-1}\alpha_i\right)\left(a \frac{y}{2}+\frac{\partial_\eta}{2}+b-a\right)(y-1)\frac{\cosh(\eta)-b\sinh(\eta)}{\sinh(\eta)^3}\frac{ \sinh ( y\eta)}{y}.
\end{aligned}
\end{equation}
Then by using the trick
\begin{equation}\label{eq:trick11}
\partial_\eta^n e^{\pm x\eta} f(\eta)=\left.(\pm x+\partial_z)^ne^{\pm x\eta} f(z)\right |_{z\to \eta},
\end{equation}
we have
\begin{equation}
\begin{aligned}
I_1=&\sum_{s=\pm 1}\int_{-\infty}^\infty\dd x\, e^{-\frac{ t}{4}(x^2-1)}\prod_{k=2}^m\left(\frac{\alpha_k-s}{2}x-\frac{\alpha_k+\partial_z}{2}+\sum_{i=1}^k\alpha_i\right)\left(\frac{a+s}{2} x +\frac{-a+\partial_z}{2}+b\right)\\
&x\frac{(-(\alpha_1+b)+s(b\alpha_1+1))e^{sx}}{2\sinh(z)}\Big|_{z\to\eta}\\
+&\sum_{s=\pm 1}\int_{-\infty}^\infty\dd y\, e^{-\frac{ t}{4}(y^2+2y)}\prod_{k=2}^m\left(\frac{\alpha_k-s}{2}y-\frac{\partial_z}{2}+\sum_{i=1}^k\alpha_i\right)\left(\frac{a+s}{2}y+\frac{\partial_z}{2}+b\right)\\
&(y+1)\frac{ s e^{s \eta y}}{2y}\frac{\alpha_1\sinh(z)+\cosh (z )}{\sinh(z)^3}\Big|_{z\to \eta}\\
-&\sum_{s=\pm 1}\int_{-\infty}^\infty\dd y\, e^{-\frac{ t}{4}(y^2-2y)}\prod_{k=2}^m \left(\frac{\alpha_k-s}{2} y-\frac{\partial_z}{2}+\sum_{i=1}^{k-1}\alpha_i\right)\left(\frac{a+s}{2}y+\frac{\partial_z}{2}+b-a\right)\\
&(y-1)\frac{ s e^{s\eta y}}{y}\frac{\cosh(z)-b\sinh(z)}{\sinh(z)^3}\Big|_{z\to \eta}
\end{aligned}
\end{equation}
By defining
\begin{equation}
\begin{aligned}
C_1=&\prod_{k=2}^m\left(-\frac{\partial_\eta}{2}+\sum_{i=1}^k\alpha_i\right)\left(\frac{\partial_z}{2}+b\right)\frac{\alpha_1\sinh(\eta)+\cosh (\eta )}{\sinh(\eta)^3}\\
C_2=&-\prod_{k=2}^m \left(-\frac{\partial_\eta}{2}+\sum_{i=1}^{k-1}\alpha_k\right)\left(\frac{\partial_\eta}{2}+b-a\right)\frac{\cosh(\eta)-b\sinh(\eta)}{\sinh(\eta)^3}
\end{aligned}
\end{equation}
$I_1$ can be finally simplified as
\begin{equation}
\begin{aligned}
I_1=&\sum_{s=\pm 1}\int_{-\infty}^\infty\dd x\, e^{-\frac{ t}{4}(x^2-1)}\prod_{k=2}^m \left(\frac{\alpha_k-s}{2}x-\frac{\alpha_k+\partial_z}{2}+\sum_{i=1}^k\alpha_i\right)\left(\frac{a+s}{2} x +\frac{-a+\partial_z}{2}+b\right)\\
&x\frac{(-(\alpha_1+b)+s(b\alpha_1+1))e^{sx}}{2\sinh(z)}\Big|_{z\to\eta}\\
+&\sum_{s=\pm 1}\int_{-\infty}^\infty\dd y\, e^{-\frac{ t}{4}(y^2+2y)}\Bigg(\prod_{k=2}^m\left(\frac{\alpha_k-s}{2}y-\frac{\partial_z}{2}+\sum_{i=1}^k\alpha_i\right)\left(\frac{a+s}{2}y+\frac{\partial_z}{2}+b\right)\\
&(y+1)\frac{\alpha_1\sinh(z)+\cosh (z )}{\sinh(z)^3}\Big|_{z\to \eta}-C_1\Bigg)\frac{ s e^{s \eta y}}{2y}\\
-&\sum_{s=\pm 1}\int_{-\infty}^\infty\dd y\, e^{-\frac{ t}{4}(y^2-2y)}\Bigg(\prod_{k=2}^m \left(\frac{\alpha_k-s}{2} y-\frac{\partial_z}{2}+\sum_{i=1}^{k-1}\alpha_i\right)\left(\frac{a+s}{2}y+\frac{\partial_z}{2}+b-a\right)\\
&(y-1)\frac{\cosh(z)-b\sinh(z)}{\sinh(z)^3}\Big|_{z\to \eta}-C_2\Bigg)\frac{ s e^{s\eta y}}{y}\\
&+C_1\int_{-\infty}^\infty e^{-\frac{ t}{4}(y^2+2y)} \frac{\sinh(\eta y)}{y}\dd y-C_2\int_{-\infty}^\infty e^{-\frac{t}{4}(y^2-2y)} \frac{\sinh(\eta y)}{y}\dd y.
\end{aligned}
\end{equation}
In this expression, the integrands of first three terms involve only polynomials of the integral variables, which can be computed by using the same procedure as that to compute $\langle \hat F^{\alpha_1\cdots\alpha_m}_{\frac12 ab}\rangle$. For the last two terms, by using the formula
\begin{equation}
\begin{aligned}
\int_{-\infty}^\infty\dd y e^{-\frac{t}{4}(y^2+b y)}\frac{\sinh(y\eta)}{y}=\frac{1}{2} \pi  \left(\text{erfi}\left(\frac{1+\frac{ t b}{4\eta}}{2 \sqrt{\frac{ t }{4\eta^2}}}\right)+\text{erfi}\left(\frac{1-\frac{t b}{4\eta}}{2 \sqrt{\frac{t }{4\eta^2}}}\right)\right),
\end{aligned}
\end{equation}
and
\begin{equation}
\begin{aligned}
\text{erfi}\left(\frac{1}{x}\right)=e^{\frac{1}{x^2}}\frac{x}{\pi}\sum_{n=0}^\infty \Gamma(\frac{1}{2}+n)x^{2n},
\end{aligned}
\end{equation}
a straightforward calculation can be performed as the following
\begin{equation}\label{eq:extra1}
\begin{aligned}
&\int_{-\infty}^\infty\dd y\, e^{-\frac{ t}{4}(y^2\pm 2y)}\frac{ \sinh ( y\eta)}{y}=\langle 1\rangle\frac{1}{ \sqrt{ \pi }} \frac{\sinh(\eta)}{\eta^2 }\sum_{m=0}^\infty\sum_{n=0}^m\Gamma(\frac{1}{2}+n) \frac{2^n }{\eta^{n+m} }\binom{n+m}{m-n} \Bigg\{e^{ - \eta} +e^{ \eta}\left(-1\right)^{m-n}\Bigg\} \left(\frac{t}{2}\right)^{m+2},
\end{aligned}
\end{equation}
{where Eq. \eqref{eq:norm} is substitute because we are concerned about the expectation value with respect to the normalized coherent state, i.e. $\langle \hat F^{\alpha_1\cdots\alpha_m}_{\iota a b}/\langle 1\rangle$.}
This complete our computation for $I_1$. 

Secondly, $I_2$ can be simplified as the following
\begin{equation}
\begin{aligned}
I_2=&\int_{-\infty}^\infty\dd x e^{-\frac{ t}{4}(x+1)^2}\prod_{k=1}^m \left(-\alpha_k\frac{x+1}{2}-\frac{\partial_\eta}{2}+\sum_{i=1}^k\alpha_i \right)\left(\frac{x+1}{2}-(|b|-1-b)\frac{\partial_\eta}{2}+|b|\right)\frac{\sinh(x\eta)}{\sinh(\eta)}\\
-&\int_{-\infty}^\infty\dd x e^{-\frac{ t}{4}(x+1)^2}\prod_{k=1}^m\left(-\alpha_k\frac{x+1}{2}-\frac{\partial_\eta}{2}+\sum_{i=1}^k\alpha_i \right)\left(\frac{x+1}{2}+(|b|-1-b)\frac{\partial_\eta}{2}+|b|\right)\frac{(1-|b|-b)\partial_\eta}{x+1}\frac{\sinh(x\eta)}{\sinh(\eta)}.
\end{aligned}
\end{equation}
Because 
\begin{equation}\label{eq:key12}
\frac{\partial_\eta}{x+1}\frac{\sinh(x\eta)}{\sinh(\eta)}=
\frac{1}{\sinh(\eta)^2}
\left(\sinh(\eta)\cosh(x\eta)-\frac{\sinh((x+1)\eta)}{x+1}\right)
\end{equation}
we have
\begin{equation}
\begin{aligned}
I_2
=&\int_{-\infty}^\infty\dd x e^{-\frac{t}{4}(x+1)^2}\prod_{k=1}^m \left[-\alpha_k\frac{x+1}{2}-\frac{\partial_\eta}{2}+\sum_{i=1}^k\alpha_i \right]\left[\frac{x+1}{2}-(|b|-1-b)\frac{\partial_\eta}{2}+|b|\right]\times\\
&\frac{\sinh(x\eta)-(1-|b|-b)\cosh(x \eta)}{\sinh(\eta)}\\
+&\int_{-\infty}^\infty\dd y e^{-\frac{t}{4}y^2}\prod_{k=1}^m\left[-\alpha_k\frac{y}{2}-\frac{\partial_\eta}{2}+\sum_{i=1}^k\alpha_i \right]\left[\frac{y}{2}-(|b|-1-b)\frac{\partial_\eta}{2}+|b|\right]\frac{(1-|b|-b)}{\sinh(\eta)^2}\frac{\sinh(y\eta)}{y}.
\end{aligned}
\end{equation}
By defining 
\begin{equation}
C=\left(\alpha_1-\frac{\partial_\eta}{2}\right)\left(\alpha_1+\alpha_2-\frac{\partial_\eta}{2}\right)\cdots \left((\alpha_1+\cdots+\alpha_m-\frac{\partial_\eta}{2}\right)\left(|b|-(|b|-1-b)\frac{\partial_\eta}{2}\right)\frac{1-|b|-b}{\sinh(\eta)^2},
\end{equation}
we obtain 
\begin{equation}
\begin{aligned}
I_2
=&\int_{-\infty}^\infty\dd x e^{-\frac{t}{4}(x+1)^2}\prod_{k=1}^m\left[-\alpha_k\frac{x+1}{2}-\frac{\partial_\eta}{2}+\sum_{i=1}^k\alpha_i \right]\times\\
&\left[\frac{x+1}{2}-(|b|-1-b)\frac{\partial_\eta}{2}+|b|\right]\frac{\sinh(x\eta)-(1-|b|-b)\cosh(x \eta)}{\sinh(\eta)}\\
+&\int_{-\infty}^\infty\dd y e^{-\frac{t}{4}y^2}\Bigg\{\prod_{k=1}^m\left[-\alpha_k\frac{y}{2}-\frac{\partial_\eta}{2}+\sum_{i=1}^k\alpha_k \right]\left[\frac{y}{2}-(|b|-1-b)\frac{\partial_\eta}{2}+|b|\right]-C\Bigg\}\frac{(1-|b|-b)}{\sinh(\eta)^2}\frac{\sinh(y\eta)}{y}\\
+&C\int_{-\infty}^\infty\dd y e^{-\frac{t}{4}y^2}\frac{\sinh(y\eta)}{y}.
\end{aligned}
\end{equation}
Similar as $I_1$, the first two integrals are computable because the integrands therein are just polynomials multiplied by the Gaussian functions.  For the last term, we have
\begin{equation}\label{eq:extra2}
\begin{aligned}
&C\int_{-\infty}^\infty\dd y e^{-\frac{ t}{4}y^2}\frac{\sinh(y\eta)}{y}=C\sgn(\eta)\pi \text{erfi}\left(\frac{1}{2 \sqrt{\frac{ t}{4\eta^2}}}\right)\\
=&\langle 1\rangle  C \frac{ t^2}{ 2\sqrt{ \pi } e^{t/4}}\frac{\sinh(\eta)}{\eta^2 } \sum_{n=0}^\infty \Gamma(\frac{1}{2}+n)\left(\frac{ t}{\eta^2}\right)^n.
\end{aligned}
\end{equation}
{where Eq. \eqref{eq:norm} is substitute because we are concerned about the expectation value with respect to the normalized coherent state, i.e. $\langle \hat F^{\alpha_1\cdots\alpha_m}_{\iota a b}\rangle/\langle 1\rangle$.}
Therefore, $I_2$ is computable. 

In summary, we obtain 
\begin{equation}\label{eq:f1}
\begin{aligned}
&\langle \hat F_{1 a b}^{\alpha_1\cdots\alpha_m}\rangle_{z_e}=-\frac{(-1)^a}{2}\delta\left(\sum_{i=1}^m\alpha_i-a+b\right)t^me^{b z_e}\left(\prod_{i=1}^m\frac{1}{\sqrt{1+|\alpha_i|}}\frac{1}{\sqrt{1+|a|}\sqrt{1+|b|}}\right)I_1\\
+&(-1)^{b-a}\delta\left(\sum_{i=1}^m\alpha_i-a+b\right)t^me^{b z_e}\left(\prod_{i=1}^m\frac{1}{\sqrt{1+|\alpha_i|}}\frac{1}{\sqrt{1+|a|}\sqrt{1+|b|}}\right)I_2+O(t^\infty)
\end{aligned}
\end{equation}
where $I_1$ and $I_2$ can be computed by using the algorithm introduced above. Taking the operator $\hat F_{1 a b}=D^1_{ab}(h_e)$ as an example, we can have 
\begin{equation}\label{eq:alphais0}
\langle \hat F_{1ab}\rangle_{z_e}/\langle 1\rangle=\left\{
\begin{array}{cc}
e^{i \xi }-\left(\frac{  e^{i \xi }  \tanh \left(\frac{\eta }{2}\right)}{2\eta }+\frac{1}{4}   e^{i \xi } \right)t+O(t^2),&\ a=b=1,\\
1-\frac{ \tanh \left(\frac{\eta }{2}\right)}{\eta }t+O(t^2),&\ a=b=0,\\
e^{-i \xi }-\left(\frac{e^{-i \xi } \tanh \left(\frac{\eta }{2}\right)}{2\eta }+\frac{1}{4}  e^{-i \xi }\right) t+O(t^2),&\ a=b=-1,
\end{array}
\right.
\end{equation}
which is compatible with the corresponding result given in \cite{Dapor:2017gdk}.

\section{Mathematical supports for Sec. \ref{se:leading}}\label{sec:matrixelementhp}
To begin with, we need to study the matrix element of the holonomy and flux. According to the results in \cite{thiemann2001gaugeIII}, we have
\begin{equation}\label{eq:fluxmatrixelements}
\begin{aligned}
\langle\psi_{g_e}|\hat p_s^\alpha|\psi_{g_e'}\rangle=i\langle\psi_{g_e}|\psi_{{g_e'}}\rangle  \frac{\tr(\tau^\alpha  g_e'g_e^\dagger )}{\sinh(\zeta_e^{(0)})}\left(- \zeta_e^{(0)} +(   \coth (\zeta_e^{(0)} )-\frac{1}{\zeta_e^{(0)}})\frac{t}{2}\right)+O(t^\infty).
\end{aligned}
\end{equation}
where $\zeta_e^{(0)}$ is given by $2\cosh(\zeta_e^{(0)})=\tr(g_e' g_e^\dagger)$. For $\hat p_t^\alpha(e)$, one has
\begin{equation}\label{eq:fluxmatrixelementt}
\begin{aligned}
\langle\psi_{g_e}|\hat p_t^\alpha|\psi_{g_e'}\rangle=i\langle\psi_{g_e}|\psi_{{g_e'}}\rangle  \frac{\tr(  g_e^\dagger g_e'\tau^\alpha)}{\sinh(\zeta_e^{(0)})}\left( \zeta_e^{(0)} -(   \coth (\zeta_e^{(0)} )-\frac{1}{\zeta_e^{(0)}})\frac{t}{2}\right)+O(t^\infty).
\end{aligned}
\end{equation}
For the holonomy operator $D^\iota_{ab}(h_e)$, 
we have \cite{thiemann2001gaugeIII}
\begin{equation}\label{eq:holonomymatrixelement}
\begin{aligned}
&\langle\psi_{g_e}|D^{\frac12}_{ab}(h_e)|\psi_{g_e'}\rangle\\
=&\langle\psi_{g_e}|\psi_{g_e'}\rangle\left( \frac{2\tr(\tau^kg_e'g_e^\dagger)
D^{\frac12}_{ab}(\tau^k g_e')}{\left(e^{\zeta_e^{(0)}}+1\right)^2\zeta_e^{(0)}}e^{\frac{\zeta_e^{(0)}}{2}} + \frac{D^{\frac{1}{2}}_{ab}(g_e')}{2\zeta_e^{(0)}}e^{-\frac{\zeta_e^{(0)}}{2}}
\right)e^{\frac{- t}{16}}\left( \left(e^{\zeta_e^{(0)}}+1\right) \zeta_e^{(0)}+(-e^{\zeta_e^{(0)}}+1)\frac{t}{4}\right)+O(t^\infty).
\end{aligned}
\end{equation}

According  to Eqs. \eqref{eq:fluxmatrixelements}, \eqref{eq:fluxmatrixelementt} and \eqref{eq:holonomymatrixelement}, the matrix elements of the fluxes and holomomies are of a form described below 
\begin{equation}
\langle\psi_{g_e}|\hat O_i|\psi_{g_e'}\rangle=\langle\psi_{g_e}|\psi_{g_e'}\rangle\left(E_0(g_e,g_e')+tE_1(g_e,g_e')+O(t^\infty)\right).
\end{equation}
Recalling  Eq. \eqref{eq:insertidentity}, we are going to consider integrals containing $\frac{\langle\psi_{g_e}|\psi_{g_e'}\rangle}{\|\psi_{g_e}\|\|\psi_{g_e'}\|}$. These integrals can be analyzed with the generalized stationary phase approximation guided by H\"ormander’s theorem 7.7.5 in \cite{hormander2015analysis}.
\begin{thm}\label{hormanderThm}
Let $K$ be a compact subset in $\mathbb{R}^n$, X be an open neighborhood of K, and k be a positive integer.  If (1) the complex functions $u\in C^{2k}_{0}(K)$, $f\in C^{3k+1}(X)$ and  $\Im(f)\geq 0$ in X, with $\Im(f)$ being the imaginary part of $f$; (2) there is a unique point $x_0\in K$ satisfying $\Im(f(x_0))=0$, $f'(x_0)=0$, and $\det (f''(x_0))\neq 0$ ($f''$ denotes the Hessian matrix), $f'\neq 0$ in $K \backslash\left\{x_{0}\right\}$ then we have the following estimation:\\
\begin{equation}\label{eq:theorem1}
\left|\int_{K} u(x) e^{i\lambda f(x)} dx-e^{i\lambda f(x_0)}\left[\mathrm{det}\left(\dfrac{\lambda f''(x_0)}{2\pi i}\right)\right]^{-\frac{1}{2}}\sum_{s=0}^{k-1}\left(\dfrac{1}{\lambda}\right)^s L_s u(x_0)\right|\leq C\left(\dfrac{1}{\lambda}\right)^{k}
\sum_{|\alpha|\leq 2k}\mathrm{sup}\left|D^{\alpha}u\right|.
\end{equation}
Here the constant C is bounded when f stays in a bounded set in $C^{3k+1}(X)$. We have used the standard multi-index notation $\alpha=\langle \alpha_1,...,\alpha_n\rangle$ and
\begin{equation}
D^{\alpha}=(-i)^\alpha\frac{\partial^{|\alpha|}}{\partial x_1^{\alpha_1}...\partial x_n^{\alpha_n}}, \quad \text{where}\quad |\alpha|=\sum_{i=1}^{n}\alpha_i
\end{equation}
$L_s u(x_0)$ denotes the following operation on u:
\begin{equation}\label{eq:theorem2}
L_s u(x_0)= i^{-s} \sum_{l-m=s}\sum_{2l\geq 3m}\frac{(-1)^l 2^{-l}}{l!m!}\left[\sum_{a,b=1}^{n}H_{ab}^{-1}(x_0)\dfrac{\partial^2}{\partial x_a\partial x_b}\right]^l \left(g_{x_0}^m u\right)\left(x_0\right),
\end{equation}
where $H(x)=f''(x)$ denotes the Hessian matrix and the function $g_{x_0}(x)$ is given by
\begin{equation}\label{eq:factorg}
g_{x_0}(x)=f(x)-f(x_0)-\frac{1}{2}H^{ab}\left(x_0\right)\left(x-x_0\right)_a\left(x-x_0\right)_b
\end{equation}
such that $g_{x_0}\left(x_0\right)=g'_{x_0}\left(x_0\right)=g''_{x_0}\left(x_0\right)=0$. For each s, $L_s$ is a differential operator of order 2s acting on $u\left(x\right)$.
\end{thm}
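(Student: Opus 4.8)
This is H\"ormander's Theorem~7.7.5 in \cite{hormander2015analysis}; since the body of the paper invokes it only as a black box, the plan below merely records how one would establish it for completeness. The strategy is the classical stationary-phase reduction: localize near the critical point, replace the phase by its quadratic part modulo a cubically vanishing remainder, evaluate the resulting Gaussian integral exactly, and expand.

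First I would \emph{localize}. Pick $\chi\in C_0^\infty(X)$ with $\chi\equiv 1$ near $x_0$ and $\mathrm{supp}\,\chi$ contained in a ball on which $x_0$ is the only critical point of $f$. On $\mathrm{supp}\big((1-\chi)u\big)$ one has $f'\neq 0$ and $|e^{i\lambda f}|=e^{-\lambda\Im f}\leq 1$, so integrating by parts $k$ times against the vector field $\overline{f'}\!\cdot\!\nabla/(i\lambda|f'|^2)$ shows $\int(1-\chi)u\,e^{i\lambda f}\,dx=O(\lambda^{-k})$, with the constant controlled by $\sum_{|\alpha|\leq k}\sup|D^\alpha u|$ and by $f\in C^{k+1}$ on the relevant set. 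Thus it suffices to analyze $\chi u$.

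Second, on the localized piece I would write, as in \eqref{eq:factorg}, $f(x)=f(x_0)+\tfrac12 H_{ab}(x_0)(x-x_0)_a(x-x_0)_b+g_{x_0}(x)$ with $g_{x_0}$ vanishing to third order at $x_0$, and use the exact Gaussian formula
\[
\int w(x)\,e^{\frac{i\lambda}{2}H_{ab}(x_0)(x-x_0)_a(x-x_0)_b}\,dx=\Big[\det\tfrac{\lambda H(x_0)}{2\pi i}\Big]^{-1/2}\sum_{l\ge 0}\frac{1}{l!}\Big(\tfrac{1}{2i\lambda}H^{-1}_{ab}(x_0)\partial_a\partial_b\Big)^{l}w\,\Big|_{x=x_0},
\]
valid because $\Im H(x_0)\geq 0$ and $\det H(x_0)\neq 0$, with a finitely-many-derivatives remainder. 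Then I would Taylor-expand $e^{i\lambda g_{x_0}}=\sum_m (i\lambda g_{x_0})^m/m!$, insert $w=\chi u\,(i g_{x_0})^m/m!$ into the Gaussian formula, and collect powers of $\lambda$. Since $g_{x_0}$ vanishes to order three, the differential operator $(H^{-1}\partial\partial)^{l}$ applied to $u\,g_{x_0}^m$ vanishes at $x_0$ unless $2l\ge 3m$, and the term carries net order $\lambda^{m-l}$; writing $s=l-m$ reproduces precisely the operator $L_s$ of \eqref{eq:theorem2} and the index conditions $l-m=s$, $2l\ge 3m$ there, as well as the prefactor $[\det(\lambda f''(x_0)/2\pi i)]^{-1/2}$ in \eqref{eq:theorem1}.

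The last and hardest step is the \emph{error analysis}: one must show that truncating both the $l$- and the $m$-sums at the orders dictated by $k$ leaves a remainder bounded by $C\lambda^{-k}\sum_{|\alpha|\le 2k}\sup|D^\alpha u|$ with $C$ locally uniform for $f$ in a bounded set of $C^{3k+1}(X)$. I expect this bookkeeping---tracking how many derivatives of $u$ are genuinely consumed, how the third-order vanishing of $g_{x_0}$ trades powers of $\lambda$ for decay, and how the Gaussian remainder propagates through the finite sums---to be the main obstacle; the rest is routine. In our application in Section~\ref{se:leading} the amplitude and phase are built from the real-analytic coherent-state overlap \eqref{eq:innerproduct} and possess a single nondegenerate stationary point, so all hypotheses of Theorem~\ref{hormanderThm} hold, and it is this controlled expansion that yields the $E_0,E_1,\dots$ structure of the matrix elements used to prove Theorem~\ref{thm:leadingordergeneral}.
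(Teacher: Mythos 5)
You should first note that the paper itself contains no proof of this statement: Theorem \ref{hormanderThm} is quoted verbatim as Theorem 7.7.5 of H\"ormander \cite{hormander2015analysis} and is used purely as a black box in Section \ref{se:leading} and Appendix \ref{sec:matrixelementhp}. The only meaningful comparison is therefore with H\"ormander's own proof, and your outline does follow that route: localization away from $x_0$ by a non-stationary-phase argument, the splitting $f(x)=f(x_0)+\frac12 H_{ab}(x_0)(x-x_0)_a(x-x_0)_b+g_{x_0}(x)$ with $g_{x_0}$ vanishing to third order as in \eqref{eq:factorg}, the exact quadratic-phase (Gaussian) expansion, which requires $\Im H(x_0)\geq 0$ and $\det H(x_0)\neq 0$, the Taylor expansion of $e^{i\lambda g_{x_0}}$, and the bookkeeping $s=l-m$, $2l\geq 3m$ that produces the operators $L_s$ of \eqref{eq:theorem2}. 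That is the correct skeleton of the textbook argument.

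As a proof of the statement, however, there is a genuine gap, and you flag it yourself: everything that makes \eqref{eq:theorem1} a theorem rather than a formal series --- the bound of the remainder by $C\lambda^{-k}\sum_{|\alpha|\leq 2k}\sup|D^{\alpha}u|$, the fact that only $2k$ derivatives of $u$ and $3k+1$ derivatives of $f$ are consumed, and the uniformity of $C$ as $f$ ranges over a bounded set in $C^{3k+1}(X)$ --- is deferred as ``bookkeeping''. This is exactly where the argument is delicate. In particular, your localization step (repeated integration by parts against $\overline{f'}\cdot\nabla/(i\lambda|f'|^2)$ together with $|e^{i\lambda f}|\leq 1$) does give $O(\lambda^{-k})$ for a \emph{fixed} $f$ with $f'\neq 0$ on the support, but the constant so obtained depends on a lower bound for $|f'|$ and does not by itself yield the stated uniformity in $f$; for that one needs the weighted non-stationary-phase estimate (H\"ormander's Theorem 7.7.1, with the weight $|f'|^2+\Im f$), and similarly the truncation error of the $e^{i\lambda g_{x_0}}$ expansion must be estimated, not just organized. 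So your proposal reproduces the shape of the asymptotic expansion, including the prefactor $[\det(\lambda f''(x_0)/2\pi i)]^{-1/2}$ and the structure of $L_s$, but not the quantitative estimate that the statement actually asserts; absent that analysis, citing the theorem as the paper does (or transcribing H\"ormander's error estimates in full) is the honest course.
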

\subsection{Analysis of integrals containing $\frac{\langle\psi_{g_e}|\psi_{g_e'}\rangle}{\|\psi_{g_e}\|\|\psi_{g_e'}\|}$}\label{sec:hessian}
Define
\begin{equation}
G(\vec p^{(1)},\vec\theta^{(1)},\vec p^{(2)},\vec\theta^{(2)}):=\frac{\langle\psi_{g^{(1)}}|\psi_{g^{(2)}}\rangle}{\|\psi_{g^{(1)}}\|\|\psi_{g^{(2)}}\|}
\end{equation}
with $g^{(k)}$ parameterized by
\begin{equation}
g^{(k)}=e^{i\vec p^{(k)}\cdot \vec\tau}e^{\vec \theta^{(k)}\cdot \vec\tau},\ k=1,2.
\end{equation}
According to Eq. \eqref{eq:innerproduct} and Eq. \eqref{eq:norm}, $G(\vec p^{(1)},\vec\theta^{(1)},\vec p^{(2)},\vec\theta^{(2)})$ reads
\begin{equation}
G(\vec p^{(1)},\vec\theta^{(1)},\vec p^{(2)},\vec\theta^{(2)})=\frac{\zeta \sqrt{\sinh(p^{(1)})\sinh(p^{(2)})}}{\sqrt{p^{(1)}p^{(2)}}\sinh(\zeta)}e^{-\frac{-2\zeta^2+(p^{(1)})^2+(p^{(2)})^2}{2t}}
\end{equation}
where $p^{(i)}=\sqrt{\vec p^{(i)}\cdot\vec p^{(i)}}$ and $\zeta$ is given by 
\begin{equation}\label{eq:todefinezeta}
2\cosh(\zeta)=\tr(g_1^\dagger g_2).
\end{equation}
Denote $$S(\vec p^{(1)},\vec\theta^{(1)},\vec p^{(2)},\vec\theta^{(2)}):=-2\zeta^2+(p^{(1)})^2+(p^{(2)})^2.$$
We first claim that
\begin{lmm}\label{lmm:saddlecondition1}
$\Re(S(\vec p^{(1)},\vec\theta^{(1)},\vec p^{(2)},\vec\theta^{(2)}))$, the real part of $S(\vec p^{(1)},\vec\theta^{(1)},\vec p^{(2)},\vec\theta^{(2)})$, is non-negative and vanishes iff $\vec p^{(2)}=\vec p^{(1)}$ and $\vec \theta^{(2)}=\vec \theta^{(1)}$. 
\end{lmm}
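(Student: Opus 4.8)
The plan is to pass to the equivalent statement about $\zeta$: since $S = -2\zeta^2 + (p^{(1)})^2 + (p^{(2)})^2$ and the $(p^{(i)})^2$ are real, $\Re(S) = -2\,\Re(\zeta^2) + (p^{(1)})^2 + (p^{(2)})^2$, so it suffices to prove $\Re(\zeta^2) \le \tfrac12\big((p^{(1)})^2 + (p^{(2)})^2\big)$ and then to analyse equality. The two inputs I would use are: (i) by \eqref{eq:todefinezeta} together with $\det(g_1^\dagger g_2)=1$, the eigenvalues of $g_1^\dagger g_2$ are $e^{\pm\zeta}$, so its spectral radius is $e^{|\Re\zeta|}$; and (ii) writing $g^{(k)} = e^{i\vec p^{(k)}\cdot\vec\tau}\,u^{(k)}$ with $u^{(k)}=e^{\vec\theta^{(k)}\cdot\vec\tau}\in\mathrm{SU}(2)$, the factor $e^{i\vec p^{(k)}\cdot\vec\tau}$ is positive Hermitian with operator norm $e^{p^{(k)}/2}$, so $\|g^{(k)}\|_{\mathrm{op}} = e^{p^{(k)}/2}$. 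Combining (i), (ii), the adjoint-invariance and submultiplicativity of $\|\cdot\|_{\mathrm{op}}$, and spectral radius $\le$ operator norm, gives $e^{|\Re\zeta|}\le \|g_1^\dagger g_2\|_{\mathrm{op}} \le \|g^{(1)}\|_{\mathrm{op}}\|g^{(2)}\|_{\mathrm{op}} = e^{(p^{(1)}+p^{(2)})/2}$, hence $|\Re\zeta|\le \tfrac12(p^{(1)}+p^{(2)})$.

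Next I would write $\zeta = a+ib$ with $a := |\Re\zeta|\ge 0$ (the sign ambiguity of $\zeta$ does not affect $\zeta^2$) and $b := \Im\zeta\in[0,\pi]$, so $\Re(\zeta^2) = a^2 - b^2$. An elementary rearrangement then gives
\[
\Re(S) = 2b^2 \;+\; 2\Big(\tfrac14(p^{(1)}+p^{(2)})^2 - a^2\Big) \;+\; \tfrac12\big(p^{(1)}-p^{(2)}\big)^2,
\]
which is a sum of three nonnegative terms by the bound on $a$; this proves $\Re(S)\ge 0$.

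For equality, $\Re(S)=0$ forces each of the three terms to vanish: $b=0$, $p^{(1)}=p^{(2)}=:p$, and $a = \tfrac12(p^{(1)}+p^{(2)}) = p$. Thus $\zeta = p$, $\tr(g_1^\dagger g_2)=2\cosh p$, and both operator-norm inequalities above are saturated. The key reduction is then that saturation of $\|g_1^\dagger g_2\|_{\mathrm{op}}=\|g^{(1)}\|_{\mathrm{op}}\|g^{(2)}\|_{\mathrm{op}}$ forces a top right singular vector of $g_1^\dagger g_2$ to be simultaneously a top right singular vector of $g^{(2)}$ and --- after transport through $g^{(2)}$ --- a top right singular vector of $g_1^\dagger$; since the top singular space of $g^{(2)} = e^{i\vec p^{(2)}\cdot\vec\tau}u^{(2)}$ is $(u^{(2)})^{-1}$ applied to the $(+1)$-eigenvector of $\hat p^{(2)}\cdot\vec\sigma$, and likewise for $g_1^\dagger = (u^{(1)})^{-1}e^{i\vec p^{(1)}\cdot\vec\tau}$, one obtains $\hat p^{(1)}=\hat p^{(2)}$, hence $\vec p^{(1)}=\vec p^{(2)}$ (the case $p=0$ being immediate). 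Finally, writing $P := e^{i\vec p^{(1)}\cdot\vec\tau}$ and $v := u^{(2)}(u^{(1)})^{-1}\in\mathrm{SU}(2)$, the residual equation $\tr(P^2 v) = 2\cosh p$ expanded in the orthonormal eigenbasis of $P^2$ (eigenvalues $e^{\pm p}$) forces both diagonal entries of $v$ in that basis to equal $1$, whence $v=1$, i.e. $u^{(1)}=u^{(2)}$, and therefore $\vec\theta^{(1)}=\vec\theta^{(2)}$ on the coordinate domain where $\vec\theta\mapsto e^{\vec\theta\cdot\vec\tau}$ is injective. The converse is immediate: if $\vec p^{(1)}=\vec p^{(2)}$ and $\vec\theta^{(1)}=\vec\theta^{(2)}$ then $g^{(1)}=g^{(2)}$, so $g_1^\dagger g_2$ is positive with eigenvalues $e^{\pm p}$, giving $\zeta = p$ and $S=0$.

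The main obstacle I expect is the endgame of the equality case --- extracting $\hat p^{(1)}=\hat p^{(2)}$ from the saturation of $\|AB\|_{\mathrm{op}}=\|A\|_{\mathrm{op}}\|B\|_{\mathrm{op}}$ --- which rests on the rigidity statement that this saturation pins down the relevant top singular vectors, together with some bookkeeping about the (non-)injectivity of the $\vec\theta$-parametrization of $\mathrm{SU}(2)$ near the boundary of its domain. Everything else reduces to routine $2\times 2$ linear algebra and the explicit forms \eqref{eq:innerproduct}, \eqref{eq:norm} entering $G$.
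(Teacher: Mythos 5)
Your proof is correct, but it takes a genuinely different route from the paper's. The paper decomposes $g_1^\dagger g_2=e^{i\vec x\cdot\vec\tau}e^{\vec y\cdot\vec\tau}$ and splits $\Re(S)=2\delta+\big((p^{(1)})^2+(p^{(2)})^2-\tfrac{x^2}{2}\big)$, importing $\delta\ge 0$ (with equality iff $\vec y=0$) from Proposition \ref{pro:deltaqg0} of \cite{thiemann2001gaugeII} and controlling $x$ through the explicit trace identity $2\cosh x=\tr(e^{2i\vec p^{(1)}\cdot\vec\tau}e^{2i\vec p^{(2)}\cdot\vec\tau})$ via \eqref{eq:inequality1}--\eqref{eq:inequality2}; the equality case is then read off from the saturation of those two inequalities plus the equality case of the cited proposition. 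You instead bound $|\Re\zeta|\le\tfrac12(p^{(1)}+p^{(2)})$ directly by spectral radius $\le$ operator norm $\le$ product of norms, which renders the nonnegativity manifest as a sum of three nonnegative terms, $2b^2+2\big(\tfrac14(p^{(1)}+p^{(2)})^2-a^2\big)+\tfrac12(p^{(1)}-p^{(2)})^2$ (your algebra checks out), and you settle the equality case by the standard rigidity of $\|AB\|_{\mathrm{op}}=\|A\|_{\mathrm{op}}\|B\|_{\mathrm{op}}$ (alignment of top singular vectors, giving $\hat p^{(1)}=\hat p^{(2)}$) followed by the $2\times 2$ computation $\tr(P^2v)=2\cosh p\Rightarrow v=\mathbb{1}$, which is also sound: in the eigenbasis of $P^2$ one gets $e^{p}\alpha+e^{-p}\bar\alpha=e^{p}+e^{-p}$ with $|\alpha|\le 1$, forcing $\alpha=1$ and hence $v=\mathbb{1}$ by unitarity. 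What your route buys is self-containedness (no appeal to the external proposition) and a single inequality whose saturation controls the $\vec p$- and $\vec\theta$-directions at once; what the paper's route buys is brevity, since the borrowed statement $\delta\ge0$ already carries the $\vec\theta$-equality case. Your flagged worry about the injectivity of $\vec\theta\mapsto e^{\vec\theta\cdot\vec\tau}$ is harmless: concluding $\vec\theta^{(1)}=\vec\theta^{(2)}$ from $u^{(1)}=u^{(2)}$ is the same silent identification the paper makes, valid on the coordinate chart used for the measure.
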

To prove this lemma, let us introduce the following proposition which is given in \cite{thiemann2001gaugeII}.
\begin{pro}\label{pro:deltaqg0}
Let $g=e^{i\vec p\cdot\vec\tau}e^{\vec \theta\cdot\vec\tau}$. Considering $\zeta=s+i\phi\in\mathbb C$ with $s\in\mathbb R$ and $\phi\in[0,\pi]$ determined by $\cosh(\zeta)=\tr(g)$, we have, with denoting $p:=\sqrt{\vec p\cdot\vec p}$, 
\begin{equation}
\delta=\frac{p^2}{4}-s^2+\phi^2\geq0
\end{equation}
where the equality occurs iff $\theta:=\sqrt{\vec\theta\cdot\vec\theta}=0$. 
\end{pro}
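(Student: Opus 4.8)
\emph{Proof strategy.} The plan is to read the quantity $\delta$ as the classical comparison between the \emph{singular values} and the \emph{eigenvalue moduli} of the matrix $g$, both of which are already encoded in the data of the statement. First I would rewrite the factorization: since $\tau_{k}=-\tfrac{i}{2}\sigma_{k}$, one has $i\vec p\cdot\vec\tau=\tfrac12\vec p\cdot\vec\sigma$, so $g=P\,U$ is precisely the polar decomposition of $g$, with $P:=e^{\frac12\vec p\cdot\vec\sigma}$ a positive-definite Hermitian matrix of unit determinant whose eigenvalues are $e^{\pm p/2}$, and $U:=e^{\vec\theta\cdot\vec\tau}\in\mathrm{SU}(2)$. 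Consequently $g^{\dagger}g=U^{\dagger}P^{2}U$ has eigenvalues $e^{\pm p}$, so the singular values of $g$ are $e^{\pm p/2}$ and $\|g\|_{\mathrm{op}}=\|P\|_{\mathrm{op}}=e^{p/2}$. On the other hand $\det g=1$ and $\tr g=2\cosh\zeta$ (the normalization consistent with \eqref{eq:innerproduct}), so the eigenvalues of $g$ are $e^{\pm\zeta}$ and the spectral radius of $g$ equals $e^{|\Re\zeta|}=e^{|s|}$; the residual sign ambiguity of $s$ in solving for $\zeta$ is harmless because $\delta$ depends on $s$ only through $s^{2}$.

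Given these identifications the inequality is immediate: the spectral radius of any matrix never exceeds its operator norm, so $e^{|s|}\le e^{p/2}$, i.e.\ $s^{2}\le p^{2}/4$, whence
\begin{equation*}
\delta=\frac{p^{2}}{4}-s^{2}+\phi^{2}\ \ge\ \phi^{2}\ \ge\ 0 .
\end{equation*}
The same line shows that $\delta=0$ forces \emph{both} $\phi=0$ and $s^{2}=p^{2}/4$. If $\theta=0$ then $g=P$ is positive Hermitian, so its eigenvalue $e^{\zeta}$ equals its singular value $e^{p/2}$; thus $\zeta=p/2$, $\phi=0$, $s=p/2$ and $\delta=0$, which settles one direction. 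For the converse I would make $\tr g$ explicit: diagonalizing $P=V\,\mathrm{diag}(e^{p/2},e^{-p/2})\,V^{\dagger}$ with $V\in\mathrm{SU}(2)$ and writing $W:=V^{\dagger}UV=\left(\begin{smallmatrix}a&b\\-\bar b&\bar a\end{smallmatrix}\right)$ with $|a|^{2}+|b|^{2}=1$, one obtains $\tr g=2\cosh(p/2)\,\Re a+2i\sinh(p/2)\,\Im a$, hence $\cosh s\cos\phi=\cosh(p/2)\,\Re a$ and $\sinh s\sin\phi=\sinh(p/2)\,\Im a$. When $p>0$: $\phi=0$ gives $\Im a=0$, and $|s|=p/2$ together with $\cosh s=\cosh(p/2)\,\Re a$ gives $\Re a=1$, so $a=1$, $b=0$, $W=\mathbb I$ and $U=\mathbb I$, i.e.\ $\theta=0$. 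When $p=0$: one also gets $s=0$, $\zeta=0$, $\tr g=2$, and an $\mathrm{SU}(2)$ element of trace $2$ equals $\mathbb I$, again $\theta=0$.

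The only genuinely nontrivial step is the equality analysis in the last paragraph, and even that is just a short computation once the polar factors are written out; the rest is bookkeeping, the main care points being to pin down the normalization of $\zeta$ (the statement should read $2\cosh\zeta=\tr g$, matching \eqref{eq:innerproduct}) and to note that the sign ambiguity of $s$ does not enter $\delta$. An alternative route to the inequality that avoids the polar decomposition is to use $\Re(\zeta^{2})=s^{2}-\phi^{2}$ and bound $|\Re\zeta|=\log(\text{spectral radius of }g)\le\log\|g\|_{\mathrm{op}}=p/2$ directly, which is the same estimate in different dress.
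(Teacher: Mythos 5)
Your proposal is correct. Note first that the paper does not actually prove Proposition \ref{pro:deltaqg0}: it imports the statement from \cite{thiemann2001gaugeII}, where the analogous result is obtained by explicitly parameterizing $g$ and equating real and imaginary parts of the trace (the same computational style used in this paper's proof of Lemma \ref{lmm:saddlecondition1}). Your route is genuinely different and, for the inequality, cleaner: identifying $g=e^{\frac12\vec p\cdot\vec\sigma}\,e^{\vec\theta\cdot\vec\tau}$ as the polar decomposition gives singular values $e^{\pm p/2}$ and eigenvalues $e^{\pm\zeta}$, and the bound $\delta\ge\phi^2\ge 0$ drops out of ``spectral radius $\le$ operator norm'' ($e^{|s|}\le e^{p/2}$) with no trigonometric bookkeeping; the equality analysis via $W=V^\dagger UV$ and the relations $\cosh s\cos\phi=\cosh(p/2)\,\Re a$, $\sinh s\sin\phi=\sinh(p/2)\,\Im a$ is sound in both the $p>0$ and $p=0$ cases, and your observation that the residual sign ambiguity of $s$ is irrelevant because $\delta$ depends only on $s^2$ matches the paper's footnote on uniqueness of $(s,\phi)$. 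You are also right to flag the normalization: the proposition as printed says $\cosh(\zeta)=\tr(g)$, but consistency with Eqs. \eqref{eq:innerproduct} and \eqref{eq:todefinezeta} requires $2\cosh(\zeta)=\tr(g)$, which is what your argument (and the intended statement) uses. The only interpretive caveat, which does not affect validity, is that ``equality iff $\theta=0$'' is to be read within the parameterization $\theta\in[0,2\pi)$ of Appendix \ref{app:sl2csu2}, since your converse really shows $U=\mathbb I$; with that convention your proof establishes exactly the claimed statement.
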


Thanks to this proposition, we prove Lemma \ref{lmm:saddlecondition1} as follows.
\begin{proof}[Proof of Lemma \ref{lmm:saddlecondition1}]
$g_1^\dagger g_2$ can be decomposed as $$g_1^\dagger g_2=e^{i\vec x\cdot\vec\tau}e^{\vec y\cdot\vec\tau}.$$
Denote $x=\sqrt{\vec x\cdot\vec x}$ and $y=\sqrt{\vec y\cdot\vec y}$. Then
\begin{equation}
\Re(S(\vec p,\vec\theta,\vec p,\vec\theta))=2 \delta-\frac{x^2}{2}+(p^{(1)})^2+(p^{(2)})^2.
\end{equation}
where $\delta:=x^2/4-\Re(\zeta)^2$ is non-negative according to the proposition \ref{pro:deltaqg0}. Thus we only need to prove that $-x^2/2+(p^{(1)})^2+(p^{(2)})^2\geq 0$. 

By definition, we have
$2\cosh(x)= \tr(g_1^\dagger g_2g_2^\dagger g_1)$,
which leads to
\begin{equation}
2\cosh(x)=\tr(e^{2i\vec p^{(1)}\cdot\vec\tau} e^{2i\vec p^{(2)}\cdot\vec\tau}).
\end{equation}
Since $e^{i\vec \mu\cdot\vec\tau}=\cosh(\frac{\mu}{2})\mathbb I+2i\frac{\vec\mu\cdot\vec\tau}{\mu}\sinh(\frac{\mu}{2})$ with $\mu=\sqrt{\vec\mu\cdot\vec\mu}$, we have
\begin{equation}\label{eq:inequality1}
\begin{aligned}
\cosh(x)=\frac{1- \beta}{2}\cosh(p^{(1)}-p^{(2)})+\frac{1+\beta}{2}\cosh(p^{(1)}+p^{(2)})\leq \cosh(p^{(1)}+p^{(2)})
\end{aligned}
\end{equation}
where $\beta=\frac{\vec p^{(1)}\cdot\vec p^{(2)}}{p^{(1)}p^{(2)}}\in [-1,1]$ and $\cosh(p^{(1)}+p^{(2)})\geq \cosh(p^{(1)}-p^{(2)})$ is used. Moreover, because of $\sqrt{2(p^{(1)})^2+2(p^{(2)})^2}\geq p^{(1)}+p^{(2)}\geq 0$, it has
\begin{equation}\label{eq:inequality2}
\cosh(\sqrt{2(p^{(1)})^2+2(p^{(2)})^2}\,)\geq\cosh(p^{(1)}+p^{(2)}).
\end{equation}
Combining the results of \eqref{eq:inequality1} and \eqref{eq:inequality2}, one finally have
\begin{equation}
-\frac{x^2}{2}+(p^{(1)})^2+(p^{(2)})^2\geq 0
\end{equation}
where the equality occurs only if $\vec p^{(1)}=\vec p^{(2)}$. 

In summary we have
\begin{equation}
\Re(S(\vec p,\vec\theta,\vec p,\vec\theta))\geq 0
\end{equation}
and $\Re(S(\vec p,\vec\theta,\vec p,\vec\theta))=0$ only if $\vec p^{(1)}=\vec p^{(2)}$ and $\delta=0$ which means $\vec\theta^{(1)}=\vec\theta^{(2)}$. 
\end{proof}

It turns out below that the integrand of Eq. \eqref{eq:insertidentity} consists of a Gaussian-like function
\begin{equation}
e^{-\frac{ 1}{2t}\left(S(\vec p,\vec\theta,\vec p^{(1)},\vec\theta^{(1)})+S(\vec p^{(1)},\vec\theta^{(1)},\vec p^{(2)},\vec\theta^{(2)})+\cdots+S(\vec p^{(k)},\vec\theta^{(k)},\vec p,\vec\theta)\right)}.
\end{equation}
Lemma \ref{lmm:saddlecondition1} suggests us to do the stationary phase approximation analysis at $\vec p^{(i)}=\vec p$ and $\vec\theta^{(i)}=\vec\theta$. Notice that $\vec p$ and $\vec\theta$ are given to parameterize $g$ as $g:=e^{i\vec p\cdot\vec\tau}e^{\vec \theta\cdot\vec\tau}$, 
and $g$ labels the coherent state $|\psi_g\rangle$ with respect to which the expectation value of $\hat O$ is computed. Thus, rather than considering all values of $\vec p$ and $\vec\theta$, it is sufficient to set 
\begin{equation}
\begin{aligned}
\vec p_o=(0,0,p),\ \vec \theta_o=(0,0,\theta)
\end{aligned}
\end{equation}
according to \eqref{eq:basicformula}. Denote
\begin{equation}\label{eq:definef}
f_{k;\vec p,\vec\theta}(\vec p^{(1)},\vec\theta^{(1)},\cdots,\vec p^{(k)},\vec\theta^{(k)})\equiv S(\vec p,\vec\theta,\vec p^{(1)},\vec\theta^{(1)})+S(\vec p^{(1)},\vec\theta^{(1)},\vec p^{(2)},\vec\theta^{(2)})+\cdots+S(\vec p^{(k)},\vec\theta^{(k)},\vec p,\vec\theta),
\end{equation}
we have the following result:
\begin{thm}\label{thm:saddlepoint2}
\begin{itemize}
\item[(i)] $\Re(f_{k;\vec p,\vec \theta})\geq 0$ and the equality occurs only when all $g^{(i)}$ coincide, namely $\vec p^{\, (i)}=\vec p$ and $\vec\theta^{(i)}=\vec\theta$. 
\item[(ii)] At $\vec p^{(i)}=\vec p_o=(0,0,p)$ and $\vec\theta^{(i)}=\vec\theta_o=(0,0,\theta)$, it has
\begin{equation}
\nabla_{\vec p^{(i)}}f_{k;\vec p_o,\vec \theta_o}=0=\nabla_{\vec \theta^{(i)}}f_{k;\vec p_o,\vec \theta_o},\ \forall i=1,\cdots,k.
\end{equation}
\item[(iii)] The Hessian matrix $f''_{k;\vec p_o,\vec \theta_o}$ of $f_{k;\vec p_o,\vec \theta_o}$ at $\vec p^{(i)}=\vec p_o=(0,0,p)$ and $\vec\theta^{(i)}=\vec\theta_o=(0,0,\theta)$ is non-degenerate with the determinant 
\begin{equation}\label{eq:determinedhessian}
\det\left(f''_{k;\vec p_o,\vec \theta_o}\right)\Big|_{\vec p^{(i)}=\vec p_o,\vec\theta^{(i)}=\vec\theta_o}=\left(\frac{1024 \sin^4\left(\frac{\theta }{2}\right)}{\theta^4}\right)^{k}
\end{equation}
\end{itemize}
\end{thm}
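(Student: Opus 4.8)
The plan is to dispatch (i) directly from Lemma~\ref{lmm:saddlecondition1}, to derive (ii) from (i) together with a hermiticity symmetry of $S$, and to establish (iii) by an explicit second-order expansion of $S$ at the base point followed by a determinant evaluation that exploits the block-tridiagonal (chain) structure of the Hessian. For (i): by \eqref{eq:definef}, $f_{k;\vec p,\vec\theta}=\sum_{i=1}^{k+1}S(g^{(i-1)},g^{(i)})$ with the convention $g^{(0)}=g^{(k+1)}=g$, so taking real parts and applying Lemma~\ref{lmm:saddlecondition1} termwise gives $\Re f_{k;\vec p,\vec\theta}\ge 0$, with equality precisely when every summand vanishes, i.e. when $g^{(i-1)}$ and $g^{(i)}$ represent the same $\mathrm{SL}(2,\mathbb C)$ element for each $i$. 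Iterating and using $g^{(0)}=g^{(k+1)}=g$ forces $g^{(i)}=g$ for all $i$; since $(\vec p,\vec\theta)\mapsto e^{i\vec p\cdot\vec\tau}e^{\vec\theta\cdot\vec\tau}$ is a local diffeomorphism, this is exactly $\vec p^{\,(i)}=\vec p$, $\vec\theta^{(i)}=\vec\theta$. This proves (i) and shows that the unique interior stationary configuration demanded by Theorem~\ref{hormanderThm} sits at coincidence.

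For (ii): first record the hermiticity relation $S(g_1,g_2)=\overline{S(g_2,g_1)}$. It follows because $\tr(g_1^\dagger g_2)=\overline{\tr(g_2^\dagger g_1)}$ and the normalization $\Im\zeta\in[0,\pi]$ forces the two $\zeta$-parameters to obey $\zeta(g_2,g_1)=-\overline{\zeta(g_1,g_2)}$, hence $\zeta^2$ and $p_1^2+p_2^2$ go to their complex conjugates; in particular $S(g,g)\in\mathbb R$. Now at the base point the coordinates of $g^{(i)}$ enter $f_{k;\vec p_o,\vec\theta_o}$ only through the two adjacent factors $S(g^{(i-1)},g^{(i)})$ and $S(g^{(i)},g^{(i+1)})$, so $\nabla_{g^{(i)}}f=(\partial_{2}S)(g,g)+(\partial_{1}S)(g,g)$, where $\partial_1,\partial_2$ denote differentiation in the first and second slot; the hermiticity relation gives $(\partial_{1}S)(g,g)=\overline{(\partial_{2}S)(g,g)}$, so $\nabla_{g^{(i)}}f=2\,\Re\big[(\partial_{2}S)(g,g)\big]$ is real. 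By (i) the real function $\Re f_{k;\vec p_o,\vec\theta_o}$ is minimized at the base point, so its gradient vanishes there, i.e. $\Re\,\nabla_{g^{(i)}}f=0$; combining the two, $\nabla_{g^{(i)}}f=0$ for every $i$, which is (ii). (Alternatively one can verify this by brute-force differentiation of $2\cosh\zeta=\tr(g_1^\dagger g_2)$ after cancelling the $\mathrm{SU}(2)$ factors inside the trace, which reduces $\tr(g_1^\dagger g_2)$ to $\tr(e^{i\vec p^{(1)}\cdot\vec\tau}e^{i\vec p_o\cdot\vec\tau})$ for the $\vec p^{(1)}$-derivative, but the symmetry route is shorter.)

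For (iii): the free variables are $g^{(1)},\dots,g^{(k)}$, each carrying a $6$-real-dimensional coordinate $(\vec p^{(i)},\vec\theta^{(i)})$, and $g^{(i)}$ is coupled in $f$ only to $g^{(i-1)}$ and $g^{(i+1)}$ (the endpoints to the frozen label $g$). Hence $f''_{k;\vec p_o,\vec\theta_o}$ is block-tridiagonal with $6\times 6$ blocks: every diagonal block equals $A:=(\partial_{1}^2S+\partial_{2}^2S)(g,g)$ and every super-/sub-diagonal block equals $B:=(\partial_{1}\partial_{2}S)(g,g)$, resp. $B^{\mathrm T}$. I would obtain $A$ and $B$ in closed form by expanding $\tr(g_1^\dagger g_2)$ to second order in the coordinate displacements about the base point, using $e^{i\vec p\cdot\vec\tau}=\cosh(p/2)+p^{-1}(\vec p\cdot\vec\sigma)\sinh(p/2)$, the integral formula for the differential of the matrix exponential to handle the $\vec\theta$-directions, and then the implicit relation $2\cosh\zeta=\tr(g_1^\dagger g_2)$ to propagate the expansion to $\zeta^2$ and hence to $S=-2\zeta^2+p_1^2+p_2^2$. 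With $A$ and $B$ explicit, the determinant of the block-tridiagonal matrix is a finite computation via successive Schur complements (the block analogue of the continuant recursion), and the nontrivial content of \eqref{eq:determinedhessian} is that the outcome is a pure $k$-th power with no $k$-dependent polynomial prefactor — which must reflect a triangularity/degeneracy of the off-diagonal coupling $B$ relative to $A$, letting the $6k\times 6k$ determinant factorize over the six coordinate directions as $\prod_{j=1}^{6}c_j^{\,k}$ with $\prod_{j=1}^{6}c_j=1024\sin^4(\theta/2)/\theta^4$. The main obstacle is precisely this last step: doing the second-order expansion of $S$ carefully enough to write $A$ and $B$ down (the matrix-exponential derivatives along the $\vec\theta$-directions and the inversion $\tr(g_1^\dagger g_2)\mapsto\zeta^2$ are where normalization and sign errors creep in), and then identifying the algebraic degeneracy that collapses the block continuant to the stated single power. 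Both halves are easiest to pin down by first treating $k=1$, where $f_{1}=2\Re S(g,g^{(1)})$ and $\det f_1''=2^{6}\det\big(\Re\,\partial_{2}^2S|_{\mathrm{diag}}\big)$ must reproduce $1024\sin^4(\theta/2)/\theta^4$, and then $k=2$ to fix the off-diagonal contribution.
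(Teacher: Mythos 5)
Your parts (i) and (ii) are sound. Part (i) is essentially the paper's own argument: termwise application of Lemma \ref{lmm:saddlecondition1} to the sum \eqref{eq:definef} and chaining the equality case through the fixed endpoints. For (ii) you take a genuinely different route: the paper verifies stationarity by explicitly differentiating $2\cosh\zeta=\tr(g_1^\dagger g_2)$ at the coincident point (its Eq. \eqref{eq:firstD}) and exhibiting the cancellations, whereas you combine the hermiticity relation $S(g_1,g_2)=\overline{S(g_2,g_1)}$ (which does hold under the normalization $\Im\zeta\in[0,\pi]$, since $\zeta_{21}=-\bar\zeta_{12}$ and only $\zeta^2$ enters $S$) with the observation that the interior global minimum of $\Re f_{k}$ guaranteed by (i) forces $\nabla\Re f_{k}=0$, while hermiticity makes $\nabla f_{k}$ real at coincidence. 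That is shorter and essentially coordinate-free; the paper's explicit first-derivative computation is more pedestrian but produces concrete data of the same type it reuses elsewhere in the appendix.

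The genuine gap is in (iii). You correctly identify the block-tridiagonal structure with diagonal blocks $A=(\partial_1^2S+\partial_2^2S)(g,g)$ and off-diagonal blocks $B=(\partial_1\partial_2S)(g,g)$, and you propose successive Schur complements, which is exactly the paper's strategy — but the content of the claim, nondegeneracy and the exact value \eqref{eq:determinedhessian}, rests on two items you leave undone: the closed-form computation of $A$ and $B$, and, crucially, the identity $BA^{-1}B^{T}=B^{T}A^{-1}B=0$ (the paper's Eq. \eqref{eq:ABA}). It is this identity that makes each Schur complement collapse to the Hessian of one lower rank, giving $\det f''_{k}=\det(A)\,\det f''_{k-1}=\det(A)^{k}$; a generic block-tridiagonal matrix obeys a nontrivial continuant recursion and its determinant is \emph{not} a pure power of $\det A$. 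Your conjectured mechanism — a triangularity letting the $6k\times 6k$ determinant factorize over the six coordinate directions as $\prod_{j}c_j^{\,k}$ — is not what happens and is nowhere established; the actual degeneracy is that $B$ is isotropic with respect to $A^{-1}$. Likewise $\det A=1024\sin^4(\theta/2)/\theta^4$ is stated only as a target your $k=1$ check ``must reproduce,'' and nondegeneracy ($\det A\neq 0$, hence $\det f''_{k}\neq 0$) is never argued. Until $A$ and $B$ are computed and \eqref{eq:ABA} (or an equivalent collapse mechanism) is verified, part (iii) remains a plan rather than a proof.
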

\begin{proof}
The first statement is true by using Lemma \ref{lmm:saddlecondition1}. For the second statement, let us consider 
\begin{equation}
S(\vec p^{(1)},\vec\theta^{(1)},\vec p^{(2)},\vec\theta^{(2)})=-2\zeta^2+(\vec p^{(1)})^2+(\vec p^{(2)})^2
\end{equation}
where $\zeta$ is given by
\begin{equation}
\cosh(\zeta)=\frac{1}{2}\tr(e^{\vec\theta^{(2)}\cdot\vec\tau}e^{-\vec\theta^{(1)}\cdot\vec\tau}e^{i\vec p^{(1)}\cdot\vec\tau}e^{i\vec p^{(2)}\cdot\vec\tau}).
\end{equation}
Then it has that
\begin{equation}\label{eq:firstD}
\begin{aligned}
\frac{\partial\cosh(\zeta)}{\partial p_j^{(1)}}\Bigg|_{\vec p_o,\vec \theta_o}=\frac{\partial\cosh(\zeta)}{\partial p_j^{(2)}}\Bigg|_{\vec p_o,\vec \theta_o}&=\frac{\delta_{j,3}}{2}\sinh(p)\\
\frac{\partial\cosh(\zeta)}{\partial \theta_j^{(1)}}\Bigg|_{\vec p_o,\vec \theta_o}=-\frac{\partial\cosh(\zeta)}{\partial \theta_j^{(2)}}\Bigg|_{\vec p_o,\vec \theta_o}&=\frac{i\delta_{j,3}}{2}\sinh(p)\\
\end{aligned}
\end{equation}
where the subscript $\vec p_o,\vec\theta_o$ indicates to take values at $\vec p^{(1)}=\vec p_o=\vec p^{(2)}$ and $\vec \theta^{(1)}=\vec \theta_o=\vec \theta^{(2)}$. According to Eq. \eqref{eq:firstD}, we have
\begin{equation}
\begin{aligned}
\nabla_{\vec p^{(1)}}S(\vec p^{(1)},\vec\theta^{(1)},\vec p^{(2)},\vec\theta^{(2)})\Big|_{\vec p_o,\vec\theta_o}=0=\nabla_{\vec p^{(2)}}S(\vec p^{(1)},\vec\theta^{(1)},\vec p^{(2)},\vec\theta^{(2)})\Big|_{\vec p_o,\vec\theta_o}
\end{aligned}
\end{equation}
and, thus, $\nabla_{\vec p^{(i)}} f_{k;\vec p_o,\vec \theta_o}\big|_{\vec p_o,\vec\theta_o}=0$ for all $i=1,2,\cdots, k$. For $\nabla_{\vec \theta^{(i)}} f_{k;\vec p_o,\vec \theta_o}\big|_{\vec p_o,\vec\theta_o}$, the second equation in Eq. \eqref{eq:firstD} gives us
\begin{equation}
\nabla_{\vec \theta^{(i)}} f_{k;\vec p_o ,\vec \theta_o}\big|_{\vec p_o,\vec\theta_o}=\nabla_{\vec \theta^{(i)}} S(\vec p^{(i-1)},\vec\theta^{(i-1)},\vec p^{(i)},\vec\theta^{(i)})\Big|_{\vec p_o,\vec\theta_o}+\nabla_{\vec \theta^{(i)}} S(\vec p^{(i)},\vec\theta^{(i)},\vec p^{(i+1)},\vec\theta^{(i+1)})\Big|_{\vec p_o,\vec\theta_o}=0,
\end{equation}
which completes the proof of the second statement. 

For the last statement, using the conclusion of the second statement, we can immediately get that
\begin{equation}
\frac{\partial f_{k;\vec p_o,\vec \theta_o}}{\partial x_m^{(i)}\partial y_n^{(j)}}\Bigg|_{\vec p_o,\vec \theta_o}=0,\ |i-j|>1
\end{equation}
where $x_m^{(i)}$ and $y_n^{(j)}$ represent $\vec\theta_m^{(i)}$ or $\vec p_m^{(i)}$. 
Therefore, if we order the arguments $\vec p^{(i)},\vec\theta^{(i)}$ as
\begin{equation}
p^{(1)}_1,p^{(1)}_2,p^{(1)}_3,\theta^{(1)}_1,\theta^{(1)}_2,\theta^{(1)}_3,p^{(2)}_1,p^{(2)}_2,p^{(2)}_3,\theta^{(2)}_1,\theta^{(2)}_2,\theta^{(2)}_3,\cdots. 
\end{equation}
to arrange the matrix elements of $ f''_{k;\vec p_o,\vec \theta_o}(\vec p_o,\vec\theta_o)\equiv f''_{k;\vec p_o,\vec \theta_o}\Big|_{\vec p_o,\vec\theta_o}$, the resulting matrix is block-tridiagonal matrix. Moreover, since all of the $p$-arguments, as well as the $\theta$-arguments, in $f_{k;\vec p_o,\vec\theta_o}$ are symmetric, we conclude that
\begin{equation}
\frac{\partial f_{k;\vec p_o,\vec \theta_o}}{\partial x_m^{(i)}\partial y_l^{(j)}}\Bigg|_{\vec p_o,\vec \theta_o}=\frac{\partial f_{k;\vec p_o,\vec \theta_o}}{\partial x_m^{(i')}\partial y_l^{(j')}}\Bigg|_{\vec p_o,\vec \theta_o}.
\end{equation} 
Consequently $f''_{k;\vec p_o,\vec \theta_o}(\vec p_o,\vec\theta_o)$ takes the form
\begin{equation}\label{eq:hessian}
f''_{k;\vec p_o,\vec \theta_o}(\vec p_o,\vec\theta_o)=\left(
\begin{array}{cccccc}
  A & B & 0&0&\cdots & 0 \\
  B^T & A & B&0& \cdots & 0 \\
  0&B^T & A & B& \cdots & 0 \\
  \vdots    &\vdots      & \vdots          & \ddots & \vdots&    \vdots      \\
  0 & 0& \cdots & B^T & A&B\\
  0 & 0& \cdots & 0&B^T & A
\end{array}
\right)
\end{equation}
where $A$ and $B$ are $6\times 6$ matrix. The matrix $A$ and $B$ can be calculated by considering the case with $k=2$, which gives us
\begin{equation*}
A=\left(
\begin{array}{cccccc}
 \frac{4 \tanh \left(\frac{p}{2}\right)}{p} & 0 & 0 & -\frac{4 \sin ^2\left(\frac{\theta }{2}\right) \tanh \left(\frac{p}{2}\right)}{\theta } & -\frac{2 \sin (\theta ) \tanh \left(\frac{p}{2}\right)}{\theta } & 0 \\
 0 & \frac{4 \tanh \left(\frac{p}{2}\right)}{p} & 0 & \frac{2 \sin (\theta ) \tanh \left(\frac{p}{2}\right)}{\theta } & -\frac{4 \sin ^2\left(\frac{\theta }{2}\right) \tanh \left(\frac{p}{2}\right)}{\theta } & 0 \\
 0 & 0 & 2 & 0 & 0 & 0 \\
 -\frac{4 \sin ^2\left(\frac{\theta }{2}\right) \tanh \left(\frac{p}{2}\right)}{\theta } & \frac{2 \sin (\theta ) \tanh \left(\frac{p}{2}\right)}{\theta } & 0 & -\frac{4 p (\cos (\theta )-1) \coth (p)}{\theta ^2} & 0 & 0 \\
 -\frac{2 \sin (\theta ) \tanh \left(\frac{p}{2}\right)}{\theta } & -\frac{4 \sin ^2\left(\frac{\theta }{2}\right) \tanh \left(\frac{p}{2}\right)}{\theta } & 0 & 0 & -\frac{4 p (\cos (\theta )-1) \coth (p)}{\theta ^2} & 0 \\
 0 & 0 & 0 & 0 & 0 & 2 \\
\end{array}
\right)
\end{equation*}
and
\begin{equation*}
B=\left(
\begin{array}{cccccc}
 -\frac{2 \tanh \left(\frac{p}{2}\right)}{p} & 0 & 0 & \frac{2 \sin ^2\left(\frac{\theta }{2}\right) \tanh \left(\frac{p}{2}\right)+i \sin (\theta )}{\theta } & \frac{\sin (\theta ) \tanh \left(\frac{p}{2}\right)+i (\cos (\theta )-1)}{\theta } & 0 \\
 0 & -\frac{2 \tanh \left(\frac{p}{2}\right)}{p} & 0 & \frac{-i \cos (\theta )-\sin (\theta ) \tanh \left(\frac{p}{2}\right)+i}{\theta } & \frac{2 \sin ^2\left(\frac{\theta }{2}\right) \tanh \left(\frac{p}{2}\right)+i \sin (\theta )}{\theta } & 0 \\
 0 & 0 & -1 & 0 & 0 & i \\
 \frac{2 \sin ^2\left(\frac{\theta }{2}\right) \tanh \left(\frac{p}{2}\right)-i \sin (\theta )}{\theta } & \frac{i \left(\cos (\theta )+i \sin (\theta ) \tanh \left(\frac{p}{2}\right)-1\right)}{\theta } & 0 & \frac{2 p (\cos (\theta )-1) \coth (p)}{\theta ^2} & -\frac{2 i p (\cos (\theta )-1)}{\theta ^2} & 0 \\
 \frac{-i \cos (\theta )+\sin (\theta ) \tanh \left(\frac{p}{2}\right)+i}{\theta } & \frac{2 \sin ^2\left(\frac{\theta }{2}\right) \tanh \left(\frac{p}{2}\right)-i \sin (\theta )}{\theta } & 0 & \frac{2 i p (\cos (\theta )-1)}{\theta ^2} & \frac{2 p (\cos (\theta )-1) \coth (p)}{\theta ^2} & 0 \\
 0 & 0 & -i & 0 & 0 & -1 \\
\end{array}
\right).
\end{equation*}
With the expression of $A$ and $B$, it can be verified that 
\begin{equation}\label{eq:ABA}
BA^{-1}B^T=B^T A^{-1}B=0.
\end{equation}
To calculate $\det(f''_{k;\vec p_o,\vec \theta_o}(\vec p_o,\vec\theta_o))$, we define matrices of $\tilde B$, $\tilde C$ and D of dimensions $6\times 6(k-1)$, $6(k-1)\times 6$ and $6(k-1)\times 6(k-1)$ respectively such that 
\begin{equation}\label{eq:block2}
f''_{k;\vec p_o,\vec \theta_o}(\vec p_o,\vec\theta_o)=\left(
\begin{array}{cc}
A&\tilde B\\
\tilde C&D
\end{array}
\right).
\end{equation}
Then by using the property of the Schur complement, it has
\begin{equation}
\det(f''_{k;\vec p_o,\vec \theta_o}(\vec p_o,\vec\theta_o))=\det(A)\det(D-\tilde CA^{-1}\tilde B)=\det(A)\det(D)
\end{equation}
where we used $\tilde CA^{-1}\tilde B=0$ because of Eq. \eqref{eq:ABA} and that $f''_{k;\vec p_o,\vec \theta_o}$ is block-tridiagonal matrix. Because $D$ is the Hessian matrix $f''_{k-1;\vec p_o,\vec \theta_o}(\vec p_o,\vec\theta_o)$, we finally have
\begin{equation}
\det(f''_{k;\vec p_o,\vec \theta_o}(\vec p_o,\vec\theta_o))=\det(A)^k=\left(\frac{1024\sin^4(\frac{\theta}{2})}{\theta^4}\right)^k.
\end{equation}
\end{proof}
By these results, the stationary phase approximation method introduced in Theorem \ref{hormanderThm} can be applied to calculate the integral \eqref{eq:insertidentity}. 

Taking advantage of the above results, we now can come to the proof of Theorem \ref{thm:leadingordergeneral}.
\subsection{proof of Theorem \ref{thm:leadingordergeneral}}
As in Eq. \eqref{eq:insertidentity}, it has
\begin{equation}\label{eq:insertidentity2}
\frac{\langle\Psi_{\vec w}|\hat O|\Psi_{\vec w}\rangle}{\langle\Psi_{\vec w}|\Psi_{\vec w}\rangle}=\int\prod_{j=1}^{k-1}\prod_{e\in E(\gamma)}\frac{2}{\pi t^3}\dd\mu_H(u^{(j)}_e)\dd^3\vec p_e^{(j)}\prod_{i=1}^k\frac{\langle\Psi_{\vec g^{(i-1)}}|\hat O_i|\Psi_{\vec g^{(i)}}\rangle}{\|\Psi_{\vec g^{(i-1)}}\|\|\Psi_{\vec g^{(i)}}\|}
\end{equation}
where we denoted $|\Psi_{\vec g^{(0)}}\rangle=|\Psi_{\vec g^{(k)}}\rangle:=|\Psi_{\vec w}\rangle$, applied Eq. \eqref{eq:measure} and used the decomposition
\begin{equation}
g_e^{(i)}
=e^{i\vec p^{(i)}_e\cdot\vec\tau}e^{\vec\theta^{(i)}_e\cdot\vec\tau}=e^{i\vec p^{(i)}_e\cdot\vec\tau} u_e^{(i)}.
\end{equation}
By the assumption, we have
\begin{equation}
\frac{\langle\Psi_{\vec g^{(i-1)}}|\hat O_i|\Psi_{\vec g^{(i)}}\rangle}{\|\Psi_{\vec g^{(i-1)}}\|\|\Psi_{\vec g^{(i)}}\|}=\frac{\langle \Psi_{\vec g^{(i-1)}}|\Psi_{\vec g^{(i)}}\rangle}{\|\Psi_{\vec g^{(i-1)}}\|\|\Psi_{\vec g^{(i)}}\|}\left(E_0^{(i)}(\vec g^{(i-1)},\vec g^{(i)})+t E_1^{(i)}(\vec g^{(i-1)},\vec g^{(i)})+O(t^\infty)\right).
\end{equation}
Thus Eq. \eqref{eq:insertidentity2} takes the form
\begin{equation}\label{eq:integralwithmatrixelement}
\begin{aligned}
\frac{\langle\Psi_{\vec w}|\hat O|\Psi_{\vec w}\rangle}{\langle\Psi_{\vec w}|\Psi_{\vec w}\rangle}=\int\prod_{j=1}^{k-1}\prod_{e\in E(\gamma)}\frac{2}{\pi t^3}\dd\mu_H(u^{(j)}_e)\dd^3\vec p_e^{(j)}\prod_{i=1}^k\frac{\langle\Psi_{\vec g^{(i-1)}}|\Psi_{\vec g^{(i)}}\rangle}{\|\Psi_{\vec g^{(i-1)}}\|\|\Psi_{\vec g^{(i)}}\|}\prod_{l=1}^k E^{(l)}(\vec g^{(l-1)},\vec g^{(l)})
\end{aligned}
\end{equation}
with $P(\{\vec g^{(i)}\}_{i=1}^k)$ denoting the function
\begin{equation}
E^{(l)}(\vec g^{(l-1)},\vec g^{(l)}):=\left(E_0^{(l)}(\vec g^{(l-1)},\vec g^{(l)})+t E_1^{(l)}(\vec g^{(l-1)},\vec g^{(l)})+O(t^\infty)\right).
\end{equation}
Eq. \eqref{eq:integralwithmatrixelement} can be analyzed with the stationary phase approximation 
according to Theorem \ref{thm:saddlepoint2}. 
It should be noticed that the Haar measure $\dd\mu_H$ can be expressed as 
\begin{equation}
\dd\mu_H(u)=\frac{\sin ^2\left(\frac{1}{2} \sqrt{\vec\theta \cdot\vec\theta }\right)}{4 \pi ^2 \,(\vec\theta \cdot\vec\theta)}\dd^3\vec\theta
\end{equation}
where $u\in\mathrm{SU}(2)$ is coordinatized as $u=e^{\vec\theta\cdot\vec\tau}$ and $\dd^3\vec\theta$ is the Lebesgue measure on $\mathbb R^3$. Substituting the last equation into Eq. \eqref{eq:integralwithmatrixelement} and applying Eq. \eqref{eq:theorem1} as well as Eq. \eqref{eq:determinedhessian},  we finally obtain
\begin{equation}\label{eq:expectationaftersaddle}
\begin{aligned}
&\frac{\langle\Psi_{\vec w}|\hat O|\Psi_{\vec w}\rangle}{\langle\Psi_{\vec w}|\Psi_{\vec w}\rangle}=\prod_{e\in E(\gamma)}\left(\frac{\theta_e^2}{\sin^2(\theta_e/2)}\right)^{k-1}\,\sum_{s=0}^{l-1}\left(2t\right)^s \mathfrak D^{(s)}\Big|_{g_e^{(k)}=e^{i w_e\tau_3},\forall k,e}+O(t^l)
\end{aligned}
\end{equation}
where $\mathfrak D^{(s)}$ takes the form
\begin{equation}\label{eq:Ds}
\mathfrak D^{(s)}= (-1)^s \sum_{l-j=s}\sum_{2l\geq 3j}\frac{(-1)^l 2^{-l}}{l!j!}\left[\sum_{a,b=1}^{n}H_{ab}^{-1}(x_0)\dfrac{\partial^2}{\partial x_a\partial x_b}\right]^l \left(G^{(j)}\prod_{i=1}^k E^{(i)}\right)\Bigg|_{g_e^{(l)}=e^{i w_e\tau_3},\forall l,e}
\end{equation}
with $G^{(j)}$ being defined as the following,
\begin{equation}
G^{(j)}\left(\{\vec g^{(i)}\}_{i=k}^{k-1}\right)=\left(\prod_{e\in E(\gamma)}\prod_{n=1}^{k-1}\frac{\sin^2(|\vec\theta_e^{(n)}|/2)}{|\vec\theta_e^{(n)}|^2}\right)\left(\prod_{e\in E(\gamma)}\prod_{n=0}^{k-1}\frac{\zeta_e^{(n,n+1)} \sqrt{\sinh(p_e^{(n)})\sinh(p_e^{(n+1)})}}{\sqrt{p_e^{(n)}p_e^{(n+1)}}\sinh(\zeta_e^{(n,n+1)})}\right)g_{x_0}(\{\vec g^{(i)}\}_{i=1}^{k-1})^j.
\end{equation}  
Here $g_{x_0}$ is some function defined by applying \eqref{eq:factorg} to the current case.

If the leading order term of $\mathfrak D^{(s)}$ is claimed to be $O(t^{d_s})$, then each term in the summation over $s$ of Eq. \eqref{eq:expectationaftersaddle} is $O(t^{s+d_s})$.
Moreover, for each $l$ in Eq. \eqref{eq:Ds}, the derivative acting on $G^{(j)}\prod E^{(i)}$ is of order $2l$  in total. Because of the properties given by Eq.\eqref{eq:factorg}, the non-vanishing result appears when there are at least $3j$ derivatives acting on $G^{(j)}$, which indicates that the order of derivative that acts on the term $\prod E^{(i)}$ in Eq. \eqref{eq:Ds} is $2l-3j$. Because $2l-3j=s$ and $l=s+j$, there are at most $2s-j$ derivatives acting on $\prod E^{(i)}$. Further, since $j\geq 0$, the maximum order of derivative acting on the term $\prod E^{(i)}$ is $2s$, which only occurs for $l=s$.
Then, let us count the leading order of $\mathfrak D^{(s)}$ for a given $s$. 
According to the expression of $\mathfrak D^{(s)}$, once it is evaluated at the critical point given by $g_e^{(k)}=e^{i\omega_e\tau_3}$, those $E^{(m)}$ contributed by operators $\hat O_m$ satisfying \eqref{eq:operatorom} will inevitably increase the power of its leading order term if they are not acted by any derivative operators.
For a fixed $s$ there are at least $n_s$ of these ``non-acted" terms with
\begin{equation}
n_s=\frac{(N_0-2s)+|N_0-2s|}{2}.
\end{equation}
Therefore, $\dfrac{\langle\Psi_{\vec w}|\hat O|\Psi_{\vec w}\rangle}{\langle\Psi_{\vec w}|\Psi_{\vec w}\rangle}$ is of order of $t^{n}$ with
\begin{equation}
n\geq \min\limits_{s\in\mathbb Z^+}(s+n_s)=\floor{\frac{N_0+1}{2}}.
\end{equation}

\subsection{proof of Theorem \ref{thm:leadingordermultiplyPH}}\label{app:proofofthem44}
 At first, 
 denote $\hat p^{0}(e)$, $D^{\frac12}_{\frac12\frac12}(h_e)$ and $D^{\frac12}_{-\frac12-\frac12}(h_e)$ by $\hat O_a^{\rm d}$ with $a=1,2,3$ respectively, and $p^{\pm 1}(e)$, $D^{\frac12}_{\frac12-\frac12}(h_e)$ or $D^{\frac12}_{-\frac12\frac12}(h_e)$ by $\hat O_a^{\rm nd}$ with $a=1,2,3,4$ respectively. According to \eqref{eq:fluxmatrixelements}, \eqref{eq:fluxmatrixelementt} and \eqref{eq:holonomymatrixelement}, the matrix elements of $\hat O_a^{i}$ take the form
 \begin{equation}
 \langle\psi_{g^{(1)}}|\hat O_a^{i}|\psi_{g^{(2)}}\rangle=\langle\psi_{g^{(1)}}|\psi_{g^{(2)}}\rangle E^{i}_{a;0}(g^{(1)},g^{(2)})+O(t),\ \forall i={\rm d,nd }.
 \end{equation}
Then by formulae listed in Sec. \ref{app:derivatives}, we obtain:  
 \begin{lmm}\label{lmm:derivativeE}
 Given $g^{(i)}=e^{i\vec p^{(i)}\cdot\vec\tau}e^{i\vec \theta^{(i)}\cdot\vec\tau}$, we have
  \begin{equation}
 \begin{aligned}
 (\partial_{x^{(j)}_k}E_{a;0}^{\rm d})(e^{iw\tau_3},e^{iw\tau_3})&=0,\  j=1,2,\text{ and }k=1,2\\
 (\partial_{x^{(j)}_3}E^{\rm nd}_{a;0})(e^{iw\tau_3},e^{iw\tau_3})&=0,\  j=1,2
 \end{aligned}
 \end{equation}
for all $w=p-i\theta\in\mathbb C$, where $x^{(i)}_j$ denotes $p^{(i)}_j$ or $\theta^{(i)}_j$. Moreover, consider the matrix $A$ and $B$ in Eq. \eqref{eq:hessian}. $E^{\rm nd}_{a;0}$ satisfies that 
\begin{equation}
\begin{aligned}
\nabla_{\vec x^{(1)}}^TE^{\rm nd}_{a;0}(e^{iw\tau_k},e^{iw\tau_3})A^{-1}B&=0=\nabla_{\vec x^{(2)}}^TE^{\rm nd}_{a;0}(e^{iw\tau_k},e^{iw\tau_3})A^{-1}B^T,\\
\nabla_{\vec x^{(1)}}^TE^{\rm nd}_{a;0}(e^{iw\tau_k},e^{iw\tau_3})A^{-1}B^T&=-\nabla_{\vec x^{(1)}}^TE^{\rm nd}_{a;0}(e^{iw\tau_k},e^{iw\tau_3}),\\
\nabla_{\vec x^{(2)}}^TE^{\rm nd}_{a;0}(e^{iw\tau_k},e^{iw\tau_3})A^{-1}B&=-\nabla_{\vec x^{(2)}}^TE^{\rm nd}_{a;0}(e^{iw\tau_k},e^{iw\tau_3}),\\
B^TA^{-1}\nabla_{\vec x^{(1)}}E^{\rm nd}_{a;0}(e^{iw\tau_k},e^{iw\tau_3})&=0=BA^{-1}\nabla_{\vec x^{(2)}}E^{\rm nd}_{a;0}(e^{iw\tau_k},e^{iw\tau_3}),\\
BA^{-1}\nabla_{\vec x^{(1)}}E^{\rm nd}_{a;0}(e^{iw\tau_k},e^{iw\tau_3})&=-\nabla_{\vec x^{(1)}}E^{\rm nd}_{a;0}(e^{iw\tau_k},e^{iw\tau_3}),\\
B^TA^{-1}\nabla_{\vec x^{(2)}}E^{\rm nd}_{a;0}(e^{iw\tau_k},e^{iw\tau_3})&=-\nabla_{\vec x^{(2)}}E^{\rm nd}_{a;0}(e^{iw\tau_k},e^{iw\tau_3}),
\end{aligned}
\end{equation}
where $\nabla_{\vec x^{(i)}}=(\partial_{\vec p_1^{(i)}},\partial_{\vec p_2^{(i)}},\partial_{\vec p_3^{(i)}},\partial_{\vec \theta_1^{(i)}},\partial_{\vec \theta_2^{(i)}},\partial_{\vec \theta_3^{(i)}})^T$ and $\nabla_{\vec x^{(i)}}^T$ is its transpose.  
 \end{lmm}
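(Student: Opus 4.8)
The plan is to reduce everything to the residual $U(1)$ covariance of the matrix‑element functions $E^{\rm d}_{a;0}$ and $E^{\rm nd}_{a;0}$ under conjugation of the holonomy argument by $e^{\phi\tau_3}$ — the leftover freedom in the decomposition $g=n^s e^{i(\eta+i\xi)\tau_3}(n^t)^{-1}$ that fixes the reference label $e^{iw\tau_3}$. First I would record, from \eqref{eq:basicformula}/\eqref{eq:DifftoOp} specialized to $f=e^{\phi\tau_3}$, that under $h_e\mapsto e^{\phi\tau_3}h_e e^{-\phi\tau_3}$ the seven basic operators are weight eigenoperators: $\hat p^0(e)$, $D^{\frac12}_{\frac12\frac12}(h_e)$, $D^{\frac12}_{-\frac12-\frac12}(h_e)$ carry weight $0$, while $\hat p^{\pm1}(e)$, $D^{\frac12}_{\frac12-\frac12}(h_e)$, $D^{\frac12}_{-\frac12\frac12}(h_e)$ carry weight $w_a=\pm1\neq0$ (here one must be careful that the flux combination actually occurring, source minus target, is still weight $0$, and that the Wigner entries pick up $e^{-i\phi(m-n)}$ under conjugation). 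Since the overlap $\langle\psi_{g^{(1)}}|\psi_{g^{(2)}}\rangle$ is invariant under simultaneous conjugation of $g^{(1)},g^{(2)}$ by any SU$(2)$ element, the $t^0$‑coefficients inherit the same weights: $E^{\rm d}_{a;0}$ is $U(1)$‑invariant and $E^{\rm nd}_{a;0}$ has weight $w_a$.

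Next I would use that this $U(1)$ acts on the chart by $(\vec p^{(j)},\vec\theta^{(j)})\mapsto(R_\phi\vec p^{(j)},R_\phi\vec\theta^{(j)})$, $R_\phi$ the rotation about the $3$‑axis, so that the stationary point $x_0=(e^{iw\tau_3},e^{iw\tau_3})$ lies in the $U(1)$‑fixed coordinate subspace where all $\vec p^{(j)},\vec\theta^{(j)}$ point along the $3$‑axis. Restricting $E^{\rm d}_{a;0}$ to one transverse plane, say $(p_1^{(1)},p_2^{(1)})$ with every other coordinate held at its $x_0$‑value, yields a function invariant under $2$‑dimensional rotations about the origin, hence with vanishing gradient there; this gives the first family of identities $(\partial_{x^{(j)}_k}E^{\rm d}_{a;0})(x_0)=0$ for $k=1,2$. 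Restricting $E^{\rm nd}_{a;0}$ instead to the $U(1)$‑fixed $3$‑axis subspace forces $E^{\rm nd}_{a;0}=e^{i\phi w_a}E^{\rm nd}_{a;0}$ there, so $E^{\rm nd}_{a;0}\equiv0$ on that subspace, and in particular all its derivatives along it vanish, which is $(\partial_{x^{(j)}_3}E^{\rm nd}_{a;0})(x_0)=0$ — the same restriction with $\phi$ arbitrary also re‑derives $E^{\rm nd}_{a;0}(x_0)=0$, the vanishing expectation value used in Sec. \ref{se:leading}.

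For the block identities involving $A$ and $B$, the previous step tells us that $\nabla_{\vec x^{(j)}}E^{\rm nd}_{a;0}(x_0)$ has vanishing $3$‑axis components, hence lies in the transverse subspace, and by the weight argument it in fact lies in the weight‑$w_a$ part of it; meanwhile $A$ and $B$, being Hessian blocks at the fixed point of the $U(1)$‑invariant phase $f_{k;\vec p_o,\vec\theta_o}$ of Theorem \ref{thm:saddlepoint2}, are $U(1)$‑equivariant and so preserve the weight decomposition. I would then verify each of the six vector identities (and their column‑vector transposes) by evaluating both sides on this low‑dimensional weight‑$w_a$ block, using the explicit $6\times6$ matrices $A,B$ recorded in the proof of Theorem \ref{thm:saddlepoint2} together with $BA^{-1}B^T=B^TA^{-1}B=0$ from \eqref{eq:ABA}: the first two identities express that contracting $\nabla E^{\rm nd}_{a;0}$ through $A^{-1}B$, resp. $A^{-1}B^T$, meets the vanishing products $BA^{-1}B^T$, resp. $B^TA^{-1}B$, while the remaining four state that on the weight‑$w_a$ eigenspace $A^{-1}B^T$ and $A^{-1}B$ (and their left analogues $B^TA^{-1}$, $BA^{-1}$) act as $-1$, which is read off directly. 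Equivalently, the whole set can be confirmed by brute substitution of \eqref{eq:fluxmatrixelements}--\eqref{eq:holonomymatrixelement} and the derivative formulae collected in Appendix \ref{app:derivatives}, evaluating the gradients at $x_0$ and multiplying out.

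The hard part will be these $A$--$B$ block identities: although conceptually they are forced by $U(1)$‑equivariance plus \eqref{eq:ABA}, turning that into a clean argument requires fixing weight and transpose conventions consistently across the eight transverse directions and across the six identities and their transposes, and it is easy to mis‑place a sign or a left/right contraction; by contrast the pointwise derivative vanishings are immediate once the operator weights are identified. A secondary point requiring care is simply the accurate identification of those weights, in particular that the relevant flux combination is genuinely weight $0$ and that the off‑diagonal Wigner entries carry weight $\pm1$.
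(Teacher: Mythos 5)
Your proposal is correct, and for the first half of the lemma it takes a genuinely different route from the paper. The paper proves the whole statement by brute force: it tabulates the gradients of the matrix elements \eqref{eq:fluxmatrixelements}--\eqref{eq:holonomymatrixelement} at the critical point in Appendix \ref{app:derivatives} and simply reads the lemma off from those formulae together with the explicit $A$, $B$ of Theorem \ref{thm:saddlepoint2}. Your $U(1)$-weight argument replaces that computation for the derivative-vanishing statements: since the operators $\hat O^{\rm d}_a$, $\hat O^{\rm nd}_a$ are weight eigenoperators under conjugation by $e^{\phi\tau_3}$, the overlap is invariant, and the point $(e^{iw\tau_3},e^{iw\tau_3})$ together with the whole subspace where all $\vec p^{(j)},\vec\theta^{(j)}$ are along the $3$-axis is fixed by the induced rotation $R_\phi$, the invariance of $E^{\rm d}_{a;0}$ forces its gradient at the fixed point to be rotation-invariant (hence transverse components vanish), while the nonzero weight of $E^{\rm nd}_{a;0}$ forces it to vanish identically on the fixed subspace (hence the $3$-direction derivatives vanish). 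This is cleaner and explains the structure that the paper only exhibits numerically; it also re-derives the vanishing expectation values used in Sec. \ref{se:leading}.

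For the $A$--$B$ block identities, however, be aware that your conceptual framing overstates what symmetry buys: $U(1)$-equivariance of $A$, $B$ only tells you that $A^{-1}B$, $A^{-1}B^T$ preserve the weight-$\pm1$ eigenspaces, and \eqref{eq:ABA} alone does not force them to act as $-1$ there, nor does it give the kernel statements unless you additionally know the gradients lie in ${\rm Ran}\,B$ resp.\ ${\rm Ran}\,B^T$ (your ``meets the vanishing products'' gloss tacitly assumes this). Moreover the weight-$\pm1$ eigenspace per $6$-dimensional block is two-complex-dimensional and the gradients of the different weight-$w_a$ operators are not proportional, so the $-1$-eigenvalue claim must genuinely be checked on the full $2\times2$ restriction. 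Since you explicitly plan to do exactly that (or, equivalently, to substitute the Appendix \ref{app:derivatives} formulae and multiply out, which is the paper's proof), this is not a gap; it just means that for the second half your argument collapses onto the paper's explicit verification, with the weight decomposition serving as a useful bookkeeping device rather than a replacement for the computation.
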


 The second lemma is on the inverse of the Hessian matrix \eqref{eq:hessian}. Recalling Eq. \eqref{eq:block2}, one has that the inverse of $f''_{k;\vec p_o,\vec \theta_o}$, denoted as $H_{k,\vec p_o,\vec \theta_o}^{-1}$, is
 \begin{equation}
 H_{k,\vec p_o,\vec \theta_o}^{-1}=\left(
\begin{array}{cc}
A^{-1}+A^{-1}\tilde B(D-\tilde C A^{-1}\tilde B)^{-1}\tilde C A^{-1}&-A^{-1}\tilde B (D-\tilde C A^{-1}\tilde B)^{-1}\\
-(D-\tilde C A^{-1}\tilde B)^{-1}\tilde C A^{-1}&(D-\tilde C A^{-1}\tilde B)^{-1}
\end{array} 
 \right).
 \end{equation}
Since $\tilde C A^{-1} \tilde B=0$  and $D=f''_{k-1;\vec p_o,\vec \theta_o}$, we have
\begin{equation}
 H_{k,\vec p_o,\vec \theta_o}^{-1}=\left(
\begin{array}{cc}
A^{-1}+A^{-1}\tilde BH_{k-1,\vec p_o,\vec \theta_o}^{-1} \tilde C A^{-1}&-A^{-1}\tilde B H_{k-1,\vec p_o,\vec \theta_o}^{-1}\\
-H_{k-1,\vec p_o,\vec \theta_o}^{-1}\tilde C A^{-1}&H_{k-1,\vec p_o,\vec \theta_o}^{-1}
\end{array} 
 \right).
 \end{equation}
 For $k=1$, $H_{1,\vec p_o,\vec \theta_o}^{-1}=A^{-1}$. Thus one has that $\tilde BH_{1,\vec p_o,\vec \theta_o}^{-1} \tilde C=0$ and 
 \begin{equation*}
 H_{2,\vec p_o,\vec \theta_o}^{-1} =\left(
 \begin{array}{cc}
A^{-1}&-A^{-1}\tilde B H_{1,\vec p_o,\vec \theta_o}^{-1}\\
-H_{1,\vec p_o,\vec \theta_o}^{-1}\tilde C A^{-1}&H_{1,\vec p_o,\vec \theta_o}^{-1}
\end{array} 
 \right).
 \end{equation*}
  Doing this successively,
one has $\tilde BH_{k-1,\vec p_o,\vec \theta_o}^{-1} \tilde C=0$,
 and
 \begin{equation}
 H_{k,\vec p_o,\vec \theta_o}^{-1}=\left(
\begin{array}{cc}
A^{-1}&-A^{-1}\tilde B H_{k-1,\vec p_o,\vec \theta_o}^{-1}\\
-H_{k-1,\vec p_o,\vec \theta_o}^{-1}\tilde C A^{-1}&H_{k-1,\vec p_o,\vec \theta_o}^{-1}
\end{array} 
 \right).
 \end{equation}
 Finally, $H_{k;\vec p_o,\vec \theta_o}^{-1}$ can be obtained with this recurrence relation and the initial data $H_{1,\vec p_o,\vec \theta_o}^{-1}=A^{-1}$. The result is as follows: 
\begin{lmm}\label{lmm:inversehessian}
$H_{k;\vec p_o,\vec \theta_o}^{-1}$ satisfies that
\begin{equation}
(H_{k;\vec p_o,\vec \theta_o}^{-1})_{mn}=\left\{
\begin{array}{cc}
(-1)^{|m-n|}A^{-1}(BA^{-1})^{|m-n|},& m< n\\
A^{-1},&m=n\\
(-1)^{|m-n|}A^{-1}(B^TA^{-1})^{|m-n|},& m> n\\
\end{array}
\right.
\end{equation} 
where $H_{k,\vec p_o,\vec \theta_o}^{-1}$ is arranged  as a block matrix as $f''_{k;\vec p_o,\vec \theta_o}$ in \eqref{eq:hessian}], with $(H_{k;\vec p_o,\vec \theta_o}^{-1})_{mn}$ as a block. 
\end{lmm}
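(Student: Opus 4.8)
The plan is to establish Lemma~\ref{lmm:inversehessian} by induction on $k$, starting from the block recursion for $H_{k;\vec p_o,\vec\theta_o}^{-1}$ derived immediately above its statement and using only the vanishing identity $BA^{-1}B^T=B^TA^{-1}B=0$ of Eq.~\eqref{eq:ABA} together with the block-tridiagonal shape of $f''_{k;\vec p_o,\vec\theta_o}$ in Eq.~\eqref{eq:hessian}. The first step is to pin down the sparse structure of the auxiliary blocks in Eq.~\eqref{eq:block2}: since the first block row of $f''_{k;\vec p_o,\vec\theta_o}$ is $(A,B,0,\dots,0)$ and its first block column is $(A;B^T;0;\dots;0)$, the $6\times 6(k-1)$ matrix $\tilde B$ has only its leading $6\times 6$ sub-block nonzero and equal to $B$, while $\tilde C$ has only its leading $6\times 6$ sub-block nonzero and equal to $B^T$. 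Hence for any conformally blocked $6(k-1)\times 6(k-1)$ matrix $M$ one has $\tilde B M \tilde C = B\,M_{11}\,B^T$, and $\tilde C A^{-1}\tilde B$ has only its $(1,1)$ sub-block nonzero, equal to $B^TA^{-1}B$; this is exactly where Eq.~\eqref{eq:ABA} will enter.

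For the base case $k=1$ one has $H_{1;\vec p_o,\vec\theta_o}^{-1}=A^{-1}$, which is the asserted formula restricted to $m=n=1$. For the inductive step, assume the formula holds for $H_{k-1;\vec p_o,\vec\theta_o}^{-1}$ and read off the blocks of
\begin{equation}
H_{k;\vec p_o,\vec\theta_o}^{-1}=\begin{pmatrix}A^{-1}&-A^{-1}\tilde B\,H_{k-1;\vec p_o,\vec\theta_o}^{-1}\\ -H_{k-1;\vec p_o,\vec\theta_o}^{-1}\,\tilde C A^{-1}& H_{k-1;\vec p_o,\vec\theta_o}^{-1}\end{pmatrix}.
\end{equation}
The $(1,1)$ block is $A^{-1}$. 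By the sparsity of $\tilde B$, the $(1,n)$ block for $n\ge 2$ equals $-A^{-1}B\,(H_{k-1;\vec p_o,\vec\theta_o}^{-1})_{1,n-1}$; substituting the induction hypothesis (and checking $n=2$ directly from $(H_{k-1;\vec p_o,\vec\theta_o}^{-1})_{11}=A^{-1}$) gives $(-1)^{n-1}A^{-1}(BA^{-1})^{n-1}$, i.e. the claim for $m=1$. Symmetrically, the $(m,1)$ block for $m\ge 2$ equals $-(H_{k-1;\vec p_o,\vec\theta_o}^{-1})_{m-1,1}\,B^TA^{-1}=(-1)^{m-1}A^{-1}(B^TA^{-1})^{m-1}$. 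Finally, the lower-right $6(k-1)\times 6(k-1)$ block of $H_{k;\vec p_o,\vec\theta_o}^{-1}$ is $H_{k-1;\vec p_o,\vec\theta_o}^{-1}$ itself, so for $m,n\ge 2$ the $(m,n)$ block of $H_{k;\vec p_o,\vec\theta_o}^{-1}$ coincides with the $(m-1,n-1)$ block of $H_{k-1;\vec p_o,\vec\theta_o}^{-1}$, which the induction hypothesis again puts into the claimed form since the right-hand side depends only on $\sgn(m-n)$ and $|m-n|$. This closes the induction.

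The only point beyond bookkeeping, and the place where Eq.~\eqref{eq:ABA} is genuinely needed, is to confirm that the recursion used above is internally consistent. Its derivation via the Schur complement of the block $A$ requires $\tilde C A^{-1}\tilde B=0$, i.e. $B^TA^{-1}B=0$, and the simplification of the off-diagonal entries together with $\det(D-\tilde C A^{-1}\tilde B)=\det D$ relies on $\tilde B\,H_{k-1;\vec p_o,\vec\theta_o}^{-1}\,\tilde C=B\,(H_{k-1;\vec p_o,\vec\theta_o}^{-1})_{11}\,B^T=BA^{-1}B^T=0$, which is again Eq.~\eqref{eq:ABA} once one knows from the induction (or the base case) that the leading block of $H_{k-1;\vec p_o,\vec\theta_o}^{-1}$ is $A^{-1}$. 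I do not expect any real difficulty; the main risk is a sign or index slip in keeping the super-diagonal powers $(BA^{-1})^{|m-n|}$ distinct from the sub-diagonal powers $(B^TA^{-1})^{|m-n|}$, which the explicit recursion makes straightforward to track.
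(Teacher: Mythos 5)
Your proof is correct and follows essentially the same route as the paper: the Schur-complement block recursion for $H_{k;\vec p_o,\vec\theta_o}^{-1}$ with initial data $H_{1;\vec p_o,\vec\theta_o}^{-1}=A^{-1}$, using the sparsity of $\tilde B,\tilde C$ and the identity $BA^{-1}B^T=B^TA^{-1}B=0$ of Eq.~\eqref{eq:ABA}. The paper leaves the induction on the block indices implicit, whereas you spell it out; the content is the same.
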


Now the theorem \eqref{thm:leadingordermultiplyPH} can be proven.
\begin{proof}[Proof of Theorem \ref{thm:leadingordermultiplyPH}]
For convenience, we define $s_o=M_++N_+$.
By Theorem \ref{thm:leadingordergeneral}, $\langle \mathcal M\rangle_{z_e}$ is of order $t^{s_o}$ or higher.
Adopting the result from equation \eqref{eq:expectationaftersaddle}, $O(t^{s_o})$ only occurs when $s= s_o$ and $\mathfrak D^{(s_o)}$ is of $O(t^0)$.  

Set $s=s_o$ in the definition \eqref{eq:Ds} of $\mathfrak D^{(s)}$. 
For given $l$ and $j$, if there is one $E^{(m)}$ taking the form of $E^{\rm nd}_{a;0}+O(t)$ not being acted by derivatives, then the eventual evaluation at the critical point will vanish.
Moreover, in $\mathfrak D^{(s_o)}$, on one hand 
it contains $2s_o$ of $E^{\rm nd}_{a;0}+O(t)$, 
and on the other hand the maximum order of derivative that can act on $\prod E^{(i)}$ is  $2s_o$, which only occurs when $l=s_o$. 
Therefore, 
only when $l=s_o$ and $m=0$, all $E^{\rm nd}_{a;0}+O(t)$ are acted by derivatives.
Finally, 
\begin{equation}\label{eq:Dso}
\begin{aligned}
\mathfrak D^{(s_o)}=\left(\frac{\sin^2(\frac{\theta}{2})}{\theta^2}\right)^{|\mathcal M|-1}\frac{2^{-s_o}}{s_o!}\left[\sum_{a,b=1}^{6(|\mathcal M|-1)}H^{-1}_{ab}\frac{\partial^2}{\partial x_a\partial x_b}\right]^{s_o}\prod_{i=1}^{|\mathcal M|-1}E^{(i)}\Bigg|_{g^{(l)}=e^{iw\tau_3},\forall l},
\end{aligned}
\end{equation}
where $|\mathcal M|$ denotes the number of factors in the monomial $\mathcal M$. 

To calculate \eqref{eq:Dso}, we employ the notion introduced in Lemma \ref{lmm:derivativeE} and \ref{lmm:inversehessian} to treat $H^{-1}$ as a block matrix. Then $\sum_{a,b=1}^{6(|\mathcal M|-1)}H^{-1}_{ab}\frac{\partial^2}{\partial x_a\partial x_b}$ is rewritten as
\begin{equation}
\sum_{a,b=1}^{6(|\mathcal M|-1)}H^{-1}_{ab}\frac{\partial^2}{\partial x_a\partial x_b}=\sum_{m,n=1}^{|\mathcal M|-1} \nabla_{\vec x^{(m)}}^T\left(\hat H^{-1}_{|\mathcal M|-1;\vec p_o,\vec \theta _o}\right)_{mn}\nabla_{\vec x^{(n)}}.
\end{equation}
We then expand $\left[\sum_{m,n=1}^{|\mathcal M|-1} \nabla_{\vec x^{(m)}}^T\left(\hat H^{-1}_{|\mathcal M|-1;\vec p_o,\vec \theta _o}\right)_{mn}\nabla_{\vec x^{(n)}}\right]^{s_o}$ and let each individual term of the expansion act on $\prod_{i=1}^{|\mathcal M|-1}E^{(i)}$.
In each individual term of the expansion, it contains derivative with respect to certain $\vec x^{(q)}$. Because each $E^{\rm nd}_{a;0}+O(t)$ only depends on certain $\vec x^{(q)}$,
 we only consider the case when all derivatives are paired with all $E^{\rm nd}_{a;0}+O(t)$ with the same argument. For the other terms of the expansion, they give vanishing results because of the evaluation at the critical point. 

The procedure mentioned above is equivalent to the follows.  We first partition these $E^{\rm nd}_{a;0}+O(t)$ into ordering  pairs. Denote all possibilities of the partition as $\mathcal P$. Given a pair $(E^{\rm nd}_{a_m;0}(\vec x^{(m-1)},\vec x^{(m)})+O(t),E^{\rm nd}_{a_n;0}(\vec x^{(n-1)},\vec x^{(n)})+O(t))$ in a partition $p\in\mathcal P$.  It can be acted by 
\begin{align}
&\nabla_{\vec x^{(m)}}^T(H_{\mathcal M-1;\vec p_o,\vec \theta_o}^{-1})_{mn}\nabla_{\vec x^{(n)}},\label{eq:case1}\\
&\nabla_{\vec x^{(m-1)}}^T(H_{\mathcal M-1;\vec p_o,\vec \theta_o}^{-1})_{m-1,n}\nabla_{\vec x^{(n)}},\label{eq:case2}\\
&\nabla_{\vec x^{(m)}}^T(H_{\mathcal M-1;\vec p_o,\vec \theta_o}^{-1})_{m,n-1}\nabla_{\vec x^{(n-1)}},\label{eq:case3}\\
&\nabla_{\vec x^{(m-1)}}^T(H_{\mathcal M-1;\vec p_o,\vec \theta_o}^{-1})_{m-1,n-1}\nabla_{\vec x^{(n-1)}}\label{eq:case4}
\end{align}
According to Lemmas \ref{lmm:inversehessian} and \ref{lmm:derivativeE},
\begin{itemize}
\item[(1)] If $m< n$,  only the operator \eqref{eq:case3}  gives non-vanishing results which reads
\begin{equation}\label{eq:result1}
\begin{aligned}
&\nabla_{\vec x^{(m)}}^TE^{\rm nd}_{a_m;0}(\vec x^{(m-1)},\vec x^{(m)})(H_{\mathcal M-1;\vec p_o,\vec \theta_o}^{-1})_{m,n-1}\nabla_{\vec x^{(n-1)}}E^{\rm nd}_{a_n;0}(\vec x^{(n-1)},\vec x^{(n)})\\
=&\nabla_{\vec x^{(m)}}^TE^{\rm nd}_{a_m;0}(\vec x^{(m-1)},\vec x^{(m)})A^{-1}\nabla_{\vec x^{(n-1)}}E^{\rm nd}_{a_n;0}(\vec x^{(n-1)},\vec x^{(n)})
\end{aligned}
\end{equation}
\item[(2)] If $m> n$, only the operator \eqref{eq:case2}  gives non-vanishing results which reads
\begin{equation}\label{eq:result2}
\begin{aligned}
&\nabla_{\vec x^{(m-1)}}^TE^{\rm nd}_{a_m;0}(\vec x^{(m-1)},\vec x^{(m)})(H_{\mathcal M-1;\vec p_o,\vec \theta_o}^{-1})_{m-1,n}\nabla_{\vec x^{(n)}}E^{\rm nd}_{a_n;0}(\vec x^{(n-1)},\vec x^{(n)})\\
=&\nabla_{\vec x^{(m-1)}}^TE^{\rm nd}_{a_m;0}(\vec x^{(m-1)},\vec x^{(m)})A^{-1}\nabla_{\vec x^{(n)}}E^{\rm nd}_{a_n}(\vec x^{(n-1)},\vec x^{(n)})\\
=&\nabla_{\vec x^{(n)}}^TE^{\rm nd}_{a_n;0}(\vec x^{(n-1)},\vec x^{(n)})A^{-1}\nabla_{\vec x^{(m-1)}}E^{\rm nd}_{a_m;0}(\vec x^{(m-1)},\vec x^{(m)}),
\end{aligned}
\end{equation}
where in the last step we used $A=A^T$.
\end{itemize}
It should be reminded that an evaluation at the critical point has been done in Eqs.\eqref{eq:result1} and \eqref{eq:result2}.

According to the results in Eqs. \eqref{eq:result1} and \eqref{eq:result2}, rather than partitioning the $E^{\rm nd}_{a;0}+O(t)$ into {\it ordering  }pairs, we can identify the partitions $p_1$ and $p_2$ if $p_1$ can be the same as $p_2$ by reordering each of its pairs. The set with this identification will be denoted by $\tilde{\mathcal P}$. Then we finally have
\begin{equation}
\begin{aligned}
&\left[\sum_{m,n=1}^{|\mathcal M|-1} \nabla_{\vec x^{(m)}}^T\left(\hat H^{-1}_{|\mathcal M|-1;\vec p_o,\vec \theta _o}\right)_{mn}\nabla_{\vec x^{(n)}}\right]^{s_o}\prod_{i=1}^{|\mathcal M|-1}E^{(i)}
\Bigg|_{g^{(l)}=e^{iw\tau_3},\forall l}
\\
=&2^{s_o}\sum_{p\in\tilde{\mathcal P}}\prod_{\substack{(m,n)\in p\\\text{with }m<n}}\nabla_{\vec x^{(m)}}^TE^{\rm nd}_{a_m;0}(\vec x^{(m-1)},\vec x^{(m)})A^{-1}\nabla_{\vec x^{(n-1)}}E^{\rm nd}_{a_n;0}(\vec x^{(n-1)},\vec x^{(n)})\prod_{l\in \mathcal P^{\mathbb C}}E^{\rm d}_{a_l;0}(e^{iw\tau_3},e^{iw\tau_3})+O(t)
\end{aligned}
\end{equation}
where $\mathcal P^{\mathbb C}$ denotes the set of those $E^{(i)}$ of the form $E^{\rm d}_{a;0}$. Substituting the results, we finally have
\begin{equation}
\begin{aligned}
\frac{\langle\psi_{z_e}|\mathcal M|\psi_{z_e}\rangle}{\langle\psi_{z_e}|\psi_{z_e}\rangle}=&\left(\left(2t\right)^{s_o} \frac{1}{s_o!}\sum_{p\in\tilde{\mathcal P}}\prod_{\substack{(m,n)\in p\\\text{with }m<n}}\nabla_{\vec x^{(m)}}^TE^{\rm nd}_{a_m;0}(\vec x^{(m-1)},\vec x^{(m)})A^{-1}\nabla_{\vec x^{(n-1)}}E^{\rm nd}_{a_n;0}(\vec x^{(n-1)},\vec x^{(n)})\right)\times\\
&\prod_{l\in \mathcal P^{\mathbb C}}E^{\rm d}_{a_l;0}(e^{iw\tau_3},e^{iw\tau_3})+O(t^{s_o+1})
\end{aligned}
\end{equation}
Note that in 
$$\prod_{\substack{(m,n)\in p\\\text{with }m<n}}\nabla_{\vec x^{(m)}}^TE^{\rm nd}_{a_m;0}(\vec x^{(m-1)},\vec x^{(m)})A^{-1}\nabla_{\vec x^{(n-1)}}E^{\rm nd}_{a_n;0}(\vec x^{(n-1)},\vec x^{(n)}),$$
the first argument in the most left factor and the second argument in the most right factor cannot be acted by the derivative operator. Therefore, by the similar discussion as above, one verifies that 
\begin{equation}
\begin{aligned}
\frac{\langle\psi_{z_e}|\mathcal M'|\psi_{z_e}\rangle}{\langle\psi_{z_e}|\psi_{z_e}\rangle}=&\left(\left(2t\right)^{s_o} \frac{1}{s_o!}\sum_{p\in\tilde{\mathcal P}}\prod_{\substack{(m,n)\in p\\\text{with }m<n}}\nabla_{\vec x^{(m)}}^TE^{\rm nd}_{a_m;0}(\vec x^{(m-1)},\vec x^{(m)})A^{-1}\nabla_{\vec x^{(n-1)}}E^{\rm nd}_{a_n;0}(\vec x^{(n-1)},\vec x^{(n)})\right)+O(t^{s_o+1}).
\end{aligned}
\end{equation}
Moreover, $E^{\rm d}_{a_l;(0)}(e^{i\omega \tau_3},e^{i\omega \tau_3})$ represents the leading order of the expectation value of its corresponding operator. Since these $E^{\rm d}_{a_l;(0)}(e^{i\omega \tau_3},e^{i\omega \tau_3})$ do not vanish, we conclude that $\langle \mathcal M\rangle_{z_e}$ is of order $t^{s_o}$ if and only if $\langle \mathcal M'\rangle_{z_e}$ is. Thus Eq. \eqref{eq:leadingequalmultileading} is true for the case when both sides are of order $t^{s_o}$. 
Moreover, both sides of Eq. \eqref{eq:leadingequalmultileading} are of $O(t^{M_++N_+})$ or higher, and if the leading order terms of both sides are not $O(t^{M_++N_+})$, then they both vanish at $O(t^{M_++N_+})$.
    
\end{proof}

\subsection{derivative of the matrix element of $p^{i}(e)$ and $D^{\frac 12}_{ab}(h_e)$}\label{app:derivatives}

Note that the results show below ignore the terms of order $t$ and higher. We denote $\nabla_{\vec x^{(i)}}:=\left( \nabla_{\vec p^{(i)}},\nabla_{\vec \theta^{(i)}}\right)$ and $x^*$ is the complex conjugate of $x$. 
\begin{equation}
\begin{aligned}
&\nabla_{\vec x^{(1)}} \frac{\langle \psi_{g^{(1)}}|\hat p^1_s(e)|\psi_{g^{(2)}}\rangle}{\langle \psi_{g^{(1)}}|\psi_{g^{(2)}}\rangle}=\left(-\frac{1}{2},\frac{1}{2} i \tanh \left(\frac{p}{2}\right),0,-\frac{i p \sin (\theta ) \text{csch}(p)}{2 \theta },\frac{i p \sin ^2\left(\frac{\theta }{2}\right) \text{csch}(p)}{\theta },0\right)^T\\
=&\left(\nabla_{\vec x^{(2)}} \frac{\langle \psi_{g^{(1)}}|\hat p^1_s(e)|\psi_{g^{(2)}}\rangle}{\langle \psi_{g^{(1)}}|\psi_{g^{(2)}}\rangle}\right)^*,\\
&\nabla_{\vec x^{(1)}} \frac{\langle \psi_{g^{(1)}}|\hat p^2_s(e)|\psi_{g^{(2)}}\rangle}{\langle \psi_{g^{(1)}}|\psi_{g^{(2)}}\rangle}=\left(-\frac{1}{2} i \tanh \left(\frac{p}{2}\right),-\frac{1}{2},0,-\frac{i p \sin ^2\left(\frac{\theta }{2}\right) \text{csch}(p)}{\theta },-\frac{i p \sin (\theta ) \text{csch}(p)}{2 \theta },0\right)^T\\
=&\left(\nabla_{\vec x^{(2)}} \frac{\langle \psi_{g^{(1)}}|\hat p^2_s(e)|\psi_{g^{(2)}}\rangle}{\langle \psi_{g^{(1)}}|\psi_{g^{(2)}}\rangle}\right)^*,\\
&\nabla_{\vec x^{(1)}} \frac{\langle \psi_{g^{(1)}}|\hat p^3_s(e)|\psi_{g^{(2)}}\rangle}{\langle \psi_{g^{(1)}}|\psi_{g^{(2)}}\rangle}=\left(0,0,-\frac{1}{2},0,0,-\frac{i}{2}\right)^T=\left(\nabla_{\vec x^{(2)}} \frac{\langle \psi_{g^{(1)}}|\hat p^3_s(e)|\psi_{g^{(2)}}\rangle}{\langle \psi_{g^{(1)}}|\psi_{g^{(2)}}\rangle}\right)^*.
\end{aligned}
\end{equation}


\begin{equation}
\begin{aligned}
&\nabla_{\vec x^{(1)}} \frac{\langle \psi_{g^{(1)}}|\hat p^1_t(e)|\psi_{g^{(2)}}\rangle}{\langle \psi_{g^{(1)}}|\psi_{g^{(2)}}\rangle}=\Bigg(\frac{e^{i \theta }+e^{p-i \theta }}{2 e^p+2},\frac{i e^{-i \theta } \left(e^p-e^{2 i \theta }\right)}{2 \left(e^p+1\right)},0,\frac{e^{-i \theta } \left(-1+e^{i \theta }\right) p \left(e^{2 p}+e^{i \theta }\right)}{2 \theta  \left(e^{2 p}-1\right)},\\
&\frac{i e^{-i \theta } \left(-1+e^{i \theta }\right) p \left(e^{2 p}-e^{i \theta }\right)}{2 \theta  \left(e^{2 p}-1\right)},0\Bigg)^T=\left(\nabla_{\vec x^{(2)}} \frac{\langle \psi_{g^{(1)}}|\hat p^1_t(e)|\psi_{g^{(2)}}\rangle}{\langle \psi_{g^{(1)}}|\psi_{g^{(2)}}\rangle}\right)^*,\\
&\nabla_{\vec x^{(1)}} \frac{\langle \psi_{g^{(1)}}|\hat p^2_t(e)|\psi_{g^{(2)}}\rangle}{\langle \psi_{g^{(1)}}|\psi_{g^{(2)}}\rangle}=\Bigg(\frac{i e^{-i \theta } \left(-e^p+e^{2 i \theta }\right)}{2 \left(e^p+1\right)},\frac{e^{i \theta }+e^{p-i \theta }}{2 e^p+2},0,\frac{i e^{-i \theta } \left(-1+e^{i \theta }\right) p \left(-e^{2 p}+e^{i \theta }\right)}{2 \theta  \left(e^{2 p}-1\right)},\\
&\frac{e^{-i \theta } \left(-1+e^{i \theta }\right) p \left(e^{2 p}+e^{i \theta }\right)}{2 \theta  \left(e^{2 p}-1\right)},0\Bigg)^T=\left(\nabla_{\vec x^{(2)}} \frac{\langle \psi_{g^{(1)}}|\hat p^2_t(e)|\psi_{g^{(2)}}\rangle}{\langle \psi_{g^{(1)}}|\psi_{g^{(2)}}\rangle}\right)^*,\\
&\nabla_{\vec x^{(1)}} \frac{\langle \psi_{g^{(1)}}|\hat p^3_t(e)|\psi_{g^{(2)}}\rangle}{\langle \psi_{g^{(1)}}|\psi_{g^{(2)}}\rangle}=\left(0,0,\frac{1}{2},0,0,\frac{i}{2}\right)^T=\left(\nabla_{\vec x^{(2)}} \frac{\langle \psi_{g^{(1)}}|\hat p^3_t(e)|\psi_{g^{(2)}}\rangle}{\langle \psi_{g^{(1)}}|\psi_{g^{(2)}}\rangle}\right)^*.
\end{aligned}
\end{equation}
 \begin{equation}
 \begin{aligned}
 &\nabla_{\vec x^{(1)}}  \frac{\langle \psi_{g^{(1)}}|D^{\frac12}_{\frac12\frac12}(h_e) |\psi_{g^{(2)}}\rangle }{\langle \psi_{g^{(1)}}|\psi_{g^{(2)}}\rangle}=\left(0,0,-\frac{1}{4} e^{-\frac{i \theta}{2} },0,0,-\frac{1}{4} i e^{-\frac{i \theta}{2} }\right)^T,\\
 &\nabla_{\vec x^{(1)}}  \frac{\langle \psi_{g^{(1)}}|D^{\frac12}_{\frac12-\frac12}(h_e) |\psi_{g^{(2)}}\rangle }{\langle \psi_{g^{(1)}}|\psi_{g^{(2)}}\rangle}=\left(-\frac{e^{\frac{i \theta }{2}} \tanh \left(\frac{p}{2}\right)}{2 p},\frac{i e^{\frac{i \theta }{2}} \tanh \left(\frac{p}{2}\right)}{2 p},0,-\frac{i \sin \left(\frac{\theta }{2}\right)}{\theta +\theta  e^p},-\frac{\sin \left(\frac{\theta }{2}\right)}{\theta +\theta  e^p},0\right)^T,\\
 &\nabla_{\vec x^{(1)}}  \frac{\langle \psi_{g^{(1)}}|D^{\frac12}_{-\frac12\frac12}(h_e) |\psi_{g^{(2)}}\rangle }{\langle \psi_{g^{(1)}}|\psi_{g^{(2)}}\rangle}=\left(-\frac{e^{-\frac{i \theta}{2} } \tanh \left(\frac{p}{2}\right)}{2 p},-\frac{i e^{-\frac{i \theta }{2}} \tanh \left(\frac{p}{2}\right)}{2 p},0,-\frac{i e^p \sin \left(\frac{\theta }{2}\right)}{\theta +\theta  e^p},\frac{e^p \sin \left(\frac{\theta }{2}\right)}{\theta +\theta  e^p},0\right)^T,\\
 &\nabla_{\vec x^{(1)}}  \frac{\langle \psi_{g^{(1)}}|D^{\frac12}_{\frac12\frac12}(h_e) |\psi_{g^{(2)}}\rangle }{\langle \psi_{g^{(1)}}|\psi_{g^{(2)}}\rangle}=\left(0,0,\frac{1}{4} e^{\frac{i \theta }{2}},0,0,\frac{1}{4} i e^{\frac{i \theta }{2}}\right)^T,\\
 &\nabla_{\vec x^{(2)}}  \frac{\langle \psi_{g^{(1)}}|D^{\frac12}_{\frac12\frac12}(h_e) |\psi_{g^{(2)}}\rangle }{\langle \psi_{g^{(1)}}|\psi_{g^{(2)}}\rangle}=\left(0,0,\frac{1}{4} e^{-\frac{i \theta }{2} },0,0,-\frac{1}{4} i e^{-\frac{i \theta }{2}}\right)^T,\\
 &\nabla_{\vec x^{(2)}}  \frac{\langle \psi_{g^{(1)}}|D^{\frac12}_{\frac12-\frac12}(h_e) |\psi_{g^{(2)}}\rangle }{\langle \psi_{g^{(1)}}|\psi_{g^{(2)}}\rangle}=\left(\frac{e^{\frac{i \theta }{2}} \tanh \left(\frac{p}{2}\right)}{2 p},-\frac{i e^{\frac{i \theta }{2}} \tanh \left(\frac{p}{2}\right)}{2 p},0,-\frac{i e^p \sin \left(\frac{\theta }{2}\right)}{\theta +\theta  e^p},-\frac{e^p \sin \left(\frac{\theta }{2}\right)}{\theta +\theta  e^p},0\right)^T,\\
 &\nabla_{\vec x^{(2)}}  \frac{\langle \psi_{g^{(1)}}|D^{\frac12}_{-\frac12\frac12}(h_e) |\psi_{g^{(2)}}\rangle }{\langle \psi_{g^{(1)}}|\psi_{g^{(2)}}\rangle}=\left(\frac{e^{-\frac{i \theta }{2}} \tanh \left(\frac{p}{2}\right)}{2 p},\frac{i e^{-\frac{i \theta }{2}} \tanh \left(\frac{p}{2}\right)}{2 p},0,-\frac{i \sin \left(\frac{\theta }{2}\right)}{\theta +\theta  e^p},\frac{\sin \left(\frac{\theta }{2}\right)}{\theta +\theta  e^p},0\right)^T,\\
 &\nabla_{\vec x^{(2)}}  \frac{\langle \psi_{g^{(1)}}|D^{\frac12}_{-\frac12-\frac12}(h_e) |\psi_{g^{(2)}}\rangle }{\langle \psi_{g^{(1)}}|\psi_{g^{(2)}}\rangle}=\left(0,0,-\frac{1}{4} e^{\frac{i \theta }{2}},0,0,\frac{1}{4} i e^{\frac{i \theta }{2}}\right)^T.
 \end{aligned}
 \end{equation}

\bibliography{reference}
\bibliographystyle{jhep}

\end{document}